\newtheorem{theorem}{Theorem}
\newtheorem{proposition}[theorem]{Proposition}
\newtheorem{lemma}[theorem]{Lemma}
\newtheorem{definition}[theorem]{Definition}
\newtheorem{fact}[theorem]{Fact}
\newtheorem{corollary}[theorem]{Corollary}
\newtheorem{theorem*}{Theorem}
\newtheorem{corollary*}{Corollary}
\newtheorem{remarkx}[theorem]{Remark}
\newenvironment{remark}{\begin{remarkx}}{\qed \end{remarkx}}
\def\Tr{\textnormal{Tr}}
\def\Im{\textnormal{Im}}
\def\Re{\textnormal{Re}}
\def\Supp{\textnormal{Supp }}
\def\wg{{\mathbf w}_G}
\def\<{\langle}
\def\>{\rangle}
\def\Prob{\mathbb{P}}
\def\D{\mathcal{D}}
\def\H{\mathcal{H}}
\def\P{\mathcal{P}}
\def\S{\mathcal{S}}
\def\L{\mathcal{L}}
\newcommand{\commentout}[1]{}
\numberwithin{theorem}{section}
\numberwithin{equation}{section}
\newenvironment{proofof}[1]{\begin{trivlist}%
\item[]{\flushleft\textsc{Proof of #1.} }}
{\qed\end{trivlist}}
\newcommand{\talkingPoint}[1]{}
\newcommand{\YtalkingPoint}[1]{}
\title{Robust protocols for securely expanding randomness and distributing keys using untrusted quantum devices}
\author{%
   Carl~A.~Miller and Yaoyun~Shi  \\
 \\
  Department of Electrical Engineering and Computer Science\\
  University of Michigan, Ann Arbor, MI 48109, USA\\
  \texttt{{carlmi,shiyy@umich.edu}}
}
\date{\today}
\begin{document}
\maketitle
\thispagestyle{empty}

\commentout{
\begin{center}
\fbox{\color{red}\parbox{6in}{\begin{center} \huge UNPUBLISHED - NOT FOR DISTRIBUTION \end{center}}}
\end{center}
}

\begin{abstract}
Randomness is a vital resource for modern day information processing,
especially for cryptography. A wide range of
applications critically rely on abundant, high quality random numbers
generated securely. Here we show how to expand a  random seed
at an exponential rate without trusting the underlying quantum devices.
Our approach is secure against the most general adversaries, and has the following
new features: cryptographic level of security, tolerating a constant level 
of imprecision in the devices, requiring only a unit
size quantum memory per device component for the honest implementation, and allowing a large natural class of constructions for the protocol. 
In conjunct with a recent work by Chung, Shi and Wu,
it also leads to robust unbounded expansion using just 2 multi-part devices.
When adapted for distributing cryptographic keys, our method achieves, for the 
first time, exponential expansion combined with cryptographic security
and noise tolerance.
The proof proceeds by showing that the R{\'e}nyi divergence of the outputs
of the protocol (for a specific bounding operator) decreases linearly as
the protocol iterates.
At the heart of the proof are 
a new uncertainty principle on quantum measurements,
and a method for simulating trusted measurements
with untrusted devices.
\end{abstract}

\newpage 

\tableofcontents

\section{Background and Summary of Results}
\label{backgroundsec}

\subsection{The Problem and Its Motivations}
Randomness is an indispensable resource for  modern day information processing. 
Without randomness, there would be no fast randomized algorithms, accurate statistical scientific simulations, fair gaming, or secure cryptography. A wide range of
applications rely on methods for generating randomness with high quality and in a large quantity. Consider, for example,
all the computers and handheld devices that connect to the Internet using public key cryptography such as RSA and DSA
for authentication and encryption, and that use secret key cryptography for secure connections. It is probably conservative to estimate
that the number of random bits used each day for cryptography is in the order of trillions. 

While randomness seems to be abundant in everyday life, its efficient and secure generation is a difficult
problem. A typical random number generator such as the \texttt{/dev/random/} generator in Linux  kernel,
would start with random ``seeds'', including the thermal noise of the hardware (e.g. from Intel's Ivy Bridge processors),
system boot time in nanoseconds, user inputs, etc., and apply a deterministic function to produce required random bits. Those methods suffer from at least three
fundamental vulnerabilities.  

The first is due to the fact that no deterministic procedure can increase randomness.
Thus when there is not enough randomness to start with, the output randomness is not sufficient to guarantee security. 
In particular, if the internal state of the pseudorandom generator is correctly guessed or is exposed for other reasons,
the output would become completely predictable to the adversary. The peril of the lack of entropy has been demonstrated repeatedly~\cite{GuttermanPR06,Ristenpart:2010,Lenstra+}. \cite{Heninger:2012} were able to break the DSA secret keys of over 1\% of the SSH hosts that they scanned on the Internet, by exploiting the insufficient randomness used to generate the keys.

The second vulnerability is that the security of current pseudorandom generators are  not only
based on unproven assumptions, such as the hardness of factoring the product of two large primes, but also
assume that their adversaries have limited computational capability.
Therefore, they will fail necessarily if the hardness assumptions turn out to be completely false,
or the adversaries gain dramatic increase in computational power, such as through developing quantum computers. 

Finally, all those methods rely on  trusting the correctness and truthfulness of the generator. The dynamics of market
economy leads to a small number of vendors supplying the hardware for random number generation.
The demand for platform compatibility results in a small number of generating methods.
Thus the risk of the generators containing exploitable vulnerabilities or secret backdoors is necessarily significant.
Recent evidence suggest that this is in fact the reality~\cite{NYT:Snowden}. Thus for users demanding the
highest level of security with the minimum amount of trust, no current solution is satisfactory.

Quantum mechanics postulates true randomness, thus provides a promising approach for mitigating those drawbacks.
Applying a sequence of quantum operations
can increase entropy even when the operations are applied deterministically, as
some quantum operations are inherently unpredictable.
Indeed, commercial random number generators based on quantum technology have started to emerge
(e.g. by ID Quantique SA). Furthermore, the randomness produced can be unconditionally secure, i.e.
without assumptions on the computational power of the adversary.

However, as classical beings, users of
quantum random number generators cannot be certain that the quantum device --- the quantum state inside
and the quantum operations applied --- is running
according to the specification. How can  a classical user 
ensure that a possibly malicious quantum device is working properly?

Non-local games --- games with multiple non-communicating
players --- provide such a possibility. Consider, for example,
the celebrated GHZ game~\cite{GHZ} illustrated in Fig.~(\ref{fig:GHZ}).  It is now known (\cite{mck:2010}, see also \cite{MillerS:self-testing:2013}) that any quantum strategy achieving close to the optimal quantum winning
probability must be very close to the optimal strategy itself. Consequently, the output of
each component is near perfectly random. Intuitively, one needs only to run the game multiple times (using some initial randomness to choose the input string for each round) and if the observed winning average is close to the optimal quantum winning probability, 
then the output should be sufficiently random.
Therefore, the trust on the quantum device can now be replaced by the
condition of non-communication between the different components. This
condition can be verified through classical means, e.g., by separating
the components at a distance so that they do not have time to communicate.
\commentout{
Note that for any cryptographic protocol to
be secure, it is necessary to assume that the components involved 
cannot communicate with the adversary. Thus throughout the paper we
make this assumption on the untrusted device(s) in use.
Since intra-device communications involve different pairs of components,
we treat its restriction a separate item for enforcement or trust.
Special relativity provides a classically verifiable method for enforcing such a restriction.
It may not be necessary as there are potentially other alternatives for achieving this goal,
or may not be feasible if the duration of no-communication is too long for the distance
to be practical.
}

\begin{figure*}
\begin{center}
\begin{tikzpicture}[shorten >=1pt,node distance=1.5cm,thick,scale=1, every node/.style={scale=1}]

\node (x) {$x$};
\node (v1) [right of = x] {};
\node (y) [right of = v1] {$y$};
\node (v2) [right of = y] {};
\node (z) [right of = v2] {$z$};
\node [rectangle, minimum size=1.5cm,draw] (D1) [below of = x] {$D_1$};
\node [rectangle, minimum width=.5cm, minimum height = 1.5cm, draw,fill=gray] (brick1) [below of = v1] {};
\node [rectangle, minimum size=1.5cm,draw] (D2) [below of = y] {$D_2$};
\node [rectangle, minimum size=.5cm,minimum height = 1.5cm,draw,fill=gray] (brick1) [below of = v2] {};
\node [rectangle, minimum size=1.5cm,draw] (D3) [below of = z] {$D_3$};
\node (a) [below of = D1] {$a$};
\node (b) [below of = D2] {$b$};
\node (c) [below of = D3] {$c$};

\path[->]
(x) edge node {}(D1)
(y) edge node {}(D2)
(z) edge node {}(D3)
(D1) edge node {}(a)
(D2) edge node {}(b)
(D3) edge node {}(c);
\end{tikzpicture}
\end{center}
\caption{A three-part device playing the GHZ game. Each part $D_1$, $D_2$, and $D_3$, receives a single bit 
and outputs a single bit. The input $(x,y,z)$ is drawn uniformly from $\{(0,0,0), (0,1,1), (1,0,1), (1,1,0)\}$.
The device wins if $a\oplus b\oplus c = x\vee y\vee z$. No communication among the parts is allowed when the game starts.
An optimal classical strategy is for each part
to output $1$, winning with $3/4$ probability. An optimal quantum strategy is for the three parts
to share the GHZ state $\frac{1}{\sqrt{2}}(|000\rangle+|111\rangle)$, and for each part to measure
$\sigma_x$ on input $0$, and measure $\sigma_y$ on input $1$. This strategy wins with certainty.
}\label{fig:GHZ}
\end{figure*}

\cite{col:2006,ColbeckK:2011} proposed using
nonlocal games as the basis for untrusted-device randomness
expansion. Turning the intuition above into rigorous proofs
turns out to be rather challenging. Classical security was proved in~\cite{pam:2010}, \cite{fehr13}, \cite{pironio13}, and in the later work \cite{CoudronVY:2013},
which allowed a very broad class of nonlocal games.
While useful, classical security does not guard against quantum adversaries, thus is inadequate as quantum
computation is becoming a reality.
Furthermore,  an expansion protocol without quantum security cannot be safely composed with other quantum protocols.
\cite{Vazirani:dice} were the first to prove quantum security,
using a protocol that expands the initial seed exponentially.

\subsection{Related Problems}

The randomness expansion problem is closely related
to the problem of quantum key distribution (QKD), where
two parties at a distance wish to establish a common (random) secret 
using a public quantum channel.  Key distribution is a fundamental cryptographic primitive, and also one of the oldest
problems in quantum information~\cite{BB84,ekert91,Mayers01,LC99,Biham:2006,SP00a}.  

Also, untrusted-device randomness expansion is part of the broader area of untrusted-device, or ``device-independent,'' quantum cryptography.
This area of quantum cryptography was pioneered by \cite{MY98a}. It was also developed in parallel by
other researchers, such as \cite{BHK05},
from the perspective of non-locality with inspirations from \cite{ekert91}.
It has now become an important and intensively studied paradigm
for understanding the power and limitations of quantum cryptography.
\commentout{ when the underlying
quantum devices are imperfect or even malicious.}

An important related problem in
untrusted-device cryptography is {\em randomness amplification} \cite{cr:2012}, 
where one wants to obtain near-perfect randomness
from a \textit{weak} random source using untrusted quantum devices (and without any additional randomness).  The paper \cite{CSW14}, which is a companion paper to the present one (with a common author) studies the amplification problem.

\subsection{Overview of Our Results}\label{ourresultssubsec}
In this work, we analyze a simple exponentially expanding untrusted-device randomness expansion protocol (referred to as the {\em one-shot protocol}). We
give a proof of security against the most general
quantum adversaries. More importantly, we accomplish
all of the following additional features, none of which has
been accomplished by previous works.

The first is {\em cryptographic security} in the output.\footnote{We thank Kai-Min Chung and Xiaodi Wu
for pointing out this feature of our result.}  The error parameters are not only exponentially small in the input length, but are
also negligible (i.e. smaller than any inverse polynomial function) in the running time of the protocol
(which is asymptotically the number of uses of the device.) This is the conventional theoretical requirement 
for cryptographic level of security ---   the chance that
an adversary can distinguish the protocol output from an ideal uniform distribution is negligible, as measured against
the amount of resource used for running the protocol.

Secondly, the protocol is {\em robust}, i.e. tolerating a constant level of ``noise'', or implementation imprecision. 
Thus any honest implementation that performs below the optimal level by a small constant amount will still pass our test with overwhelming probability.
For example, we show that any device which wins the GHZ game with probability
at least $0.985$ will achieve exponential randomness expansion
with probability approaching $1$.

\commentout{
We model noise as the average deviation of a device interaction
from an ideal implementation. This ``black-box'' noise model does not refer to the inner-working
of the devices thus is fairly general.
This robustness feature enables simplest --- e.g. without adding fault-tolerant overhead --- practical implementation,
thus is desirable especially at this moment of time when even small scale accurate quantum information
processing is still being developed. When  fault-tolerant quantum computing techniques become mature,
it is conceivable that non-robust protocols will become practical. However, when the number of uses
of a device is not pre-determined, such as in an unbounded expansion to be discussed below,
one cannot set the target error threshold for the fault-tolerant 
implementation. A robust protocol then remains indispensable.
}

Third, our protocol requires only a {\em constant size quantum memory} for an honest implementation.
In between two rounds of interactions, the different components of the device are allowed to
interact arbitrarily. Thus an honest device could 
establish its entanglement on the fly, and
needs only to maintain the entanglement (with a constant level of fidelity) for the duration of a single game.
Given the challenge of maintaining coherent quantum states, this
feature greatly reduces implementation complexity.\footnote{An alternative
for achieving the small quantum memory requirement
is to introduce an additional device component that is required
to function as an entanglement creation and distribution component and cannot receive information
from other device components. This model would require a communication restriction throughout the protocol.
}

Fourth, relying on a powerful observation of Chung, Shi and Wu~\cite{CSW14} --- what they call the Equivalence Lemma ---
we show that one can sequentially compose instances of our one-shot protocol,
alternating between two untrusted devices and achieve {\em unbounded} randomness
expansion
starting with a fixed length seed. The additively accumulating error parameters remain almost identical to
the one-shot errors, since they decrease geometrically. 

Finally, our protocol allows a large natural class of games to be used. The class consists
of all binary XOR games ---
games whose inputs and outputs are binary and whose scoring function depends on the inputs and the XOR of the outputs ---
that are {\em strongly self-testing}. The latter property says that any quantum strategy that is
$\epsilon$-close to optimal in its winning probability must be $O(\sqrt{\epsilon} )$ close to a
unique
optimal strategy in its both its state and its measurements.  (We call this
``strongly self-testing'' because this error
relationship is the best possible.)  We explored
this class previously in \cite{MillerS:self-testing:2013}.
The class of strong self-tests includes the CHSH game and the GHZ game,
two commonly used games in quantum information.
Broadening the class of usable games has the benefit
of enabling greater design space, as different implementation
technologies may favor different games. (For example, the highly accurate topological quantum computing
approach using Majorana fermions is known not to be quantum universal~\cite{Nayak:2008}. In particular,
Deng and Duan~\cite{Deng:Majorana:2013} showed that for randomness expansion using Majorana fermions, three qubits
are required. Our proof allows the use of Majorana fermions for randomness expansion
through the GHZ game.)

We include two applications of our expansion protocols.  Our protocol can
be used in combination with the randomness amplification
results of \cite{CSW14} to create a robust, untrusted-device quantum protocol that converts an arbitrary
weak random source into near-perfect output randomness of an arbitrary large length.  
This opens the possibility for
unconditionally secure cryptography with the minimum trust on the randomness source and the implementing device.
The second application is to adapt our protocol for untrusted-device quantum key distribution, resulting
in a robust and secure protocol that requires only a small (polylogarithmic)
initial seed. 

\subsection{Related Works}
Prior to our paper, the groundbreaking work
of~\cite{Vazirani:dice} was the first and only
work achieving simultaneous exponential expansion and quantum security. 
As far as we know from
their analysis, their security proof achieves only inverse polynomial security, and thus is not cryptographically secure; it is not noise-tolerant (as it requires perfect behavior
on some rounds); and it also does not have the feature of constant-sized
quantum memory.

Robust DI-QKD was already achieved with full security in \cite{Vazirani:fully}.
(There
were also previous non-robust proofs~\cite{BCK,ruv:2013}
and proofs that require that the number of devices
increases with the length of the key, e.g., \cite{HRW10}, \cite{mpa11}.)
The new feature offered by our QKD result is that our seed is polylogarithmic, while that
of \cite{Vazirani:fully} is linear.  

The paper~\cite{Vazirani:fully} on untrusted-device QKD
can be considered as a robust
randomness expansion protocol with a {\em linear} rate of  expansion
(without the constant memory feature).
A natural way to develop \cite{Vazirani:fully} further as an expansion
result would be to change the input distribution to
one that is non-uniform (so as to require less than a linear seed)
and to apply the proof to a more general class of games
(such as those of \cite{CoudronVY:2013}).  
To
our knowledge a formal analysis
of these generalizations has not yet been published, and they are
a topic
for further research.
\commentout{
Similarly, as presented and analyzed the protocol
does not allow inter-component communication in-between rounds. 
It is not clear (when used as a linear expansion protocol) whether their analysis can be strengthened
to allow some in-between-round communications. }

\commentout{We note that while the Vazirani-Vidick protocol for key distribution protocol uses a specific game, the analysis remains valid for a much broader class of games, such as the "randomness generating" games defined in~\cite{CoudronVY:2013}, as pointed out to us by Vidick~\cite{Vidick:PC}.
Since all strong self-testing binary XOR games are randomness generating, the class of games allowed for untrusted-device key distribution is broader in~\cite{Vazirani:fully} than ours.
The class of games allowed in~\cite{Vazirani:dice}
require more special properties, and the class of games allowed there appear to be incomparable with ours.
Note that it is not clear if this latter comparison is meaningful as our protocol achieves several additional features not in~\cite{Vazirani:dice}.}

\cite{CY} did contemporaneous work on the problem of
unbounded randomness expansion.  Their paper was the first to prove that
(non-robust) unbounded expansion is possible with
a \textit{constant} number of devices.   We
independently proved that robust expansion
is possible with $\log^*  ( N )$ devices.  After we learned
of their work we observed that a result with both
features --- robustness and a constant number of devices ---
follows by combining results from
our work and \cite{CSW14}.  We discuss this more in the
next subsection (see the remarks
that follow Corollary~\ref{co:unboundedNaive}).

\subsection{Technical Statements}
\label{statementressubsec}

Our main protocol (Figure~\ref{protocolrtable}) is based on \cite{pam:2010} and
\cite{CoudronVY:2013}.  (Indeed, it is only a slight modification of a protocol
from the classical security results of \cite{CoudronVY:2013} --- the main differences are the class
of games that we use, and most importantly,  that
we explicitly allow in-between-rounds quantum communication.)
We use the idea from \cite{pam:2010} to conserve seed by
giving a fixed input to the device on most rounds.

The games we use involve $n$ parties, with $n \geq 2$.  Such a game
is played by a single \textit{device}, which consists of $n$ \textit{components}, where
each component has a classical input/output interface.\footnote{We note that
the literature
on this subject has some differences in terminology.  Some authors
would use the word ``device'' in the way that we have used the
word ``component.''}  For any game that we use, we let $\mathbf{w}_G$ denote the highest winning probability which can be achieved by a quantum strategy,
and let $\mathbf{f}_G = 1 - \mathbf{w}_G$ denote the smallest possible
failure probability that can be achieved by a quantum strategy.


\begin{figure}
\begin{center}
\setlength{\fboxsep}{10pt}
\setlength{\fboxrule}{1pt}
\fbox{\parbox{\linewidth}{

\begin{minipage}{.45\linewidth}
\textit{Arguments:}
\begin{enumerate}[topsep=3pt]
\item[$N:$] a positive integer (the {\bf output length}.)
\item[$\eta:$] A real $\in(0,\frac{1}{2})$. (The {\bf error tolerance}.)
\item[$q:$] A real $\in(0,1)$.  (The {\bf test probability}.)
\item[$G:$] An $n$-player nonlocal game that is a {\bf strong self-test} \cite{MillerS:self-testing:2013}.
\item[$D:$] An untrusted device (with $n$ components) that can play $G$ repeatedly and cannot receive any additional information. In a single use the different components cannot communicate; in between
uses, there is no restriction.
\end{enumerate} 
\end{minipage}
\hskip0.2in
\begin{minipage}{.5\linewidth}
\begin{center}
\scalebox{0.8}{
\begin{tikzpicture}[shorten >=1pt,node distance=1.5cm,thick,scale=1, every node/.style={scale=1}]

\node [rectangle, minimum width=.5cm, minimum height = 1.5cm, draw,fill=gray] (brick) {};
\node [rectangle, minimum size=1.5cm,draw] (D1) [left of = brick] {$D_1$};
\node [rectangle, minimum size=1.5cm,draw] (D2) [right of = brick] {$D_2$};
\node [rectangle, minimum width=.5cm, minimum height = 1.5cm, draw,fill=gray] (brick2) [right of = D2] {};
\node [rectangle, minimum size=1.5cm,draw] (Dn) [right of = brick2] {$D_n$};
\node [circle, draw] (i) [above of = D2,yshift=2] {};
\node (g) [left of = i, xshift=-2.5cm,yshift=.5cm] {$g$};
\node [circle, draw] (o) [below of = D2,yshift=-2] {};
\node (r) [left of = o, yshift=-1cm, xshift=-2.5cm] {};

\path[->]
(i) edge node {} (D1.north)
(i) edge node {} (D2.north)
(i) edge node {} (Dn.north)
(D1.south) edge node {} (o)
(Dn.south) edge node {} (o)
(D2.south) edge node {} (o);

\draw [<->, densely dotted, red, thick, minimum width=.25cm] ([yshift=-.25cm] D1.south) -- ([yshift=-.25cm] Dn.south);
\draw [<->, densely dotted,red,thick] (D1.east) -- ([xshift=.25cm] D1.east) -- ([xshift=.25cm, yshift=-1cm] D1.east);
\draw [<->, densely dotted,red,thick] (D2.east) -- ([xshift=.25cm] D2.east) -- ([xshift=.25cm, yshift=-1cm] D2.east);
\draw [<->, densely dotted,red,thick] (Dn.west) -- ([xshift=-.25cm] Dn.west) -- ([xshift=-.25cm, yshift=-1cm] Dn.west);

\draw[->] ([yshift=.5cm,xshift=-3.75cm] i.north) -- ([yshift=.5cm] i.north) -> (i);
\draw[->] (g) -- ([xshift=-3.75cm] o.west) -> (o);
\draw[->] (o.south) -- ([yshift=-.5cm] o.south) ->  node [below] {result} ([yshift=-.5cm, xshift=-4cm] o.south); 

\end{tikzpicture}
}
\captionof*{figure}{A diagram of Protocol R.  The dotted red lines denote in-between-round communications.}
\label{fig:R}
\end{center}
\end{minipage}
\medskip

\textit{\bf Protocol $R$:}

\begin{enumerate}[topsep=3pt]
\item A bit $g \in \{ 0, 1 \}$ is chosen
according to a biased $(1 - q, q )$ distribution.


\item If $g = 1$ (``game round''), then an input string is
chosen at random from $\{ 0, 1 \}^n$ (according
a probability distribution specified by $G$) and
given to $D$.
Depending on the outputs,
a``P'' (pass)  or an ``F'' (fail) is recorded
according to the rules of the game $G$.


\item If $g = 0$ (``generation round''), then the input string
$00 \ldots 0$ is given to the device, and the output
of the first component $D_1$ is determined.
If the output of the first component
is $0$, the event $H$ (``heads'') is recorded;
otherwise the event $T$ (``tails'') is recorded.


\item Steps $1-3$ are repeated $N-1$ (more) times.


\item If the total number of failures exceeds
 $\left( 1 - \mathbf{w}_G + \eta \right)q N$, the protocol
\textbf{aborts}.    
Otherwise, the protocol \textbf{succeeds}.
If the protocol succeeds, the output consists
of an $N$-length sequence from the alphabet
$\{ P, F, H, T \}$ representing the outcomes of each round.
\end{enumerate}
}}
\caption{The Central Protocol R}
\label{protocolrtable}
\end{center}
\end{figure}

We discuss some the concepts necessary to evaluate the security of Protocol R.
We measure the amount of randomness produced by the
 quantum min-entropy $H_{min} (
X | E )$, where $X$ denotes the output of the protocol and $E$ denotes the (possibly quantum) information possessed 
by an adversary.  This quantity is appropriate because it measures
the amount of uniformly random bits that can be extracted from $X$
by a randomess extractor (see Chapter 5 of \cite{R05}).

\commentout{The protocol is meant to be used in conjunction with a randomness
extractor --- that is, a deterministic function $\mathrm{Ext} ( X, S )$
on two arguments, $X$ being the
{\em source}, which in this case
is the output of the protocol, and $S$ being a short perfectly random {\em seed}.
It is known that there are quantum-proof randomness extractors $\mathrm{Ext}(X, S)$ that will convert any $N$ bits $X$ that have min-entropy $\Omega(N)$ to near perfect output randomness
of length $\Theta(N)$. }

Let $y\ge0$.  We will say that a subnormalized classical-quantum state
$\rho$ is
{\em $y$-ideal} if its normalization $\rho / \Tr ( \rho )$ has conditional min-entropy greater than or equal to $y$.  (For
convenience, we will say that the zero state is $y$-ideal for all $y$.)
Let $\epsilon_s, \epsilon_c, \lambda$ be reals in $[0, 1]$. 
A randomness expansion protocol is said to have a
{\em yield of $y$ extractable bits} with  a {\em soundness error} $\epsilon_s$ if for any device $D$, and any purifying system $E$ for $D$, the state of 
$(X, E)$ corresponding to the ``success'' event
is always within trace distance $\epsilon_s$ of a $y$-ideal state. 
It is said to have a  {\em completeness error} $\epsilon_c$ with 
noise level $\lambda$ if
there exists an implementation, referred to as the ``correct'' implementation,
so that for any implementation which
deviates by no more than $\lambda$ from the correct implementation, the probability of aborting
is always $\le \epsilon_c$.
If both the soundness and the completeness errors are $\le\epsilon$,
we simply say the protocol has an error $\epsilon$. 

\commentout{
\begin{remark} Our definition of completeness error is quite strong and some weaker notions
may be more desirable in some settings. For example, changing the input state
considered from ``any'' input state to those from a set $\S$ of ideal input states, we
have a notion of completeness with respect to $\S$. Yet another alternative is to
measure error by distance to a normalized ideal output state.  Those quantities are closely related thus our results
can be modified correspondingly.
For example, if an output state is within
$\epsilon$ trace distance to an ideal output state, the chance of aborting is $\le\epsilon$.
Conversely, if the probability of aborting
is $\le \epsilon$, the closest ideal output state is within $2(\epsilon_s+\epsilon)$ trace
distance. 
\end{remark}
}

\vbox{
Our main result is the following. 

\begin{theorem}[Main Theorem]   For any $n$-player strong self-test $G$, and any $\delta > 0$,
\label{mainthmprelim}
there exist positive constants $q_0, \eta_0, K,  b$, such that the following hold
when Protocol R is executed with parameters
$q \leq q_0$, $\eta \leq \eta_0$.
\begin{enumerate}
\item \textnormal{(Soundness.)} The yield is at least 
$(1 - \delta ) N$ extractable bits with a soundness error $\epsilon_s = K \exp(-b qN)$.

\item \textnormal{(Completeness.)} For any constant $\eta'$, $0<\eta'<\eta$, 
the protocol tolerates $\eta'$ noise level with a completeness
error $\epsilon_c=\exp(-(\eta-\eta')^2 qN/3)$.
\end{enumerate}
\end{theorem}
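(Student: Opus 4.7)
The plan is to handle the two parts of the theorem by very different techniques. Completeness is a standard concentration argument; soundness is the main content of the paper and rests on a potential-function argument in terms of R\'enyi divergence.

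For completeness, I would fix the ideal implementation $\Pi$ to be one in which, on every game round, the device plays an optimal quantum strategy for $G$ (winning with probability $\mathbf{w}_G$), and on every generation round it executes the measurement prescribed by that strategy on input $00\cdots 0$. Any implementation $\tilde\Pi$ at noise level $\eta'/n$ satisfies, by the definition of $\Delta_\rho(\Pi,\tilde\Pi)$ together with a union bound across the $n$ component output bits of $Y_i$, that the conditional probability of a game-round failure exceeds $\mathbf{f}_G$ by at most $\eta'$ on average. I would then define the adapted martingale-difference sequence $M_t = \mathbf{1}[\text{round }t\text{ is a failed game round}] - \mathbb{E}_t[\mathbf{1}[\text{round }t\text{ is a failed game round}]]$ and apply Azuma--Hoeffding (or the multiplicative Chernoff form, which produces the $1/3$ in the exponent) to bound the probability that the total failure count exceeds $(\mathbf{f}_G + \eta)qN$ by $\exp(-(\eta - \eta')^2 qN / 3)$, as claimed.

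For soundness, I would maintain, for a suitably chosen $\alpha > 1$ and a specific positive ``bounding operator'' $\sigma_t$, the R\'enyi divergence $D_\alpha(\rho_t \,\|\, \sigma_t)$, where $\rho_t$ is the (sub-normalized) joint state of the output registers, the round-type flags $G_{1:t}$, and the adversary's side information $E$ after $t$ rounds, conditioned on the protocol not having aborted. The goal is to show that $D_\alpha(\rho_t \,\|\, \sigma_t)$ decreases by at least a fixed additive constant $c > 0$ on every generation round and is essentially unchanged on test rounds, so that after $N$ rounds one obtains an exponentially small divergence that converts (via the standard relation between R\'enyi divergence and smooth conditional min-entropy on subnormalized states) into a lower bound of $(1-\delta)N$ on the extractable min-entropy, with trace-distance error $K\exp(-bqN)$; a quantum-proof extractor then finishes. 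The per-round decrease is established in three steps: (i) use the strong self-testing property of $G$ together with the abort condition (which forces the empirical game-win frequency to be at least $\mathbf{w}_G - \eta$) to argue that, in a typical generation round, the device's actual measurement on input $00\cdots 0$ is $O(\sqrt{\eta})$-close to the unique ideal measurement that $G$ self-tests; (ii) invoke the new uncertainty principle for untrusted measurements to lower-bound, in terms of R\'enyi entropy conditioned on all side information, the randomness produced by the ideal measurement; (iii) apply the trusted-measurement simulation technique to transfer the bound from the ideal measurement back to the actual, untrusted one at a cost controlled by (i).

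The main obstacle is binding steps (i)--(iii) together with the multi-round iteration so that errors accumulate only additively, yielding the advertised \emph{linear} rate $(1-\delta)$. A naive per-round application of the uncertainty principle would give entropy lower bounds with errors that degrade at each composition (because smooth min-entropy chain rules lose additive smoothing parameters, and traceable subnormalization compounds across $N$ rounds), and this would force the per-generation-round yield to fall strictly below $1$ bit by a non-vanishing amount --- incompatible with the claim that $\delta$ can be taken arbitrarily small. The remedy, as signalled by the abstract, is to work directly with R\'enyi divergence against the specific bounding operator $\sigma_t$, so that each round's contribution is a state-independent additive constant and the simulation error can be absorbed into the bounding operator itself rather than into a smoothing parameter. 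Correctly defining $\sigma_t$ and verifying that it is preserved (or controllably updated) across both test rounds and generation rounds in the presence of an arbitrary coherent quantum adversary with unrestricted side information is where the bulk of the technical work lives; once this is done, summing the per-round decrements and invoking the divergence-to-min-entropy conversion gives the stated soundness bound.
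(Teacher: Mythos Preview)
Your completeness argument is essentially the paper's (Remark~\ref{rm:completeness_error}): a martingale on the indicator of failed game rounds, the refined Azuma--Hoeffding bound, and the observation that noise level $\eta'/n$ forces $\sum_i|w_i-\mathbf{w}_G|\le\eta' N$.

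For soundness, you have the right outer scaffolding --- a R\'enyi-divergence potential against a carefully designed bounding operator, iterated round by round --- but your step~(i) is a genuine gap. You propose to use the abort condition to argue that on a \emph{typical} generation round the device's measurement is $O(\sqrt\eta)$-close to the ideal self-tested one, and then to transfer the uncertainty-principle bound across that closeness. This ``condition on success, then most rounds are good'' strategy is exactly what works classically and is exactly what the paper says it could \emph{not} make work against a quantum adversary: conditioning on the global success event perturbs the per-round quantum state in ways that are not controlled by any known chain rule, and the robustness bound from strong self-testing applies to a fixed single-round strategy, not to individual rounds of an adaptive device after such conditioning. Your own ``main obstacle'' paragraph correctly identifies the symptom, but the remedy you suggest (working in R\'enyi divergence) does not by itself rescue step~(i) as written.

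The paper takes a structurally different route. Strong self-testing is used not statistically but \emph{algebraically}: via Corollary~\ref{displacedcor} and Theorem~\ref{verygeneraldecompthm}, for \emph{any} device in canonical form the scoring operator decomposes unconditionally as $\mathbf{M}=\delta_G\mathbf{M}'+(\mathfrak{q}_G-\delta_G)\mathbf{M}''$, where $\mathbf{M}'$ is a projective measurement anticommuting with the generation-round measurement and $\|\mathbf{M}''\|\le 1$. No conditioning, no ``typical round'' --- this holds for every round of every device. It means Protocol~R is simulated exactly by Protocol~A$'$ on a \emph{partially trusted device} with fixed trust coefficient $v=\delta_G$ (Theorem~\ref{partialtrustsimthm}). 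The security of A$'$ is then a clean R\'enyi induction (Propositions~\ref{knowndeviceprop}--\ref{multishotprop}, Theorem~\ref{ptsecuritythm}): the uncertainty principle is applied directly to the \emph{trusted} anticommuting pair $(T^{(0)},T^{(1)})$ inside the decomposition, so no closeness transfer is ever needed, and the abort condition enters only through the weight $2^{\kappa/\gamma}$ on the failure branch of the bounding operator $\Sigma$, smoothly penalising failures rather than conditioning on them. The conversion from R\'enyi divergence to smooth min-entropy happens once at the end (Corollary~\ref{maxrenyicor}), giving Corollary~\ref{highratecor} and hence the soundness claim.
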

}

The difficult part of this result is the soundness claim, which
follows from the results of section~\ref{REUDsec} 
(see Corollary~\ref{highratecor}).
The completeness claim follows from the Azuma-Hoeffding
inequality, and is proved in Proposition~\ref{completenessprop}.

Note that the bits $g_1, \ldots, g_N$ can be generated
by $O ( N h ( q ) )$ uniformly random bits with an error $\exp(-\Omega(qN))$, where $h$ denotes the Shannon entropy function.
Therefore, when $q$ is chosen to be small, the protocol needs only $\omega ( \log N )$ initial bits and one device
to achieve $\Omega ( N )$ extractable bits with negligible error. 

\begin{corollary}[One-shot Min-entropy Expansion]\label{co:oneshot} For any real $\omega\in(0,1)$, 
setting $q=\Theta(k^{\omega}/2^{k^{1-\omega}})$
in Theorem~\ref{mainthmprelim}, Protocol R converts any $k$ uniform bits to $2^{k^{1-\omega}}$ extractable bits with $\exp(-\Omega(k^{\omega}))$ soundness and completeness errors.
\end{corollary}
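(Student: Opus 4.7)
The plan is to instantiate Theorem~\ref{mainthmprelim} at $N = \Theta(2^{k^{1-\omega}})$ and $q = \Theta(k^{\omega}/2^{k^{1-\omega}})$, chosen so that the product $qN = \Theta(k^{\omega})$ drives every error in sight. With these parameters Theorem~\ref{mainthmprelim} immediately yields $(1-\delta)N = \Omega(2^{k^{1-\omega}})$ extractable bits, soundness error $K\exp(-bqN) = \exp(-\Omega(k^{\omega}))$, and, for any fixed $\eta' < \eta_0$, completeness error $\exp(-(\eta-\eta')^2 qN/3) = \exp(-\Omega(k^{\omega}))$. So both the yield and the two error estimates in the corollary fall out directly from the main theorem.

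The non-trivial piece is accounting for the uniform randomness actually consumed by Protocol $R$. By the remark preceding the corollary, the biased sequence $g_1, \ldots, g_N$ can be generated from $O(Nh(q))$ uniform bits at an additional simulation cost of $\exp(-\Omega(qN))$. Using $h(q) = \Theta(q\log(1/q))$ for small $q$ together with $\log(1/q) = \Theta(k^{1-\omega})$, I would compute $Nh(q) = \Theta(qN \cdot k^{1-\omega}) = \Theta(k^{\omega} \cdot k^{1-\omega}) = \Theta(k)$. The inputs sampled for $G$ on game rounds (of which, by a Chernoff bound, there are at most $O(qN) = O(k^{\omega})$ except with probability $\exp(-\Omega(k^{\omega}))$) contribute only a further $O(k^{\omega}) = o(k)$ bits. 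Choosing the hidden constants in $q$ appropriately, the total uniform input therefore fits within $k$ bits.

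The only real ``obstacle'' is error bookkeeping: three sources of $\exp(-\Omega(k^{\omega}))$ error, namely (i)~the soundness and completeness bounds of Theorem~\ref{mainthmprelim}, (ii)~the simulation of the biased $g_i$'s from uniform bits, and (iii)~the Chernoff bound on the number of game rounds, must each be routed to the appropriate soundness or completeness side and then combined by a union bound. Since each error is individually of the claimed order and only finitely many are summed, the combined error remains $\exp(-\Omega(k^{\omega}))$, completing the proof.
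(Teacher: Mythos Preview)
Your proposal is correct and matches the paper's own (implicit) argument: the paper does not give a separate proof of this corollary but simply observes, in the sentence preceding it, that the $g_i$'s can be sampled from $O(Nh(q))$ uniform bits with error $\exp(-\Omega(qN))$, and then states the corollary as a direct instantiation of Theorem~\ref{mainthmprelim} with the indicated parameters. Your write-up is in fact more careful than the paper's sketch, since you explicitly account for the randomness used to sample the game-round inputs (via the Chernoff bound on the number of game rounds) and track how each of the three $\exp(-\Omega(k^\omega))$ error contributions feeds into the final soundness/completeness bounds.
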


To obtain near perfect random bits, we apply a quantum-proof strong randomness extractor,
in particular one that extracts a source of a linear amount of conditional quantum min-entropy. 
The parameters of our protocols depend critically on the seed length of such extractors, thus we introduce
the following definition.
\begin{definition}[Seed Length Index] We call a real $\nu$ a {\em seed length index} if
there exists  a quantum-proof strong extractor
extracting $\Theta(N)$ bits from a $(N, \Theta(N))$ source with error parameter $\epsilon$ using
$\log^{1/\nu}(N/\epsilon)$ bits of seed. Denote by $\mu$ the supremum of all seed length indices.
\end{definition}
Such extractors exist with $\nu\ge1/2$, e.g., Trevisan's extractors~\cite{Trevisan:2001}
shown to be quantum-proof by De {\em et al.}~\cite{De:2012}. Thus $\mu\ge 1/2$.
The definition of soundness error for producing $y$ bits of perfect randomness
is the same for producing extractable random bits, except that the ideal C-Q state conditioned on Success
is the product state of $y$ perfectly random bits and a quantum state. The following
corollary follows directly by composing protocol R and an extractor with $\nu$ close to $\mu$.

\begin{corollary}[One-shot Randomness Expansion]\label{co:oneshotRand} For any $\omega\in(0,\mu)$, 
setting $q=\Theta(k^{\omega}/2^{k^{\mu-\omega}})$
in Theorem~\ref{mainthmprelim}, Protocol R composed with an appropriate quantum-proof strong extractor
converts $k$ bits to $2^{k^{\mu-\omega}}$ uniform bits with soundness and completeness
errors $\exp(-\Omega(k^{\omega}))$.
\end{corollary}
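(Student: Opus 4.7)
The plan is to compose Corollary~\ref{co:oneshot} (equivalently, Theorem~\ref{mainthmprelim}) with a quantum-proof strong randomness extractor whose seed length index is arbitrarily close to $\mu$. Split the $k$ uniform input bits into two disjoint blocks: a block $B_R$ used to drive Protocol R (for sampling the biased coins $g_1,\dots,g_N$ and, on game rounds, the game inputs) and a block $B_{\mathrm{ext}}$ reserved as the extractor seed applied to Protocol R's output at the very end. Because $B_{\mathrm{ext}}$ never enters the device during the protocol, it remains uniform and independent of the device, the protocol transcript, and the adversary's quantum side information, so it is a valid seed for a quantum-proof strong extractor acting on a source of conditional min-entropy $\Omega(N)$.

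Run Protocol R with parameters $q=\Theta(k^\omega/2^{k^{\mu-\omega}})$ and small enough noise tolerance $\eta$, for $N=\Theta(2^{k^{\mu-\omega}})$ rounds. By Theorem~\ref{mainthmprelim}, the resulting $N$-length transcript is within trace distance $\exp(-\Omega(qN))=\exp(-\Omega(k^\omega))$ of a subnormalized $(1-\delta)N$-ideal state, with a matching completeness bound. The input cost from $B_R$ is dominated by simulating the biased $g_i$'s from uniform bits, which costs
\[
O\bigl(N\,h(q)\bigr)=O\bigl(Nq\log(1/q)\bigr)=O\bigl(k^\omega\cdot k^{\mu-\omega}\bigr)=O(k^\mu)
\]
with a simulation error $\exp(-\Omega(qN))$ absorbed into the total error, plus a lower-order $O(qN)=O(k^\omega)$ bits for the game inputs. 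Now choose a seed length index $\nu$ satisfying $\max(\mu-\omega,\omega)<\nu<\mu$; this is possible because $\omega\in(0,\mu)$ forces $\max(\mu-\omega,\omega)<\mu$, while $\mu$ is the supremum of seed length indices. Apply the corresponding quantum-proof strong extractor with target error $\epsilon=\exp(-k^\omega)$ to extract $\Theta(N)=\Theta(2^{k^{\mu-\omega}})$ almost-uniform bits; the required seed length is
\[
\log^{1/\nu}(N/\epsilon)=\Theta\bigl((k^{\mu-\omega}+k^\omega)^{1/\nu}\bigr)=O(k^{\max(\mu-\omega,\omega)/\nu})=O(k^\alpha)
\]
for some $\alpha<1$, so $|B_{\mathrm{ext}}|=o(k)$.

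Summing contributions, $|B_R|+|B_{\mathrm{ext}}|=O(k^\mu)+o(k)\leq k$ after fixing the hidden constants in $N$ (shrinking $N$ by at most a constant factor if $\mu=1$). The total error is bounded by the sum of the Protocol~R error, the biased-coin simulation error, and the extractor error, each of which is $\exp(-\Omega(k^\omega))$, giving the claimed soundness and completeness errors. The only substantive obstacle is the constant-level bookkeeping for the claim $|B_R|+|B_{\mathrm{ext}}|\leq k$; the essential inequality is that the seed-length-index exponent $\max(\mu-\omega,\omega)/\nu$ can be pushed strictly below $1$ by choosing $\nu$ close enough to $\mu$, which is precisely where the definition of $\mu$ as a supremum is used.
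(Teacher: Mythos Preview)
Your proposal is correct and takes essentially the same approach as the paper: the paper simply states that the corollary ``follows directly by composing protocol R and an extractor with $\nu$ close to $\mu$'' without spelling out any details, and your argument fills in precisely the bookkeeping that this one-liner implicitly requires. Your observation that one must choose a seed length index $\nu$ with $\max(\mu-\omega,\omega)<\nu<\mu$ (possible since $\mu$ is a supremum and both $\omega$ and $\mu-\omega$ lie strictly below $\mu$) is the only nontrivial point, and you handle it correctly.
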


The next corollary addresses cryptographic security.  
(Note: In measuring running time, one round of interaction with the device is considered a unit time.)

\begin{corollary}[Cryptographic Security] With the parameters in Corollary~\ref{co:oneshotRand},
the running time of the protocol is $T:=\Theta(2^{k^{\mu-\omega}})$. Thus for any $\lambda>1$, setting
$\omega=\frac{\lambda}{1+\lambda}\mu$, the errors are $\exp(-\Omega(\log^{\lambda} T))$,
which are negligible in $T$. That is, the protocol with those parameters achieves
cryptographic quality of security (while still exponentially expanding.)
\end{corollary}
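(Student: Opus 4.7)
The corollary is essentially an arithmetic observation tracking how the parameters from Corollary~\ref{co:oneshotRand} compose when a specific choice of $\omega$ is made, so the plan is just to do the substitution carefully and then check the two claims (that $T = \Theta(2^{k^{\mu - \omega}})$ and that the errors are $\exp(-\Omega(\log^\lambda T))$).

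First I would justify the running time. The protocol interacts with the device in $N$ rounds, one interaction per unit time, so the running time is $T = \Theta(N)$. By Corollary~\ref{co:oneshotRand}, applied with parameter $\omega$, the output length is $N = 2^{k^{\mu - \omega}}$, giving $T = \Theta(2^{k^{\mu - \omega}})$. I would verify along the way that the claimed $\omega = \frac{\lambda}{1+\lambda}\mu$ lies in the permissible range $(0,\mu)$: for $\lambda > 1$ one has $\frac{\lambda}{1+\lambda} \in (1/2, 1)$, so $\omega \in (\mu/2,\mu) \subset (0,\mu)$, and Corollary~\ref{co:oneshotRand} applies.

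Next I would translate the error bound. Corollary~\ref{co:oneshotRand} gives soundness and completeness errors $\exp(-\Omega(k^\omega))$. Taking logarithms, $\log T = \Theta(k^{\mu-\omega})$, so $\log^\lambda T = \Theta(k^{\lambda(\mu-\omega)})$. With $\omega = \frac{\lambda}{1+\lambda}\mu$ one has $\mu - \omega = \frac{\mu}{1+\lambda}$, and therefore $\lambda(\mu-\omega) = \frac{\lambda\mu}{1+\lambda} = \omega$. Hence $\log^\lambda T = \Theta(k^\omega)$, and the errors become $\exp(-\Omega(\log^\lambda T))$ as claimed.

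Finally I would verify cryptographic negligibility: for any constant $c$, $\exp(-\Omega(\log^\lambda T)) = T^{-\Omega(\log^{\lambda - 1} T)}$, which for $\lambda > 1$ decays faster than $T^{-c}$ as $T \to \infty$. Thus the errors are smaller than any inverse polynomial in the running time, which is the standard notion of negligibility in cryptography. There is no real obstacle here; the work is in confirming that the exponent identity $\lambda(\mu-\omega) = \omega$ comes out exactly from the stated choice of $\omega$, which it does.
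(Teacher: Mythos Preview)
Your proposal is correct and matches the paper's approach exactly; the paper in fact gives no separate proof, embedding the arithmetic directly in the corollary's statement, and your write-up simply fills in the parameter substitution $\lambda(\mu-\omega)=\omega$ and the negligibility check that the statement leaves implicit.
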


Once we have near perfect randomness as output,
we can use it as the input to another instance of the protocol, thus expanding further
with an accumulating error parameter. As the error parameters
decrease at an exponential rate, they are dominated by the first set of errors.

\begin{corollary}[Robust Unbounded Randomness Expansion]\label{co:unboundedNaive} 
For all integers $N$ and $k$, and any real $\omega\in (0, \mu)$, $k$ uniformly random bits can be expanded to $N$ output bits with $\exp(-\Omega(k^{\omega}))$ error under a constant level of noise. The procedure uses 
$O( \log^* N)$ iterations of Protocol R using $O(\log^*N)$ devices.
\end{corollary}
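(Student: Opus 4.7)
The plan is to iterate Corollary~\ref{co:oneshotRand} a total of $O(\log^* N)$ times, using a fresh untrusted device at each stage and feeding the output of iteration $i$ as the seed of iteration $i+1$. First I would fix a small constant $\omega\in(0,\mu)$ and define the sequence of seed lengths by $k_0 = k$ and $k_{i+1} = 2^{k_i^{\mu-\omega}}$. Since $\mu-\omega>0$, this recursion grows doubly-exponentially, so the number of iterations needed before $k_i \geq N$ is $t = O(\log^* N)$. Each iteration consumes its input seed and, by Corollary~\ref{co:oneshotRand}, produces $k_{i+1}$ near-uniform bits with soundness and completeness error at most $\exp(-\Omega(k_i^\omega))$, provided the protocol is executed with test probability $q_i = \Theta(k_i^\omega / 2^{k_i^{\mu-\omega}})$ on a fresh $n$-player strong self-test device.

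Second, I would bound the accumulated error. Since $k_{i+1} \geq 2^{k_i^{\mu-\omega}}$, the per-stage error $\exp(-\Omega(k_i^\omega))$ decreases at a doubly-exponential rate in $i$, so the geometric-like sum
\[
\sum_{i=0}^{t-1} \exp(-\Omega(k_i^\omega))
\]
is dominated by its first term $\exp(-\Omega(k^\omega))$. The completeness error is handled identically by summing the per-stage completeness guarantees from Theorem~\ref{mainthmprelim}, which also tolerates a constant noise level $\eta'/n$ at every stage.

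Third, the delicate step is justifying composition: the input to iteration $i+1$ is only $\exp(-\Omega(k_i^\omega))$-close to uniform in trace distance, yet Corollary~\ref{co:oneshotRand} was stated assuming a uniformly random seed and an arbitrary untrusted device. To close this gap I would invoke the Equivalence Lemma of Chung, Shi, and Wu~\cite{CSW14} (the same tool that enables cross-feeding in the constant-device case). The Equivalence Lemma asserts that feeding a source within $\epsilon$ trace distance of uniform into a downstream untrusted-device protocol affects the overall security only by an additive $\epsilon$, as long as the downstream device is independent of and cannot communicate with the upstream devices. Since we use a distinct device at each of the $t$ stages and the no-communication assumption holds throughout, the composition error incurred at each stage is absorbed into the already-dominant $\exp(-\Omega(k^\omega))$ bound.

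The main obstacle is this composition argument: naive composition of trace-distance errors would require the downstream protocol's security to be formulated in a way that handles subnormalized, not perfectly uniform, classical-quantum inputs. This is exactly the role of the Equivalence Lemma, and invoking it correctly requires that the output of each stage be of the ``extractable near-uniform'' form produced after applying the quantum-proof strong extractor, so that the state really is close to a product of a uniform register and a quantum side-information register. With that in hand, the overall procedure uses $t = O(\log^* N)$ iterations, $t = O(\log^* N)$ independent devices, and produces $\geq N$ output bits with total error $\exp(-\Omega(k^\omega))$ under a constant noise level, as required.
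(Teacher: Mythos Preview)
Your overall approach matches the paper's: iterate Corollary~\ref{co:oneshotRand} with a fresh device at each stage, observe that the seed lengths grow as an exponential tower so $O(\log^* N)$ stages suffice, and note that the per-stage errors decrease so rapidly that their sum is dominated by the first term $\exp(-\Omega(k^\omega))$. That is exactly the argument the paper sketches in the text preceding the corollary.

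However, your third step mis-identifies the tool needed for composition. What you describe as the Equivalence Lemma---``feeding a source within $\epsilon$ trace distance of uniform into a downstream protocol affects security by an additive $\epsilon$''---is not the Equivalence Lemma at all; it is simply contractivity of the trace distance under quantum operations (a triangle inequality). And contractivity is all that is needed here: because stage $i{+}1$ uses a \emph{fresh} device $D_{i+1}$, that device is part of the purifying system $E$ for stage $i$, so the stage-$i$ output is already $\epsilon_i$-close to globally uniform relative to $D_{i+1}$ and the adversary. Hence the hypothesis of Corollary~\ref{co:oneshotRand} applies directly up to an additive $\epsilon_i$, with no further machinery. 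The actual Equivalence Lemma of~\cite{CSW14} says something different---that security on \emph{globally}-uniform inputs implies security on merely \emph{device}-uniform inputs---and the paper reserves it for Corollary~\ref{co:unbounded}, where devices are reused and the output of a stage is \emph{not} guaranteed to be uniform relative to the device that produced it. So your invocation is harmless but unnecessary here, and your paraphrase of the lemma is inaccurate; you should replace that step with the straightforward contractivity argument.
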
 

To decrease the number of devices used in unbounded expansion, a possibility
(used, e.g., in \cite{fehr13}) is to cross-feed the outputs of two devices
(i.e., give the output of one device as the input to another, and then vice
versa).  But there is an apparent obstacle for proving security for such an approach: once a device produces output, this output is now correlated
with the device itself.  When this output is fed to a second device to produce
new output, one needs to show
that the correlation with the first device does not persist.
(If it did, then at the third iteration one would be feeding the first device a seed that was
correlated with the first device itself, thus causing an insecurity.)

\cite{CY} call this
the \textit{input security problem}, and solve it 
by an improved analysis of the Reichardt-Unger-Vazirani protocol~\cite{ruv:2013}.
Under this new analysis, the RUV protocol 
turns a uniform-to-device input into a globally-uniform  output with sufficiently strong 
parameters. By interleaving the RUV protocol with
the exponentially expanding protocol of \cite{Vazirani:dice},
they prove non-robust unbounded expansion with 4 two-part devices.

An independent result of \cite{CSW14} can be used to address this problem in a
different way.  The Equivalence Lemma of~\cite{CSW14} states that if a randomness
expansion protocol is secure with a globally random input,
then it is also automatically secure with any uniform-to-device input. 
This means that the correlation of each device with its own output does
not cause a problem.  Consequently, unbounded expansion with $2$ devices
can be achieved by cross-feeding {\it any} secure randomness expansion protocol.

We therefore have the following corollary (which is subsequent to \cite{CY},
though based on independent techniques).  See section~\ref{sec:unbounded}.

\begin{corollary}[Robust Unbounded Randomness Expansion with 2 Devices]\label{co:unbounded} 
The number of (multi-part) devices used in Corollary~\ref{co:unboundedNaive} can be reduced to $2$.
\end{corollary}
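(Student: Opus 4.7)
The plan is to reuse the iteration structure of Corollary~\ref{co:unboundedNaive} but physically restrict the hardware to two multi-part devices, $D_A$ and $D_B$, that alternate roles. Concretely, the $k$-bit uniform seed is fed to Protocol~R on $D_A$; the output is passed through a quantum-proof extractor to produce near-uniform bits; these bits are then used as the seed for Protocol~R on $D_B$; the output of that is extracted and used as the seed for another invocation of Protocol~R on $D_A$; and so on. Because each (Protocol~R $+$ extractor) composition expands exponentially (Corollary~\ref{co:oneshotRand}), $O(\log^{*} N)$ such cross-feeding rounds already suffice to reach output length $N$, so there is no need to grow the device count with $N$.

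The only real obstacle is the one flagged in Remark~\ref{rm:unbounded}: after running Protocol~R on $D_A$, the extracted output is correlated with $D_A$ itself, so when it is handed to $D_B$ the seed is no longer globally random relative to the joint system. I would handle this by invoking the Equivalence Lemma of Chung--Shi--Wu~\cite{CSW14}, which says that a randomness expansion protocol proved secure under a globally uniform seed is automatically secure under any seed that is uniform conditioned on the device being used (``uniform-to-device''). The extracted output of $D_A$ is uniform-to-adversary, hence in particular uniform-to-$D_B$; Protocol~R on $D_B$ therefore yields output that is near-uniform against an adversary containing $D_A$; that output is in turn uniform-to-$D_A$, so the next invocation on $D_A$ satisfies the hypothesis of the Equivalence Lemma again. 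The argument iterates cleanly, and crucially no extra scrambling, purification, or additional device is needed between rounds.

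For the error bookkeeping, at iteration $i$ the input length is $N_i$ and, by Corollary~\ref{co:oneshotRand} applied with a suitable $\omega\in(0,\mu)$, the soundness and completeness errors introduced are $\exp(-\Omega(N_i^{\omega}))$. Since $N_{i+1}$ grows like $2^{N_i^{\mu-\omega}}$, these per-iteration errors decay doubly-exponentially in $i$ and are dominated, via a geometric sum, by the $\exp(-\Omega(k^{\omega}))$ error of the first iteration. Robustness is inherited round by round: a constant noise level per iteration is tolerated by Theorem~\ref{mainthmprelim}, and the Azuma--Hoeffding based completeness analysis telescopes to an $\exp(-\Omega(k^{\omega}))$ overall completeness error under a constant level of noise, yielding the claimed bound with only two devices.
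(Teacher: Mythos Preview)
Your proposal is correct and follows essentially the same approach as the paper: cross-feed two devices alternately, invoke the Equivalence Lemma of \cite{CSW14} to upgrade ``uniform-to-device'' input security from the established ``globally uniform'' input security, observe that the output of each iteration is adversary-uniform (where the adversary includes the inactive device) and hence device-uniform for the next iteration, and sum the geometrically decreasing errors via the triangle inequality. The paper packages this induction as a general Composition Lemma (Lemma~\ref{lm:composition}), but the content is the same as what you outlined.
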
 

To apply our protocol to randomness amplification, we can use
the results of \cite{CSW14}.  The amplification protocol in \cite{CSW14}
requires having a robust randomness certification procedure
to call as a subroutine; for this, we can use Protocol $R$ with
$q = \Theta (1)$.  
The  amplification protocol converts a $n$-bit, min-entropy $\ge k$
weak source to a near perfectly random output of $\Theta(k)$ bits. Then we can concatenate with the protocol of Corollary~\ref{co:unbounded}
to expand to an arbitrarily long near perfect randomness.  (Here the improvement
from Corollary~\ref{co:unboundedNaive} to Corollary~\ref{co:unbounded} 
implies that the number of devices need not depend on the output length.)

\commentout{While the QKD protocol of Vazirani and Vidick~\cite{Vazirani:fully}
can also serve as a building block for the Chung-Shi-Wu protocol, the use of our protocol has two advantages
for in-between-rounds communications and flexibility of the non-local game used.}

\begin{corollary}[with \cite{CSW14} --- Randomness Amplification] \label{co:randamp}
Let $\nu\in[1/2, \mu]$ be a seed length index.
For all sufficiently large integer $k$, any integer $n = \exp(O(k^{\nu^2}))$,
any real $\epsilon=\exp(-O(k^{\nu^2}))$,
any $(n, k)$ source can be converted to an arbitrarily long near perfect randomness
with $\epsilon$ soundness and completeness errors under a (universal) constant level
of noise. The number of devices used is $2^{O(\log^{1/\nu} (n/\epsilon))}$, which in particular does not depend on the output length.
\end{corollary}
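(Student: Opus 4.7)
The plan is to compose three ingredients: the Chung--Shi--Wu randomness amplification protocol \cite{CSW14}, our one-shot randomness expansion (Corollary~\ref{co:oneshotRand}), and our robust two-device unbounded expansion (Corollary~\ref{co:unbounded}). At a high level I would (i) run the CSW14 amplification protocol, instantiated with our one-shot protocol as its internal expansion subroutine at test probability $q = \Theta(1)$ so that each invocation is noise-tolerant and has a linear-rate output, converting the $(n,k)$ weak source into $\Theta(k)$ near-uniform bits; (ii) feed this short near-uniform string as the seed to the two-device cross-feeding expander of Corollary~\ref{co:unbounded}, stretching it to an arbitrary target length. Near-perfect randomness at the intermediate stage is produced in the usual way, by passing the raw protocol output through a quantum-proof strong extractor with seed-length index close to $\nu$.

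For error accounting, recall that in the two-device unbounded-expansion stage the accumulating soundness and completeness errors decay geometrically, so they are dominated by the error coming out of the amplification stage. Plugging Corollary~\ref{co:oneshotRand} into CSW14 with parameter $\omega$ tuned appropriately yields amplification error $\exp(-\Omega(k^{\nu^2}))$, which matches the stated $\epsilon$. Because each ingredient (our one-shot R, CSW14 amplification built on R, and the two-device unbounded expander) tolerates a constant noise level, the composition tolerates a constant noise level equal to the minimum of those thresholds. The device count is dominated by the amplification stage: the seed-length index $\nu$ controls the seed length $\log^{1/\nu}(n/\epsilon)$ of the internal extractor, which enters the CSW14 device-count recurrence and yields $2^{O(\log^{1/\nu}(n/\epsilon))}$ devices; the final unbounded expansion adds only $2$ devices, absorbed into this bound. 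Crucially, this count is independent of the final output length.

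The main obstacle is verifying that our one-shot protocol slots cleanly into the CSW14 framework while preserving the quoted parameters. This requires three compatibility checks: that our protocol meets CSW14's abstract interface for its expansion building block (quantum soundness, robust completeness, linear yield of extractable min-entropy, and the appropriate seed-length index); that the Equivalence Lemma of \cite{CSW14} correctly licenses the cross-feeding step, where one device's output becomes another device's input without reintroducing correlations with the adversary; and that the errors from the many sub-invocations inside CSW14 sum to only $\exp(-\Omega(k^{\nu^2}))$, which uses the exponential decay of Theorem~\ref{mainthmprelim} in an essential way. Once these interface checks are in place, the statement follows by assembling the three corollaries with the triangle inequality for trace distance.
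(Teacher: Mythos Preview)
Your proposal is correct and follows essentially the same route the paper sketches: the paper does not give a formal proof of this corollary but explicitly indicates (in the text preceding it) that one should plug Protocol~R with $q=\Theta(1)$ into the Chung--Shi--Wu amplification protocol to obtain $\Theta(k)$ near-perfect bits, and then apply Corollary~\ref{co:unbounded} to expand without bound using two additional devices. Your error accounting, noise-tolerance argument, and device-count analysis are exactly the considerations implicit in the paper's statement.
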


We point out that the number of devices $T=T(n,1/\epsilon)$ used as a function of the weak source length $n$ and 
the error parameter $\epsilon$ grows super-polynomially (if $\mu<1$) or polynomially (if $\mu=1$).
It remains a major open problem if $T(n,1/\epsilon)$ can be substantially reduced or even be made a universal constant.
We stress, however, that the limitation imposed by this function is better interpreted as limiting the achievable error,
instead of computational efficiency. This is because, $T$ could still scale efficiently as a function of the {\em output} length.
For example, to output $N$ bits, as long as $\epsilon=\exp(-O(\log^{\nu} N))$, the number of devices is still polynomial in
$N$.  Therefore, the question of improving $T$ is the question of broadening the application of the combined amplification-expansion
protocol to settings requiring inverse-polynomial or even cryptographic quality of error.

Lastly, we state our result on quantum key distribution.
Suppose that Alice and Bob would like to establish a shared secret string in an environment where trusted randomness is a scarce resource,
and consequently their initial randomness is much shorter than the desired output length.
(As with other studies on quantum key distribution (QKD), we will sidestep the authentication issue, assuming that the man-in-the-middle attack is already dealt with.)
One way to adapt our randomness expansion protocol for untrusted-device QKD scenario is for Alice to expand her initial randomness,
then use the expanded, secure randomness to execute the untrusted device QKD protocol of Vazirani and Vidick~\cite{Vazirani:fully}.
The end result is an exponentially expanding key distribution protocol.
An alternative approach, which is the focus of our new contribution, is to directly adapt our expansion protocol
to achieve simultaneously randomness expansion and key distribution (see Protocol $R_{\textrm{kd}}$
in Fig.~\ref{protocolrqkd}). The benefits of doing so is the reduction of 
the number of untrusted devices from 2 to 1.

We present the details in section~\ref{sec:qkd} and state our main result on key distribution below.
The notion of soundness and completeness errors are similarly defined: the soundness error
is the distance of the output distribution to a mixture of aborting and an output randomness of a desired smooth
min-entropy,
and the completeness error is the probability of aborting for an honest (possibly noisy) implementation.

\begin{corollary}[Robust Untrusted-Device QKD with Short Seed]\label{co:qkd} 
For any strong self-test $G$, there exist positive constants $r, \lambda, \eta, q_0$ such that for infinitely many positive
integers $N$ and
any $q\le q_0$, Protocol  $R_{\textrm{kd}}$~(Fig.~\ref{protocolrqkd}) satisfies the following.

\begin{enumerate}
\item \textnormal{(Soundness.)} The protocol obtains a key of $rN$ extractable bits
with a soundness error
\begin{eqnarray*}
\epsilon_s = \exp(-\Omega(qN)+O(1)).
\end{eqnarray*}
\item \textnormal{(Completeness.)} For any constant $\eta'$, $0<\eta'<\eta$, 
the protocol tolerates $\eta'$ noise level with a completeness
error $\epsilon_c=\exp(-\Omega((\eta-\eta')^2 qN))$.
\end{enumerate}

The number of initial random bits is $O(Nh(q) + \log N)$, and the time complexity is polynomial in $N$.

\commentout{
In particular, for any $\omega\in(0,1)$ and all sufficiently large $k$,
$N':=2^{k^{1-\omega}}$ extractable bits can be distributed using one (multi-part) untrusted quantum device
starting with a seed of $k$ bits, with soundness and completeness errors
$\exp(-\Omega(k^{\omega}))$ and tolerating a constant level of noise. The corresponding $N=\Theta(N')$,
thus the time complexity is $\textrm{poly}(N')$.
}
\end{corollary}

Thus, for example, if we set $q = (\log^2 N ) / N$, we can distribute $\Omega ( N )$
extractable bits using a seed of size $O ( \log^3 N )$, with error terms achieving cryptographic security.
Composing this protocol with a quantum-proof randomness extractor that uses a polylogarithmic seed
\cite{De:2012} yields untrusted-device QKD from a polylogarithmic seed.

\subsection{Numerical Results}
The proof methods in the paper are sufficient
to give actual numerical bounds for the amount of
randomness generated by Protocol R.
In subsection~\ref{ghzsubsec} we offer an example
showing how this is done.
If $G$ is a strong self-test, then there is an associated
quantity $\mathbf{v}_G > 0$ (called the trust coefficient).  Let
\begin{eqnarray}
\pi ( y ) = 1 - 2 y \log \left( \frac{1}{y} \right) -
2 (1 - y ) \log \left( \frac{1}{1-y} \right).
\end{eqnarray}
We show that if $\eta < \mathbf{v}_G/2$, Protocol $R$ produces
$\pi ( \eta/\mathbf{v}_G) N$ extractable bits
per round, modulo error terms (see
Corollary~\ref{rateextractablecor}).
In particular, a positive rate is achieved
provided that $\pi ( \eta / \mathbf{v}_G ) > 0$,
which occurs when $\eta < 0.11 \cdot \mathbf{v}_G$.
Subsection~\ref{ghzsubsec}, shows
that $\mathbf{v}_{GHZ} \geq 0.14$.
Therefore, the GHZ game achieves 
a positive linear rate provided
that $\eta < 0.11 \cdot 0.14 = 0.0154$.

\section{Overview of Proofs}
\label{pfoverviewsec}

While proving classical security of randomness expansion protocols is mainly
appropriate applications of Azuma-Hoeffiding inequality, proving quantum security
is much more challenging. The proof for the Vazirani-Vidick protocol~\cite{Vazirani:dice}
relies on a characterization of quantum smooth min-entropy
based on the quantum-security of Trevisan's extractors~\cite{De:2012}.
We take a completely different approach, without any reference to extractors 
in the main security proof.  Below we summarize some of the tools used in our proof, which
we are hopeful will find applications elsewhere.


\subsection{Quantum R{\'e}nyi Entropies}

We follow previous work \cite{TomamichelCR:2009}, \cite{DupuisFS:2013}
and use the Renyi entropy function $H_\alpha ( \rho )$ and Renyi divergence function
$D_\alpha ( \rho \| \sigma )$ to lower bound the number of extractable bits in a classical
register with quantum side information.  (See subsection~\ref{renyisection}.) 
Crucially, we use the newer definition of the quantum Renyi divergence function (the ``sandwiched'' definition)
which was introduced in \cite{JaksicOPP:2011} and developed in
\cite{MullerDSFT:2013}, \cite{WildeWY:2013}.   

In \cite{TomamichelCR:2009}, the authors prove a lower bound on the conditional smooth min-entropy
of $n$ identical copies of a bipartite system $\rho_{AB}$ in terms of its relative entropy
$H( A \mid B )_\rho$.  They accomplish this by using the Renyi entropy $H_\alpha$ as an intermediate quantity,
exploiting inequalities that relate it to both $H_{min}^\epsilon$ and $H$, and
then using the additive property of $H_\alpha$.  Our proof incorporates a similar line of reasoning: we prove inductively
an upper bound on the Renyi divergence of the outputs of Protocol $R$ (conditioned on the adversary),
and then use this to compute a lower bound for the same outputs expressed in  terms of smooth min-entropy.

An challenge in our proofs
is choosing the right parameter $\alpha$.  If $\alpha$
is too close to $1$, the penalty term in the inequality that
relates $H_\alpha$ to $H^\epsilon_{min}$ will be large enough to make the lower bound
on smooth min-entropy useless; but if $\alpha$ is too far
from $1$, the Renyi entropy is not sensitive enough
to detect the effect of rare events, such as the game rounds
in Protocol $R$.  The parameter $\alpha$ is therefore
adjusted according to parameters in the Protocol R --- roughly
speaking, it is set so that $\alpha - 1$ is proportional
to the parameter $q$.  

Our first original result (Theorem~\ref{uncertaintythm})
is an Renyi entropy uncertainty principle for measurements on an entangled qubit.  If $QE$ is
a bipartite system where $\dim Q = 2$, let $\{ \rho_0, \rho_1 \}$
and $\{ \rho_+ , \rho_- \}$ denote the subnormalized states
of $E$ that arise from measuring the computational basis
and the Hadamard basis on $Q$, respectively.  Theorem~\ref{uncertaintythm}
expresses uniform constraints (independent of the dimension of $E$)
on the quantities $\Tr [ \rho_x^{1+\epsilon} ]$.  This parallels 
other known uncertainty relations \cite{WehnerW:2010}.
The  proof is based on a known
matrix inequality for the $(2+2\epsilon)$-Schatten norm.

\subsection{Partially Trusted Measurement Simulation}

A key insight which enables our proof is that untrusted devices
can be used to simulate \textit{partially} trusted measurements.
Let us say that a \textit{device with trusted measurements} $F$ is a 
single-part input-output device which receives a single bit as an input,
and, depending on the value of the bit, performs one of two perfectly
anti-commutative binary measurements on a quantum system.  
The measurements of the device are trusted, but
the state is unknown.  Now consider another 
single-part binary device $F'$
which performs as follows (for some real parameters
$v, h$):
\begin{enumerate}
\item On input $0$, $F'$ performs the same measurement as $F$.

\item On input $1$, one of the following occurs at random:
\begin{enumerate}
\item $F'$ performs the same measurement
as $D$ (probability = $v$);
\item $F'$ outputs a perfectly coin flip (probability = $h$);
\item $F'$ performs an unknown measurement (probability = $1 - v - h$).
\end{enumerate}
\end{enumerate}
The device $F'$ is what we will call a \textit{partially trusted} device
(see Definition~\ref{partiallytrusteddef} for a formal definition).

Consider the state of the device $D$
after steps 1--3 in Protocol R.  Let $G_1$ be a classical
register containing the bit $g$, and let $O_1$ be a classical register
which we set to be $0$ if the output is $P$ or $H$, and $1$
if the output is $F$ or $T$.  We show (sections \ref{gamessec}--\ref{REpartialsec}) that the joint state of $G_1 O_1$ can be simulated by a partially trusted
device $D'$ which accepts $G_1$ as its input and produces $O_1$ as
its input.  (Here, ``simulation'' means that if either device
is prepared with an intial purifying system $E$, the joint state
$EG_1O_1$ will be the same up to isomorphism regardless
of which device was used.)

We define a new protocol (Protocol A', Figure~\ref{protaprimefig}) which
is essentially Protocol R with its device replaced by a single-part
partially trusted device.  Proving the security of Protocol R reduces to 
proving the security of Protocol A'.

\subsection{An Induction Proof With a Weighted Measure of Randomness}

The next step is to prove the security of Protocol $A'$.
Let $G = (G_1, \ldots, G_N)$ and $O = (O_1, \ldots, O_N)$ denote
registers containing the input bits and output bits, respectively,
from Protocol $A'$, and let $E$ denote a purifying system for the
device in Protocol $A'$.  Let $\Gamma_{EGO}^s$ denote the subnormalized
state of these three systems corresponding to the ``success'' event ($s$).
Our approach is to prove an upper bound on the (negative) quantity
\begin{eqnarray}
\label{quantitytobound}
D_\alpha \left( \Gamma^s_{EGO} \|  \Gamma_{EG} \otimes \mathbb{I}_O \right).
\end{eqnarray}

Another central insight for our proof is the idea of using a weighted measure
of randomness.  Consider the first-round registers $G_1$ and $O_1$,
and $E$.  The bounding
operator $\Gamma_{E G_1} \otimes \mathbb{I}_{O_1}$ on
$E G_1 O_1$ is equal to
\begin{eqnarray*}
(1-q) \Gamma_E \otimes \left| 00 \right> \left< 00 \right| + 
(1-q) \Gamma_E \otimes \left| 01 \right> \left< 01 \right| + (q) \Gamma_E \otimes \left| 10 \right> \left< 10 \right| + 
(q) \Gamma_E \otimes \left| 11 \right> \left< 11 \right|
\end{eqnarray*}
Let $\lambda > 0$ be a real parameter, and consider the following
alternative operator, where we have inserted the
factor $2^\lambda$ in the fourth summand:
\begin{eqnarray*}
 \Sigma & := 
& (1-q) \Gamma_E \otimes \left| 00 \right> \left< 00 \right| + 
(1-q) \Gamma_E \otimes \left| 01 \right> \left< 01 \right| 
+ (q) \Gamma_E \otimes \left| 10 \right> \left< 10 \right| + 
(q) 2^\lambda \Gamma_E \otimes \left| 11 \right> \left< 11 \right|.
\end{eqnarray*}
The factor $2^\lambda$ artificially adds randomness when
the event $(g, o) = (1, 1)$ (which corresponds to a game-loss
in Protocol $R$) occurs.  Effectively, we lower our expectation
for randomness according to how well the device is performing.

Our uncertainty principle for Renyi entropy implies that, for appropriate
$\lambda, \alpha$, the quantity $D_\alpha ( \Gamma_{EG_1O_1} \|
\Sigma )$ has a uniform upper bound less than zero.  This enables an induction proof
which shows an upper bound on (\ref{quantitytobound}).

A version of this argument is carried out in section~\ref{partialtrustapp}.
We deduce a lower bound on the number extractable bits output by Protocol
$A'$.  By the reduction discussed above, this implies a lower bound
on the number of extractable bits output by Protocol $R$ (see
section~\ref{REUDsec},
Corollary~\ref{noisyratecor}).

\subsection{Quantum Key Distribution}

Proving quantum distribution requires first showing that
when the noise tolerance in Protocol $R$ is set sufficiently
low, and the protocol succeeds, then the device must
score well not only during game rounds but
also during generation rounds.  This is accomplished
using Azuma's inequality.  A  consequence is that if two parties
possess different subsets of the components of the device $D$,
they can use these devices to construct strings of length $N$ which
differ in at most $(1/2 - \lambda)N$ places, where 
$\lambda > 0$.  We then perform
efficient information reconciliation on these strings,
adapting previous work \cite{Guruswami:side}, \cite{Smith:optimal}. 

\section{Preliminaries}

\subsection{Notation}

\label{prelimsec}

When a sequence is defined, we will use Roman font to refer to individual terms
(e.g., $h_1, \ldots, h_n$) and boldface font to refer to the sequence as a whole (e.g., $\mathbf{h}$).
For any bit $b$, let $\overline{b} = 1 - b$.  For any sequence of bits
$\mathbf{b} = ( b_1, \ldots, b_n )$, let $\overline{\mathbf{b}} = (\overline{b_1} , 
\ldots , \overline{b_n} )$.

We write the expression $f ( x )^y$ (where $f$ is a function) to mean
$(f(x))^y$.  Thus, for example, in the expression
\begin{eqnarray}
\Tr [ Z]^{1/q}
\end{eqnarray}
the $(1/q)$th power map is applied after the trace function, not before it.

We write $(\log x)$ to denote the logarithm with base $2$,
and we write $(\ln x)$ to denote the logarithm with base $e$.  We use
$h \colon [0, 1 ] \to \mathbb{R}$ to denote the Shannon entropy function:
\begin{eqnarray}
h ( x ) = - x \log x - (1-x) \log (1-x).
\end{eqnarray}

We will use capital letters (e.g., $Q$) to denote quantum systems.
We use the same letter to denote both the system itself
and the complex Hilbert space which represents it.
For any finite-dimensional complex Hilbert space $Q$, let
$\L ( Q )$ denote the set of linear maps from $Q$ to itself, and let
\begin{eqnarray}
\P ( Q  ) & =  & \{ \sigma \in \L ( Q ) \mid \sigma \geq 0 \} \\ 
\S ( Q  ) & =  & \{ \sigma \in \L ( Q ) \mid \sigma \geq 0, \Tr ( \sigma ) \leq 1 \} \\ 
\D ( Q  ) & =  & \{ \sigma \in \L ( Q ) \mid \sigma \geq 0, \Tr ( \sigma ) = 1 \}.
\end{eqnarray}
These are, respectively, the set of positive semidefinite operators,
the set of subnormalized positive semidefinite operators, and the
set of density operators.

If $\rho_1 \colon {X}_1 \to {Y}_1$ and
$\rho_2 \colon {X}_2 \to {Y}_2$ are two linear
operators, then we denote by $\rho_1 \oplus \rho_2$ the
operator from ${X}_1 \oplus {X}_2$
to ${Y}_1 \oplus {Y}_2$ which maps
$(x_1, x_2)$ to $(\rho_1 ( x_1 ) , \rho_2 ( x_2 ))$.

If $(B, E)$ is a bipartite system, and
$\rho$ is a density operator on ${B} \otimes {E}$
representing a classical-quantum state, then we may
express $\rho$ as a diagonal-block operator
\begin{eqnarray}
\rho = \left[ \begin{array}{ccccc} 
\rho_1 \\
& \rho_2 \\
&& \rho_3 \\
&&& \ddots \\
&&&& \rho_m 
\end{array} \right],
\end{eqnarray}
where $\rho_1 , \ldots, \rho_m$ denote
the subnormalized operators
on ${E}$ corresponding to the basis
states of the classical register $B$.  Alternatively,
we may express $\rho$ as $\rho =
\rho_1 \oplus \rho_2 \oplus \ldots \oplus
\rho_m$.

For any $\alpha > 0$, and any linear operator $X$,
let $\left\| X \right|_\alpha$ denote the Schatten norm:
\begin{eqnarray}
\left\| X \right\|_{\alpha} & = & \Tr [ (X^* X )^{\alpha/2} ]^{1/\alpha}.
\end{eqnarray}
Note that if $X$ is positive semidefinite, this may be written more simply as
\begin{eqnarray}
\left\| X \right\|_\alpha & = & \Tr [ X^\alpha ]^{1/\alpha}.
\end{eqnarray}

We will often be concerned with the function $Z \mapsto Z^x$,
where $x \in [0, 2]$.  We note the following mathematical properties.

\begin{proposition}
\label{powmatrixprop}
Let $\gamma \in [0,1]$, and let $Z, W$ denote positive
semidefinite operators on $\mathbb{C}^n$. 
\begin{enumerate}
\item[(a)] If $Z \leq W$, then $Z^\gamma \leq W^\gamma$.

\item[(b)] If $Z \leq W$ and $X = W - Z$, then
\begin{eqnarray}
\Tr (X^{1+\gamma}  ) + \Tr ( Z^{1+\gamma} ) \leq \Tr ( W^{1+\gamma}).
\end{eqnarray}
\end{enumerate}
\end{proposition}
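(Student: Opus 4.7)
My plan is to derive both parts from the Löwner--Heinz inequality, namely the classical fact that the function $t \mapsto t^\gamma$ is operator monotone on $[0,\infty)$ whenever $\gamma \in [0,1]$. Part (a) is exactly this statement, so I would simply invoke Löwner--Heinz (which can be proved, for instance, via the integral representation $t^\gamma = c_\gamma \int_0^\infty \tfrac{ts}{t+s}\, s^{\gamma-2}\, ds$ together with the operator monotonicity of $t \mapsto -1/(t+s)$). No cleverness is required here.

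For part (b), I would reduce it to (a) by a short trace manipulation. Since $0 \le Z \le W$ and $0 \le X = W - Z \le W$, part (a) applied with exponent $\gamma$ yields
\begin{equation*}
Z^\gamma \le W^\gamma \qquad \text{and} \qquad X^\gamma \le W^\gamma.
\end{equation*}
I would then use the elementary fact that if $A \le B$ and $C \ge 0$, then $\Tr(AC) \le \Tr(BC)$: this follows from $\Tr((B-A)C) = \Tr(C^{1/2}(B-A)C^{1/2}) \ge 0$. Applying this with $C = Z$ and $A = Z^\gamma$, $B = W^\gamma$ gives $\Tr(Z^{1+\gamma}) \le \Tr(Z\, W^\gamma)$, and analogously $\Tr(X^{1+\gamma}) \le \Tr(X\, W^\gamma)$. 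Summing the two inequalities and using $X + Z = W$, I obtain
\begin{equation*}
\Tr(X^{1+\gamma}) + \Tr(Z^{1+\gamma}) \le \Tr\bigl((X+Z)\, W^\gamma\bigr) = \Tr(W^{1+\gamma}),
\end{equation*}
which is the desired bound.

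There is no real obstacle here beyond appealing correctly to operator monotonicity; once (a) is in hand, (b) is a two-line trace computation. The only point to be careful about is distinguishing the two relevant monotonicity facts---operator monotonicity of $t \mapsto t^\gamma$ for $\gamma \in [0,1]$ (which we use) versus the fact that $t \mapsto t^{1+\gamma}$ is \emph{not} operator monotone for $\gamma > 0$ (which would go the wrong way). The proof works precisely because we apply Löwner--Heinz at the exponent $\gamma \le 1$ and then recover the $(1+\gamma)$-power at the level of the trace via the inner product inequality $\Tr(AC) \le \Tr(BC)$.
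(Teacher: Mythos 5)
Your proof is correct and takes essentially the same route as the paper: part (a) is the Löwner--Heinz theorem (the paper cites Carlen's survey for it), and part (b) is obtained by writing $W = X + Z$, applying (a) at exponent $\gamma$, and using monotonicity of $\Tr(A C)$ in $A$ for $C \geq 0$ to bound $\Tr(X^{1+\gamma})$ and $\Tr(Z^{1+\gamma})$ by $\Tr(X W^\gamma)$ and $\Tr(Z W^\gamma)$ respectively. The only cosmetic difference is the direction of presentation; the underlying manipulations are identical.
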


\begin{proof}
Part (a) is given by Theorem 2.6 in \cite{Carlen:traceinequalities}.  
Part (b) follows from part (a) by the following reasoning:
\begin{eqnarray}
\Tr ( W^{1+\gamma} ) & = & \Tr ( W \cdot W^\gamma ) \\
& = & \Tr ( X \cdot W^\gamma ) + \Tr ( Z \cdot W^\gamma ) \\
& \geq & \Tr ( X \cdot X^\gamma ) + \Tr ( Z \cdot Z^\gamma ) \\
& = & \Tr ( X^{1+ \gamma} ) + \Tr ( Z^{1+\gamma} ) .
\end{eqnarray}
This completes the proof.
\end{proof}

\subsection{Quantum R{\'e}nyi Divergence}

\label{renyisection}

In this subsection we state the definitions of the two primary
measures of randomness used in this paper (Renyi divergence
and smooth min-entropy) and establish their relationship.
We quote the definition of quantum R{\'e}nyi divergence from
\cite{JaksicOPP:2011}, \cite{MullerDSFT:2013}, \cite{WildeWY:2013}.

\begin{definition}[\cite{MullerDSFT:2013}]
\label{renyimaindefinition}
Let $\rho$ be a density matrix on $\mathbb{C}^n$.  Let $\sigma$ be
a positive semidefinite matrix on $\mathbb{C}^n$ whose support
contains the support of $\rho$.  Let $\alpha > 1$ be a real number.
Then,
\begin{eqnarray}
\label{divergencedef_d}
d_\alpha ( \rho \| \sigma ) & = & 
\Tr \left[ \left( \sigma^{\frac{1 - \alpha}{2\alpha}} \rho \sigma^{\frac{1 - \alpha}{2\alpha}}
\right)^\alpha \right]^{\frac{1}{\alpha - 1}}.
\end{eqnarray}
More generally, for any positive semidefinite matrix $\rho'$ whose
support is contained in $\textnormal{Supp } \sigma$, let
\begin{eqnarray}
d_\alpha ( \rho' \| \sigma ) & = &
\Tr \left[  \frac{1}{\Tr [ \rho' ] }  \left( \sigma^{\frac{1 - \alpha}{2 \alpha}}
\rho' \sigma^{\frac{1 - \alpha}{2 \alpha}} \right)^\alpha \right]^{\frac{1}{\alpha -1 }}.
\end{eqnarray}
Let
\begin{eqnarray}
\label{divergencedef_D}
D_\alpha ( \rho' \| \sigma ) & = &
\log d_\alpha ( \rho' \| \sigma ).
\end{eqnarray}
\end{definition}

Let $AB$ be a classical quantum system whose state
is given by a density operator $\rho_{AB}$.  One way to quantify the amount of
randomness in $A$ conditioned on $B$ is via an expression of the form
$-D_\alpha ( \rho_{AB} \| \mathbb{I}_A \otimes \sigma_B)$, where
$\sigma_B$ is a density operator.  (Maximizing over expressions
of this form leads to the corresponding notion of conditional Renyi entropy, which
will not be used directly in this paper.  See Definition~10 in
\cite{MullerDSFT:2013}.)

Note that if $\rho$ is a density matrix, then
\begin{eqnarray}
D_\alpha ( \rho \| \mathbb{I} ) =
- \frac{1}{\alpha - 1 } \log \Tr [ \rho^\alpha ].
\end{eqnarray}
For any positive semidefinite operator $\rho$, let $H_\alpha ( \rho )
:= D_\alpha ( \rho \| \mathbb{I} )$.  This is the unconditional $\alpha$-Renyi entropy of $\rho$.

Additionally, we will need a definition of smooth min-entropy.  
There are multiple definitions of smooth min-entropy that are
essentially equivalent.  The definition that we will use is
not the most up-to-date (see \cite{MCR10}) but it is good for
our purposes for its simplicity.
\begin{definition}
\label{smminentdef}
Let $AB$ be a classical-quantumtum system, and let $\rho_{AB}$
be a positive semidefinite operator.  Let
$\epsilon > 0$ be a real number.  Then,
\begin{eqnarray}
H^{\epsilon}_{min} ( A \mid B )_{\rho} =
\max_{\substack{\| \rho' - \rho \|_1 \leq \epsilon \\
\rho' \in \S ( A \otimes B) }} \hskip0.1in \max_{\substack{\sigma \in P ( A ) \\
\mathbb{I}_A \otimes \sigma \geq \rho'}} -\log ( \Tr ( \sigma )).
\end{eqnarray}
\end{definition}

The smooth min-entropy measures the number of
random bits that can be extracted from a classical
source in the presence of quantum information \cite{R05}.
When it is convenient, we will use the notation
$H^\epsilon_{min} ( \rho_{AB} | B )$
instead of $H^\epsilon_{min} ( A | B )_{\rho}$.

Following \cite{datta:relative}, let us define
the \textit{relative smooth max-entropy} of
two operators.
\begin{definition}
Let $\rho, \sigma$ be positive semidefinite operators
on $\mathbb{C}^n$ such that the support of $\sigma$
contains the support of $\rho$.  Then, 
\begin{eqnarray}
D_{max} ( \rho \| \sigma ) & = & 
\log \min_{ \substack{\lambda \in \mathbb{R} \\
\rho \leq \lambda \sigma }} (\lambda).
\end{eqnarray}
For any $\epsilon \geq 0$,
\begin{eqnarray}
D_{max}^\epsilon ( \rho \| \sigma ) & = & 
\inf_{\substack{\| \rho' - \rho \|_1 \leq \epsilon \\
\rho' \in \S ( \mathbb{C}^n ) }} D_{max} ( \rho' \| \sigma ). 
\end{eqnarray}
\end{definition}
The quantity $D_{max}^\epsilon$ is convenient
for computing lower bounds on $H_{min}^\epsilon$.
Note that if $\psi_B$ is any density matrix on 
$B$,
\begin{eqnarray}
H^\epsilon_{min} ( \rho_{AB} | B ) \geq 
- D^\epsilon_{max} ( \rho_{AB} \| \mathbb{I}_{A}
\otimes \psi_B  ).
\end{eqnarray}

The following proposition and corollary relate smooth min-entropy to
Renyi divergence.  The proof of the proposition is an easy derivative
of proofs of similar results (\cite{TomamichelCR:2009},
\cite{DupuisFS:2013}) and
is given in Appendix~\ref{renyismapp}.  The corollary follows easily.

\begin{proposition}
\label{maxrenyiprop}
Let $\alpha \in (1, 2]$.
Let $\rho$ be a density operator on 
a finite-dimensional Hilbert space $V$, and let
$\sigma \in \P ( V )$ such that
$\textnormal{Supp } \sigma \supseteq
\textnormal{Supp } \rho$.  Then,
\begin{eqnarray}
\label{adivbound}
D_{max}^\epsilon ( \rho \| \sigma ) \leq 
D_\alpha ( \rho \| \sigma ) + \frac{2 \log ( 1/\epsilon ) + 1}{\alpha -1 }.
\end{eqnarray}
Additionally, if $\rho$ is a classical-quantum operator
on a bipartite state, then there exists a classical-quantum operator $\rho'$ with $\left\| \rho' - \rho \right\|_1 \leq \epsilon$
and $\rho' \geq 0$
such that $D_{max} ( \rho' \| \sigma )$ satisfies the
above bound. \qed
\end{proposition}

\begin{corollary}
\label{divsmcor}
Let $AB$ be a classical-quantum bipartite system, and let
$\rho_{AB}$ be a density operator.  Let $\sigma_B$ be
a density operator on $B$ whose support contains $\Supp \rho_B$.  Let $\epsilon > 0$
and $\alpha \in (1, 2 ]$ be real numbers.  Then,
for any $\epsilon > 0$,
\begin{eqnarray}
\label{smedivbound}
H^\epsilon_{min} \left( A \mid B \right)_\rho \geq
- D_\alpha \left( \rho \| \mathbb{I}_A \otimes \sigma_B \right) - \frac{2 \log (1/\epsilon) + 1}{
 \alpha - 1 }. \qed
\end{eqnarray}
\end{corollary}

\subsection{Quantum Devices}

\label{canonicalsubsec}

Let us formalize some terminology and notation for describing quantum
devices.  (Our formalism is a variation on that which has
appeared in other papers on untrusted devices, such as \cite{ruv:2013}.)

\begin{definition}
\label{qddef}
Let $n$ be a positive integer.  A \textbf{binary quantum device with $n$
components} $D = (D_1, \ldots, D_n)$ consists
of the following.

\begin{enumerate}
\item Quantum systems $Q_1, \ldots, Q_n$ whose
initial state is specified by a density operator,
\begin{eqnarray}
\Phi \colon ( Q_1 \otimes \ldots \otimes Q_n)  \to 
( Q_1 \otimes \ldots \otimes Q_n) 
\end{eqnarray}

\item For any $k \geq 0$, and any function
\begin{eqnarray}
T \colon \{ 0, 1 \} \times \{ 1, 2, \ldots, k \} \times \{ 1, 2, \ldots, n \} 
\to \{ 0, 1 \}
\end{eqnarray}
a unitary operator
\begin{eqnarray}
U_T \colon \left( {Q}_1 \otimes \ldots \otimes {Q}_n \right) \to 
\left( {Q}_1 \otimes \ldots \otimes {Q}_n \right).
\end{eqnarray}
and a collection of Hermitian operators
\begin{eqnarray}
\left\{ M^{(b)}_{T, j} \colon {Q}_j \to {Q}_j  \right\}_{\substack{b \in \{ 0, 1 \} \\
1 \leq j \leq n }}
\end{eqnarray}
satisfying $\left\| M_{T, j}^{(b)} \right\| \leq 1$.
\end{enumerate}
\end{definition}

The device $D$ behaves as follows.  Suppose that $k$ iterations
of the device have already taken place, and suppose that $T$ is
such that $T ( 0, i, j ) \in \{ 0, 1 \}$
and $T( 1, i, j ) \in \{ 0, 1 \}$ represent the input bit and output bit, respectively, for
the $j$th player on the $i$th round ($i \leq k$).  ($T$ is the \textbf{transcript function}.)
Then, 
\begin{enumerate}
\item The components $D_1, \ldots, D_n$ collectively perform
the unitary operation $U_T$ on ${Q}_1 \otimes  \ldots \otimes {Q}_n$. 

\item Each component $D_j$ receives
its input bit $b_j$, then applies the binary nondestructive measurement
on $Q_i$ 
given by
\begin{eqnarray}
X & \mapsto & \left( \sqrt{ \frac{ \mathbb{I} + M_{T, j}^{(b_j)}}{2} } \right)
X \left( \sqrt{ \frac{ \mathbb{I} + M_{T, j}^{(b_j)}}{2} } \right) \\
X & \mapsto & \left( \sqrt{ \frac{ \mathbb{I} - M_{T, j}^{(b_j)}}{2} } \right)
X \left( \sqrt{ \frac{ \mathbb{I} - M_{T, j}^{(b_j)}}{2} } \right),
\end{eqnarray}
and then outputs the result.
\end{enumerate}

Let us say that one binary quantum device $D'$ \textbf{simulates}
another binary quantum device $D$ if, 
for any purifying systems $E$ and $E'$ (for $D$ and $D'$, respectively),
and any input sequence $\mathbf{i}_1, \ldots, \mathbf{i}_k \in \{ 0, 1 \}^n$,
the joint state of the outputs of $D$ together with $E$ is isomorphic
to the joint state of the outputs of $D'$ together with $E'$
on the same input sequence.  Similarly,
let us say that a protocol X \textbf{simulates} another protocol Y if,
for any purifying systems $E$ and $E'$ for the quantum devices
used by X and Y, respectively, the joint state of E together with the
outputs of X is isomorphic to the joint state of E' together with the
outputs of Y.

\begin{definition}
Let us say that a binary quantum device $D$ is in \textbf{canonical form}
if each of its quantum systems $Q_j$ is such that ${Q}_j = \mathbb{C}^{2m_j}$
for some $m_j \geq 1$, and each measurement operator pair
$(M^{(0)} , M^{(1)} ) = 
(M_{T,j}^{(0)}, M_{T, j}^{(1)})$ has the following $2 \times 2$ diagonal block form:
\begin{eqnarray*}
M^{(0)}   =   \left[ \begin{array}{ccccccc}
0 & 1  \\
1 & 0  \\
&& 0 & 1 \\
&& 1 & 0 \\
& & & & \ddots \\
&& & & & 0 & 1 \\
&& & & & 1 & 0 \\
\end{array} \right] & \hskip0.6in  &
M^{(1)}  =   \left[ \begin{array}{ccccccc}
0 & \zeta_1  \\
\overline{\zeta_1}  & 0 \\
&&0 & \zeta_2  \\
&&\overline{\zeta_2}  & 0 \\
 &&& & \ddots \\
& &&& & 0 & \zeta_{m_j} \\
& & &&& \overline{\zeta_{m_j}} & 0 \\
\end{array} \right],
\end{eqnarray*}
where the complex numbers $\zeta_\ell$ satisfy
\begin{eqnarray}
\left| \zeta_\ell \right| = 1 \textnormal{ and }
\Im ( \zeta_\ell ) \geq 0.
\end{eqnarray}
(Note that the complex numbers $\zeta_\ell$ may be different for each
transcript $T$ and each player $j$.)
\end{definition}

When we discuss quantum devices that are in canonical form,
we will frequently make use of the isomorphism 
$\mathbb{C}^{2m} \cong \mathbb{C}^2 \otimes \mathbb{C}^m$
given by $e_{2k-1} \mapsto e_1 \otimes e_k$, $e_{2k} \mapsto
e_2 \otimes e_k$.  (Here, $e_1, \ldots, e_r$ denote the
standard basis vectors for $\mathbb{C}^r$.)

\begin{proposition}
\label{canonicalsimprop}
Any binary quantum device can be simulated by a device
that is in canonical form.
\end{proposition}

\begin{proof}
This follows from Theorem~\ref{canintermed} in the appendix.
\end{proof}

\section{An Uncertainty Principle}

\label{opineqsection}

In this section, we consider the behavior of the map
$\rho \mapsto \Tr [ \rho^{1 + \epsilon } ]$ when 
measurements are applied to a qubit and
the operator $\rho$ represents the state of a system
that is entangled with the qubit.

We begin by quoting the following theorem, which appears as part of
Theorem 5.1 in the paper \cite{PisierX:2003}.  

\begin{theorem}
\label{pisierthm}
Let $X, Y \colon \mathbb{C}^m \to \mathbb{C}^n$ be linear operators.
Let $p \geq 2$ be a real number, and let $p' = 1/(1 - 1/p)$.  Then,
\begin{eqnarray}
\label{pisierineq}
\left[ \frac{1}{2} \left( \left\| X + Y \right\|_p^p + \left\| X - Y \right\|_p^p
\right) \right]^{1/p} \leq \left( \left\| X \right\|_p^{p'} +
\left\| Y \right\|_p^{p'} \right)^{1/p'}. \qed
\end{eqnarray}
\end{theorem}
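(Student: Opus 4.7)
The plan is to prove this inequality as the non-commutative Clarkson inequality for Schatten $p$-classes, via complex interpolation between the endpoints $p = 2$ and $p = \infty$. First I would verify these endpoints directly. At $p = 2$ we have $p' = 2$, and both sides of (\ref{pisierineq}) equal $(\|X\|_2^2 + \|Y\|_2^2)^{1/2}$ by the parallelogram identity for the Hilbert--Schmidt inner product, $\|X+Y\|_2^2 + \|X-Y\|_2^2 = 2(\|X\|_2^2 + \|Y\|_2^2)$. At $p = \infty$ (taken as a limit) we have $p' = 1$, and the inequality reduces to $\max(\|X+Y\|_\infty, \|X-Y\|_\infty) \leq \|X\|_\infty + \|Y\|_\infty$, which is simply the operator-norm triangle inequality.

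Next I would recast (\ref{pisierineq}) as an operator-norm bound on the linear map $T\colon (X, Y) \mapsto (X+Y, X-Y)$ acting between mixed-norm direct sums of Schatten spaces. Writing $(S_p \oplus S_p)_r$ for $S_p \oplus S_p$ equipped with the norm $\|(A, B)\| = (\|A\|_p^r + \|B\|_p^r)^{1/r}$, the inequality is equivalent to
\begin{equation*}
\|T\|_{(S_p \oplus S_p)_{p'} \to (S_p \oplus S_p)_p} \leq 2^{1/p}.
\end{equation*}
The endpoint verifications give $\|T\| \leq \sqrt{2}$ at $p = 2$ (there $T$ equals $\sqrt{2}$ times the Hadamard unitary, acting isometrically on a Hilbert space) and $\|T\| \leq 1$ at $p = \infty$. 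Complex interpolation (Riesz--Thorin, or Stein's analytic interpolation theorem) along the family parametrised by $1/p(\theta) = (1-\theta)/2$ then yields $\|T\| \leq (\sqrt{2})^{1-\theta} \cdot 1^{\theta} = 2^{(1-\theta)/2} = 2^{1/p}$, which is exactly the required bound.

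The hard part is justifying the underlying interpolation identity for the mixed-norm spaces, namely
\begin{equation*}
[(S_{p_0} \oplus S_{p_0})_{r_0},\; (S_{p_1} \oplus S_{p_1})_{r_1}]_\theta \;=\; (S_{p_\theta} \oplus S_{p_\theta})_{r_\theta}
\end{equation*}
(with exponents interpolating linearly in the reciprocal), in a situation where the direct-sum exponent and the Schatten exponent vary simultaneously but in opposite senses: $p$ grows from $2$ to $\infty$, while $p' = p/(p-1)$ shrinks from $2$ to $1$. The cleanest route I know is to realise each $(S_p \oplus S_p)_r$ as the non-commutative $L^p$-space of the algebra $M_n \otimes \ell_\infty^2$ equipped with a suitably weighted trace (the weight on the two-dimensional factor being chosen so as to produce the prescribed $\ell_r$-norm on the direct-sum coordinate). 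With these compatible $L^p$-descriptions at both endpoints, Kosaki's complex-interpolation theorem for non-commutative $L^p$-spaces over a semifinite von Neumann algebra furnishes the identity, and the Schatten-valued Riesz--Thorin theorem then produces the bound for all intermediate $p$.
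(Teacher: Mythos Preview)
The paper does not give its own proof of this statement: it is quoted verbatim from Pisier--Xu (the reference \cite{PisierX:2003}), as the sentence preceding the theorem makes explicit, and the \qed\ at the end of the statement indicates that no proof follows. So there is nothing in the paper to compare your argument against.

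That said, your interpolation argument is correct and is essentially the route taken in the cited source. The endpoint checks are right, the reformulation as $\|T\|_{(S_p\oplus S_p)_{p'}\to (S_p\oplus S_p)_p}\le 2^{1/p}$ is accurate, and the interpolation arithmetic (with $1/p=(1-\theta)/2$ giving both the correct inner Schatten exponent and the correct outer $\ell_r$ exponents on domain and codomain) works out exactly. One simplification: you do not need Kosaki's theorem or any von Neumann algebra machinery for the interpolation identity. Since the outer direct sum is just $\ell_r^2$ over a two-point set, the identity
\[
[\,\ell_{r_0}^2(S_{p_0}),\ \ell_{r_1}^2(S_{p_1})\,]_\theta \;=\; \ell_{r_\theta}^2\bigl([S_{p_0},S_{p_1}]_\theta\bigr)
\]
follows from the standard complex interpolation of vector-valued $L_r$-spaces (e.g.\ Bergh--L\"ofstr\"om, Theorem~5.1.2), combined with $[S_{p_0},S_{p_1}]_\theta=S_{p_\theta}$. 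In the finite-dimensional setting of the theorem ($X,Y\colon\mathbb{C}^m\to\mathbb{C}^n$) even the latter is elementary.
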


Inequality~(\ref{pisierineq}) may alternatively be expressed as
\begin{eqnarray}
\label{altineq}
\left[ \left\| \frac{X + Y}{\sqrt{2}} \right\|_p^p + \left\| \frac{X - Y}{\sqrt{2}}
\right\|_p^p \right]^{1/p} \leq 2^{1/p - 1/2} \left( \left\| X \right\|_p^{p'} +
\left\| Y \right\|_p^{p'} \right)^{1/p'}
\end{eqnarray}
or,
\begin{eqnarray}
\label{altineq2}
\left\| \frac{X + Y}{\sqrt{2}} \right\|_p^p + \left\| \frac{X - Y}{\sqrt{2}}
\right\|_p^p \leq 2^{1 - p/2} \left( \left\| X \right\|_p^{p'} +
\left\| Y \right\|_p^{p'} \right)^{p/p'}.
\end{eqnarray}

Observe the following: if $QW$ is a bipartite quantum system with $Q = \mathbb{C}^2$
and $\Lambda \in \D ( Q \otimes W )$ is a density operator, $\Lambda$ can be written as
\begin{eqnarray}
\Lambda & = & \left[ \begin{array}{c|c} 
X^* X & X^* Y \\
\hline
Y^* X & Y^* Y \end{array} \right]
\end{eqnarray}
for some $X, Y \in \L ( W )$.  Then the reduced state of $W$ is
\begin{eqnarray}
\label{rhostatement}
\rho & := & X^* X + Y^* Y
\end{eqnarray}
Additionally, if we let $\{ \rho_0, \rho_1 \}
$ and $\{ \rho_+, \rho_- \}$ denote the subnormalized states of $W$ that
arise from measurements on $Q$ along the computational and Hadamard bases,
respectively, then
\begin{eqnarray}
\rho_0 & = & X^* X \\
\rho_1 & = & Y^* Y \\
\rho_+ & = & \left( \frac{X + Y}{\sqrt{2}} \right)^* \left( \frac{X + Y}{\sqrt{2}} \right), \\
\label{rhominusstatement}
\rho_- & = & \left( \frac{X - Y}{\sqrt{2}} \right)^* \left( \frac{X - Y}{\sqrt{2}} \right).
\end{eqnarray}

\begin{theorem}
\label{uncertaintythm}
There exists a continuous function $\Pi \colon (0, 1 ] \times [0, 1 ]
\to \mathbb{R}$ such that the following holds.
\begin{enumerate}
\item 
Let $V$ be a quantum system, and let 
$\rho, \rho_0, \rho_0, \rho_+, \rho_- \in \mathcal{S} ( V )$ denote
operators arising from measurements of a qubit entangled with $V$.  Let
\begin{eqnarray}
t = \frac{\Tr ( \rho_1^{1+\epsilon} )}{\Tr ( \rho^{1+\epsilon})}.
\end{eqnarray}
Then, the following inequality always holds:
\begin{eqnarray}
\label{uncertaintyexp}
\log \left[ \frac{ \Tr ( \rho_+^{1 + \epsilon } + \rho_-^{1 + \epsilon} )}{\Tr ( \rho^{1 + \epsilon} )
} \right]  & \leq & - \epsilon \Pi ( \epsilon , t ).
\end{eqnarray}

\item The limiting function $\pi ( z) := \lim_{(x,y )
\to ( 0, z )} \Pi ( x, y )$ is given by
\begin{eqnarray}
\pi ( z ) & = & 1 - 2 z \log \left( \frac{1}{z} \right) 
- 2 (1 -z ) \log \left( \frac{1}{1 -z } \right).
\end{eqnarray}
\end{enumerate}
\end{theorem}

\begin{proof}
Express the states $\rho_*$ in terms of operators $X$ and $Y$ as in
(\ref{rhostatement}--\ref{rhominusstatement}).
Applying (\ref{altineq2}) with $p = 2 + 2 \epsilon$,
and $p' = 1/(1 - 1/p)$ we have the following:
\begin{eqnarray}
\Tr ( \rho_+^{1 + \epsilon} + \rho_-^{1 + \epsilon} ) & = & 
\left\| \frac{X + Y}{\sqrt{2}} \right\|_{2 + 2 \epsilon}^{2 + 2\epsilon} + \left\| \frac{X - Y}{\sqrt{2}}
\right\|_{2+2\epsilon}^{2+2\epsilon} \\ 
& \leq & 2^{1 - p/2} \left( \left\| X \right\|_p^{p'} +
\left\| Y \right\|_p^{p'} \right)^{p/p'} \\
& = & 2^{1 - p/2} \left[ \left( \left\| X \right\|_p^p \right)^{p'/p} +
\left( \left\| Y \right\|_p^p \right)^{p'/p} \right]^{p/p'} \\
& = & 2^{- \epsilon} \left[ \Tr \left(  \rho_0^{1 + \epsilon} \right)^\frac{1}{1 + 2 \epsilon} +
\Tr \left( \rho_1^{1 + \epsilon} \right)^{\frac{1}{1 + 2 \epsilon}} \right]^{1 + 2 \epsilon} 
\end{eqnarray}
Letting
\begin{eqnarray}
s  =  \frac{\Tr ( \rho_0^{1 + \epsilon} )}{\Tr ( \rho^{1 + \epsilon})},
\end{eqnarray}
we have
\begin{eqnarray}
\Tr ( \rho_+^{1 + \epsilon} + \rho_-^{1 + \epsilon} )
& \leq & 2^{- \epsilon} \left[ s^\frac{1}{1 + 2 \epsilon} +
t^{\frac{1}{1 + 2 \epsilon}} \right]^{1 + 2 \epsilon} \Tr ( \rho^{1 + \epsilon} ).
\end{eqnarray}
Since $\Tr ( \rho_0^{1 + \epsilon} ) + \Tr ( \rho_1^{1 + \epsilon} ) \leq
\Tr ( \rho^{1 + \epsilon} )$, we
have $t + s \leq 1$, and therefore,
\begin{eqnarray}
\Tr ( \rho_+^{1 + \epsilon} + \rho_-^{1 + \epsilon} )
& \leq & 2^{- \epsilon} \left[ (1 - t)^\frac{1}{1 + 2 \epsilon} +
t^{\frac{1}{1 + 2 \epsilon}} \right]^{1 + 2 \epsilon} \Tr ( \rho^{1 + \epsilon} ).
\end{eqnarray}
Let
\begin{eqnarray}
\Pi ( x, y ) & = & - \frac{1}{x} \log \left\{ 2^{- x} \left[ (1 - y)^\frac{1}{1 + 2 x} +
y^{\frac{1}{1 + 2 x}} \right]^{1 + 2 x} \right\}.
\end{eqnarray}
The desired limiting condition follows using L'Hospital's rule.
\end{proof}

We note that (\ref{uncertaintyexp}) can be rewritten as
\begin{eqnarray}
\label{uncertaintyexp2}
\left( - \frac{1}{\epsilon} \log \Tr ( \rho_+^{1+\epsilon} + \rho_-^{1+\epsilon} ) \right) - \left( - \frac{1}{\epsilon} \log \Tr ( \rho^{1+\epsilon} ) \right)
& \geq & \Pi ( \epsilon, t ).
\end{eqnarray}
The expression on the left side is the
difference in $(1+\epsilon)$-Renyi entropy between the state $\rho_+ \oplus \rho_-$ and the
state $\rho$.

\section{The Self-Testing Property of Binary Nonlocal XOR Games}

\label{gamessec}

In this section we review some of the known formalism for
binary XOR games, and then prove new results.

\subsection{Definitions and Basic Results}

\label{gamedefsubsec}

\begin{definition}
An \textbf{$n$-player binary nonlocal XOR game} consists of a probability
distribution
\begin{eqnarray}
\{ p_{\mathbf{i}} \mid \mathbf{i} \in \{ 0, 1 \}^n \}
\end{eqnarray} on the
set $\{ 0, 1 \}^n$, together with an indexed set
\begin{eqnarray}
\{ \eta_{\mathbf{i}} \in \{ -1, 1 \}
\mid \mathbf{i} \in \{ 0, 1 \}^n \}.
\end{eqnarray}
\end{definition}

Given any indexed sets $\{ p_{\mathbf{i}} \}$ and $\{ \eta_\mathbf{i} \}$
satisfying the above conditions, we can conduct an $n$-player nonlocal game as follows.
\begin{enumerate}
\item A referee chooses a binary vector $\mathbf{c} \in \{ 0, 1 \}^n$ according
to the distribution $\{ p_\mathbf{i} \}$.  For each $k$, he gives
the bit $c_k$ as input to the $k$th player.

\item Each player returns an output bit $d_k$ to the referee.

\item The referee calculates the score, which is given by
\begin{eqnarray}
\eta_\mathbf{c} (-1)^{d_1 + d_2 + \cdots + d_n}.
\end{eqnarray}
If the score is $+1$, a ``pass'' has occurred.  If the score is $-1$,
a ``failure'' has occurred.
\end{enumerate}

We quote some definitions and results from \cite{MillerS:self-testing:2013} and \cite{WW01a}.

\begin{definition}
An \textbf{mixed $n$-player quantum strategy} is a pair
\begin{eqnarray}
\label{qubitstrat}
\left( \Psi , \{ \{  M_j^{(0)} , M_j^{(1)} \} \}_{j=1}^n \right)
\end{eqnarray}
where $\Psi$ is a density matrix on an $n$-tensor product space
$V_1 \otimes \ldots \otimes V_n$ and
$M_j^{(i)}$ denotes a linear operator on $V_j$ whose
eigenvalues are contained in $\{ -1, 1 \}$.  A \textbf{pure
$n$-player quantum strategy} is a pair
\begin{eqnarray}
\label{purestrat}
\left( \psi , \{ \{  M_j^{(0)} , M_j^{(1)} \} \}_{j=1}^n \right)
\end{eqnarray}
which satisfies the same conditions, except
that $\psi$ is merely a unit vector
on $V_1 \otimes \ldots \otimes V_n$.
A \textbf{qubit strategy}
is a pure quantum strategy in which the spaces $V_i$ are
equal to $\mathbb{C}^2$ and the operators
$M_j^{(i)}$ are all nonscalar. 

The \textbf{score} achieved by a quantum strategy at an
$n$-player binary nonlocal XOR game
$G = (\{ p_\mathbf{i} \} , \{ \eta_{\mathbf{i}} \} )$ is the expected
score when the qubit strategy is used to play the game $G$.  This quantity
can be expressed as follows.  Let $\mathbf{M}$ denote the
\textbf{scoring operator} for $G$, which is given by
\begin{eqnarray}
\label{scoreop}
\mathbf{M} & = & \sum_{\mathbf{i} \in \{ 0, 1 \}^n } 
 p_\mathbf{i} \eta_{\mathbf{i}} M_1^{(i_1)} \otimes M_2^{(i_2)} \otimes
\cdots \otimes M_n^{(i_n)}.
\end{eqnarray}
Then, the score for strategy (\ref{qubitstrat}) at game $G$ is
$\Tr ( \mathbf{M} \Psi )$.  The score for the pure strategy
(\ref{purestrat}) is $\psi^* \mathbf{M} \psi$.

The \textbf{optimal score} for a nonlocal game is the highest score
that can be achieved at the game by qubit strategies.  
We denote this quantity
by $\mathfrak{q}_G$.  (As explained
in \cite{MillerS:self-testing:2013}, this is also the highest score that can be achieved
by arbitrary quantum strategies.)
A game $G$ is a \textbf{self-test} if there is only one qubit
strategy (modulo local unitary operations on the $n$ tensor components
of $\left( \mathbb{C}^2 \right)^{\otimes n}$)
which achieves the optimal score.
A game $G$ is \textbf{winnable} if $\mathfrak{q}_G = 1$.

Note that $\mathfrak{q}_G$ is different from the maximum \textbf{passing probability}
for quantum strategies, which
we denote by $\mathbf{w}_G$.  The two are related
by $\mathbf{w}_G = (1 + \mathfrak{q}_G)/2$.
We will also write $\mathbf{f}_G$ for the \textbf{minimum failing probability},
which is given by $\mathbf{f}_G = 1 - \mathbf{w}_G$.

We define functions that are useful for the study of binary XOR games.
For any nonlocal game $G = (\{ p_\mathbf{i} \} , \{ \eta_{\mathbf{i}} \} )$, define
$P_G \colon \mathbb{C}^n \to \mathbb{C}$ by 
\begin{eqnarray}
P_G ( \lambda_1, \ldots, \lambda_n  ) & = & \sum_{\mathbf{i} \in \{ 0, 1 \}^n }
p_\mathbf{i} \eta_{\mathbf{i}} \lambda_1^{i_1} \lambda_2^{i_2} \ldots \lambda_n^{i_n}.
\end{eqnarray}
Define $Z_G \colon \mathbb{R}^{n+1} \to \mathbb{R}$ by
\begin{eqnarray}
Z_G ( \theta_0 , \theta_1, \ldots , \theta_n ) & = & \sum_{\mathbf{i} \in \{ 0, 1 \}^n }
p_\mathbf{i} \eta_\mathbf{i} \cos \left( \theta_0 + \sum_{k=1}^n i_k \theta_k \right).
\end{eqnarray}
\end{definition}
These functions are related by
\begin{eqnarray}
Z_G ( \theta_0, \ldots, \theta_n ) &=&
\Re \left[ e^{i \theta_0 } P ( e^{i \theta_1} , e^{i \theta_2} ,
\ldots , e^{i \theta_n } ) \right]. \\
|P_G ( e^{i\theta_1} , \ldots , 
e^{i \theta_n} )| & = & \max_{\theta_0 \in [-\pi , \pi ] } Z_G ( \theta_0, \ldots, \theta_n ).
\end{eqnarray}

The functions $P_G$ and $Z_G$ can be used to calculate $\mathfrak{q}_G$.  This 
was observed by Werner and Wolf in \cite{WW01a}.  We sketch a proof here.
(For a more detailed proof, see Proposition 1 in \cite{MillerS:self-testing:2013}.)
\begin{proposition}
For any nonlocal binary XOR game $G$, the following equalities hold.
\begin{eqnarray}
\mathfrak{q}_G & = & \max_{| \lambda_1 | = \ldots = |\lambda_n | = 1}
\left| P_g ( \lambda_1 , \ldots, \lambda_n ) \right| \\
& = & \max_{\theta_0 , \ldots, \theta_n \in \mathbb{R} }
Z_g ( \theta_0, \ldots, \theta_n ).
\end{eqnarray}
\end{proposition}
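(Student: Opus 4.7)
The plan is to establish the two equalities separately. The second equality is essentially immediate from the stated identities relating $P_G$ and $Z_G$. Since $Z_G(\theta_0,\theta_1,\ldots,\theta_n) = \Re\bigl[e^{i\theta_0}P_G(e^{i\theta_1},\ldots,e^{i\theta_n})\bigr]$, optimizing over $\theta_0$ alone for fixed $\theta_1,\ldots,\theta_n$ yields exactly $|P_G(e^{i\theta_1},\ldots,e^{i\theta_n})|$. Thus
\[
\max_{\theta_0,\ldots,\theta_n\in\mathbb{R}} Z_G(\theta_0,\ldots,\theta_n) \;=\; \max_{\theta_1,\ldots,\theta_n\in\mathbb{R}} |P_G(e^{i\theta_1},\ldots,e^{i\theta_n})| \;=\; \max_{|\lambda_1|=\cdots=|\lambda_n|=1}|P_G(\lambda_1,\ldots,\lambda_n)|.
\]

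For the first equality I would argue by two matching bounds. For the lower bound $\mathfrak{q}_G \geq \max_\theta Z_G(\theta)$, fix angles $(\theta_0^\ast,\theta_1^\ast,\ldots,\theta_n^\ast)$ that realize the maximum and exhibit the following qubit strategy. Take the shared pure state
\[
|\Psi\rangle \;=\; \tfrac{1}{\sqrt{2}}\bigl(|0\rangle^{\otimes n}+e^{i\theta_0^\ast}|1\rangle^{\otimes n}\bigr),
\]
and give player $j$ the observables $M_j^{(0)}=\sigma_x$ and $M_j^{(1)}=\cos(\theta_j^\ast)\,\sigma_x+\sin(\theta_j^\ast)\,\sigma_y$. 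Using $\sigma_x|b\rangle=|1-b\rangle$ and $\sigma_y|b\rangle=i(-1)^b|1-b\rangle$, a direct computation gives
\[
\langle\Psi|\,M_1^{(i_1)}\otimes\cdots\otimes M_n^{(i_n)}\,|\Psi\rangle \;=\; \cos\!\Bigl(\theta_0^\ast+\sum_{k=1}^n i_k\theta_k^\ast\Bigr),
\]
so the score equals $Z_G(\theta_0^\ast,\ldots,\theta_n^\ast)$, as required.

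For the upper bound $\mathfrak{q}_G\leq\max_\lambda|P_G(\lambda)|$, I would start from an arbitrary pure qubit strategy $(\psi,\{M_j^{(i)}\})$ and reduce it to the ``planar'' form above without lowering the score. Each $M_j^{(i)}$ is a $\pm1$-eigenvalue Hermitian operator on $\mathbb{C}^2$, hence of the form $\vec{a}_j^{(i)}\!\cdot\!\vec{\sigma}$ with $|\vec{a}_j^{(i)}|=1$. Applying a local unitary on component $j$ (compensated by an inverse unitary acting on $\psi$, which leaves the score invariant) I may assume $M_j^{(0)}=\sigma_x$. The main obstacle is the additional reduction that brings $M_j^{(1)}$ into the $xy$-plane, i.e.\ removes its $\sigma_z$-component; this uses the freedom to replace $\psi$ by its real (or suitably symmetrized) counterpart, exploiting the fact that $P_G$ is a multilinear polynomial, so transposing every observable simultaneously does not change the score. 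Once each $M_j^{(1)}=\cos\alpha_j\,\sigma_x+\sin\alpha_j\,\sigma_y$, the scoring operator $\mathbf{M}$ becomes a reverse-diagonal matrix on $(\mathbb{C}^2)^{\otimes n}$ whose nonzero entries on the anti-diagonal are precisely the values $P_G(e^{i\alpha_1},\ldots,e^{i\alpha_n})$ and its conjugate; its operator norm is therefore $|P_G(e^{i\alpha_1},\ldots,e^{i\alpha_n})|$, bounded by $\max_{|\lambda_j|=1}|P_G(\lambda)|$. Since $\psi^\ast\mathbf{M}\psi\leq\|\mathbf{M}\|$, the required bound follows, completing the proof. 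The delicate step is the planar reduction, which is where one must handle arbitrary complex amplitudes of $\psi$ carefully; all other steps are routine expansions.
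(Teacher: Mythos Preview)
Your overall strategy coincides with the paper's: reduce the measurements to a canonical anti-diagonal form, observe that the scoring operator $\mathbf{M}$ is then reverse-diagonal with entries that are evaluations of $P_G$ at unit complex numbers, and bound the score by $\|\mathbf{M}\|$. Two points need correcting, however.

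First, the ``planar reduction'' you flag as the delicate step is not delicate at all. After a local unitary on site $j$ sends $M_j^{(0)}$ to $\sigma_x$, you still have the residual gauge freedom of conjugating by $e^{i\phi_j\sigma_x/2}$ (rotations about the $x$-axis), which preserves $\sigma_x$ and rotates $M_j^{(1)}$ within the $yz$-plane component; an appropriate $\phi_j$ kills the $\sigma_z$ part. No symmetrization of $\psi$ or transposition argument is needed, and the one you sketch is not obviously correct. The paper simply invokes the canonical-form reduction (Proposition~\ref{canonicalsimprop}).

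Second, the anti-diagonal of $\mathbf{M}$ carries $2^n$ entries, namely $P_G\bigl(\zeta_1^{(-1)^{b_1}},\ldots,\zeta_n^{(-1)^{b_n}}\bigr)$ for all $\mathbf{b}\in\{0,1\}^n$, occurring in $2^{n-1}$ conjugate pairs; hence $\|\mathbf{M}\|$ is the maximum of all their absolute values, not just $|P_G(\zeta_1,\ldots,\zeta_n)|$. This does not damage your upper bound, since every entry is still an evaluation of $|P_G|$ on the unit torus, but it should be stated correctly. Finally, note that the paper dispenses with your separate lower-bound construction: since $\psi$ may be chosen as a top eigenvector of $\mathbf{M}$, one has $\mathfrak{q}_G=\max_{\text{measurements}}\|\mathbf{M}\|$, which yields both inequalities in one stroke. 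Your explicit GHZ state is exactly that eigenvector.
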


\begin{proof}[sketch]
Let $( \psi , \{ M_j^{(i)} \} )$ be a qubit strategy for $G$.  By an appropriate
choice of basis, we may assume that
\begin{eqnarray}
M_j^{(0)} = \left[ \begin{array}{cc} 0 & 1 \\ 1 & 0 \end{array} \right] 
\hskip0.2in \textnormal{ and } \hskip0.2in
M_j^{(1)} = \left[ \begin{array}{cc} 0 & \zeta_j \\ \overline{\zeta_j} & 0 \end{array}
\right].
\end{eqnarray}
where $\{ \zeta_j \}$ are complex numbers of length $1$.  The scoring operator
$\mathbf{M}$ can be expressed as a reverse diagonal matrix whose entries
are
\begin{eqnarray}
\label{reverseentries}
\left\{ P_G ( \zeta_1^{b_1} , \ldots, \zeta_n^{b_n} ) \right\}_{(b_1, \ldots, b_n) \in \{ -1, 1 \}^n}.
\end{eqnarray}
The eigenvalues of a reverse diagonal Hermitian matrix whose reverse-diagonal
entries are equal to $z_1, z_2, \ldots, z_{2n}$ is simply $\pm | z_1 | , \pm |z_2| , \ldots, \pm |z_n |$.
Therefore the operator norm of $\mathbf{M}$ is the maximum absolute value that occurs
in (\ref{reverseentries}).

The value $q_f$ is the maximum of the operator norm that occurs among 
all the scoring operators arising from qubit strategies for $G$.  The desired formulas
follow.
\end{proof}

\begin{proposition}
\label{selftestprop}
Let $G$ be a nonlocal binary XOR game.  Then, $G$ is a self-test if and only if
the following two conditions are satisfied.
\begin{enumerate}
\item[(A)] There is a maximum $(\alpha_0, \ldots, \alpha_n)$ for $Z_G$
such that none of $\alpha_1, \ldots, \alpha_n$ is a multiple of $\pi$.

\item[(B)] Every other maximum of $Z_G$ is congruent modulo 
$2 \pi$ to either $(\alpha_0, \ldots, \alpha_n )$ or $(- \alpha_0, \ldots, - \alpha_n)$.
\end{enumerate}
\end{proposition}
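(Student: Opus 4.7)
The plan is to translate the statement into a question about canonical-form qubit strategies, using the parameterization developed in the proof of Proposition 1. Every qubit strategy for $G$ can be brought by local unitaries into the canonical form in which $M_j^{(0)}$ is the Pauli $X$ and $M_j^{(1)} = \cos\alpha_j \cdot X - \sin\alpha_j \cdot Y$ with $\alpha_j \in [0,\pi]$. In this form the scoring operator $\mathbf{M}$ is reverse-diagonal with antidiagonal entries $P_G(\zeta_1^{b_1},\ldots,\zeta_n^{b_n})$ indexed by $\mathbf{b} \in \{-1,+1\}^n$, where $\zeta_j = e^{i\alpha_j}$. The strategy is optimal exactly when $\psi$ is a top eigenvector of $\mathbf{M}$ and the top eigenvalue equals $\mathfrak{q}_G$, equivalently when $(\alpha_0,\alpha_1,\ldots,\alpha_n)$ is a maximum of $Z_G$ for some $\alpha_0$; when the top eigenvalue is simple, the optimal state is pinned down up to phase as $\frac{1}{\sqrt{2}}(e^{i\alpha_0/2}|\mathbf{0}\rangle + e^{-i\alpha_0/2}|\mathbf{1}\rangle)$. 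The only nontrivial local-unitary symmetry preserving this canonical form at generic $\alpha_j$ is generated by $X^{\otimes n}$, which implements the involution $(\alpha_0,\ldots,\alpha_n) \mapsto (-\alpha_0,\ldots,-\alpha_n)$ already possessed by $Z_G$.

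For the $(\Leftarrow)$ direction, assume (A) and (B). Pick the max $(\alpha_0,\alpha_1,\ldots,\alpha_n)$ from (A), with each $\alpha_j \in (0,\pi)$. If $|P_G(\zeta^{\mathbf{b}})|$ attained the maximum $\mathfrak{q}_G$ for some $\mathbf{b}$, then $(\alpha_0', b_1\alpha_1,\ldots,b_n\alpha_n)$ would be a max of $Z_G$ for some $\alpha_0'$; by (B) this is congruent mod $2\pi$ to $\pm(\alpha_0,\ldots,\alpha_n)$, and because no $\alpha_j$ is in $\pi\mathbb{Z}$ the sign $b_j$ is forced to be uniform. Hence the maximum modulus is attained only at $\mathbf{b} = (+1,\ldots,+1)$ and its antipode, which lie in a single $2 \times 2$ block of $\mathbf{M}$, so the top eigenvalue is simple. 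By (B), every other optimal canonical-form strategy arises from the max $-(\alpha_0,\ldots,\alpha_n)$, which is produced from the reference strategy by the local unitary $X^{\otimes n}$. Thus every optimal qubit strategy is locally equivalent to the reference, and $G$ is a self-test.

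For the $(\Rightarrow)$ direction, assume $G$ is a self-test. If (B) failed, a max of $Z_G$ not congruent mod $2\pi$ to $\pm(\alpha_0,\ldots,\alpha_n)$ would yield an optimal canonical-form strategy with $\zeta$-vector outside the $X^{\otimes n}$-orbit of any fixed reference, producing locally inequivalent optima and contradicting self-testing. The main obstacle is showing (A) is necessary, because degenerate measurements do not obviously cause non-uniqueness. My plan here is as follows: if every max had some $\alpha_j \in \pi\mathbb{Z}$ (say $\alpha_1 = 0$), then $M_1^{(1)} = M_1^{(0)} = X$ in the canonical form, and the antidiagonal entries $P_G(\zeta^{\mathbf{b}})$ become independent of $b_1$. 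This forces the top eigenspace of $\mathbf{M}$ to contain two orthogonal optimal vectors $\psi_1 = \frac{1}{\sqrt{2}}(e^{i\alpha_0/2}|\mathbf{0}\rangle + e^{-i\alpha_0/2}|\mathbf{1}\rangle)$ and $\psi_2 = X_1 \psi_1$ supported on disjoint $2 \times 2$ blocks. The subtle point is that the optimal state $(\psi_1 + \psi_2)/\sqrt{2}$ factors as $|+\rangle_1 \otimes |\phi\rangle$, so its reduced density on qubit $1$ is the rank-$1$ projector $|+\rangle\langle+|$, while the reduced density of $\psi_1$ on qubit $1$ is $\mathbb{I}/2$. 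Since any local unitary preserving the measurement algebra on qubit $1$ must commute with $X$ and hence act only by phases in the $\{|+\rangle,|-\rangle\}$ basis, no local unitary can transform $\mathbb{I}/2$ into a pure state. This exhibits two locally inequivalent optimal strategies, contradicting self-testing and completing the argument.
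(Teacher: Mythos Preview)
The paper does not give a self-contained argument here: its proof is simply ``See Proposition~2 in \cite{MillerS:self-testing:2013}.'' So there is no in-paper proof to compare against; you are supplying one where the authors defer to their companion paper. Your overall strategy---reduce to canonical-form qubit strategies, identify optimal strategies with maxima of $Z_G$, and read off local equivalence from the residual symmetry of the canonical form---is exactly the right framework and matches how the cited paper proceeds.

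Your $(\Leftarrow)$ direction is correct (with one cosmetic remark: the max $-(\alpha_0,\ldots,\alpha_n)$ does not actually produce a \emph{second} canonical-form strategy---its canonical form is the reference itself, since $-\alpha_j$ folds back to $\alpha_j$ under the constraint $\Im\zeta_j\ge 0$---so the optimal canonical-form strategy is literally unique, not merely unique up to $X^{\otimes n}$). Your argument for $(\Rightarrow)$ when (A) fails is also correct; the rank-of-reduced-density invariant cleanly distinguishes $\psi_1$ from $(\psi_1+\psi_2)/\sqrt{2}$.

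There is a genuine gap in your treatment of the failure of (B). You assert that a rogue maximum $\beta\not\equiv\pm\alpha$ yields a canonical-form strategy with a \emph{different} $\zeta$-vector. That need not be true. Concretely, if $\beta=(\beta_0,\alpha_1,-\alpha_2,\alpha_3,\ldots,\alpha_n)$, then passing to canonical form (forcing $\Im\zeta_j\ge 0$) conjugates $e^{-i\alpha_2}$ back to $e^{i\alpha_2}$, so the resulting $\zeta$-vector is \emph{identical} to the reference. What changes is that the operator norm of $\mathbf{M}$ is now attained on an off-corner antidiagonal pair as well, so the top eigenspace becomes at least $2$-dimensional, with a second eigenvector $\psi_2$ supported on $\{|\mathbf{b}\rangle,|\overline{\mathbf{b}}\rangle\}$ for some $\mathbf{b}\notin\{\mathbf{0},\mathbf{1}\}$. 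Here your rank argument no longer applies, since both $\psi_1$ and $\psi_2$ are GHZ-type states with maximally mixed single-qubit marginals.

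The fix is short. Since (A) holds, each $\alpha_j\in(0,\pi)$, so on every qubit the pair $\{M_j^{(0)},M_j^{(1)}\}=\{X,\cos\alpha_j\,X-\sin\alpha_j\,Y\}$ generates the full matrix algebra (their commutator has a nonzero $Z$ component). Hence any local unitary fixing all measurement operators is a global phase, and cannot carry $\psi_1$ to the orthogonal vector $\psi_2$. This exhibits two locally inequivalent optimal strategies with the \emph{same} measurements, completing the contradiction. You should split the (B)-failure case accordingly: either the rogue maximum yields a different canonical $\zeta$-vector (your measurement-rigidity argument applies), or the same $\zeta$-vector with a degenerate top eigenspace (the commutant argument applies).
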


\begin{proof}
See Proposition 2 in \cite{MillerS:self-testing:2013}.
\end{proof}

The following definition will be convenient in later proofs.

\begin{proposition}
Let $G$ be a nonlocal game which is a self-test.  Then, $G$
is \textbf{positively aligned} if a maximum for $Z_G ( \theta_0 , \ldots, \theta_n )$
occurs in the region
\begin{eqnarray}
\left\{ (\theta_0, \ldots, \theta_n ) \mid 0 < \theta_i < \pi \hskip0.2in \forall i \geq 1\right\}.
\end{eqnarray}
\end{proposition}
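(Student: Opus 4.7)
The statement as worded is a \emph{definition} rather than a proposition with content: the preceding sentence (``The following definition will be convenient in later proofs.'') signals that the writer is introducing the term \emph{positively aligned} for self-tests whose function $Z_G$ attains a maximum in the open region where $0 < \theta_i < \pi$ for every $i \geq 1$. The \emph{if} in the statement is the mathematician's definitional \emph{iff}, not a logical implication. Consequently there is no implication to derive, no existence assertion to verify, and no uniqueness claim to settle; the defined property either holds for a particular self-test $G$ or it does not, and that is all the sentence says.

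My plan therefore has only one step: observe that the \emph{proposition} environment is being used here in place of a definition environment, record that ``$G$ is positively aligned'' abbreviates ``some global maximizer $(\theta_0,\ldots,\theta_n)$ of $Z_G$ satisfies $0 < \theta_i < \pi$ for each $i \geq 1$,'' and proceed. The only thing one might informally check is well-posedness: $Z_G$ is continuous and $2\pi$-periodic in every coordinate, so it attains a global maximum on the compact quotient $(\mathbb{R}/2\pi\mathbb{Z})^{n+1}$, and the subset cut out by the strict inequalities $0 < \theta_i < \pi$ is an open (possibly empty) subregion of the domain. No further argument is required, and no step is a real obstacle.

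For context, one could state and prove a genuine proposition alongside this definition --- for instance, that every self-test $G$ is equivalent, under the input-flip relabelings $c_k \mapsto 1 \oplus c_k$ that commute with the XOR scoring rule, to a positively aligned self-test. Such a statement would follow by invoking Proposition~\ref{selftestprop}(A) to ensure no coordinate $\alpha_i$ of the distinguished maximizer is a multiple of $\pi$, and then using each available flip symmetry to relocate $\alpha_i$ into $(0,\pi)$. But this is a different claim from the one actually written above, and is not what the boxed statement asserts.
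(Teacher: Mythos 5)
You read the situation correctly: the paper offers no proof of this statement because, despite the \texttt{proposition} environment, it is a definition --- the preceding sentence (``The following definition will be convenient in later proofs'') makes this explicit, the boldface \textbf{positively aligned} is being introduced as a term, and the paragraph immediately following uses the new term to state the genuine (unproved, but straightforward) fact that every binary XOR self-test can be converted into a positively aligned one by input-bit flips. Your identification of that nearby claim, and your sketch of why it holds via Proposition~\ref{selftestprop}(A), matches the paper's reasoning exactly. Nothing to fix.
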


For any binary XOR self-test $G = \left( \{ p_\mathbf{i} \} ,
\{ \eta_\mathbf{i} \} \right)$, we can construct
a positively aligned self-test $G' = \left( \{ p'_\mathbf{i} \} , \{ \eta'_\mathbf{i} \} \right)$ by setting $b_1, \ldots, b_n \in \{ 0, 1 \}$ so that
$b_i = 0$ if $Z_G$ has a maximum with $\theta_i \in (0, \pi )$, and $b_i = 1$ if not,
and letting
\begin{eqnarray}
p'_\mathbf{i} & = & p_\mathbf{i} \\
\eta'_\mathbf{i} & = & \eta_{(\mathbf{i} + \mathbf{b}) \textbf{ mod } 2}.
\end{eqnarray}
It is easy to see that $\mathfrak{q}_{G'} = \mathfrak{q}_G$.

\begin{definition}\label{def:strong}
Let $( \psi , \{ M_j^{(i)} )$ and $( \phi , \{ N_j^{(i)} \})$ be $n$-player qubit
strategies.  Then the \textbf{distance} between these two strategies is
the quantity
\begin{eqnarray}
\max \left( \{ \left\| \psi - \phi \right\| \} \cup \left\{  \left\| M_j^{(i)} - N_j^{(i)} \right\| \mid j \in \{ 1, 2, \ldots, n \},
i \in \{ 0,1 \} \right\} \right).
\end{eqnarray}
(In this formula, the first norm denotes Euclidean distance
and second denotes operator norm.)
Let $G$ be a self-test.  Then, $G$ is a \textbf{strong self-test} if
there exists a constant $K$ such that any qubit strategy that achieves
a score of $\mathfrak{q}_G - \epsilon$ is within distance $K \sqrt{\epsilon}$ from a qubit strategy
that achieves the score $\mathfrak{q}_G$.
\end{definition}

For any twice differentiable $m$-variable function $F \colon \mathbb{R}^m \to \mathbb{R}$,
and any $c = (c_1, \ldots, c_m ) \in \mathbb{R}^m$, we can define the Hessian
matrix for $F$ at $c$, which is the $m \times m$ matrix formed
from the second partial derivatives
\begin{eqnarray}
\frac{\partial^2 F}{\partial x_i \partial x_j } ( c_1 , \ldots, c_m)
\end{eqnarray}
(for $i,j \in \{ 1, 2, \ldots, m \}$).

\begin{proposition}
\label{strongrobustprop}
Let $G$ be an $n$-player self-test.  Then the following conditions are equivalent.
\begin{enumerate}
\item  $G$ is a strong self-test.

\item The function $Z_G$ has nonzero Hessian
matrices at all of its maxima.

\item \label{nearmaxprop} There exists a constant $K > 0$ such that
any $(\beta_0, \ldots, \beta_n ) \in \mathbb{R}^{n+1}$ which
satisfies
\begin{eqnarray*}
Z_G ( \beta_0, \ldots, \beta_n ) \geq \mathfrak{q}_G - \epsilon
\end{eqnarray*}
(with $\epsilon \geq 0$) must be within
distance $K \sqrt{\epsilon}$ from
a maximum of $Z_G$.
\end{enumerate}
\end{proposition}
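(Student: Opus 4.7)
The plan is to establish the cycle $(2)\Leftrightarrow(3)\Leftrightarrow(1)$. The equivalence $(2)\Leftrightarrow(3)$ is a piece of multivariable calculus applied to the trigonometric polynomial $Z_G$, while $(3)\Leftrightarrow(1)$ uses the parametrization of qubit strategies recalled in Section~\ref{gamedefsubsec}, under which the score of a qubit strategy is expressible in terms of $Z_G$ evaluated at the phases of its measurement operators.

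For $(2)\Rightarrow(3)$, I will use the second-order Taylor expansion at each maximum $c=(\alpha_0,\ldots,\alpha_n)$ of $Z_G$. Negative-definiteness of the Hessian $H_c$ gives a local estimate $Z_G(x)\le\mathfrak{q}_G-\tfrac12\lambda_c\|x-c\|^2+o(\|x-c\|^2)$ with $\lambda_c>0$, so $Z_G(x)\ge\mathfrak{q}_G-\epsilon$ forces $\|x-c\|\le K_c\sqrt{\epsilon}$ for $\epsilon$ small. Since $Z_G$ is $2\pi$-periodic in each coordinate and Proposition~\ref{selftestprop} guarantees that a self-test has only finitely many maxima modulo $2\pi\mathbb{Z}^{n+1}$, I can compactify on the torus and select a uniform constant $K$; away from the maxima, $Z_G$ is strictly below $\mathfrak{q}_G$ by compactness, covering the remaining case for small $\epsilon$. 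For $(3)\Rightarrow(2)$, I will argue the contrapositive: if the Hessian at some maximum $c$ has a null direction $v$, then $Z_G(c+tv)=\mathfrak{q}_G-O(|t|^4)$ or better, producing points in $\{Z_G\ge\mathfrak{q}_G-\epsilon\}$ at distance $\Omega(\epsilon^{1/4})$ from every maximum, which for small $\epsilon$ exceeds any $K\sqrt{\epsilon}$ and violates~(3).

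For $(3)\Leftrightarrow(1)$, I will exploit that after a local unitary basis change (Section~\ref{gamedefsubsec}), any qubit strategy has $M_j^{(0)}=\left[\begin{smallmatrix}0&1\\1&0\end{smallmatrix}\right]$ and $M_j^{(1)}=\left[\begin{smallmatrix}0&e^{i\theta_j}\\e^{-i\theta_j}&0\end{smallmatrix}\right]$, so the scoring operator $\mathbf{M}$ is reverse-diagonal with entries $P_G(e^{\pm i\theta_1},\ldots,e^{\pm i\theta_n})$ and operator norm $\max_{\theta_0}Z_G(\theta_0,\theta_1,\ldots,\theta_n)$. For $(3)\Rightarrow(1)$, a strategy scoring $\mathfrak{q}_G-\epsilon$ yields some $(\theta_0,\ldots,\theta_n)$ with $Z_G\ge\mathfrak{q}_G-\epsilon$, hence within $O(\sqrt{\epsilon})$ of a maximum by~(3); smoothness of $\theta_j\mapsto M_j^{(1)}$ converts this into an $O(\sqrt{\epsilon})$ bound on the measurements, and a spectral-gap argument at the optimum places $\psi$ within $O(\sqrt{\epsilon})$ of the optimal top eigenvector of $\mathbf{M}$. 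Conversely, for $(1)\Rightarrow(3)$, given $(\beta_0,\ldots,\beta_n)$ with $Z_G\ge\mathfrak{q}_G-\epsilon$, I will instantiate a qubit strategy using these phases together with the top eigenvector of the corresponding $\mathbf{M}$; its score is at least $\mathfrak{q}_G-\epsilon$, and strong self-testing translates directly into the required $K\sqrt{\epsilon}$ bound on $\|(\beta_0,\ldots,\beta_n)-(\alpha_0,\ldots,\alpha_n)\|$.

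The main obstacle I anticipate is the spectral-gap argument for the state $\psi$: I need the top eigenvalue of $\mathbf{M}$ to be separated from the remaining eigenvalues uniformly for near-optimal phases. This is precisely where the self-test hypothesis is essential, because if two reverse-diagonal entries of the optimal $\mathbf{M}$ attained the same maximal absolute value, one could combine their top eigenvectors into inequivalent optimal qubit strategies, contradicting the uniqueness condition encoded in Proposition~\ref{selftestprop}. Once this uniform gap is in hand, the state-distance bound reduces to a standard perturbation estimate, completing the proof.
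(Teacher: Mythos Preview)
The paper's own ``proof'' of this proposition is simply a citation to \cite{MillerS:self-testing:2013} (Proposition~6.1 of its supplementary information), so there is nothing substantive in the present paper to compare against; your sketch is presumably close in spirit to what that reference contains.

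Your plan is broadly correct, but the direction $(1)\Rightarrow(3)$ is too breezy as written. The strong-self-test hypothesis constrains only the measurement operators $M_j^{(i)}$ and the state $\psi$; it says nothing directly about the auxiliary phase $\beta_0$, and from the measurement bound alone you only get that each $\beta_j$ is $O(\sqrt\epsilon)$-close to $\pm\alpha_j$ with a sign that could a priori depend on $j$. Two extra ingredients are needed. First, sign consistency: if the signs were mixed, then $|P_G(e^{i\beta_1},\ldots,e^{i\beta_n})|$ would be $O(\sqrt\epsilon)$-close to one of the non-corner anti-diagonal magnitudes of the \emph{optimal} scoring matrix, which by the self-test criterion of Proposition~\ref{selftestprop} is bounded strictly below $\mathfrak{q}_G$---contradicting $|P_G(e^{i\beta_1},\ldots)|\ge Z_G(\beta)\ge\mathfrak{q}_G-\epsilon$ for small $\epsilon$. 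Second, once $(\beta_1,\ldots,\beta_n)$ is pinned to within $O(\sqrt\epsilon)$ of $(\alpha_1,\ldots,\alpha_n)$, recover $\beta_0$ by writing $P_G(e^{i\beta_1},\ldots)=re^{i\psi}$ with $r\ge\mathfrak{q}_G-\epsilon$: then $Z_G(\beta)=r\cos(\beta_0+\psi)\ge\mathfrak{q}_G-\epsilon$ forces $|\beta_0+\psi|=O(\sqrt\epsilon)$, and Lipschitz continuity of $\arg P_G$ near the maximum gives $|\psi+\alpha_0|=O(\sqrt\epsilon)$. If instead you estimate $P_G(e^{i\beta})$ only additively via $|P_G(e^{i\beta})-P_G(e^{i\alpha})|=O(\sqrt\epsilon)$, you get merely $|\beta_0-\alpha_0|=O(\epsilon^{1/4})$, which does not suffice.

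A smaller point on $(3)\Rightarrow(1)$: to get the $O(\sqrt\epsilon)$ state bound you should first compare $\psi$ to the top eigenvector of $\mathbf{M}$ itself (whose gap is inherited from $\mathbf{M}_{\mathrm{opt}}$ by perturbation), and only then pass to the top eigenvector of $\mathbf{M}_{\mathrm{opt}}$ via Davis--Kahan. Comparing $\psi$ directly to the top eigenvector of $\mathbf{M}_{\mathrm{opt}}$ using only $\psi^*\mathbf{M}_{\mathrm{opt}}\psi\ge\mathfrak{q}_G-O(\sqrt\epsilon)$ again loses a square root.
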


\begin{proof}
(1) $\Longleftrightarrow$ (2) is Proposition 3 in \cite{MillerS:self-testing:2013}.  
(2) $\Longleftrightarrow$ (3) follows from an easy calculus argument.
\end{proof}

We next prove a proposition and corollary which state consequences
of the strong self-testing conditions.  These will be the basis
for proofs in subsection~\ref{decompsec}.
\begin{proposition}
\label{polarcurveprop}
Let $G$ be a positively-aligned strong self-test.
Let $H$ denote the semicircle $\{ e^{i  \beta} \mid 0 \leq \beta \leq \pi \}
\subseteq \mathbb{C}$. 
Then, there exists $\alpha \in [-\pi, \pi ]$ and $c \geq 0$
such that the set
\begin{eqnarray}
P_G ( H^n ) \subseteq \mathbb{C}
\end{eqnarray}
is bounded by the polar curve
\begin{eqnarray}
\label{polarcurve}
& f \colon [-\pi , \pi ] \to \mathbb{C} \\
\nonumber & f( \theta ) = (\mathfrak{q}_G - c ( \theta - \alpha)^2 ) e^{i \theta}.
\end{eqnarray}
\end{proposition}

\begin{proof}
Since $G$ is positively aligned, we may find a maximum $(\alpha_0, \ldots, \alpha_n)$
for $Z_G$ such that $\alpha_1, \ldots, \alpha_n \in (0, \pi )$.  Choose $K$
according to condition (\ref{nearmaxprop}) from Proposition~\ref{strongrobustprop}.
Let $c = 1/K^2$ and $\alpha = - \alpha_0$.

Suppose, for the sake of contradiction, that there is a point in the 
set $P_f ( H^n )$ which lies outside of (\ref{polarcurve}).  Then,
there exists $\beta_1, \ldots, \beta_n \in [0, \pi]$ such that
\begin{eqnarray}
P_f ( e^{i \beta_1} , \ldots, e^{i \beta_n} ) = r e^{i \theta}
\end{eqnarray}
(with $\theta \in [-\pi, \pi ]$) and
\begin{eqnarray}
r > \mathfrak{q}_G - c ( \theta - \alpha)^2.
\end{eqnarray}
Let $\epsilon = (1/K^2) (\theta - \alpha)^2$. We have
\begin{eqnarray}
Z_G ( - \theta , \beta_1, \ldots, \beta_n ) = r & > & \mathfrak{q}_G - c( \theta - \alpha)^2 \\
& = & \mathfrak{q}_G - \epsilon,
\end{eqnarray}
and the distance between $( - \theta , \beta_1, \ldots, \beta_n )$ 
and $(\alpha_0, \ldots, \alpha_n)$ is
at least $|\theta - \alpha| = K \sqrt{\epsilon}$. (And, it
is easy to see that $(- \theta , \beta_1 , \ldots, \beta_n )$ 
is not any closer to any of the other maxima of $Z_G$
than it is to $(\alpha_0, \ldots , \alpha_n )$.)  This contradicts
condition (\ref{nearmaxprop}) of Proposition~\ref{strongrobustprop}.
\end{proof}

\begin{corollary}
\label{displacedcor}
Let $G$ satisfy the assumptions of Proposition~\ref{polarcurveprop}.
Then, there exists a complex number $\gamma \neq 0$ such that
for all $\zeta_1, \ldots, \zeta_n \in H$, 
\begin{eqnarray}
\left| P_G ( \zeta_1, \ldots, \zeta_n ) - \gamma \right|  + \left| \gamma
\right| \leq \mathfrak{q}_G.
\end{eqnarray}
\end{corollary}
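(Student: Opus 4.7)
The plan is to take $\gamma = \tau e^{i\alpha}$ for a sufficiently small positive $\tau$, where $\alpha$ and $c$ are the constants produced by Proposition~\ref{polarcurveprop}; note that $c > 0$, since in the proof of that proposition $c = 1/K^2$ with $K$ the finite constant from condition~(3) of Proposition~\ref{strongrobustprop}. The condition $|z - \gamma| + |\gamma| \le \mathfrak{q}_G$ is exactly the condition that $z$ lies in the closed disk $D_\tau$ centered at $\gamma$ of radius $\mathfrak{q}_G - \tau$, and this disk is internally tangent to the circle $|z| = \mathfrak{q}_G$ at the apex $\mathfrak{q}_G e^{i\alpha}$. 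My goal is therefore to show that for small enough $\tau > 0$ we have $D_\tau \supseteq P_G(H^n)$, which together with $\tau > 0$ yields the corollary.

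By Proposition~\ref{polarcurveprop}, $P_G(H^n)$ lies in the region $\Omega$ enclosed by the polar curve $f(\theta) = (\mathfrak{q}_G - c(\theta - \alpha)^2) e^{i\theta}$, and $\Omega$ is star-shaped about $0$. Provided $\tau < \mathfrak{q}_G/2$, the convex set $D_\tau$ contains $0$, so by convexity plus star-shapedness it is enough to check that $D_\tau$ contains every point of the boundary curve $f$ itself. Parametrizing such a point as $r e^{i(\alpha + \phi)}$ with $r = r(\phi) := \max(\mathfrak{q}_G - c\phi^2, 0)$, the law of cosines rewrites the desired inequality $|r e^{i(\alpha + \phi)} - \gamma|^2 \le (\mathfrak{q}_G - \tau)^2$ as
\begin{equation*}
\tau \;\le\; g(\phi) \;:=\; \frac{\mathfrak{q}_G^2 - r(\phi)^2}{2\bigl(\mathfrak{q}_G - r(\phi)\cos\phi\bigr)}.
\end{equation*}

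It thus suffices to prove $\inf_{\phi \in [-\pi, \pi]} g(\phi) > 0$. For $\phi$ bounded away from $0$, $g$ is manifestly continuous and strictly positive (in particular $g(\phi) = \mathfrak{q}_G/2$ at any $\phi$ with $r(\phi) = 0$), so the only delicate point is $\phi \to 0$, where both numerator and denominator of $g$ vanish quadratically. Using $r(\phi) = \mathfrak{q}_G - c \phi^2$ and $1 - \cos\phi = \phi^2/2 + O(\phi^4)$, a direct Taylor expansion gives
\begin{equation*}
\lim_{\phi \to 0} g(\phi) \;=\; \frac{2c\,\mathfrak{q}_G}{\mathfrak{q}_G + 2c} \;>\; 0,
\end{equation*}
so choosing any $\tau$ strictly between $0$ and $\min\!\bigl(\mathfrak{q}_G/2,\,\inf_\phi g(\phi)\bigr)$ produces the required $\gamma = \tau e^{i\alpha}$.

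The main obstacle is precisely this $\phi \to 0$ limit: there the polar curve and the boundary of $D_\tau$ are both internally tangent to the circle $|z| = \mathfrak{q}_G$ at the same apex, so no soft continuity argument can separate them. The calculation above is really a curvature comparison — the radius of curvature of $f$ at $\mathfrak{q}_G e^{i\alpha}$ is $\mathfrak{q}_G^2/(\mathfrak{q}_G + 2c)$, and $D_\tau$ fits inside $\Omega$ near the apex exactly when $\mathfrak{q}_G - \tau \ge \mathfrak{q}_G^2/(\mathfrak{q}_G + 2c)$, which reproduces the same upper bound on $\tau$. This is where the strong self-test hypothesis enters substantively, since strong self-testing is what forces the curvature coefficient $c$ to be strictly positive in Proposition~\ref{polarcurveprop}.
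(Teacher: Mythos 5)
Your proof is correct and follows essentially the same route as the paper's: both take $\gamma$ on the ray through the apex $\mathfrak{q}_G e^{i\alpha}$ and observe that, because the polar curve's curvature at the apex exceeds $1/\mathfrak{q}_G$ (equivalently, $c>0$), a slightly smaller circle internally tangent to $|z|=\mathfrak{q}_G$ there encloses the region bounded by the curve. The only difference is that the paper asserts this via a one-line curvature comparison, whereas you carry out the law-of-cosines and Taylor computation explicitly (recovering the same threshold $\tau < 2c\,\mathfrak{q}_G/(\mathfrak{q}_G+2c)$) and supply the star-shapedness and compactness details that the paper leaves implicit.
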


\begin{proof}
Let $R \subseteq \mathbb{C}$ be the region enclosed
by the polar curve (\ref{polarcurve}).  Let $S = \{ z \in \mathbb{C}
\mid |z| = \mathfrak{q}_G \}$.  We have $S \cap R = \{ \mathfrak{q}_G \cdot e^{i \alpha} \}$.  
Since the curvature of the curve $(\ref{polarcurve})$ at
$e^{i \alpha}$ is
strictly greater than $1/\mathfrak{q}_G$, we can find a circle of radius less than $\mathfrak{q}_G$
which lies inside of $S$, which is tangent to $S$ at $\mathfrak{q}_G \cdot e^{i \alpha}$,
and which encloses the region $R$.    Then, if we let $\gamma$ be
the center of this circle, we have 
$|z - \gamma| + | \gamma | \leq \mathfrak{q}_G$ for all $z \in R$.  The desired inequality follows.
\end{proof}

\subsection{Decomposition Theorems}

\label{decompsec}

This subsection proves results on the measurements
that are simulated by strong self-tests.
For any unit-length complex number $\zeta$, let us write $g_\zeta$
for the following modified GHZ state:
\begin{eqnarray}
g_\zeta & = & \frac{1}{\sqrt{2}} \left( \left| 00 \ldots 0 \right>
+ \zeta \left| 11 \ldots 1 \right> \right).
\end{eqnarray}
The next theorem uses the canonical form for binary measurements
from subsection~\ref{canonicalsubsec}.  Note that when 
a collection of four projections $\{ P^{(b,c)} \}$ is in canonical
form over a space $\mathbb{C}^{2m}$, we can naturally
express them as operators on $\mathbb{C}^2 \otimes \mathbb{C}^m$
via the isomorphism $\mathbb{C}^{2m} \to \mathbb{C}^2 \otimes
\mathbb{C}^m$ given by $e_{2k-1} \mapsto e_1 \otimes e_k$,
$e_{2k} \mapsto e_2 \otimes e_k$.

\begin{theorem}
\label{winnabledecompthm}
Let $G = ( \{ p_\mathbf{i} \} , \{ \eta_\mathbf{i} \} )$ be a \textbf{winnable} $n$-player self-test which is such that
\begin{enumerate}
\item $G$ is positively aligned, and
\item $p_{00 \ldots 0} > 0$ and $\eta_{00 \ldots 0} = 1$.
\end{enumerate}
Then, there exists a constant $\delta_G > 0$ such that the following holds.
Let 
$( \Phi , \{ M_j^{(i)} \} )$ be a quantum strategy whose
measurements are in canonical form with underlying
space $(\mathbb{C}^2 \otimes W_1 ) \otimes \ldots \otimes ( \mathbb{C}^2 \otimes
W_n )$.  Then the scoring operator $\mathbf{M}$ can be decomposed as
\begin{eqnarray}
\mathbf{M} & = & \delta_G \mathbf{M}' + (1 - \delta_G) \mathbf{M}'',
\end{eqnarray}
where $\left\| \mathbf{M}'' \right\| \leq 1$, and
\begin{eqnarray}
\label{m0formula}
\mathbf{M}' & = & ( g_1 g_1^* - g_{-1} g_{-1}^* ) \otimes
\mathbb{I}_{W_1 \otimes \cdots \otimes W_n}.
\end{eqnarray}
\end{theorem}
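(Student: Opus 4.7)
The plan is to reduce to the canonical form, extract the reverse-diagonal block structure of the scoring operator, and then use Corollary~\ref{displacedcor} to peel off a canonical piece entry-by-entry.  First, using the canonical decomposition of each party's measurements as $M_j^{(i_j)} = \sum_{k_j} A_{j,k_j}^{(i_j)} \otimes |k_j\rangle\langle k_j|_{W_j}$ on $\mathbb{C}^2 \otimes W_j$---with $A_{j,k_j}^{(0)} = \sigma_x$ and $A_{j,k_j}^{(1)}$ a reverse-antidiagonal $2\times 2$ matrix governed by a phase $\zeta_{j,k_j}$ in the upper semicircle $H$---the scoring operator breaks up as
\[
\mathbf{M} \;=\; \sum_{\mathbf{k}} \mathbf{N}^{(\mathbf{k})} \otimes |\mathbf{k}\rangle\langle\mathbf{k}|_W,
\]
where $W = W_1 \otimes \cdots \otimes W_n$ and each $\mathbf{N}^{(\mathbf{k})}$ is the qubit scoring operator for the phase tuple $(\zeta_{1,k_1}, \ldots, \zeta_{n,k_n}) \in H^n$.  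By the reverse-diagonal computation in the proof sketch just before Proposition~\ref{selftestprop}, $\mathbf{N}^{(\mathbf{k})}$ is a reverse-diagonal Hermitian operator on $(\mathbb{C}^2)^{\otimes n}$ whose entry at position $(b, \bar b)$ (for $b \in \{0,1\}^n$) equals $P_G\bigl(\zeta_{1,k_1}^{1-2b_1}, \ldots, \zeta_{n,k_n}^{1-2b_n}\bigr)$, and whose operator norm equals the maximum modulus of those entries.

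Next I would show that the hypotheses force $\alpha = 0$ in Proposition~\ref{polarcurveprop}.  Winnability gives $\mathfrak{q}_G = 1 = \sum_{\mathbf{i}} p_{\mathbf{i}}$; since $|Z_G(\alpha_0, \ldots, \alpha_n)| \leq \sum_{\mathbf{i}} p_{\mathbf{i}}$, equality at any maximum $(\alpha_0,\ldots,\alpha_n)$ of $Z_G$ requires $\cos(\alpha_0 + \sum_k i_k \alpha_k) = \eta_{\mathbf{i}}$ for every $\mathbf{i}$ with $p_{\mathbf{i}} > 0$.  Specializing to $\mathbf{i} = 0^n$ and using the hypotheses $p_{0^n} > 0$, $\eta_{0^n} = 1$ forces $\alpha_0 \equiv 0 \pmod{2\pi}$, whence $\alpha = -\alpha_0 = 0$.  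Corollary~\ref{displacedcor} then produces a positive real $r \in (0,1)$ (depending only on $G$) with $P_G(H^n) \subseteq B(r, 1-r) \subseteq \mathbb{C}$, so each block's extreme corner entry $z_{0^n}^{(\mathbf{k})} := P_G(\zeta_{1,k_1}, \ldots, \zeta_{n,k_n})$ sits in this real-anchored disk.

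For the remaining (``non-corner'') reverse-diagonal entries, indexed by $b \notin \{0^n, 1^n\}$, the sign pattern $\epsilon(b) := (1-2b_1, \ldots, 1-2b_n)$ contains both $+1$'s and $-1$'s, so the evaluation point $\bigl(\zeta_{1,k_1}^{1-2b_1}, \ldots, \zeta_{n,k_n}^{1-2b_n}\bigr)$ has at least one coordinate in $H$ and at least one coordinate outside $H$.  Because $G$ is positively aligned, every maximum of $Z_G$ lies at $(\alpha_1,\ldots,\alpha_n)$ with each $\alpha_j \in (0,\pi)$, so such a mixed point is never congruent modulo $2\pi$ to a maximum of $Z_G$.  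The self-test property (Proposition~\ref{selftestprop}) then gives the strict pointwise inequality $|P_G(\zeta^{\epsilon(b)})| < 1$, and compactness of $H^n$ together with continuity of $P_G$ upgrades this to a uniform bound $|P_G(\zeta^{\epsilon(b)})| \leq c$ for some constant $c < 1$ depending only on $G$.

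Finally, set $\delta_G := \min(r,\, 1-c) > 0$, take $\mathbf{M}' := (g_1 g_1^* - g_{-1} g_{-1}^*) \otimes \mathbb{I}_W$, and define $\mathbf{M}'' := (1-\delta_G)^{-1}(\mathbf{M} - \delta_G \mathbf{M}')$.  Because $\mathbf{M}'$ is reverse-diagonal with $1$ at the extreme corners and $0$ elsewhere, each block of $\mathbf{M}''$ has corner entry $(z_{0^n}^{(\mathbf{k})} - \delta_G)/(1-\delta_G)$, whose modulus is at most $[(1-r) + (r-\delta_G)]/(1-\delta_G) = 1$ via the triangle inequality and $|z_{0^n}^{(\mathbf{k})} - r| \leq 1 - r$, while each non-corner entry has modulus at most $c/(1-\delta_G) \leq 1$ since $\delta_G \leq 1-c$.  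Hence each block, and therefore $\mathbf{M}''$ itself, has operator norm at most $1$, yielding the decomposition (\ref{m0formula}) with a game-dependent constant $\delta_G$.  The main obstacle is arranging the displaced disk from Corollary~\ref{displacedcor} to be anchored on the positive real axis so that it nests inside the real-centered disk $B(\delta_G, 1-\delta_G)$; the hypotheses $p_{0^n} > 0$ and $\eta_{0^n} = 1$ are precisely what make this work, by forcing $\alpha_0 = 0$ at every maximum of $Z_G$.
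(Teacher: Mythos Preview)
Your non-corner argument is essentially the paper's: the self-test criterion (Proposition~\ref{selftestprop}) forces every maximum of $Z_G$ into the open orthants $T^{\pm}$, so the maximum of $Z_G$ on the closed complement is some $\mathfrak{q}'_G < 1$, and this uniformly bounds all the mixed-sign reverse-diagonal entries.  That part is fine.

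The gap is in your corner-entry bound.  You invoke Corollary~\ref{displacedcor} to obtain the displaced disk $B(r,1-r)$, but that corollary rests on Proposition~\ref{polarcurveprop}, which in turn requires $G$ to be a \emph{strong} self-test (it uses condition~(\ref{nearmaxprop}) of Proposition~\ref{strongrobustprop}).  Theorem~\ref{winnabledecompthm} assumes only that $G$ is a winnable self-test, not a strong one, so you are not entitled to that corollary here.  Without the strong-self-test curvature bound, there is no guarantee that $P_G(H^n)$ sits inside any proper sub-disk of the unit disk, and your choice of $r$ collapses.

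The paper bypasses this entirely with an elementary triangle inequality that uses the hypotheses $p_{0^n}>0$, $\eta_{0^n}=1$ directly (not merely to pin down $\alpha_0=0$).  Writing out the corner entry,
\[
P_G(\zeta_1,\ldots,\zeta_n)-\delta_G \;=\; (p_{\mathbf 0}-\delta_G)\;+\;\sum_{\mathbf i\neq \mathbf 0}\eta_{\mathbf i}\,p_{\mathbf i}\,\zeta_1^{i_1}\cdots\zeta_n^{i_n},
\]
so provided $\delta_G\le p_{\mathbf 0}$ one gets $|P_G(\zeta)-\delta_G|\le (p_{\mathbf 0}-\delta_G)+\sum_{\mathbf i\neq\mathbf 0}p_{\mathbf i}=1-\delta_G$.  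Taking $\delta_G=\min\{p_{\mathbf 0},\,1-\mathfrak{q}'_G\}$ then handles both the corner and non-corner entries at once.  Your route via Corollary~\ref{displacedcor} is exactly the mechanism the paper reserves for Theorem~\ref{generaldecthm}, where the strong-self-test hypothesis is available and the simple triangle inequality above is not (because $\mathfrak{q}_G<1$ there).
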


\begin{proof}
Let
\begin{eqnarray}
\label{tplus}
T^+ & = & \left\{ (\theta_0 , \ldots , \theta_n ) \in \mathbb{R}^{n+1}
\mid \theta_i > 0 \hskip0.1in \forall i \geq 1 \right\},
\end{eqnarray}
and
\begin{eqnarray}
\label{tminus}
T^- & = & \left\{ (\theta_0 , \ldots, \theta_n ) \in \mathbb{R}^{n+1}
\mid \theta_i < 0 \hskip0.1in \forall i \geq 1 \right\},
\end{eqnarray}
Let $\mathfrak{q}'_G$ be the maximum value of $Z_G$ that occurs
on the set $[-\pi, \pi]^{n+1} \smallsetminus (T^+  \cup T^-)$.  By 
the criteria from Proposition~\ref{selftestprop}, this set does not include
any of the global maxima for the function $Z_G$, and so $\mathfrak{q}'_G$
is strictly
smaller than the overall maximum $\mathfrak{q}_G = 1$.  Let
\begin{eqnarray}
\delta_G = \min \left\{ p_{00 \ldots 0 } , \mathfrak{q}_G - \mathfrak{q}'_G \right\},
\end{eqnarray}
where $p_{00 \ldots 0 }$ denotes the probability which
$G$ associates to the input string $00 \ldots 0$.

First let us address the case where $\dim W_j = 1$ for all $j$.  Then
\begin{eqnarray}
M_j^{(0)} & = & \left[ \begin{array}{cc} 0 & 1 \\ 1 & 0 \end{array} \right], \\
M_j^{(1)} & = & \left[ \begin{array}{cc} 0 & \zeta_j \\ \overline{\zeta_j} & 0 \end{array} \right].
\end{eqnarray}
We can compute the scoring operator $\mathbf{M}$ using formula (\ref{scoreop}).
When we write this operator as a matrix,
using the computational basis for $(\mathbb{C}^2)^{\otimes n}$ in lexicographical
order, we obtain a reverse diagonal matrix,
\begin{eqnarray}
\mathbf{M} & = & \left[ \begin{array}{ccccccccc} &&&& a_{00 \ldots 0} \\
&&& a_{00 \ldots 1 } \\
&& \iddots \\
& a_{11 \ldots 0} \\
 a_{11 \ldots 1}
\end{array} \right]
\end{eqnarray}
where 
\begin{eqnarray}
a_{b_1, \ldots, b_n} & = & P_G ( \zeta_1^{(-1)^{b_1}} , \zeta_2^{(-1)^{b_2}} , \ldots , \zeta_n^{(-1)^{b_n}} ). \end{eqnarray}
By canonical form, we have $\zeta_j = e^{i \theta_j}$ for some $\theta_j \in [ 0 , \pi ]$.  
Note that can write
\begin{eqnarray}
\left| a_{b_1, \ldots, b_n} \right| & = & 
\max_{\theta_0 \in \mathbb{R} } \hskip0.1in  Z_G ( \theta_0, (-1)^{b_1} \theta_1 ,
(-1)^{b_2} \theta_2, \ldots, (-1)^{b_n} \theta_n ).
\end{eqnarray}
By the definition of $\mathfrak{q}'_G$, all of the values $\left| a_\mathbf{b} \right|$ are bounded
by $\mathfrak{q}'_G$ except possibly $\left| a_{00 \ldots 0} \right|$ and $\left| a_{11 \ldots 1} \right|$, which are
both bounded by $\mathfrak{q}_G = 1$. 

We claim that the matrix
\begin{eqnarray}
\mathbf{N} & = & \left[ \begin{array}{ccccccccc} &&&& a_{00 \ldots 0} - \delta_G \\
&&& a_{00 \ldots 1 } \\
&& \iddots \\
& a_{11 \ldots 0} \\
 a_{11 \ldots 1} - \delta_G
\end{array} \right]
\end{eqnarray}
which arises from subtracting $\delta_G$ from the two corner entries of $\mathbf{M}$,
has operator norm less than or equal to $1 - \delta_G$.  Indeed,
the operator norm of this Hermitian matrix is the maximum of the
absolute values of its entries, and we already know that all of its entries other
than its corner entries are bounded by $\mathfrak{q}'_G \leq 1 - \delta_G$.  To show
that that the absolute values of the corner entries are bounded by
$1 - \delta_G$, it suffices to write them out in terms of
the parameters of the game $G$: we have
\begin{eqnarray}
\left| a_{00\ldots0} - \delta_G \right| & = & \left| P_G ( \zeta_1 , \ldots, \zeta_n ) - \delta_G \right|  \\
& = & \left| \left( \sum_{\mathbf{i} \in \{ 0, 1 \}^n} \eta_\mathbf{i} p_\mathbf{i} \zeta_1^{i_1} 
\zeta_2^{i_2} \cdot 
\ldots \zeta_n^{i_n} \right) - \delta_G \right| \\
& = & \left| ( p_\mathbf{0}  - \delta_G ) + \sum_{\mathbf{i} \neq \mathbf{0}} \eta_\mathbf{i} p_\mathbf{i} \zeta_1^{i_1} 
\zeta_2^{i_2} \cdot 
\ldots \zeta_n^{i_n} \right| \\
& \leq & (p_0 - \delta_G ) + \sum_{\mathbf{i} \neq \mathbf{0}} p_\mathbf{i} \\
& = & 1 - \delta_G,
\end{eqnarray}
and likewise for $(a_{11\ldots1} - \delta_G)$.  We conclude that $\mathbf{N}$
has operator norm less than or equal to $1 - \delta_G$.  Let $\mathbf{M}''
= \mathbf{N}/(1 - \delta_G)$ and $\mathbf{M}' = (\mathbf{M} - \mathbf{N}')/\delta_G$,
and the desired conditions hold.

The proof for the case in which $W_1 , \ldots, W_n$ are of arbitrary dimension
follows by similar reasoning.
\end{proof}

\begin{theorem}
\label{generaldecthm}
Let $G = \left( \left\{ p_\mathbf{i} \right\} , \left\{ \eta_\mathbf{i} \right\} \right)$
be a strong self-test which is positively aligned.  Then, there 
exist $\delta_G > 0$ and $\alpha \in \mathbb{C}$ with $| \alpha | = 1$
such that the following holds.  Let $( \Phi , \{ M_j^{(i)} \} )$ be
a quantum strategy whose
measurements
are in canonical form
with underlying space
$(\mathbb{C}^2 \otimes W_1 ) \otimes \ldots \otimes ( \mathbb{C}^2 \otimes
W_n )$.  Then the scoring operator $\mathbf{M}$ can be decomposed
as
\begin{eqnarray}
\mathbf{M} = \delta_G \mathbf{M}' + ( \mathfrak{q}_G - \delta_G ) \mathbf{M}'',
\end{eqnarray}
where $\left\| \mathbf{M}'' \right\| \leq 1$, and
\begin{eqnarray}
\mathbf{M}' & = & (g_\alpha g_\alpha^* - g_{-\alpha} g_{-\alpha}^* )
\otimes \mathbb{I}_{W_1 \otimes \cdots \otimes W_n }.
\end{eqnarray}
\end{theorem}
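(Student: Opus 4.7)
The plan is to follow the strategy of Theorem~\ref{winnabledecompthm}, replacing the ad hoc corner bound (which relied on $\mathfrak{q}_G = 1$) by the geometric bound of Corollary~\ref{displacedcor}; this is the step where the ``strong'' in strong self-test is indispensable. First I would reduce to the case $\dim W_j = 1$ for all $j$: in canonical form every $M_j^{(i)}$ splits as $M_j^{(i)} \otimes \mathbb{I}_{W_j}$, so the full scoring operator factors as $\mathbf{M}_0 \otimes \mathbb{I}_{W_1 \otimes \cdots \otimes W_n}$ and the desired decomposition lifts. In the reduced case, canonical form provides angles $\theta_j \in [0, \pi]$ with $\zeta_j = e^{i \theta_j}$, and $\mathbf{M}$ is a reverse-diagonal Hermitian matrix whose $(\mathbf{b}, \overline{\mathbf{b}})$ entry is $a_\mathbf{b} = P_G(\zeta_1^{(-1)^{b_1}}, \ldots, \zeta_n^{(-1)^{b_n}})$ for $\mathbf{b} \in \{0,1\}^n$. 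Since the operator norm of such a matrix is $\max_\mathbf{b} |a_\mathbf{b}|$, the entire argument reduces to bounding those entries after a suitable corner subtraction.

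Next I invoke Corollary~\ref{displacedcor} to obtain a nonzero $\gamma_0 \in \mathbb{C}$, depending only on $G$, such that $|P_G(\zeta_1, \ldots, \zeta_n) - \gamma_0| + |\gamma_0| \leq \mathfrak{q}_G$ whenever every $\zeta_j \in H$. Set $\alpha = \overline{\gamma_0}/|\gamma_0|$; a direct computation shows that $g_\alpha g_\alpha^* - g_{-\alpha} g_{-\alpha}^*$ is supported on exactly the two corner positions $(\mathbf{0}, \mathbf{1})$ and $(\mathbf{1}, \mathbf{0})$, with entries $\bar\alpha$ and $\alpha$. Let $\mathfrak{q}'_G$ denote the maximum of $Z_G$ over $[-\pi, \pi]^{n+1} \setminus (T^+ \cup T^-)$, with $T^\pm$ as in~(\ref{tplus})--(\ref{tminus}); Proposition~\ref{selftestprop} together with compactness yields $\mathfrak{q}'_G < \mathfrak{q}_G$. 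Put
\[
\delta_G = \min\bigl\{\,|\gamma_0|,\ \mathfrak{q}_G - \mathfrak{q}'_G\,\bigr\} > 0, \qquad \gamma = \delta_G \bar\alpha,
\]
so $|\gamma| = \delta_G$ and $\gamma$ lies on the segment from $0$ to $\gamma_0$. A convexity estimate $|P_G - t \gamma_0| \leq (1 - t)|P_G| + t |P_G - \gamma_0| \leq \mathfrak{q}_G - t|\gamma_0|$ for $t \in [0,1]$ shows the inequality of Corollary~\ref{displacedcor} persists under this rescaling, giving $|a_\mathbf{0} - \gamma| \leq \mathfrak{q}_G - \delta_G$ whenever each $\zeta_j \in H$.

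Finally, form $\mathbf{N} = \mathbf{M} - \delta_G \bigl(g_\alpha g_\alpha^* - g_{-\alpha} g_{-\alpha}^*\bigr) \otimes \mathbb{I}$. This operator is again reverse-diagonal, agrees with $\mathbf{M}$ at every non-corner position, and has corner entries $a_\mathbf{0} - \gamma$ and $\overline{a_\mathbf{0} - \gamma}$; the preceding paragraph bounds the corner entry in absolute value by $\mathfrak{q}_G - \delta_G$. For any non-corner $\mathbf{b}$, the coordinates $(-1)^{b_j}\theta_j$ (with $\theta_j \in [0,\pi]$) cannot all be strictly positive nor all strictly negative, because $\mathbf{b}$ contains both a $0$ and a $1$ entry; hence $(\theta_0, (-1)^{b_1}\theta_1, \ldots, (-1)^{b_n}\theta_n) \notin T^+ \cup T^-$ for every $\theta_0$, and therefore $|a_\mathbf{b}| = \max_{\theta_0} Z_G(\theta_0, (-1)^{b_1}\theta_1, \ldots, (-1)^{b_n}\theta_n) \leq \mathfrak{q}'_G \leq \mathfrak{q}_G - \delta_G$. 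Thus $\|\mathbf{N}\| \leq \mathfrak{q}_G - \delta_G$, and setting $\mathbf{M}'' := \mathbf{N}/(\mathfrak{q}_G - \delta_G)$ completes the decomposition. The main obstacle, and the one place where the strong self-testing hypothesis is used essentially, is the corner bound: Corollary~\ref{displacedcor} supplies it because the image curve $P_G(H^n)$ is interiorly tangent to the circle of radius $\mathfrak{q}_G$ at its outermost point with strictly positive curvature, a geometric feature that can fail for self-tests that are not strong.
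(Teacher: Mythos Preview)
Your argument is correct and follows essentially the same route as the paper: reduce to $\dim W_j = 1$, use Corollary~\ref{displacedcor} (rescaled along the segment to $\gamma_0$) for the two corner entries and the $\mathfrak{q}'_G$ bound for the remaining reverse-diagonal entries, then set $\mathbf{M}'' = \mathbf{N}/(\mathfrak{q}_G - \delta_G)$. One small caveat on the reduction step: your claim that ``every $M_j^{(i)}$ splits as $M_j^{(i)} \otimes \mathbb{I}_{W_j}$'' is not literally true for $i=1$, since the phases $\zeta_\ell$ may vary across $W_j$; the correct observation is that $\mathbf{M}$ is block-diagonal over the index set of $W_1 \otimes \cdots \otimes W_n$ and your $\dim W_j = 1$ argument applies block by block (with a block-independent $\mathbf{M}'$), which is precisely the ``similar reasoning'' the paper invokes without spelling out.
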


\begin{proof}
We repeat elements of the proof of Theorem~\ref{winnabledecompthm}.
It suffices to prove our desired decomposition for the case in
which $\dim W_i = 1$ for all $i$.   Let
$\mathfrak{q}'_G$ be the maximum value of $Z_G$ that occurs on
the set $[-\pi , \pi ]^{n+1} \smallsetminus ( T^+ \cup T^- )$
(where $T^+$ and $T^-$ are defined by (\ref{tplus}) and (\ref{tminus})).
Let $\gamma \neq 0$ be the constant
that is given by Corollary~\ref{displacedcor},
and let
\begin{eqnarray}
\delta_G & = & \min \{ | \gamma | , \mathfrak{q}_G - \mathfrak{q}'_G \}.
\end{eqnarray}

We have
\begin{eqnarray}
\mathbf{M} & = & \left[ \begin{array}{ccccccccc} &&&& a_{00 \ldots 0} \\
&&& a_{00 \ldots 1 } \\
&& \iddots \\
& a_{11 \ldots 0} \\
 a_{11 \ldots 1}
\end{array} \right]
\end{eqnarray}
where 
\begin{eqnarray}
a_{b_1, \ldots, b_n} & = & P_G ( \zeta_1^{(-1)^{b_1}} , \zeta_2^{(-1)^{b_2}} , \ldots , \zeta_n^{(-1)^{b_n}} ).
\end{eqnarray}
for some $\zeta_1, \ldots, \zeta_n \in \mathbb{C}$ such that $| \zeta_i | = 1$ and
$\Im ( \zeta_i ) \geq 0$. 
By Corollary~\ref{displacedcor},
\begin{eqnarray}
\left| P_G ( \zeta_1, \ldots, \zeta_n ) - \gamma \right| + \left| \gamma \right| \leq \mathfrak{q}_G,
\end{eqnarray}
and it is easy to see (by the triangle inequality) that for any $ c \in [0, 1 ]$,
\begin{eqnarray}
\left| P_G ( \zeta_1, \ldots, \zeta_n ) - c \gamma \right| + \left| c \gamma \right| \leq \mathfrak{q}_G.
\end{eqnarray}

Let
\begin{eqnarray}
\mathbf{N} & = & \left[ \begin{array}{ccccccccc} &&&& a_{00 \ldots 0} - \frac{\delta_G}{| \gamma|}  \cdot \gamma \\
&&& a_{00 \ldots 1 } \\
&& \iddots \\
& a_{11 \ldots 0} \\
 a_{11 \ldots 1} - \frac{\delta_G}{| \gamma|}  \cdot \overline{\gamma}
\end{array} \right]
\end{eqnarray}
The absolute values of the corner entries of this matrix are less than
or equal to $\mathfrak{q}_G - \delta_G$, and the
other entries have absolute values less than or equal
to $\mathfrak{q}'_G \leq \mathfrak{q}_G - \delta_G$.  Thus when we set
\begin{eqnarray}
\alpha & = & \gamma / | \gamma|, \\
\mathbf{M}' & = & (g_\alpha g_\alpha^* - g_{-\alpha} g_{-\alpha}^* )
\otimes \mathbb{I}_{W_1 \otimes \cdots \otimes W_n }, \\
\mathbf{M}'' & = & ( \mathbf{M} - \delta_G \mathbf{M}' ) / (\mathfrak{q}_G - \delta_G),
\end{eqnarray}
the desired result follows.
\end{proof}

The operator $(g_\alpha g_\alpha^* - g_{-\alpha} g_{-\alpha}^*)$ from the
statement of Theorem~\ref{generaldecthm} does not describe a projective
measurement.  It is convenient to have a decomposition theorem
involving a projective measurement.  This motivates the next result.

We introduce some additional notation.
Let
\begin{eqnarray}
\mathbf{b} \colon \{ 0, 1, 2, \ldots, 2^n - 1 \} \to \{ 0, 1 \}^n
\end{eqnarray}
be the function which maps $k$ to its base-$2$ representation.  For
any $\zeta \in \mathbb{C}$ with $| \zeta | = 1$, and any
$k \in \{ 0, 1, 2, \ldots, 2^n -1  \}$, let
\begin{eqnarray}
g_{\zeta, k} = \frac{1}{\sqrt{2}} \left( \left| \mathbf{b}(k) \right> \left<
\mathbf{b} (k) \right| + \zeta \left| \overline{\mathbf{b}(k)} \right> \left<
\overline{\mathbf{b}(k)} \right| \right).
\end{eqnarray}

\begin{theorem}
\label{verygeneraldecompthm}
Let $G = \left( \left\{ p_\mathbf{i} \right\} , \left\{ \eta_\mathbf{i} \right\} \right)$
be a strong self-test which is positively aligned.  Then, there 
exist $\delta_G > 0$ and $\alpha \in \mathbb{C}$ with $| \alpha | = 1$
such that the following holds.  Let $( \Phi , \{ M_j^{(i)} \} )$ be
a quantum strategy whose
measurements
are in canonical form
with underlying space
$(\mathbb{C}^2 \otimes W_1 ) \otimes \ldots \otimes ( \mathbb{C}^2 \otimes
W_n )$.  Let $\alpha_0 = \alpha$ and let $\alpha_1, \ldots, \alpha_{2^{n-1}-1}$
be any unit-length complex numbers.
Then the scoring operator $\mathbf{M}$ can be decomposed
as
\begin{eqnarray}
\mathbf{M} = \delta_G \mathbf{M}' + ( \mathfrak{q}_G - \delta_G ) \mathbf{M}'',
\end{eqnarray}
where $\left\| \mathbf{M}'' \right\| \leq 1$, and
\begin{eqnarray}
\mathbf{M}' & = & \left[ \sum_{k=0}^{2^{n-1}-1}
(g_{\alpha_k, k} g_{\alpha_k, k}^* - g_{-\alpha_k, k} g_{-\alpha_k, k}^* ) \right]
\otimes \mathbb{I}_{W_1 \otimes \cdots \otimes W_n }.
\end{eqnarray}
\end{theorem}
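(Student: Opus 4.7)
My plan is to extend the proof of Theorem~\ref{generaldecthm} by carefully accounting for the additional anti-diagonal entries of $\mathbf{M}'$. As in the earlier decomposition theorems, it suffices to treat the case where each $W_j$ is one-dimensional, since the general case follows by the same argument applied block-wise. In that case $\mathbf{M}$ is a $2^n \times 2^n$ reverse-diagonal Hermitian matrix whose anti-diagonal entries are $a_\mathbf{b} = P_G(\zeta_1^{(-1)^{b_1}}, \ldots, \zeta_n^{(-1)^{b_n}})$ for unit-length complex numbers $\zeta_j$ with $\Im(\zeta_j) \geq 0$, and its operator norm equals the largest absolute value among those entries.

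Let $\mathfrak{q}'_G$ denote the maximum of $Z_G$ over $[-\pi,\pi]^{n+1} \smallsetminus (T^+ \cup T^-)$, which by the self-test characterization in Proposition~\ref{selftestprop} is strictly smaller than $\mathfrak{q}_G$. Let $\gamma$ be the constant produced by Corollary~\ref{displacedcor}, and set $\alpha := \gamma/|\gamma|$ together with $\delta_G := \min\{|\gamma|,\ (\mathfrak{q}_G - \mathfrak{q}'_G)/2\}$. A direct calculation with the vectors $g_{\zeta,k} = \frac{1}{\sqrt{2}}(|\mathbf{b}(k)\rangle + \zeta|\overline{\mathbf{b}(k)}\rangle)$ shows that $g_{\alpha_k,k}g_{\alpha_k,k}^* - g_{-\alpha_k,k}g_{-\alpha_k,k}^*$ has support exactly on the two anti-diagonal positions indexed by the pair $(\mathbf{b}(k),\overline{\mathbf{b}(k)})$, with entries $\alpha_k$ and $\overline{\alpha_k}$. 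Summing over $k \in \{0, 1, \ldots, 2^{n-1}-1\}$ therefore makes $\mathbf{M}'$ a reverse-diagonal Hermitian matrix that touches \emph{every} antipodal pair, with the single fixed entry $\alpha$ at the extreme corner and otherwise arbitrary unit-length entries.

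Now form $\mathbf{N} := \mathbf{M} - \delta_G \mathbf{M}'$, which is again reverse-diagonal and Hermitian. I will show $\|\mathbf{N}\| \leq \mathfrak{q}_G - \delta_G$ by bounding each anti-diagonal entry. At the extreme corner $\mathbf{b} = 0^n$, the entry becomes $a_\mathbf{0} - \delta_G \alpha$. Since $\delta_G \alpha$ lies on the segment from $0$ to $\gamma$, the triangle inequality gives $|a_\mathbf{0} - \delta_G \alpha| \leq |a_\mathbf{0} - \gamma| + |\gamma - \delta_G \alpha| \leq (\mathfrak{q}_G - |\gamma|) + (|\gamma| - \delta_G) = \mathfrak{q}_G - \delta_G$, where the first inequality uses Corollary~\ref{displacedcor}. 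At every other anti-diagonal position we have $|a_\mathbf{b}| \leq \mathfrak{q}'_G$ (since the corresponding argument of $Z_G$ lies outside $T^+ \cup T^-$), so no matter which unit-length $\alpha_k$ was chosen, the triangle inequality yields $|a_\mathbf{b} - \delta_G \alpha_k| \leq \mathfrak{q}'_G + \delta_G \leq \mathfrak{q}_G - \delta_G$ by the choice of $\delta_G$. Setting $\mathbf{M}'' := \mathbf{N}/(\mathfrak{q}_G - \delta_G)$ then completes the decomposition.

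The one genuinely new ingredient compared to Theorem~\ref{generaldecthm} is the second bullet above: since the $\alpha_k$ for $k \neq 0$ are \emph{arbitrary}, the subtraction can, in the worst case, enlarge the absolute value of a non-corner entry by the full amount $\delta_G$. The slack $\mathfrak{q}_G - \mathfrak{q}'_G$ guaranteed by the strong self-test property must therefore absorb both the original entry and the entire subtracted term, which is exactly why $\delta_G$ must be taken to be at most $(\mathfrak{q}_G - \mathfrak{q}'_G)/2$ rather than $\mathfrak{q}_G - \mathfrak{q}'_G$ as in the previous theorem. Aside from this accounting, no new machinery beyond Corollary~\ref{displacedcor} is required.
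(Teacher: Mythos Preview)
Your proof is correct and follows essentially the same approach as the paper's own proof: the same reduction to $\dim W_j = 1$, the same choice $\delta_G = \min\{|\gamma|, (\mathfrak{q}_G - \mathfrak{q}'_G)/2\}$ with $\alpha = \gamma/|\gamma|$ from Corollary~\ref{displacedcor}, and the same entrywise bounds on the reverse-diagonal matrix $\mathbf{N} = \mathbf{M} - \delta_G \mathbf{M}'$ (Corollary~\ref{displacedcor} for the corner, the triangle inequality against $\mathfrak{q}'_G$ for the rest). Your closing paragraph correctly identifies the only new wrinkle relative to Theorem~\ref{generaldecthm}, namely the need for the factor of $1/2$ in $\delta_G$ to absorb an arbitrary unit-length perturbation at the non-corner positions.
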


\begin{proof}
Again it suffices to prove this result for when $\dim W_i = 1$ for all $i$.
Let $\mathfrak{q}'_G$ be the maximum value of $Z_G$ that occurs on
the set $[-\pi , \pi ]^{n+1} \smallsetminus ( T^+ \cup T^- )$,
where $T^+$ and $T^-$ are defined by (\ref{tplus}) and (\ref{tminus}).
Let $\gamma$ be the constant given by Corollary~\ref{displacedcor},
let $\alpha = \gamma/|\gamma|$, and let
\begin{eqnarray}
\delta_G & = & \min \{ \left| \gamma \right| , \left( \mathfrak{q}_G - \mathfrak{q}'_G \right)/2 \}.
\end{eqnarray}
Write $\mathbf{M}$ as
\begin{eqnarray}
\mathbf{M} & = & \left[ \begin{array}{ccccccccc} &&&& a_{00 \ldots 0} \\
&&& a_{00 \ldots 1 } \\
&& \iddots \\
& a_{11 \ldots 0} \\
 a_{11 \ldots 1}
\end{array} \right]
\end{eqnarray}
Let
\begin{eqnarray}
\mathbf{N} & = & \mathbf{M} - \delta_G\left[ \begin{array}{cccccccc}
&&&&&&& \alpha_0 \\
&&&&&& \alpha_1 \\
&&&&& \iddots \\
&&&& \alpha_{2^{n-1}-1} \\
&&& \overline{\alpha_{2^{n-1}-1}}  \\
&& \iddots  \\
& \overline{\alpha_1}  \\
 \overline{\alpha_0}  \\
\end{array} \right],
\end{eqnarray}
The corner entries of $\mathbf{N}$
have absolute value $\leq \mathfrak{q}_G - \delta_G$ (by Corollary~\ref{displacedcor})
and the same holds for the other anti-diagonal entries by the triangle inequality:
for any $n \in \{ 1, 2, \ldots, 2^{N-1} - 1 \}$,
\begin{eqnarray}
\left| a_{\mathbf{b}(n)} - \delta_G \alpha_n \right| \leq \left| a_{\mathbf{b}(n)} \right|
+ \delta_G \leq \mathfrak{q}'_G + (\mathfrak{q}_G - \mathfrak{q}'_G)/2 \leq \mathfrak{q}_G - \delta_G.
\end{eqnarray}
Thus we let $\mathbf{M}''/(\mathfrak{q}_G - \delta_G )$ and the desired statements hold.
\end{proof}

\section{Randomness Expansion with Partially Trusted Measurements}

\label{REpartialsec}

The goals of this section are to define randomness expansion protocols based
on \textbf{partially} trusted devices, and then to relate
these new protocols to Protocol R.

\subsection{Devices with Trusted Measurements}

\label{trustedsec}
\label{untrustedsec}

\label{simpleprotocolsubsec}

We begin by stating a simple protocol
that involves a device with trusted measurements.

\begin{definition}
A \textbf{device with trusted measurements} consists
of the following data.
\begin{enumerate}
\item A single quantum system $Q$ in an initial state $\Phi$.

\item For every pair $(\mathbf{i}, \mathbf{o} )$ of binary strings of equal length,
two Hermitian operators $M_{\mathbf{i} , \mathbf{o} }^{(0)}, M_{\mathbf{i} , \mathbf{o} }^{(1)}$
representing the measurements performed on $Q$ when the input 
and output histories are $\mathbf{i}$ and $\mathbf{o}$.  These operators
are assumed to satisfy
\begin{eqnarray}
\label{anticom1}
\left( M_{\mathbf{i} , \mathbf{o} }^{(0)}\right)^2 = \left( M_{\mathbf{i} , \mathbf{o} }^{(1)} \right)^2
= \mathbb{I}
\end{eqnarray}
and
\begin{eqnarray}
\label{anticom2}
 M_{\mathbf{i} , \mathbf{o} }^{(0)}
 M_{\mathbf{i} , \mathbf{o}}^{(1)}
= - M_{\mathbf{i} , \mathbf{o}}^{(1)}
M_{\mathbf{i} , \mathbf{o}}^{(0)}.
\end{eqnarray}
\end{enumerate}
\end{definition}
A trusted measurement device is one whose measurements perfectly
anti-commute.
A protocol for trusted measurement devices is given in Figure~\ref{protafig}.
Essentially this protocol is the same as Protocol R, except
that we have skipped the process of generating random
inputs for the game rounds, and have instead simply used the
biased coin flip $g$ itself as input to the device.

\begin{figure}
\setlength{\fboxsep}{10pt}
\fbox{\parbox{5.2in}{
\textbf{Protocol A:}
\vskip0.1in
\textit{Arguments:} 
\[
\begin{array}{rcl}
N & = & \textnormal{positive integer} \\
q & \in & (0, 1) \\
\eta & \in & (0, 1/2) \\
D & = & \textnormal{device with trusted measurements}
\end{array}
\]

\begin{enumerate}
\item A bit $g \in \{ 0, 1 \}$ is chosen
according to a biased $(1 - q, q )$ distribution.
The bit $g$ is given to $D$ as input, and an output
bit $o$ is recorded.
\vskip0.1in
\item If $g = 1$ and the output given by $D$ is
$0$, then the event $P$ (``pass'') is recorded.
If $g = 1$ and the output is $1$, the
event $F$ (``fail'') is recorded.
\vskip0.1in
\item If $g = 0$ and the output given by $D$ is
$0$, then the event $H$ (``heads'') is recorded.
If $g = 0$ and the output is $1$, the
event $T$ (``tails'') is recorded.
\vskip0.1in
\item Steps $1-3$ are repeated $N-1$ (more) times.
Bit sequences $\mathbf{g} = (g_1, \ldots, g_N)$
and $\mathbf{o} = (o_1, \ldots, o_N)$ are obtained.
\vskip0.1in
\item If the total number of failures is more than $\eta q N$,
the protocol \textbf{aborts}.  Otherwise,
the protocol \textbf{succeeds}.
If the protocol succeeds, it outputs
the bit sequences $\mathbf{g}$ and $\mathbf{o}$.
\end{enumerate}
}}
\caption{A randomness expansion protocol for a 
trusted measurement device.}
\label{protafig}
\end{figure}

\subsection{Devices with Partially Trusted Measurements}

\label{partialtrustsubsec}

\begin{definition}
\label{partiallytrusteddef}
Let $v \in (0, 1]$ and $h \in [0, 1]$ be real numbers
such that $v + h \leq 1$.
Then a \textbf{partially trusted device with parameters $(v, h)$} consists
of the following data.
\begin{enumerate}
\item A single quantum system $Q$ in an intial state $\Phi$.

\item For every pair $(\mathbf{i}, \mathbf{o} )$ of binary strings of equal length,
two Hermitian operators $M_{\mathbf{i} , \mathbf{o} }^{(0)}, M_{\mathbf{i} , \mathbf{o} }^{(1)}$
on ${Q}$
(representing measurements) that satisfy the following conditions:
\begin{itemize}
\item There exist perfectly anti-commuting measurement pairs
$(T_{\mathbf{i}, \mathbf{o}}^{(0)} , T_{\mathbf{i}, \mathbf{o}}^{(1)} )$
such that $M_{\mathbf{i} , \mathbf{o}}^{(0)} = T_{\mathbf{i}, \mathbf{o}}^{(0)}$
for all $\mathbf{i}, \mathbf{o}$, and 

\item The operator $M_{\mathbf{i}, \mathbf{o}}^{(1)}$ decomposes as
\begin{eqnarray}
\label{partiallytrustdecomp}
M_{\mathbf{i} , \mathbf{o} }^{(1)} & = & (v ) T_{\mathbf{i} , \mathbf{o}}^{(1)}
+ (1 - v - h) N_{\mathbf{i} , \mathbf{o}}
\end{eqnarray}
with $\left\| N_{\mathbf{i} , \mathbf{o}} \right\| \leq 1$.
\end{itemize}
\end{enumerate}
\end{definition}

The operators $M_{\mathbf{i}, \mathbf{o}}^{(0)}, M_{\mathbf{i},
\mathbf{o}}^{(1)}$ determine the measurements 
performed by the device on inputs $0$ and $1$, respectively.
Intuitively, a partially trusted device is a device $D$ which
always performs a trusted measurement $T^{(0)}$ on input $0$,
and on input $1$, selects one of the three
operators $(T^{(1)}, N, 0)$ at random according to the probability
distribution $(v, 1 - v - h , h )$. 

We will call the parameter $v$ the \textbf{trust coefficient}, 
and we will call $h$ the \textbf{coin flip coefficient}.
The parameter $h$ measures the extent to which
the output of $D$ on input $1$ is determined by a fair 
coin flip.
Note that when the input to the device $D$ is $1$,
then the probability that $D$ gives an output
of $1$ is necessarily between $h/2$ and $(1 - h/2)$.  

Figure~\ref{protaprimefig} gives a randomness
expansion protocol for partially trusted devices.  It is the
same as Protocol A, except that the
trusted device has been replaced by a partially trusted device.

\begin{figure}
\setlength{\fboxsep}{10pt}
\fbox{\parbox{5.2in}{
\textbf{Protocol A':}
\vskip0.1in
\textit{Arguments:} 
\[
\begin{array}{rcl}
v & = & \textnormal{real number such that } v \in  (0, 1]. \\
h & = & \textnormal{real number such that } h \in [0, 1 - v]. \\
N& = & \textnormal{ positive integer} \\
q & \in & (0, 1) \\
\eta & \in & (0, v/2) \\
D & = & \textnormal{partially trusted device with parameters } (v, h ). \\
\end{array}
\]
\begin{enumerate}
\item A bit $g \in \{ 0, 1 \}$ is chosen
according to a biased $(1 - q, q )$ distribution.
The bit $g$ is given to $D$ as input, and the output
bit $o$ is recorded.
\vskip0.1in
\item If $g = 1$ and the output given by $D$ is
$0$, then the event $P$ (``pass'') is recorded.
If $g = 1$ and the output is $1$, the
event $F$ (``fail'') is recorded.
\vskip0.1in
\item If $g = 0$ and the output given by $D$ is
$0$, then the event $H$ (``heads'') is recorded.
If $g = 0$ and the output is $0$, the
event $T$ (``tails'') is recorded.
\vskip0.1in
\item Steps $1-3$ are repeated $N-1$ (more) times.
Bit sequences $\mathbf{g} = (g_1, \ldots, g_N )$ and
$\mathbf{o} = (o_1, \ldots, o_N )$ are obtained.
\vskip0.1in
\item If the total number of 
failures is greater than
$(h/2 + \eta)qN$, then the protocol \textbf{aborts}.  
Otherwise,
the protocol \textbf{succeeds}.
If the protocol succeeds, it outputs
the bit sequences $\mathbf{g}$ and $\mathbf{o}$.
\end{enumerate}
}}
\caption{A randomness expansion protocol for a partially trusted
device.}
\label{protaprimefig}
\end{figure}

\subsection{Entanglement with a Partially Trusted
Measurement Device}

\label{entpartialsubsec}

Suppose that $D$ is a partially trusted measurement device (see
Definition~\ref{partiallytrusteddef}) with parameters $(v, h)$.  Suppose
that $E$ is a quantum system that is entangled with $D$, and let $\rho =
\rho_E$ denote the initial state of $E$.  We will use the following
notation: let $\rho_+$ and $\rho_-$ denote the subnormalized operators
which represent the states of $E$ when the input bit is $0$ and the output bit
is $0$ or $1$, respectively.  Let $\rho_P$ and $\rho_F$ denote
the operators which represent an input of $1$ and an output of $0$ or $1$,
respectively.  Also (using notation from Definition~\ref{partiallytrusteddef}),
let us write $\rho_0$ and $\rho_1$ denote the states of $E$ that would
occur if the trusted measurement $T^{(1)}$ was applied to $Q$ (instead
of the partially trusted measurement $M^{(1)}$).  (Note that $T^{(1)}$
is perfectly anticommuting with $M^{(0)}$.)

The following proposition expresses the possible behavior of the system $E$.

\begin{proposition}
\label{entpartialprop}
Let $v \in (0, 1]$ and $h \in [0, 1]$ be such that $v + h \leq 1$.  Let
$D$ be a partially trusted device with parameters $(v,h)$, let $E$
be a quantum system that is entangled with $D$, and 
let $\rho = \rho_E$.  Then,
\begin{eqnarray}
(h/2) \rho + v \rho_0 \leq \rho_P \leq (1 - h/2) \rho - v \rho_1
\end{eqnarray}
and
\begin{eqnarray}
(h/2) \rho + v \rho_1 \leq \rho_F \leq (1 - h/2) \rho - v \rho_0.
\end{eqnarray}
\end{proposition}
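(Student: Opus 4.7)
The plan is straightforward: express the four subnormalized states as affine functions of the relevant measurement operator and then exploit the decomposition~\eqref{partiallytrustdecomp}. Let $\sigma$ denote the joint state on $Q \otimes E$, so that $\rho = \Tr_Q\sigma$. For any Hermitian operator $A$ on $Q$, define
\begin{eqnarray*}
\tau(A) \;:=\; \Tr_Q\bigl[(A \otimes \mathbb{I}_E)\, \sigma\bigr].
\end{eqnarray*}
A direct computation shows that for a binary nondestructive measurement with operator $M$ (with $\|M\|\le 1$), the subnormalized state of $E$ conditioned on outcome $o \in \{0,1\}$ equals $\tfrac{1}{2}\rho + \tfrac{(-1)^o}{2}\tau(M)$. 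Applying this to the input-$1$ measurement $M^{(1)} = v T^{(1)} + (1 - v - h) N$, and also to the hypothetical perfectly anti-commuting measurement $T^{(1)}$, yields
\begin{eqnarray*}
\rho_P &=& \tfrac{1}{2}\rho + \tfrac{v}{2}\tau(T^{(1)}) + \tfrac{1-v-h}{2}\tau(N),\\
\tau(T^{(1)}) &=& \rho_0 - \rho_1, \qquad \rho_0 + \rho_1 \;=\; \rho.
\end{eqnarray*}

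The one substantive step is the operator inequality $-\rho \leq \tau(N) \leq \rho$, which does \emph{not} follow from positivity of $N$ (which need not hold) but rather from $\|N\| \leq 1$ via the positivity of $\mathbb{I} \pm N$. Writing
\begin{eqnarray*}
\tau(\mathbb{I} \pm N) \;=\; \Tr_Q\bigl[\bigl(\sqrt{\mathbb{I} \pm N} \otimes \mathbb{I}_E\bigr)\sigma\bigl(\sqrt{\mathbb{I} \pm N} \otimes \mathbb{I}_E\bigr)\bigr],
\end{eqnarray*}
the right-hand side is a partial trace of a positive operator, hence positive, giving $\pm\tau(N) \leq \tau(\mathbb{I}) = \rho$.

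Substituting $\tau(T^{(1)}) = \rho_0 - \rho_1$ into the expression for $\rho_P$ and replacing $\tau(N)$ by its upper bound $\rho = \rho_0 + \rho_1$ gives $\rho_P \leq (1 - h/2)\rho_0 + (1 - v - h/2)\rho_1$, which simplifies to $(1-h/2)\rho - v\rho_1$; replacing $\tau(N)$ by the lower bound $-\rho$ gives $\rho_P \geq (v + h/2)\rho_0 + (h/2)\rho_1 = (h/2)\rho + v\rho_0$. The companion pair of inequalities for $\rho_F$ follows immediately from $\rho_F = \rho - \rho_P$, which swaps the roles of upper and lower bounds and interchanges $\rho_0$ with $\rho_1$. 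The only real subtlety is the positivity argument justifying $\pm\tau(N) \leq \rho$ without assuming $N \geq 0$; everything else is algebraic bookkeeping.
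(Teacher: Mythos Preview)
Your proof is correct and follows essentially the same approach as the paper: both write $\rho_P$ as an affine combination involving the $N$-measurement contribution and then bound that contribution using $\|N\|\le 1$. The paper packages this by introducing the intermediate state $\rho' = \tfrac{1}{2}\rho + \tfrac{1}{2}\tau(N)$ (the outcome-$0$ state under $N$) and asserting $0\le \rho'\le \rho$, whereas you work directly with $\tau(N)$ and give the positivity argument explicitly; these are the same content in slightly different notation.
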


\begin{proof}
Let $N$ be the measurement operator from the decomposition
of $M^{(1)}$ given in Definition~\ref{partiallytrusteddef}.  Let
$\rho'$ be the subnormalized
operator on $E$ which denotes the state that would be produced
if $N$ were applied to $Q$ and the outcome were $0$.  
Clearly, $0 \leq \rho' \leq \rho$.  
From the decomposition (\ref{partiallytrustdecomp}), $\rho_P$
is a convex combination of the operators $\rho_0$, $\rho'$
and $(\rho/2)$:
\begin{eqnarray}
\rho_P & = & v \rho_0 + (1 - v - h ) \rho' + h(\rho/2).
\end{eqnarray}
Since $\rho' \leq \rho$, we have
\begin{eqnarray}
\rho_P & \leq & v \rho_0 + (1 - v - h ) \rho + h(\rho/2) \\
& = & v \rho_0 + (1 - v - h/2 ) \rho \\
& = & (1 - h/2) \rho + v(\rho_0 - \rho) \\
& = & (1 - h/2) \rho - v \rho_1.
\end{eqnarray}
The other inequalities follow similarly.
\end{proof}

\subsection{Simulation}

To any binary XOR game $G$, we have associated three quantities:
$\mathfrak{q}_G, \mathbf{w}_G$, and $\mathbf{f}_G$. These
are respectively the optimal quantum score, optimal quantum winning
probability, and least quantum failure probability for $G$.
The quantities are related by $\mathbf{w}_G =
(1 + \mathfrak{q}_G)/2$ and $\mathbf{f}_G = 1 - \mathbf{w}_G$.

\begin{theorem}
\label{partialtrustsimthm}
For any $n$-player strong self-test $G$ which is positively aligned, there exists
$\delta_G > 0$ such that the following holds.  For any 
any $n$-part binary quantum device $D$, there exists
a partially trusted device $D'$ with parameters $\mathfrak{q}_G,
\delta_G$ such that Protocol A' (with arguments $\delta_G, 2 \mathbf{f}_G, N , q, \eta, D'$) simulates Protocol R (with arguments $N, \eta, q, G, D$).
\end{theorem}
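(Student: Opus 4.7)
The strategy is to bundle the $n$ components of $D$ into a single quantum system carried by $D'$, exhibit the Protocol R game-round measurement as the effective binary measurement of $D'$ on input $1$, and then apply Theorem~\ref{verygeneraldecompthm} to certify that this effective operator has the partially trusted form required by (\ref{partiallytrustdecomp}).

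First I would invoke Proposition~\ref{canonicalsimprop} to replace $D$ by a canonical-form device, so that under the isomorphism $Q_j \cong \mathbb{C}^2 \otimes W_j$ each $M_j^{(0)}$ equals the bit flip $X$ on the first qubit tensored with identity. Let $D'$ be the single-part device whose quantum system is $Q := Q_1 \otimes \cdots \otimes Q_n$, whose initial state is $\Phi$, and whose inter-round unitaries $U_T$ are those of $D$. On input $0$, $D'$ internally feeds $D$ the string $00\ldots 0$ and reports the first component's output, giving effective measurement operator $T^{(0)} := M_1^{(0)} \otimes \mathbb{I}$ with $(T^{(0)})^2 = \mathbb{I}$. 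On input $1$, $D'$ privately samples $\mathbf{i}$ with distribution $\{p_{\mathbf{i}}\}$, feeds $\mathbf{i}$ to $D$, and reports pass/fail under the rules of $G$; a short expansion of the POVM elements over $\mathbf{o}$ and $\mathbf{i}$ identifies the pass-minus-fail operator with the scoring operator $\mathbf{M}$ from (\ref{scoreop}). Since $D'$ literally executes one Protocol R round at each step on its copy of $Q$, the joint state of outputs together with any purifying environment $E$ is identical under Protocol A' with $D'$ and under Protocol R with $D$, and the abort threshold $(h/2 + \eta)qN$ of Protocol A' coincides with Protocol R's $(1 - \mathbf{w}_G + \eta)qN$ once we set $h = 2\mathbf{f}_G$ below.

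Next I would apply Theorem~\ref{verygeneraldecompthm} to $\mathbf{M}$ to obtain $\mathbf{M} = \delta_G \mathbf{M}' + (\mathfrak{q}_G - \delta_G) \mathbf{M}''$ with $\|\mathbf{M}''\| \leq 1$, where $\mathbf{M}' = \sum_{k=0}^{2^{n-1}-1} (g_{\alpha_k, k} g_{\alpha_k, k}^* - g_{-\alpha_k, k} g_{-\alpha_k, k}^*) \otimes \mathbb{I}$ and only $\alpha_0$ is forced by the theorem; the remaining $\alpha_1, \ldots, \alpha_{2^{n-1}-1}$ may be chosen freely on the unit circle. The crucial step is to use these degrees of freedom to force $\mathbf{M}'$ to perfectly anti-commute with $T^{(0)}$. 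A direct computation reduces the $k$-th summand to $\overline{\alpha_k} |\mathbf{b}(k)\rangle\langle\overline{\mathbf{b}(k)}| + \alpha_k |\overline{\mathbf{b}(k)}\rangle\langle\mathbf{b}(k)|$, so that conjugation by $T^{(0)} = X^{(1)} \otimes \mathbb{I}$ permutes the complementary-pair index set $\{0, \ldots, 2^{n-1}-1\}$ via the involution $k \mapsto k''$ that toggles bits $2, \ldots, n$ of $\mathbf{b}(k)$. Since $k = k''$ would require some bit to equal its own complement, the involution is fixed-point free; I pair the indices up and set $\alpha_{k''} := -\overline{\alpha_k}$ for every pair, which forces $T^{(0)} \mathbf{M}' T^{(0)} = -\mathbf{M}'$, and a two-dimensional block computation confirms $(\mathbf{M}')^2 = \mathbb{I}$. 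The forced phase $\alpha_0$ does not clash with anything because its partner $k''_0$ is distinct from $0$, so $\alpha_{k''_0}$ is among the free phases and may simply be set to $-\overline{\alpha_0}$.

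Setting $T^{(1)} := \mathbf{M}'$ and $N := \mathbf{M}''$, the decomposition reads $\mathbf{M} = \delta_G T^{(1)} + (\mathfrak{q}_G - \delta_G) N = v T^{(1)} + (1 - v - h) N$ with $v = \delta_G$ and $h = 1 - \mathfrak{q}_G = 2\mathbf{f}_G$, which is exactly the form (\ref{partiallytrustdecomp}); hence $D'$ is a partially trusted device with parameters matching the call to Protocol A' in the statement, and the simulation is complete. The main obstacle is the phase-balancing step: the $2^{n-1}-1$ free phases furnished by Theorem~\ref{verygeneraldecompthm} translate into a perfect anti-commuting partner of $T^{(0)}$ only because the bit-flip-induced involution on index pairs has no fixed points, a combinatorial observation that must be made explicit. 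The remaining pieces --- the effective-POVM calculation that identifies the game-round operator with $\mathbf{M}$, the abort-threshold match, and the fact that private randomness sampled inside $D'$ stays uncorrelated with $E$ and so preserves the joint (output, $E$) state --- are routine verifications.
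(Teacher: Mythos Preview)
Your proposal is correct and follows essentially the same route as the paper's proof: reduce to canonical form, identify the game-round pass/fail operator with the scoring operator $\mathbf{M}$, apply Theorem~\ref{verygeneraldecompthm}, and then use the free phases $\alpha_1,\ldots,\alpha_{2^{n-1}-1}$ to force $\mathbf{M}'$ to anti-commute with $T^{(0)}=M_1^{(0)}\otimes\mathbb{I}$. Your phase-balancing via the fixed-point-free involution $k\mapsto 2^{n-1}-1-k$ with the relation $\alpha_{k''}=-\overline{\alpha_k}$ is in fact slightly more careful than the paper's explicit choice (set half the phases to $\alpha$ and half to $-\alpha$), which as literally written seems to give anti-commutation only when $\alpha$ is real; your conjugate-pairing handles the general unit-modulus $\alpha$ delivered by Theorem~\ref{verygeneraldecompthm}.
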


\begin{proof}
Choose $\delta_G$ according to Theorem~\ref{verygeneraldecompthm}.

Consider the behavior of the device $D$ in the first round.  We may assume
that the measurements performed by $D_1 , \ldots, D_n$ are in 
canonical form.  Write the underlying space as $(\mathbb{C}^2 \otimes W_1 )
\otimes \cdots \otimes (\mathbb{C}^2 \otimes W_n)$.  If $g = 0$,
the measurement performed by $D_1$ is given by the operator
\begin{eqnarray}
\label{doneop}
\left[ \begin{array}{cccc|cccc}
&&&&1\\
&&&&& 1\\
&&&&& & \ddots \\
&&&&& & & 1 \\
\hline
1&&&& \\
&1&&&& \\
&&\ddots &&&\\
&&&1
\end{array} \right] \otimes \mathbb{I}_{W_1 \otimes \cdots \otimes W_n }
\end{eqnarray}
(where the matrix on the left is an operator on $\left( \mathbb{C}^2 \right)^{\otimes n }$,
with the basis taken in lexiographic order as usual).  

If $g = 1$ the measurement performed by $D$ is given by the scoring operator
$\mathbf{M}$.
Theorem~\ref{verygeneraldecompthm} guarantees that for some unit-length
complex number $\alpha$, and for any choices of unit-length complex
numbers $\alpha_1, \ldots, \alpha_{2^{n-1}-1}$, there is a decomposition
for $\mathbf{M}$ in the form $\mathbf{M} = \delta_G \mathbf{M}' +
(\mathfrak{q}_G - \delta_G ) \mathbf{M}''$ with
\begin{eqnarray*}
\mathbf{M}' & = & \left[ 
\begin{array}{cccc|cccc}
&&&&&&& \alpha \\
&&&&&& \alpha_1 \\
&&&&& \iddots \\
&&&& \alpha_{2^{n-1}-1}\\
\hline
&&& \overline{\alpha}_{2^{n-1}-1} &&&\\
&& \iddots &&&\\
& \overline{\alpha_1}&&& \\
\overline{\alpha}&&&&&
\end{array} \right] \otimes \mathbb{I}_{W_1 \otimes \cdots \otimes W_n}
\end{eqnarray*}
and $\left\| \mathbf{M}'' \right\| \leq 1$.
To simulate the behavior of $D$ with a partially trusted
device, we need only choose $\alpha_1, \ldots, \alpha_{2^{n-1}-1}$
so that $\mathbf{M}'$ is perfectly anti-commutative
with the operator~\ref{doneop}.    This can be done,
for example, by setting $\alpha_1 , \alpha_2, \ldots, \alpha_{2^{n-2}-1}$
to be equal to $\alpha$, and $\alpha_{2^{n-2}} , \ldots , \alpha_{2^{n-1}+1}$
to be equal to $-\alpha$.  Thus the behavior of the device $D$
in the first round of Protocol R can be simulated by a partially trusted device
with parameters $(\delta_G, 1 - \mathfrak{q}_G ) =
(\delta_G , 2 \mathbf{f}_G)$.
Similar reasoning shows the desired simulation result across all rounds.
\end{proof}

The following corollary is easy to prove.
\begin{corollary}
\label{partialtrustsimcor}
Theorem~\ref{partialtrustsimthm} holds true without the assumption
that $G$ is positively aligned. $\qed$
\end{corollary}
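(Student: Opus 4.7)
The plan is to reduce the general case to the positively aligned case via the bit-flipping transformation described in Subsection~\ref{gamedefsubsec}: for any strong self-test $G = (\{p_\mathbf{i}\}, \{\eta_\mathbf{i}\})$ there exists a bit vector $\mathbf{b} = (b_1, \ldots, b_n) \in \{0,1\}^n$ such that the modified game $G' = (\{p_\mathbf{i}\}, \{\eta_{(\mathbf{i}+\mathbf{b}) \textbf{ mod } 2}\})$ is a positively aligned strong self-test with $\mathfrak{q}_{G'} = \mathfrak{q}_G$. Intuitively, $\mathbf{b}$ indicates which components should relabel their inputs $0 \leftrightarrow 1$ in order to move a global maximum of $Z_G$ into the positive region $\{\theta_i > 0\}_{i \geq 1}$.

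First I would define, given any $n$-part binary quantum device $D$ with measurement operators $\{M_{T,j}^{(i)}\}$, the \emph{bit-flipped} device $D^{\#}$ whose $j$-th component uses measurements $M_{T,j}^{\#(i)} := M_{T,j}^{(i \oplus b_j)}$ (with the initial state and unitaries $U_T$ unchanged, up to appropriate transcript relabeling). By construction, when $D^{\#}$ plays $G'$ on input $\mathbf{i}$, it applies the same measurement $\bigotimes_j M_j^{(i_j \oplus b_j)}$ that $D$ applies on input $\mathbf{i} \oplus \mathbf{b}$ in $G$; since $p'_\mathbf{i} = p_\mathbf{i}$ and $\eta'_\mathbf{i} = \eta_{\mathbf{i}\oplus\mathbf{b}}$, the two bit-flips cancel at the level of scoring, so the joint distribution of (pass/fail, external environment) over the game rounds is identical in the two scenarios. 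For the generation rounds I would have the controller feed the classical input $\mathbf{b}$ (rather than $\mathbf{0}$) to $D^{\#}$, so that $D^{\#}_1$ applies $M_1^{(0)}$ exactly as $D_1$ would on input $\mathbf{0}$ in the original protocol. This relabeling of classical inputs is invisible to any purifying environment, so the simulation of Protocol~R with $(D, G)$ by the modified protocol with $(D^{\#}, G')$ is faithful at the level of joint quantum states.

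Applying Theorem~\ref{partialtrustsimthm} to the positively aligned self-test $G'$ and the device $D^{\#}$ now produces a partially trusted device $D'$ with parameters $(\delta_{G'}, 2\mathbf{f}_{G'}) = (\delta_G, 2\mathbf{f}_G)$ simulating $(D^{\#}, G')$, and hence by the previous paragraph also $(D, G)$, in Protocol~R; setting $\delta_G := \delta_{G'}$ gives the corollary.

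The one point to check carefully is that the generation-round measurement $M^{\#(0)}_1 \otimes \mathbb{I}$ still has the reverse-diagonal canonical form~(\ref{doneop}) that drives the proof of Theorem~\ref{partialtrustsimthm}; this is automatic, since relabeling the input bits of a canonical-form device merely swaps the roles of $M^{(0)}$ and $M^{(1)}$ without disturbing the canonical structure. An alternative, arguably cleaner route would be to extend Corollary~\ref{displacedcor} directly: the polar-curve argument of Proposition~\ref{polarcurveprop} only requires a maximum of $Z_G$ with all $\theta_i \not\equiv 0 \pmod{\pi}$ (which is part of the self-test criterion of Proposition~\ref{selftestprop}), so after choosing semicircles $H_j$ consistent with the signs of such a maximum, the same tangent-circle argument yields a nonzero $\gamma$, and Theorem~\ref{verygeneraldecompthm} generalizes verbatim. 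I expect the bit-flipping reduction to be the shorter of the two paths and I would write it up as the proof.
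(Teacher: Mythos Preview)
Your main bit-flipping argument has a genuine gap. You use the paper's transformation $G' = (\{p_\mathbf{i}\}, \{\eta_{\mathbf{i}+\mathbf{b}}\})$, which keeps the input distribution unchanged, and then bit-flip the device to $D^\#$ with $M_j^{\#(i)} = M_j^{(i\oplus b_j)}$. You claim that ``the two bit-flips cancel at the level of scoring.'' They do cancel in the sense that on input $\mathbf{i}$ the pair (measurement applied, scoring sign) becomes $(M^{(\mathbf{i}+\mathbf{b})}, \eta_{\mathbf{i}+\mathbf{b}})$ --- but this pair is now drawn with probability $p_\mathbf{i}$, whereas in the original protocol the pair $(M^{(\mathbf{j})}, \eta_\mathbf{j})$ is drawn with probability $p_\mathbf{j}$. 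Unless $p_\mathbf{i} = p_{\mathbf{i}+\mathbf{b}}$ for all $\mathbf{i}$, the game-round distributions over (pass/fail, environment state) differ, and the simulation fails. (For a concrete failure, take any game whose support is not closed under the shift by $\mathbf{b}$; the GHZ input set $\{000,011,101,110\}$ is such an example for, say, $\mathbf{b}=(1,0,0)$.) The same distributional mismatch prevents you from concluding that $G'$ is even a strong self-test with $\mathfrak{q}_{G'}=\mathfrak{q}_G$ in general.

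Your alternative route is the right one, and is almost certainly what the paper intends by ``easy.'' The only thing not quite ``verbatim'' is that the canonical form of Definition in \S\ref{canonicalsubsec} pins $\Im(\zeta_\ell)\ge 0$, whereas the extended Corollary~\ref{displacedcor} would bound $P_G$ on the product $\prod_j H_j$ of sign-adjusted semicircles. So you must also adapt the canonical form component-wise: for each $j$ with $\alpha_j<0$, require $\Im(\zeta_\ell)\le 0$ in that component (equivalently, conjugate that component by $\sigma_x$, which fixes $M_j^{(0)}$ and sends $\zeta_\ell\mapsto\bar\zeta_\ell$). With that one-line adjustment, Proposition~\ref{canonicalsimprop}, Proposition~\ref{polarcurveprop}, Corollary~\ref{displacedcor}, and Theorem~\ref{verygeneraldecompthm} all carry over with $H^n$ replaced by $\prod_j H_j$, and Theorem~\ref{partialtrustsimthm} follows without the positive-alignment hypothesis. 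Write this up as your proof and drop the bit-flipping argument.
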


Essentially, the above corollary implies that any security result
for Protocol A' can be converted immediately into an
identical security result
for Protocol R.  This will be the basis for our eventual
full proof of randomness expansion.

\section{The Proof of Security for Partially Trusted Devices}

\label{partialtrustapp}

In this section we provide the proof of security for Protocol A'
(see Figure~\ref{protaprimefig}).
Our approach, broadly stated, is as follows: we show
the existence of a function $T ( v, h, \eta, q, \kappa)$
which provides a lower bound on the linear rate of entropy
of the protocol.  (The variables $v, h, \eta, q$ are
from the protocol, and $\kappa$ is a positive constant
that can be chosen to be arbitrarily small.)  The main point of our proofs is that,
although $T$ depends on several variables, it does
\textit{not} depend on the particular device used in Protocol A'.
Thus, we have a uniform security result.

The definition of $T$ is multi-layered and is developed over the course
of the section.  For the reader's convenience, we have collected all the
definitions of the functions that we use, including $T$, in appendix subsection~\ref{varfuncapp}.  The full expression for $T$ is quite complicated,
but for our purposes it suffices to calculate the limit
$\lim_{(q, \kappa) \to (0, 0)} T ( v, h, \eta, q, \kappa )$, since
this will tell us what rate Protocol $A'$ approaches when $q$ is small.
This limit will be shown to be equal to
$\pi ( \eta / v )$, where $\pi$ denotes the function
from Theorem~\ref{uncertaintythm}.

Our proof involves several parameters.  For convenience,
we include a table here which assigns a name
to each parameter (Figure~\ref{parametersfig}.)

\begin{figure}
\begin{center}
\begin{tabular}{| c c l | l | }
\hline
$N$ & $\in$ & $\mathbb{N}$ & number of rounds \\
\hline
$q$ & $\in$ & $( 0 , 1)$ & test probability \\
\hline
$t$ & $\in$ & $[ 0 , 1]$ & failure parameter \\
\hline
$v$ &$\in$ & $(0, 1]$
&  trust coefficient \\
\hline
$h$ & $\in$ & $[ 0, 1 - v]$ &
coin flip coefficient \\
\hline 
$\eta$ & $\in$ & $( 0, v/2)$ & error tolerance \\
\hline
$\kappa$ & $\in $ & $(0 , \infty)$ & failure penalty \\
\hline
$r$ & $\in$ & $( 0, 1/(q \kappa ) ]$ & multiplier for R{\'e}nyi coefficient \\
\hline
$\epsilon$ & $\in$ & $(0, \sqrt{2} ]$ & error parameter for smooth min-entropy \\
\hline
\end{tabular}
\caption{Variables used in section~\ref{partialtrustapp}.}
\label{parametersfig}
\end{center}
\end{figure}

To avoid unnecessary repetition, we will use the following conventions
in this section.

\begin{itemize}
\item Unless otherwise stated, we will assume that the variables
from Figure~\ref{parametersfig} are always restricted to the domains
given.  (The reader can assume that all unquantified statements
are prefaced by, ``for all $q \in (0, 1 ]$, all
$\epsilon \in (0, \sqrt{2})$,'' etc.)
If we say ``$F ( q , \kappa )$
is a real-valued function,'' we mean that it is a real
valued function on $(0, 1) \times (0, \infty )$.
If we say ``let $x = \kappa q$,'' we mean
that $x$ is a real valued function on $(0, 1 ) \times
(0, \infty )$ defined by $x ( \kappa , q ) = \kappa q$.
If the domain of one parameter
of a function depends on another variable
(as can occur, e.g., for the variable $h$)
we always include the other
variable as a parameter of the function.

\item When we discuss a single iteration
of Protocol A', will use notation from subsection~\ref{entpartialsubsec}:
If $D$ is a partially trusted measurement device, and $E$
is a purifying system for $D$ with initial state $\rho = \rho_E$,
then $\rho = \rho_H + \rho_T$ and $\rho = \rho_P + \rho_F$ 
denote the decompositions that occur for a single use of
the device on input $0$ and $1$, respectively.  We denote
by $\rho_+, \rho_-, \rho_0, \rho_1$ the respective states that would occur
if the corresponding fully trusted measurements were used instead.
(Note that $\rho_H = \rho_+$ and $\rho_T = \rho_-$.)  Let
$\overline{\rho}$ denote the operator on ${E} \oplus
{E} \oplus {E} \oplus {E}$ given
by
\begin{eqnarray}
\overline{\rho} = (1-q) \rho_H \oplus (1-q) \rho_T \oplus q \rho_P \oplus q \rho_F.
\end{eqnarray}
This operator represents the state of $E$ taken together with the input
bit and output bit from the first iteration of Protocol A'.

\item When we discuss multiple iterations of Protocol A',
we will use the following notation: let $G$ and $O$
denote classical registers which consist
of the bit sequences $\mathbf{g} = (g_1, \ldots, g_N)$ and $\mathbf{o} = (o_1, \ldots, o_n)$,
respectvely.  We denote basis states for the joint
system $GO$ by $\left| \mathbf{g} \mathbf{o} \right>$.
We denote the joint state of the system $EGO$
at the conclusion of Protocol A' by $\Gamma_{EGO}$.

\item If $D$ is a partially trusted measurement device, $E$
is a purifying system, and $\alpha > 0$, then we refer to
the quantity
\begin{eqnarray}
\frac{\Tr ( \rho_1^\alpha ) }{\Tr ( \rho^\alpha ) } \in [0, 1]
\end{eqnarray}
as the \textbf{$\alpha$-failure parameter} of $D$.  (Note that
we used the operator $\rho_1$ in the above expression, \textit{not}
the operator $\rho_F$.  This parameter measures ``honest'' failures
only.)

\item Let $\Pi ( x, y)$ and $\pi ( y)$ denote the functions from
Theorem~\ref{uncertaintythm}.
\end{itemize}

\subsection{Proof Idea}

Let $D$ be a partially trusted measurement device with parameters $v,h$,
and let $E$ be a purifying system with initial state $\rho$.  Let $\overline{\rho}$
be the operator on ${E} \oplus {E} \oplus {E}
\oplus {E}$ which represents the joint state of $E$ together with the
input and output of a single iteration of Protocol A':
\begin{eqnarray}
\overline{\rho} = (1-q) \rho_H \oplus (1-q) \rho_T \oplus q \rho_P \oplus q \rho_F.
\end{eqnarray}
We wish to show that the state $\overline{\rho}$ is more random than
the original state $\rho$.  Therefore, we wish to show that the ratio
\begin{eqnarray}
\label{thedivratio}
\frac{d_{1 + \gamma}( \overline{\rho} \| \overline{\sigma} )}{d_{1 + \gamma} ( \rho \| \sigma )},
\end{eqnarray}
for some appropriate $\gamma, \sigma, \overline{\sigma}$,
is significantly smaller than $1$.  For simplicity, we will for
the time being take $\sigma = \mathbb{I}$ for the initial bounding operator.
(Later in this section we will generalize this choice.)

A natural choice of bounding operator for $\overline{\rho}$ would be
\begin{eqnarray}
\label{initboundingop}
(1-q) \mathbb{I} \oplus (1-q) \mathbb{I}
\oplus q \mathbb{I} \oplus q \mathbb{I}.
\end{eqnarray}
Computing $d_{1 + \gamma} ( \overline{\rho} \| \cdot )$ with this bounding operator
would yield
\begin{eqnarray}
\label{primitivebounding}
\left\{ (1-q) \Tr [ \rho_+^{1 + \gamma} ] + (1-q) \Tr [ \rho_-^{1 + \gamma} ]
+ q \Tr [ \rho_P^{1 + \gamma} ]  + q \Tr [ \rho_F^{1 + \gamma} ] \right\}^{1/\gamma}
\end{eqnarray}
Computing this quantity 
would have the effect, roughly speaking, of measuring the randomness
of the output bit of Protocol A' conditioned on $E$ and on the input bit $g$.
However this is not adequate for our purposes, since it treats ``passing'' rounds
the same as ``failing'' rounds, and does not take into account that the
device is only allowed a limited number of failures.  (And indeed, this measurement
of randomness does not work: if $D$ performs anticommuting measurements
on a half of a maximally entangled qubit pair, the divergence quantity
$d_{1 + \gamma} ( \overline{\rho} \| \cdot )$ with bounding operator (\ref{initboundingop})
is the same as $d_{1 + \gamma} ( \rho \| \mathbb{I} )$.)

We will use a slightly different expression to measure the output
of Protocol A'.  We introduce a single cofficient $2^{-\kappa}$ (with $\kappa > 0$)
into the fourth term of the expression:
\begin{eqnarray}
\label{oneshotexp}
\left\{ (1-q) \Tr [ \rho_+^{1 + \gamma} ] + (1-q) \Tr [ \rho_-^{1 + \gamma} ]
+ q \Tr [ \rho_P^{1 + \gamma} ]  + q 2^{-\kappa} \Tr [ \rho_F^{1 + \gamma} ] \right\}^{1/\gamma}
\end{eqnarray}
The reason for the introduction of the coefficient
$2^{-\kappa}$ is this: in effect, if a game round occurs and the device
fails, we lower our expectation for the amount of randomness produced.
The quantity (\ref{oneshotexp}) is equal to $d_{1 + \gamma} ( \overline{\rho} \| \overline{\sigma} )$
where
\begin{eqnarray}
\overline{\sigma} & = & (1-q) \mathbb{I} \oplus (1-q) \mathbb{I} \oplus
q \mathbb{I} \oplus q 2^{\kappa/\gamma} \mathbb{I}.
\end{eqnarray}

Having chosen the bounding operator $\overline{\sigma}$, we need only
to choose the coefficient $\gamma \in (0, 1]$.  We will take $\gamma$ to be
of the form $\gamma = r q \kappa$, where
$r \in (0, 1/(q \kappa ) ]$.\footnote{The
reason for this choice of interval for $r$ is that we need $\gamma \leq 1$
for the application of results from section~\ref{renyisection}.}  (Expressing
$\gamma$ this way enables clean calculations in our proofs.)

The proof proceeds by showing an upper bound on (\ref{thedivratio}),
then applying induction to get a similar upper bound for $N$ uses
of the device, and then applying the relationship between
Renyi divergence and smooth min-entropy to get a lower
bound on the number of extractable bits produced by
Protocol $A'$.

\subsection{One-Shot Results}

We begin by proving a one-shot security result under the assumption
that some limited information about the device is available.

\begin{proposition}
\label{knowndeviceprop}
There is a continuous real-valued function $\Lambda ( v, h, q, \kappa, r , t)$ such that
the following conditions hold.
\begin{enumerate}
\item Let $D$ be a partially trusted measurement device
with parameters $(v,h)$, and let $E$ be a purifying system for $D$.
Let $\gamma = r q \kappa$, and let
\begin{eqnarray}
\overline{\sigma} & = & ( 1 - q ) \mathbb{I} \oplus (1 -q ) \mathbb{I}
\oplus q \mathbb{I} \oplus q 2^{\kappa / \gamma} \mathbb{I}.
\end{eqnarray}
Then,
\begin{eqnarray}
\label{oneshotdesineq}
d_{1 + \gamma} ( \overline{\rho}  \| \overline{\sigma} ) 
& \leq & 2^{- \Lambda ( v, h, q, \kappa, r , t )}  \cdot d_{1 + \gamma} 
( \rho \| \mathbb{I} ),
\end{eqnarray}
where $t = \Tr ( \rho_1^{1+\gamma} ) / \Tr ( \rho^{1 + \gamma } )$ denotes
the $(1 + \gamma)$-failure parameter of $D$.

\item The following limit condition is satisfied: for any 
$t_0 \in [0, 1]$,
\begin{eqnarray}
\label{lambdalimitcond}
\lim_{\substack{(q, \kappa, t ) \to  (0, 0, t_0)}} \Lambda ( v, h, q, \kappa, r, t )
& = & \pi ( t_0 ) + \frac{h /2 + v t_0}{r},
\end{eqnarray}
where $\pi$ is the function from Theorem~\ref{uncertaintythm}.
\end{enumerate}
\end{proposition}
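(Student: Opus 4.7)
The plan is first to reduce inequality~(\ref{oneshotdesineq}) to a scalar inequality by a direct calculation. Since $\overline{\sigma}$ is a direct sum of the scalar blocks $(1-q)\mathbb{I}, (1-q)\mathbb{I}, q\mathbb{I}, q\cdot 2^{\kappa/\gamma}\mathbb{I}$, and since $\rho_H = \rho_+$, $\rho_T = \rho_-$ (the input-$0$ measurement of a partially trusted device is the trusted operator $T^{(0)}$), a short computation from Definition~\ref{renyimaindefinition} gives
\[
d_{1+\gamma}(\overline{\rho}\|\overline{\sigma})^\gamma = (1-q)\bigl[\Tr(\rho_+^{1+\gamma}) + \Tr(\rho_-^{1+\gamma})\bigr] + q\Tr(\rho_P^{1+\gamma}) + q\cdot 2^{-\kappa}\Tr(\rho_F^{1+\gamma}).
\]
Since $d_{1+\gamma}(\rho\|\mathbb{I})^\gamma = \Tr(\rho^{1+\gamma})$, it suffices to exhibit a continuous $\mu = \mu(v,h,q,\kappa,r,t)$ bounding the right-hand side above by $\mu\,\Tr(\rho^{1+\gamma})$; then (\ref{oneshotdesineq}) holds with $\Lambda := -\gamma^{-1}\log\mu$.

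Three inequalities will produce such a $\mu$. First, since $T^{(0)}$ and $T^{(1)}$ are perfectly anti-commutative, Theorem~\ref{uncertaintythm} gives the generation-round bound
\[
\Tr(\rho_+^{1+\gamma}) + \Tr(\rho_-^{1+\gamma}) \leq 2^{-\gamma \Pi(\gamma, t)}\,\Tr(\rho^{1+\gamma}).
\]
Second, applying Proposition~\ref{powmatrixprop}(b) to the splitting $\rho = \rho_P + \rho_F$ yields $\Tr(\rho_P^{1+\gamma}) + \Tr(\rho_F^{1+\gamma}) \leq \Tr(\rho^{1+\gamma})$. Third, Proposition~\ref{entpartialprop} gives $\rho_F \geq v\rho_1 + (h/2)\rho$, and applying Proposition~\ref{powmatrixprop}(b) in sequence to peel off $v\rho_1$ and then $(h/2)\rho$ from this lower bound produces
\[
\Tr(\rho_F^{1+\gamma}) \geq v^{1+\gamma}\Tr(\rho_1^{1+\gamma}) + (h/2)^{1+\gamma}\Tr(\rho^{1+\gamma}) = \bigl[v^{1+\gamma} t + (h/2)^{1+\gamma}\bigr]\,\Tr(\rho^{1+\gamma}).
\]

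Rewriting the game-round contribution as $q\bigl[\Tr(\rho_P^{1+\gamma}) + \Tr(\rho_F^{1+\gamma})\bigr] - q(1-2^{-\kappa})\Tr(\rho_F^{1+\gamma})$ and inserting the three ingredients gives the desired scalar bound with
\[
\mu(v,h,q,\kappa,r,t) := (1-q)\cdot 2^{-\gamma\Pi(\gamma,t)} + q - q(1-2^{-\kappa})\bigl[v^{1+\gamma} t + (h/2)^{1+\gamma}\bigr],
\]
where $\gamma = rq\kappa$; continuity of $\Lambda = -\gamma^{-1}\log\mu$ on the prescribed domain is immediate from the continuity of $\Pi$. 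For the limit (\ref{lambdalimitcond}), I would Taylor-expand each factor as $(q,\kappa) \to (0,0)$: writing $1-2^{-\kappa} = (\ln 2)\kappa + o(\kappa)$ and $2^{-\gamma\Pi(\gamma,t)} = 1 - (\ln 2)\gamma\Pi(\gamma,t) + o(\gamma)$, using $q\kappa = \gamma/r$, and passing $\Pi(\gamma,t) \to \pi(t_0)$, one finds $\mu = 1 - (\ln 2)\gamma\bigl[\pi(t_0) + (h/2 + v t_0)/r\bigr] + o(\gamma)$, whence Proposition~\ref{etothecprop} delivers the claimed limit. The main subtlety is using the \emph{joint} lower bound $\rho_F \geq v\rho_1 + (h/2)\rho$: retaining only $\rho_F \geq v\rho_1$ would forfeit the $h/2$ term of the limit, while retaining only $\rho_F \geq (h/2)\rho$ would forfeit the $v t_0$ term, so both pieces of Proposition~\ref{entpartialprop} must be exploited simultaneously.
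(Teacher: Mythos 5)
Your proof is correct and follows essentially the same route as the paper's: reduce to a scalar inequality using the block structure of $\overline{\sigma}$, invoke the uncertainty principle (Theorem~\ref{uncertaintythm}) for the generation terms, combine the superadditivity of $\Tr(\cdot^{1+\gamma})$ (Proposition~\ref{powmatrixprop}(b)) with Proposition~\ref{entpartialprop} for the game terms, and finish the limit via Proposition~\ref{etothecprop}. The only cosmetic difference is that you peel off $q(1-2^{-\kappa})\Tr(\rho_F^{1+\gamma})$ and bound $\rho_F$ from below, while the paper writes the game contribution as $2^{-\kappa}\Tr(\rho^{1+\gamma}) + (1-2^{-\kappa})\Tr(\rho_P^{1+\gamma})$ and bounds $\rho_P$ from above — equivalent via $\rho_P+\rho_F=\rho$ — and both land on the identical expression for $\mu$.
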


\begin{proof}
We have
\begin{eqnarray}
\nonumber
& d_{1 + \gamma} ( \overline{\rho} \| \overline{\sigma} ) = &  \\ 
\label{initformulafordiv}
& \left\{ (1-q) \Tr [ \rho_+^{1 + \gamma} ] + (1-q) \Tr [ \rho_-^{1 + \gamma} ]
+ q \Tr [ \rho_P^{1 + \gamma} ]  + q 2^{-\kappa} \Tr [ \rho_F^{1 + \gamma} ] \right\}^{1/\gamma}
\end{eqnarray}
We will compute a bound on this quantity
by grouping the first and second summands together,
and then by grouping the third and fourth summands together.  Note
that by Theorem~\ref{uncertaintythm},
we have
\begin{eqnarray}
\label{summands12}
\Tr [ \rho_+^{1 + \gamma} ] +  \Tr [ \rho_-^{1 + \gamma} ] & \leq &
2^{- \gamma \Pi ( \gamma , t ) } \Tr [ \rho^{1+\gamma} ]
\end{eqnarray}

Now consider the sum $\Tr [ \rho_P^{1+ \gamma} ] + 2^{-\kappa} \Tr [ \rho_F^{1+\gamma} ]$.
By superaddivity (see Proposition~\ref{powmatrixprop}),
\begin{eqnarray}
\Tr [ \rho_P^{1+ \gamma} ] + 2^{-\kappa} \Tr [ \rho_F^{1+\gamma} ] & = &
\Tr \left[ 2^{-\kappa} (\rho_P^{1+\gamma} + \rho_F^{1+\gamma}) + (1 - 2^{-\kappa}) \rho_P^{1+\gamma} \right] \\
& \leq & \Tr \left[ 2^{-\kappa} \rho^{1+\gamma} + (1 - 2^{-\kappa} )\rho_P^{1+\gamma} \right].
\end{eqnarray}
By Proposition~\ref{entpartialprop},
\begin{eqnarray}
\Tr [ \rho_P^{1+ \gamma} ] + 2^{-\kappa} \Tr [ \rho_F^{1+\gamma} ] 
& \leq & \Tr \left\{ 2^{-\kappa} \rho^{1+\gamma} + (1 - 2^{-\kappa} ) [ \rho - (h/2) \rho - v \rho_1 ]^{1 + \gamma} \right\}.
\end{eqnarray}
Applying the rule $\Tr [(X - Y )^{1 + \gamma} ]
\leq \Tr [ X^{1+\gamma}] - \Tr [Y^{1+\gamma}]$, followed
by the fact that $\Tr [ \rho_1^{1+\gamma} ] =
t \Tr [\rho^{1+\gamma} ]$, we have
the following:
\begin{eqnarray}
\Tr [ \rho_P^{1+ \gamma} ] + 2^{-\kappa} \Tr [ \rho_F^{1+\gamma} ] 
\nonumber
& \leq & \Tr \left\{ 2^{-\kappa} \rho^{1+\gamma} + (1 - 2^{-\kappa} ) [ \rho^{1+\gamma} - (h/2)^{1+\gamma} \rho^{1 +\gamma} - v^{1+\gamma} \rho_1^{1 + \gamma} ] \right\} \\
\nonumber
& = & \Tr \left\{ 2^{-\kappa} \rho^{1+\gamma} + (1 - 2^{-\kappa} ) [ \rho^{1+\gamma} - (h/2)^{1+\gamma} \rho^{1 +\gamma} - v^{1+\gamma} t \rho^{1 + \gamma} ] \right\} \\
& = & \left\{ 2^{-\kappa} + (1 - 2^{-\kappa} ) [ 1 - (h/2)^{1+\gamma}  - v^{1+\gamma} t  ] \right\} \Tr [ \rho^{1+\gamma} ]  \\
& = & \left\{ 1 - (1 - 2^{-\kappa} ) [(h/2)^{1+\gamma} + v^{1+\gamma} t ]
\right\} \Tr [ \rho^{1+\gamma} ].
\label{summands34}
\end{eqnarray}
Combining (\ref{initformulafordiv}), (\ref{summands12}), and (\ref{summands34}), we
find the following: if we set
\begin{eqnarray*}
\lambda ( v, h, q, \kappa, r, t ) & = & \left( (1 - q) 2^{- \gamma \Pi ( \gamma , t ) }
+ q \left\{ 1 - (1 - 2^{-\kappa} ) [(h/2)^{1+\gamma} + v^{1+\gamma} t ]
\right\} \right)^{1/\gamma},
\end{eqnarray*}
then
\begin{eqnarray}
d_{1 + \gamma} ( \overline{\rho}  \| \overline{\sigma} ) 
& \leq & \lambda ( v, h, q, \kappa, r, t )  \cdot d_{1 + \gamma} 
( \rho \| \mathbb{I} ).
\end{eqnarray}
Therefore setting $\Lambda = - \log \lambda$ yields (\ref{oneshotdesineq}).

It remains for us to evaluate the limiting behavior of $\Lambda$
as $(q, \kappa, t) \to (0, 0, t_0)$.  We can rewrite
the formula for $\lambda$ as
\begin{eqnarray*}
\lambda ( v, h, q, \kappa, r, t ) & = & \left( 1
+ \left\{ (1 - q) (2^{- \gamma \Pi ( \gamma , t ) } - 1 )
+ q ( 2^{-\kappa} - 1) [(h/2)^{1+\gamma} + v^{1+\gamma} t ] \right\} \right)^{1/\gamma}
\end{eqnarray*}
Applying Proposition~\ref{etothecprop} to this
expression (with $g = \gamma$, and
$f$ equal to the function enclosed by braces),
we have
\begin{eqnarray*}
&& \ln \left[  \lim_{\substack{(q, \kappa, t ) \to  (0, 0, t_0 )}}
\lambda ( v, h, q, \kappa, r, t ) \right] \\
& = & 
\lim_{(q, \kappa, t) \to (0, 0, t_0)} \left\{ 
(1 - q ) \left( \frac{2^{-\gamma \Pi ( \gamma, t ) } - 1 }{\gamma} \right) +
\left( \frac{q ( 2^{-\kappa} - 1 ) }{\gamma}  \right)
[(h/2)^{1+\gamma} + v^{1+\gamma} t ] \right\} \\
& = & (1  ) (- \ln 2 ) \pi (t_0 ) 
+ (- \ln 2 ) (r^{-1}) [ (h/2) + vt_0 ],
\end{eqnarray*}
which implies (\ref{lambdalimitcond}) as desired.
\end{proof}

Proposition~\ref{knowndeviceprop} is not sufficient for our ultimate
proof of security because it assumes that additional information (beyond
the trust parameters $v,h$) is is known about the device $D$.  The next proposition
avoids this limitation.  (It makes no use of the failure parameters of
the device.)

\begin{proposition}
\label{oneshotprop}
There is a continuous real-valued function $\Delta ( v, h, q, \kappa, r )$ such that
the following conditions hold.
\begin{enumerate}
\item Let $D$ be a partially trusted measurement device
with parameters $(v,h)$, and let $E$ be a purtifying system for $D$.
Let $\gamma = r q \kappa$, and let
\begin{eqnarray}
\overline{\sigma} & = & ( 1 - q ) \mathbb{I} \oplus (1 -q ) \mathbb{I}
\oplus q \mathbb{I} \oplus q 2^{\kappa / \gamma} \mathbb{I}.
\end{eqnarray}
Then,
\begin{eqnarray}
\label{oneshotdesineq2}
d_{1 + \gamma} ( \overline{\rho}  \| \overline{\sigma} ) 
& \leq & 2^{- \Delta ( v, h, q, \kappa, r )}  \cdot d_{1 + \gamma} 
( \rho \| \mathbb{I} ).
\end{eqnarray}

\item The following limit condition is satisfied:
\begin{eqnarray}
\label{lambdalimitcond2}
\lim_{(q, \kappa ) \to (0, 0 )} \Delta ( v, h, q, \kappa, r)
& = & \min_{s \in [0,1]} \left( \pi ( s )+ \frac{ h/2 + vs}{r}  \right),
\end{eqnarray}
where $\pi$ is the function from Theorem~\ref{uncertaintythm}.
\end{enumerate}
\end{proposition}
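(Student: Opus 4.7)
The plan is to derive $\Delta$ from the already-constructed function $\Lambda$ of Proposition~\ref{knowndeviceprop} by taking a worst-case minimum over the failure parameter $t \in [0,1]$. Concretely, I would define
\[
\Delta(v, h, q, \kappa, r) \;:=\; \min_{t \in [0,1]} \Lambda(v, h, q, \kappa, r, t).
\]
Since $\Lambda$ is continuous and $[0,1]$ is compact, the minimum is attained and varies continuously in $(v, h, q, \kappa, r)$; this takes care of the continuity assertion.

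Next I would verify inequality~(\ref{oneshotdesineq2}). For any partially trusted device $D$ with parameters $(v,h)$ and any purifying system $E$, the $(1+\gamma)$-failure parameter $t_{D} := \Tr(\rho_1^{1+\gamma})/\Tr(\rho^{1+\gamma})$ lies in $[0,1]$. Applying~(\ref{oneshotdesineq}) with this value of $t$ gives
\[
d_{1+\gamma}(\overline{\rho}\,\|\,\overline{\sigma}) \;\leq\; 2^{-\Lambda(v,h,q,\kappa,r,t_D)}\,d_{1+\gamma}(\rho\,\|\,\mathbb{I}) \;\leq\; 2^{-\Delta(v,h,q,\kappa,r)}\,d_{1+\gamma}(\rho\,\|\,\mathbb{I}),
\]
where the last step uses $\Lambda(v,h,q,\kappa,r,t_D) \geq \Delta(v,h,q,\kappa,r)$ by definition of the minimum. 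This holds uniformly across all devices since the bound depends only on $v,h$ and protocol parameters.

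For the limit statement~(\ref{lambdalimitcond2}), I would invoke Proposition~\ref{compactprop} with $U = (0,1) \times (0,\infty)$, compact set $V = [0,1]$, inner variable $t$, outer variables $(q,\kappa)$, and limit point $(0,0)$. The hypotheses of that proposition require that $\lim_{(q,\kappa,t) \to (0,0,t_0)} \Lambda(v,h,q,\kappa,r,t)$ exist for each $t_0 \in [0,1]$, and this is precisely what~(\ref{lambdalimitcond}) provides. Interchanging the limit with the minimum then yields
\[
\lim_{(q,\kappa) \to (0,0)} \Delta(v,h,q,\kappa,r) \;=\; \min_{t_0 \in [0,1]} \left(\pi(t_0) + \frac{h/2 + v t_0}{r}\right),
\]
which (after renaming $t_0 \mapsto s$) is the claimed formula.

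There is no real obstacle in this proof beyond correctly setting up the minimization and verifying the hypotheses of Proposition~\ref{compactprop}; the essential analytic content (the uncertainty principle and the superadditivity estimate) was already absorbed into Proposition~\ref{knowndeviceprop}. The only mild subtlety is ensuring that $\Lambda$ extends continuously to the boundary point $(q,\kappa) = (0,0)$ at each fixed $t_0$ so that Proposition~\ref{compactprop} applies verbatim; this is immediate from~(\ref{lambdalimitcond}).
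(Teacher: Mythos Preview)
Your proposal is correct and follows essentially the same approach as the paper: define $\Delta$ as the minimum of $\Lambda$ over $t\in[0,1]$, deduce (\ref{oneshotdesineq2}) from Proposition~\ref{knowndeviceprop}, and obtain the limit (\ref{lambdalimitcond2}) via Proposition~\ref{compactprop}. Your write-up simply spells out in more detail the steps the paper compresses into two sentences.
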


\begin{proof}
Let $\Lambda$ be the function from Proposition~\ref{knowndeviceprop}, and
let
\begin{eqnarray}
\Delta ( v, h, q, \kappa, r ) & = & \min_{t \in [0, 1 ]} \Lambda ( v, h, q, \kappa, r, t ).
\end{eqnarray}
Clearly, (\ref{oneshotdesineq2}) holds by Proposition~\ref{knowndeviceprop}.  Equality
(\ref{lambdalimitcond2}) follows via Proposition~\ref{compactprop}.
\end{proof}

\subsection{Multi-Shot Results}

The goal of this subsection is to deduce consequences
of Proposition~\ref{oneshotprop} across multiple iterations.
Let $\Gamma_{EGO}$ denote the joint state of the
registers $E$, $G$, and $O$.  (Note that $\Gamma$
is a classical-quantum state with respect
to the partition $(GO | E )$.)

The following proposition follows immediately from 
Proposition~\ref{oneshotprop} by induction.

\begin{proposition}
\label{initmultishotprop}
Let $D$ be a partially trusted measurement device with
parameters $(v,w)$, and let $E$ be a purifying system
for $D$.  Let $\gamma = rq \kappa$, and let
$\Phi$ be the operator on ${E}
\otimes {G} \otimes {O}$ given
by
\begin{eqnarray}
\label{initsigmaop}
\Phi & = & \mathbb{I}_{{E}}
\otimes \left( \sum_{\mathbf{g},\mathbf{o}
\in \{ 0, 1 \}^N } (1 - q)^{\sum_i (1-g_i ) }
q^{\sum_i g_i } 2^{(
\sum_i g_i o_i)/(qr ) } \left| \mathbf{go} \right>
\left< \mathbf{go} \right| \right).
\end{eqnarray}
Then,
\begin{eqnarray}
\label{initmspropineq}
D_{1+ \gamma} \left( \Gamma_{EGO} \|
\Phi \right) \leq D_{1 + \gamma} \left(
\Gamma_E \| \mathbb{I} \right) -
N \cdot  \Delta ( v, h, q, \kappa, r),
\end{eqnarray}
where $\Delta$ denotes the function from Proposition~\ref{oneshotprop}. $\qed$
\end{proposition}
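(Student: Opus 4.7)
The plan is to induct on $N$ with Proposition~\ref{oneshotprop} doing all the work. The base case $N = 1$ is immediate: the $N = 1$ instance of~(\ref{initsigmaop}) reproduces the bounding operator $\overline{\sigma}$ from Proposition~\ref{oneshotprop} (using $\kappa/\gamma = 1/(qr)$), the state $\Gamma_{EGO}$ is exactly $\overline{\rho}$, and~(\ref{initmspropineq}) then follows from~(\ref{oneshotdesineq2}) by taking logarithms.

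For the inductive step I would exploit the block-diagonal structure of both the post-$k$-round joint state and the corresponding bounding operator in the classical $GO$-basis. Writing $\Gamma^{(k)} = \bigoplus_{\mathbf{g}_{\le k}, \mathbf{o}_{\le k}} \Gamma_E^{(\mathbf{g}_{\le k}, \mathbf{o}_{\le k})}$ for the (subnormalized) history-conditional states on $E$, and $\Phi^{(k)} = \bigoplus_{\mathbf{g}_{\le k}, \mathbf{o}_{\le k}} c_{\mathbf{g}_{\le k}, \mathbf{o}_{\le k}} \mathbb{I}_E$ for the diagonal coefficients appearing in the truncation of~(\ref{initsigmaop}), a direct computation using the block structure yields
\begin{eqnarray*}
d_{1+\gamma}(\Gamma^{(k)} \| \Phi^{(k)})^{\gamma} = \sum_{\mathbf{g}_{\le k}, \mathbf{o}_{\le k}} c_{\mathbf{g}_{\le k}, \mathbf{o}_{\le k}}^{-\gamma} \, \Tr \left[ (\Gamma_E^{(\mathbf{g}_{\le k}, \mathbf{o}_{\le k})})^{1+\gamma} \right].
\end{eqnarray*}
Crucially, the coefficients factor multiplicatively across rounds, $c_{\mathbf{g}_{\le k+1}, \mathbf{o}_{\le k+1}} = c_{\mathbf{g}_{\le k}, \mathbf{o}_{\le k}} \cdot c(g_{k+1}, o_{k+1})$ with $c(g, o) \in \{ 1-q,\; q,\; q \cdot 2^{\kappa/\gamma} \}$ according as $g = 0$, $(g, o) = (1, 0)$, or $(g, o) = (1, 1)$. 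Hence summing first over the last-round labels $(g_{k+1}, o_{k+1})$ produces, for each fixed history $(\mathbf{g}_{\le k}, \mathbf{o}_{\le k})$, precisely the trace expression that Proposition~\ref{oneshotprop} is designed to bound, applied to the one-step evolution of the conditional state $\Gamma_E^{(\mathbf{g}_{\le k}, \mathbf{o}_{\le k})}$ under whatever measurement the device happens to perform at round $k+1$ on that history.

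I would then apply Proposition~\ref{oneshotprop} history-wise --- its uniformity over all partially trusted devices with parameters $(v, h)$ is what permits this step regardless of how the round-$(k+1)$ measurement depends on the past --- to obtain, for each history,
\begin{eqnarray*}
\sum_{g_{k+1}, o_{k+1}} c(g_{k+1}, o_{k+1})^{-\gamma} \, \Tr \left[ (\Gamma_E^{(\mathbf{g}_{\le k+1}, \mathbf{o}_{\le k+1})})^{1+\gamma} \right] \le 2^{-\gamma\, \Delta(v, h, q, \kappa, r)} \, \Tr \left[ (\Gamma_E^{(\mathbf{g}_{\le k}, \mathbf{o}_{\le k})})^{1+\gamma} \right].
\end{eqnarray*}
Summing this estimate against $c_{\mathbf{g}_{\le k}, \mathbf{o}_{\le k}}^{-\gamma}$ gives the clean one-step telescoping inequality $d_{1+\gamma}(\Gamma^{(k+1)} \| \Phi^{(k+1)}) \le 2^{-\Delta} \, d_{1+\gamma}(\Gamma^{(k)} \| \Phi^{(k)})$, and iterating this from $k = 0$ to $k = N - 1$ and passing to logarithms produces~(\ref{initmspropineq}). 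The only real subtlety is that Proposition~\ref{oneshotprop} is stated for density operators whereas the conditional states $\Gamma_E^{(\mathbf{g}_{\le k}, \mathbf{o}_{\le k})}$ are subnormalized; however, homogeneity of the trace quantity on both sides (a factor $(\Tr \Gamma_E^{(\mathbf{g}_{\le k}, \mathbf{o}_{\le k})})^{1+\gamma}$ scales through cleanly, using that each round of Protocol A' is trace-preserving) makes this harmless, so the argument reduces to bookkeeping on top of the one-shot bound and block-diagonal additivity of the R{\'e}nyi divergence.
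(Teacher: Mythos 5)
Your proof is correct and matches the paper's intended argument: the paper itself gives no proof here beyond asserting that the result ``follows immediately from Proposition~\ref{oneshotprop} by induction.'' Your write-up correctly supplies the details --- the block-diagonal expansion of the R{\'e}nyi divergence over histories, the multiplicative factorization of the bounding coefficients across rounds, the history-wise application of the one-shot bound (enabled by its uniformity over all partially trusted devices with parameters $(v,h)$), and the degree-$(1+\gamma)$ homogeneity argument handling subnormalization of the conditional states.
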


We note the significance of the exponents in
(\ref{initsigmaop}): the quantity $\sum_{i=1}^N (1-g_i)$
is the number of generation rounds that occured
in Protocol A', the quantity $\sum_{i=1}^N g_i$
is the number of game rounds, and the
quantity $\sum_{i=1}^N g_i o_i$ is the number
of times the ``failure'' event occurred during the protocol.

As stated, Proposition~\ref{initmultishotprop} is not
useful for bounding the randomness of
$\Gamma_{EGO}$ because the quantity
$D_{1+\gamma} ( \Gamma_E \| \mathbb{I} )$
could be arbitrarily large.  We therefore prove the following
alternate version of the proposition.  The statement
is the same, except that we replace $\mathbb{I}_{E}$
in (\ref{initsigmaop}) with $\Gamma_E$,
and we remove the term $D_{1+\gamma} ( \Gamma_E
\| \mathbb{I} )$ from (\ref{initmspropineq}).

\begin{proposition}
\label{multishotprop}
Let $D$ be a partially trusted measurement device with
parameters $(v,w)$, and let $E$ be a purifying system
for $D$.  Let $\gamma = rq \kappa$, and let
$\Sigma$ be the operator on ${E}
\otimes {G} \otimes {O}$ given
by
\begin{eqnarray}
\label{sigmaop}
\Sigma & = & \Gamma_E
\otimes \left( \sum_{\mathbf{g},\mathbf{o}
\in \{ 0, 1 \}^N } (1 - q)^{\sum_i (1-g_i ) }
q^{\sum_i g_i } 2^{(\sum_i g_i o_i)/(qr) } \left| \mathbf{go} \right>
\left< \mathbf{go} \right| \right).
\end{eqnarray}
Then,
\begin{eqnarray}
\label{mspropineq}
D_{1+ \gamma} \left( \Gamma_{EGO} \|
\Sigma \right) \leq -
N \cdot  \Delta ( v, h, q, \kappa, r),
\end{eqnarray}
where $\Delta$ denotes the function from Proposition~\ref{oneshotprop}.
\end{proposition}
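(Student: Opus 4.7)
The plan is to prove Proposition~\ref{multishotprop} by adapting the inductive argument behind Proposition~\ref{initmultishotprop} so that $\Gamma_E$ plays the role of the $E$-side reference operator throughout. The enabling observation is that $\Gamma_E$ is preserved across rounds of Protocol A' (summing over the measurement outcome on the device is trace-preserving on $E$), so $\Gamma_E$ coincides with the initial marginal $\rho_E$ at every step. This lets one carry through a fixed sandwich by $\Gamma_E^{-x}$, with $x = \gamma/(2(1+\gamma))$, as a similarity throughout the induction.

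First I would establish a $\Gamma_E$-sandwiched analogue of Proposition~\ref{oneshotprop}. Let $\tilde\rho$ be any subnormalized state on $E$ arising as a conditional branch of the protocol, and let $\tilde\rho_+, \tilde\rho_-, \tilde\rho_P, \tilde\rho_F$ be the post-measurement substates produced by one round, with hypothetical trusted counterparts $\tilde\rho_0, \tilde\rho_1$. Writing $T_B := \Tr(\Gamma_E^{-x}\tilde\rho_B\Gamma_E^{-x})^{1+\gamma}$ and $T_{\tilde\rho} := \Tr(\Gamma_E^{-x}\tilde\rho\,\Gamma_E^{-x})^{1+\gamma}$, the claim is
\begin{eqnarray*}
(1-q)(T_+ + T_-) + q\,T_P + q\cdot 2^{-\kappa}\, T_F \;\leq\; 2^{-\gamma\,\Delta(v,h,q,\kappa,r)}\, T_{\tilde\rho}.
\end{eqnarray*}
Its proof replays Proposition~\ref{knowndeviceprop} verbatim after the change of variables $\tilde X = X\Gamma_E^{-x}$, $\tilde Y = Y\Gamma_E^{-x}$, where $\tilde\rho_0 = X^*X$, $\tilde\rho_1 = Y^*Y$ is the trusted anti-commuting decomposition: the sandwiched operators become $\tilde X^*\tilde X$, $\tilde Y^*\tilde Y$, $(\tilde X \pm \tilde Y)^*(\tilde X \pm \tilde Y)/2$, so Theorem~\ref{uncertaintythm} (which is stated for arbitrary $Z = \tilde X\otimes|0\rangle + \tilde Y\otimes|1\rangle$) bounds $T_+ + T_-$ immediately; the partial-trust inequalities of Proposition~\ref{entpartialprop} survive the conjugation $A \mapsto \Gamma_E^{-x} A\Gamma_E^{-x}$ because it preserves the positive-semidefinite order; and Proposition~\ref{powmatrixprop}(b) controls $T_P, T_F$ as in the original proof. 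Minimizing over the sandwiched failure parameter then reproduces the same function $\Delta$ as in Proposition~\ref{oneshotprop}.

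Second, I would induct on $N$. A direct unpacking of Definition~\ref{renyimaindefinition} on the classical-quantum decomposition $\Gamma^{(k)} = \sum_\mathbf{go} |\mathbf{go}\rangle\langle\mathbf{go}|\otimes\Gamma_E^\mathbf{go}$ gives
\begin{eqnarray*}
d_{1+\gamma}\big(\Gamma^{(k)}\,\big\|\,\Gamma_E\otimes\Xi^{(k)}\big)^\gamma \;=\; \sum_\mathbf{go} c_\mathbf{go}^{-\gamma}\, \Tr(\Gamma_E^{-x}\Gamma_E^\mathbf{go}\Gamma_E^{-x})^{1+\gamma},
\end{eqnarray*}
with $c_\mathbf{go}$ the diagonal weights of $\Xi^{(k)}$, and analogously for step $k+1$ with the new round's weights appended. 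The sandwiched one-shot bound applied to each branch $\Gamma_E^\mathbf{go}$ (which is the subnormalized $E$-marginal of a pure conditional state on $EQ$) gives a term-by-term decrease by $2^{-\gamma\Delta}$, yielding $d_{1+\gamma}(\Gamma^{(k+1)}\,\|\,\Gamma_E\otimes\Xi^{(k+1)}) \leq 2^{-\Delta}\,d_{1+\gamma}(\Gamma^{(k)}\,\|\,\Gamma_E\otimes\Xi^{(k)})$. The base case is $d_{1+\gamma}(\Gamma_E\,\|\,\Gamma_E) = 1$, and $N$ iterations produce the desired bound.

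The hardest point to pin down cleanly is the sandwiched one-shot bound. Two things in particular must be checked: the sandwiched failure parameter $\tilde\delta = \Tr(\Gamma_E^{-x}\tilde\rho_1\Gamma_E^{-x})^{1+\gamma}/T_{\tilde\rho}$ must lie in $[0,1]$ so that Theorem~\ref{uncertaintythm} applies (this follows from $\tilde\rho_1 \leq \tilde\rho$, preservation of the operator order under the sandwich, and trace-monotonicity of $t \mapsto t^{1+\gamma}$), and the subsequent minimization over $\tilde\delta$ must reproduce the same function $\Delta(v,h,q,\kappa,r)$ that appears in Proposition~\ref{oneshotprop} rather than a weaker constant.
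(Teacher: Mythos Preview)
Your approach is correct and rests on the same key idea as the paper's proof---conjugating by $\Gamma_E^{-x}$ with $x = \gamma/(2+2\gamma)$---but the paper packages this more economically. Rather than re-deriving the sandwiched one-shot bound and re-running the induction, the paper observes that applying the stochastic map $X \mapsto \Gamma_E^{-x} X \Gamma_E^{-x}$ once to the purifying system $E$ produces a new device--environment pair $(D, E')$ with $\Gamma_{E'} = \Gamma_E^{1/(1+\gamma)}/K$; since the device $D$ (and hence its partial-trust parameters) is unchanged, Proposition~\ref{initmultishotprop} applies verbatim to $(D, E')$. Two elementary identities,
\[
D_{1+\gamma}(\Gamma_{E'GO}\,\|\,\Phi) = -\tfrac{1+\gamma}{\gamma}\log K + D_{1+\gamma}(\Gamma_{EGO}\,\|\,\Sigma),
\qquad
D_{1+\gamma}(\Gamma_{E'}\,\|\,\mathbb{I}) = -\tfrac{1+\gamma}{\gamma}\log K + D_{1+\gamma}(\Gamma_E\,\|\,\Gamma_E),
\]
then translate the conclusion of Proposition~\ref{initmultishotprop} directly into \eqref{mspropineq}, using $D_{1+\gamma}(\Gamma_E\,\|\,\Gamma_E)=0$. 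Your unpacked version verifies by hand that each ingredient of the one-shot argument survives the sandwich (which it does, for the reasons you identify: Theorem~\ref{uncertaintythm} is basis-free in $Z$, order is preserved under congruence, etc.), whereas the paper's version hides all of that inside the observation that $(D,E')$ is still a legitimate instance of the hypotheses. The paper's route is shorter and makes the role of Proposition~\ref{initmultishotprop} as a black box clearer; yours has the virtue of being self-contained and of making explicit which structural features of the one-shot bound are actually needed.
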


\begin{proof}
Let $\Gamma = \Gamma_E$.
Let $(D, E')$ be the device-environment pair
that arises from taking the pair $(D, E)$ and applying
the stochastic operation
\begin{eqnarray}
X \mapsto \Gamma^{\frac{-\gamma}{2+2\gamma}}
X \Gamma^{\frac{-\gamma}{2+2\gamma}}
\end{eqnarray}
to the system $E$.  The state $\Gamma_{E'}$ of the resulting
system $E'$ satisfies
\begin{eqnarray}
\Gamma_{E'} & = & \frac{\Gamma^{1/(1+\gamma)}}{K},
\end{eqnarray}
where $K = \Tr ( \Gamma^{1/(1+\gamma)} )$.

By directly applying the definition of $D_\alpha$
(see Definition~\ref{renyimaindefinition}) we can see
that certain divergences of
$\Gamma_{EGO}$ and $\Gamma_{E'GO}$ can
be computed from one another:
\begin{eqnarray}
\label{paralleleqn1}
D_{1+ \gamma} ( \Gamma_{E'GO} \| \Phi )
& = & - \frac{1 + \gamma}{\gamma} \cdot \log K
+ D_{1+\gamma} ( \Gamma_{EGO} \| \Sigma ) \\
\label{paralleleqn2}
D_{1+\gamma} ( \Gamma_{E'} \| \mathbb{I} ) & = &
- \frac{1+\gamma}{\gamma} \cdot \log K
+ D_{1+\gamma} ( \Gamma_E \| \Gamma ).
\end{eqnarray}
Applying Proposition~\ref{initmultishotprop} to
$(D, E')$, we find that
\begin{eqnarray}
D_{1+ \gamma} ( \Gamma_{E'GO} \| \Phi ) - 
D_{1+\gamma} ( \Gamma_{E'} \| \mathbb{I} )
\leq - N \Delta ( v, h, q, \kappa , r ).
\end{eqnarray}
By (\ref{paralleleqn1})--(\ref{paralleleqn2}),
the same bound holds when $E', \Phi, \mathbb{I}$ are replaced
$E, \Sigma, \Gamma$.
Since $D_{1+\gamma} ( \Gamma_E \| \Gamma ) = 0$,
the desired inequality is obtained.
\end{proof}

The following corollary of Proposition~\ref{multishotprop}
provides final preparation
for the proof of the main result.

\begin{corollary}
\label{pushdowncor}
Let $\epsilon > 0$.
Then, there exists a positive semidefinite operator
$\overline{\Gamma}_{EGO}$ which
is classical with respect to the systems $E$ and $G$ 
such that
\begin{eqnarray}
\left\| \overline{\Gamma}_{EGO} - \Gamma_{EGO}
\right\|_1 \leq \epsilon
\end{eqnarray}
and
\begin{eqnarray}
D_{max} ( \overline{\Gamma}_{EGO} \|
\Sigma ) \leq - N \cdot \Delta ( v, h, q, \kappa, r )
+ \frac{\log ( 2 / \epsilon^2) }{q \kappa r}
\end{eqnarray}
(where $\Delta$ and $\Sigma$ are as in
Proposition~\ref{oneshotprop} and Proposition~\ref{multishotprop}, respectively).
\end{corollary}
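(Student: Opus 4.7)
The plan is a direct combination of Proposition~\ref{multishotprop}, which provides a Rényi-divergence bound on $\Gamma_{EGO}$ relative to $\Sigma$, with Corollary~\ref{maxrenyicor}, which converts such a bound into a smooth $D_{max}$ bound at a cost of $(2\log(1/\epsilon)+1)/(\alpha-1)$ additive slack and, for classical-quantum $\rho$, guarantees that the approximant preserves the classical-quantum structure. Taking $\alpha = 1 + \gamma = 1 + rq\kappa$, $\rho = \Gamma_{EGO}$, and $\sigma = \Sigma$ in Corollary~\ref{maxrenyicor} produces an operator $\overline{\Gamma}_{EGO}$ which is classical-quantum on $GO \otimes E$, satisfies $\|\overline{\Gamma}_{EGO} - \Gamma_{EGO}\|_1 \leq \epsilon$, and obeys
\[
D_{max}(\overline{\Gamma}_{EGO} \| \Sigma) \;\leq\; D_{1+\gamma}(\Gamma_{EGO} \| \Sigma) + \frac{2\log(1/\epsilon)+1}{rq\kappa}.
\]

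I would then substitute the bound from Proposition~\ref{multishotprop}, namely $D_{1+\gamma}(\Gamma_{EGO} \| \Sigma) \leq -N\,\Delta(v,h,q,\kappa,r)$, and observe that $1 + 2\log(1/\epsilon) = \log 2 + \log(1/\epsilon^2) = \log(2/\epsilon^2)$. This yields
\[
D_{max}(\overline{\Gamma}_{EGO} \| \Sigma) \;\leq\; -N\,\Delta(v,h,q,\kappa,r) + \frac{\log(2/\epsilon^2)}{rq\kappa},
\]
which is the claimed inequality.

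Before invoking Corollary~\ref{maxrenyicor} I would verify the support condition on $\Sigma$. Since $\Sigma = \Gamma_E \otimes \sum_{\mathbf{g},\mathbf{o}} c_{\mathbf{g},\mathbf{o}}\, |\mathbf{go}\rangle\langle\mathbf{go}|$ with every coefficient $c_{\mathbf{g},\mathbf{o}} = (1-q)^{\cdot} q^{\cdot} 2^{\cdot/(qr)}$ strictly positive, and since $\Gamma_E$ by construction dominates the $E$-marginal of $\Gamma_{EGO}$ in support, the hypothesis $\mathrm{Supp}\,\Gamma_{EGO} \subseteq \mathrm{Supp}\,\Sigma$ is automatic. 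The classical-quantum clause of Corollary~\ref{maxrenyicor} (which rests on Proposition~\ref{altdivergenceprop}) then ensures that $\overline{\Gamma}_{EGO}$ remains diagonal in the $(g,o)$ basis, so no extra work is needed to preserve the classical register structure.

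There is no substantive obstacle here: this is essentially a bookkeeping step that packages the multi-shot Rényi bound into the form needed for extraction of smooth min-entropy. The only points requiring care are matching the arithmetic $\alpha - 1 = \gamma = rq\kappa$ so that the smoothing slack comes out to $\log(2/\epsilon^2)/(rq\kappa)$, and (as noted) checking the support condition on $\Sigma$.
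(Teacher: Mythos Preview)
Your proposal is correct and takes essentially the same approach as the paper: the paper's proof is the single line ``This follows from Corollary~\ref{maxrenyicor},'' and you have simply unpacked that reference, applying Corollary~\ref{maxrenyicor} with $\alpha=1+\gamma=1+rq\kappa$, $\rho=\Gamma_{EGO}$, $\sigma=\Sigma$, then substituting the bound from Proposition~\ref{multishotprop} and simplifying $2\log(1/\epsilon)+1=\log(2/\epsilon^2)$. Your added verification of the support condition and the preservation of the classical-quantum structure are appropriate elaborations of what the paper leaves implicit.
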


\begin{proof}
This follows from Proposition~\ref{maxrenyiprop}.
\end{proof}

\subsection{The Security of Protocol A'}

Let $s$ denote the event
that Protocol A' succeeds, and
let $\Gamma^s_{EGO}$ denote
the corresponding (subnormalized)
operator on ${E} \otimes
{G} \otimes {O}$.

\begin{proposition}
\label{mainresultprop}
There exists a continuous real-valued
function $R(v, h, \eta, q, \kappa, r)$
such that the following holds.
\begin{enumerate}
\item Let $\epsilon > 0$.
If Protocol A' is executed with parameters
$(v, h, N, q, \eta,  D)$, then
\begin{eqnarray}
H_{min}^\epsilon ( \Gamma_{EGO}^s \mid EG )
\geq N \cdot R ( v, h, \eta , q, \kappa, r )
- \frac{\log ( 2/\epsilon^2 ) }{q \kappa r}.
\end{eqnarray}

\item The following equality holds:
\begin{eqnarray}
\lim_{(q, \kappa) \to (0, 0)}
R ( v, h, \eta, q, \kappa, r )
& = & \min_{s \in [0, 1 ]} \left[ \pi ( s ) + \frac{vs - \eta}{r}
\right]
\end{eqnarray}
\end{enumerate}
\end{proposition}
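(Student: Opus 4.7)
The plan is to combine Corollary~\ref{pushdowncor}---which supplies a classical-quantum approximant $\overline{\Gamma}_{EGO}$ with $\|\overline{\Gamma}_{EGO}-\Gamma_{EGO}\|_1 \leq \epsilon$ satisfying $\overline{\Gamma}_{EGO} \leq 2^{-N \Delta(v,h,q,\kappa,r) + \log(2/\epsilon^2)/(q\kappa r)}\Sigma$---with the arithmetic constraint imposed by the success event of Protocol A'. Success requires the failure count $\sum_i g_i o_i$ to be at most $(h/2+\eta)qN$, and this is precisely what controls the failure-penalty coefficient $2^{(\sum_i g_i o_i)/(qr)}$ appearing in $\Sigma$ from Proposition~\ref{multishotprop}.

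Concretely, I would let $\Pi_s$ be the classical projection (on $GO$) onto the successful transcripts and set $\overline{\Gamma}^s := \Pi_s \overline{\Gamma}_{EGO} \Pi_s$, $\Sigma^s := \Pi_s \Sigma \Pi_s$. Since $\Pi_s$ is $GO$-classical, $\|\overline{\Gamma}^s - \Gamma^s_{EGO}\|_1 \leq \epsilon$ and the max-divergence operator inequality survives restriction. On the support of $\Pi_s$ the penalty coefficient is uniformly bounded by $2^{(h/2+\eta)N/r}$, so summing over $\mathbf{o}$ (and enlarging the sum by dropping the failure cutoff) yields
\begin{eqnarray}
\Sigma^s \;\leq\; 2^{(h/2+\eta)N/r}\,\Gamma_E \otimes \rho_G \otimes \mathbb{I}_O,
\end{eqnarray}
where $\rho_G := \sum_\mathbf{g}(1-q)^{\#\{i:g_i=0\}}q^{\#\{i:g_i=1\}}|\mathbf{g}\rangle\langle\mathbf{g}|$ is the density operator on $G$ produced by the biased coin flips. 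Chaining the two inequalities gives $\overline{\Gamma}^s \leq 2^{-N[\Delta-(h/2+\eta)/r]+\log(2/\epsilon^2)/(q\kappa r)}\,\Gamma_E \otimes \rho_G \otimes \mathbb{I}_O$.

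Defining $R(v,h,\eta,q,\kappa,r) := \Delta(v,h,q,\kappa,r) - (h/2+\eta)/r$ and taking $\sigma := 2^{-NR + \log(2/\epsilon^2)/(q\kappa r)}\,\Gamma_E \otimes \rho_G$ as the certificate in Definition~\ref{smminentdef} (a positive operator on $EG$ with $\mathbb{I}_O \otimes \sigma \geq \overline{\Gamma}^s$ and $\Tr(\sigma) = 2^{-NR + \log(2/\epsilon^2)/(q\kappa r)}$) yields the claimed bound $H^\epsilon_{min}(\Gamma^s_{EGO} \mid EG) \geq NR - \log(2/\epsilon^2)/(q\kappa r)$. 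The limit claim then follows directly from (\ref{lambdalimitcond2}): $\lim_{(q,\kappa)\to(0,0)} R = \min_{s\in[0,1]}[\pi(s)+(h/2+vs)/r] - (h/2+\eta)/r = \min_{s\in[0,1]}[\pi(s)+(vs-\eta)/r]$, and continuity of $R$ is inherited from that of $\Delta$. All the genuinely hard mathematical work has already been done, by the uncertainty-based one-shot bound (Proposition~\ref{knowndeviceprop}) and its R{\'e}nyi-to-$D_{max}$ conversion (Corollary~\ref{pushdowncor}); the only substantive step remaining---and the main thing to get right---is the accounting step just described, trading the success-budget $(h/2+\eta)qN$ against the per-failure penalty exponent $1/(qr)$, which is precisely how the characteristic term $(vs-\eta)/r$ arises in the limit.
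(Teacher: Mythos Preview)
Your proposal is correct and follows essentially the same route as the paper's proof: restrict the bound from Corollary~\ref{pushdowncor} to the success subspace, uniformly bound the failure-penalty coefficient $2^{(\sum_i g_i o_i)/(qr)}$ by $2^{(h/2+\eta)N/r}$ on that subspace, and absorb this into $\Gamma_E \otimes \rho_G \otimes \mathbb{I}_O$ to obtain the min-entropy certificate; the paper does exactly this (phrasing it via $D_{max}^\epsilon$ rather than the explicit approximant $\overline{\Gamma}$), defines $R$ identically as $\Delta - (h/2+\eta)/r$, and derives the limit from (\ref{lambdalimitcond2}) in the same way.
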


\begin{proof}
The ``success'' event for Protocol A' is defined
by the inequality
\begin{eqnarray}
\label{passineq1}
\sum_i g_i o_i  & \leq &  (h/2 + \eta) q N. 
\end{eqnarray}
Let $S \subseteq {G}
\otimes {O}$ be the span of the vectors
$\left| \mathbf{go} \right>$ where $(\mathbf{g},
\mathbf{o} )$ varies over all pairs of sequences
satisfying (\ref{passineq1}).
For any operator $X$ on ${E}
\otimes {G} \otimes {O}$
which is classical-quantum
with respect to $(GO | E )$, let $X^s$ denote the restriction of $X$ to ${E} \otimes S$.
Applying this construction to the operators
$\Gamma_{EGO}, \overline{\Gamma}_{EGO}$
and $\Sigma$ from
Corollary~\ref{pushdowncor}, and
using the fact that $D_{max}$
and $\left\| \cdot \right\|_1$ are monotonically
decreasing under restriction to $S$, we find that
\begin{eqnarray}
\label{thedminineq}
D_{max}^\epsilon ( \Gamma_{EGO}^s
\| \Sigma^s ) \leq - N \cdot \Delta ( v, h, q, \kappa, r )
+ \frac{\log ( 2 / \epsilon^2) }{q \kappa r}.
\end{eqnarray}

In order to give a lower bound on the smooth
min-entropy of $\Gamma_{EGO}^s$,
we need to compute its divergence with respect
to an operator on ${E}
\otimes {G} \otimes {O}$
that is of the form $X
\otimes \mathbb{I}_{O}$, where $X$
is a density matrix.
Define a new operator $\Sigma'$ on ${E} \otimes
{G} \otimes {O}$ by
\begin{eqnarray}
\Sigma' & = & \Gamma_E
\otimes 
\left( \sum_{( \mathbf{g},\mathbf{o} )
\in S } (1 - q)^{\sum_i (1-g_i ) }
q^{\sum_i g_i } 2^{(h/2+\eta) N / r}
\left| \mathbf{go} \right>
\left< \mathbf{go} \right| \right)
\end{eqnarray}
(recalling that $\gamma = q \kappa r$).
Comparing this definition with (\ref{sigmaop})
and using the success criterion (\ref{passineq1}),
we find that $\Sigma' \geq \Sigma^s$.
Therefore, the bound in (\ref{thedminineq})
holds also when $\Sigma^s$
is replaced by $\Sigma'$.

When we let $\Psi$ be the operator
on ${E} \otimes {G}$ defined by
\begin{eqnarray}
\Psi & = & \Gamma_E \otimes \sum_{\mathbf{g} \in \{ 0, 1 \}^N}
(1-q)^{\sum_i (1 - g_i ) } q^{\sum_i g_i}
\left| \mathbf{g} \right> \left< \mathbf{g } \right|
\end{eqnarray}
and rewrite $\Sigma'$ as
\begin{eqnarray}
\Sigma' & = & 2^{(h/2 + \eta)N/r} ( \Psi \otimes \mathbb{I}_{O} ),
\end{eqnarray}
we find (using the rule $D_{max}^\epsilon ( X \|  Y )
= \log c + D^\epsilon_{max} ( X \| cY )$) that
\begin{eqnarray*}
D_{max}^\epsilon ( \Gamma_{EGO}^s
\| 
\Psi \otimes \mathbb{I}_{O} ) &\leq &
(h/2 + \eta) N / r
- N \cdot \Delta ( v, h, q, \kappa, r )  \\
& & + 
\frac{\log ( 2 / \epsilon^2) }{q \kappa r}.
\end{eqnarray*}
Since $\Psi$ is a density matrix, we have
\begin{eqnarray}
H_{min}^\epsilon ( \Gamma^s_{EGO} \mid EG ) 
\geq - D_{max}^\epsilon ( \Gamma_{EGO}^s
\| 
\Psi \otimes \mathbb{I}_{O} ).
\end{eqnarray}
Therefore if we let
\begin{eqnarray}
\label{defofr}
R ( v, h, \eta, q, \kappa, r ) &  = & -
\frac{h/2 + \eta}{r} + \Delta ( v, h, q, \kappa, r),
\end{eqnarray}
condition 1 of the theorem is fulfilled.
Condition 2 follows easily from the formula
for the limit of $\Delta$
(\ref{lambdalimitcond2}).
\end{proof}

A final improvement can be made on the previous
result by optimizing the coefficient $r$.

\begin{theorem}
\label{ptsecuritythm}
There exist continuous real-valued
functions $T(v, h, \eta,  q, \kappa)$
and $F ( v, h, \eta, q, \kappa )$
such that the following holds.  \begin{enumerate}
\item If Protocol A' is executed with parameters $(v, h, N, q, \eta,  D)$, then for any $\epsilon \in (0, \sqrt{2}]$
and $\kappa \in (0, \infty)$,
\begin{eqnarray}
\label{thefinalbound}
H_{min}^\epsilon ( \Gamma_{EGO}^s \mid EG )
\geq N \cdot T ( v, h , \eta, q, \kappa) 
- \left( \frac{\log ( \sqrt{2} / \epsilon)}{q \kappa} \right)
F (v, h, \eta, q, \kappa ).
\end{eqnarray}

\item The following equalities hold, where $\pi$
denotes the function from Theorem~\ref{uncertaintythm}.
\begin{eqnarray}
\label{finalthmlc1}
\lim_{(q, \kappa) \to ( 0, 0)}
T ( v, h, \eta, q, \kappa)
& = & \pi ( \eta / v ), \\
\label{finalthmlc2}
\lim_{(q, \kappa) \to (0, 0)}
F ( v, h, \eta , q , \kappa ) & = &
\frac{- 2 \pi' ( \eta / v ) }{v}.
\end{eqnarray}
\end{enumerate}
\end{theorem}

\begin{proof}
Let
\begin{eqnarray}
\mathbf{r} = \min \left\{ \frac{ v }{- \pi'(\eta / v )} , 
\frac{1}{q \kappa } \right\}.
\end{eqnarray}
Define the function $T$ by 
\begin{eqnarray}
T ( v, h, \eta,  q, \kappa) & = &
R ( v, h, \eta,  q, \kappa, \mathbf{r} ).
\end{eqnarray}
By substitution into Proposition~\ref{mainresultprop},
the bound (\ref{thefinalbound}) will
hold when we set $F$ to be
equal to  $2/( \mathbf{r} )$.

To prove (\ref{finalthmlc1}), note that 
\begin{eqnarray}
\lim_{( q, \kappa) \to (0, 0)}
T ( v, h, \eta, q, \kappa) & = & 
\lim_{(q, \kappa ) \to (0, 0)} R \left(  v, h, \eta, q, \kappa,
\frac{ v}{- \pi' ( \eta / v ) }  \right) \\
& = & \min_{s \in [0, 1 ]} \left[ \pi ( s ) - \frac{\pi' ( \eta/v )}{v}(vs - \eta) \right]\\
\label{thefinalinch}
& = & 
\min_{s \in [0, 1 ]} \left[ \pi ( s ) - \pi' ( \eta / v ) (s - \frac{\eta}{v})
\right].
\end{eqnarray}
The function enclosed by square brackets in (\ref{thefinalinch})
is a convex function of $s$ (by Theorem~\ref{uncertaintythm}) and its derivative
at $s = \eta / v$ is zero.  Therefore, a minimum is achieved
at $s = \eta / v$, and the expression in (\ref{thefinalinch})
thus evaluates simply to $\pi ( \eta / v )$.

Equality (\ref{finalthmlc2}) is immediate.  This completes
the proof.
\end{proof}

\section{Randomness Expansion from an Untrusted Device}

\label{REUDsec}

In this section, we will combine the results of previous sections
to prove that randomness expansion from an untrusted
device is possible.

\subsection{The Trust Coefficient of a Strong Self-Test}

Corollary~\ref{partialtrustsimcor} proves that if $G$
is a strong self-test, then for some $\delta_G > 0$,
the behavior of an untrusted device under $G$ can be simulated
by a partially trusted device with parameters
$(\delta_G , 2 \mathbf{f}_G )$.  Let us say that the 
\textbf{trust coefficient of $G$} is the largest value of $\delta_G$
which makes such a simulation possible.

As a consequence of the theory in section~\ref{gamessec},
we have the following formal definition for the trust coefficient of $G$.

\begin{definition}
\label{trustcoeffdef}
Suppose that $G$ is an $n$-player binary XOR game.  Then the 
\textbf{trust coefficient of $G$}, denoted $\mathbf{v}_G$,
is the maximum value of $c \geq 0$ such that there exists
a Hermitian operator $N$ on
$\left( \mathbb{C}^2 \right)^{\otimes n}$ satisfying
the following conditions.
\begin{enumerate}
\item The square of $N$ is the identity operator on $\left( \mathbb{C}^2 \right)^{\otimes n}$.

\item The operator $N$ anticommutes with the operator
$\left[ \begin{array}{cc}
0 & 1 \\ 1 & 0 \end{array} \right] \otimes \mathbb{I} \otimes \ldots \otimes \mathbb{I}$.

\item For any complex numbers $\zeta_1 , \ldots, \zeta_n 
\in \{ \zeta \mid |\zeta | = 1 , \Im ( \zeta  ) \geq 0 \}$, the operator
given by
\begin{eqnarray}
\label{Mfromzetas1}
M & = & \left[ \begin{array}{ccccccccc} &&&& a_{00 \ldots 0} \\
&&& a_{00 \ldots 1 } \\
&& \iddots \\
& a_{11 \ldots 0} \\
 a_{11 \ldots 1}
\end{array} \right],
\end{eqnarray}
where 
\begin{eqnarray}
\label{Mfromzetas2}
a_{b_1, \ldots, b_n} & = & P_G ( \zeta_1^{(-1)^{b_1}} , \zeta_2^{(-1)^{b_2}} , \ldots , \zeta_n^{(-1)^{b_n}} ),
\end{eqnarray}
satisfies
\begin{eqnarray}
\left\| M - c N \right\| \leq \mathfrak{q}_G - c.
\end{eqnarray}
\end{enumerate}
\end{definition}

\subsection{The Security of Protocol R}

\label{finalsecsubsec}

Combining Theorem~\ref{ptsecuritythm}, Corollary~\ref{partialtrustsimcor},
and the definition from the previous subsection, we have the following.
As with Protocol A', let us record the outputs of Protocol R as bit sequences
$G = (g_1, \ldots, g_N)$ and $O = (o_1, \ldots, o_N)$, where $o_i = 0$ if
the outcome of the $i$th round is $H$ or $P$, and $o_i = 1$ otherwise.
If $E$ is a purifying system for the device $D$ used in
Protocol $R$, then we denote by $\Gamma_{EGO}$ the state
of $E$, $G$, and $O$, and by $\Gamma^{s}_{EGO}$ the subnormalized
state corresponding to the ``success'' event.

\begin{theorem}
\label{maintheorem}
There exists continuous real-valued
functions $T ( v, h, \eta, q, \kappa )$
and $F ( v, h , \eta , q , \kappa )$ (with the domains
specified in Figure~\ref{parametersfig}) such that the following 
statements hold.

\begin{enumerate}

\item
Let $G$ be an $n$-player strong self-test.  Let $D$ be an
untrusted device with $n$ components, and let $E$ be a purifying
system for $D$.  Suppose that Protocol R is executed
with parameters $N, \eta, q, G, D$.  Then,
for any $\kappa \in (0, \infty)$ and $\epsilon \in (0, \sqrt{2} ]$,
the following bound holds.
\begin{eqnarray}
H_{min}^\epsilon ( \Gamma_{EGO}^s \mid EG )
\geq N \cdot T ( \mathbf{v}_G, 2 \mathbf{f}_G , \eta, q, \kappa) 
- \left( \frac{\log ( \sqrt{2} / \epsilon)}{q \kappa} \right)
F (\mathbf{v}_G, 2 \mathbf{f}_G, \eta, q, \kappa ),
\end{eqnarray}

\item The following limit conditions are satisfied, where
$\pi$ denotes the function from Theorem~\ref{uncertaintythm}.
\begin{eqnarray}
\lim_{(q, \kappa) \to ( 0, 0)}
T ( v, h, \eta, q, \kappa)
& = & \pi ( \eta / v ), \\
\lim_{(q, \kappa) \to (0, 0)}
F ( v, h, \eta , q , \kappa ) & = &
\frac{- 2 \pi' ( \eta / v ) }{v}.
\end{eqnarray}
\end{enumerate}
\end{theorem}

The following corollary shows that the linear rate of
Protocol R can be lower bounded by the function $\pi$
from Theorem~\ref{uncertaintythm}.

\begin{corollary}
\label{noisyratecorprelim}
Let $G$ be a strong self-test, and let
$\eta > 0$ and $\delta > 0$ be real numbers. 
Then, there exists positive reals $b$ and $q_0$ such that
the following holds.  If Protocol R is executed with parameters
$N, \eta, q, G, D$, where $q \leq q_0$, then
\begin{eqnarray}
H_{min}^\epsilon ( \Gamma_{EGO}^s \mid EG )
& \geq & N \cdot ( \pi ( \eta / \mathbf{v}_G ) - \delta ),
\end{eqnarray}
where $\epsilon = \sqrt{2} \cdot 2^{-bq N}$.
\end{corollary}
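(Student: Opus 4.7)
The plan is to apply Theorem~\ref{maintheorem} after choosing the parameter $\kappa$ to be a small positive constant (independent of $N$) and the smoothing parameter $\epsilon$ to decay exponentially in $qN$, so that both terms of the bound collapse to the desired linear lower bound.

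Concretely, I would first use the stated limits
$$\lim_{(q,\kappa)\to(0,0)} T(\mathbf{v}_G, 2\mathbf{f}_G, \eta, q, \kappa) = \pi(\eta/\mathbf{v}_G), \qquad \lim_{(q,\kappa)\to(0,0)} E(\mathbf{v}_G, 2\mathbf{f}_G, \eta, q, \kappa) = \frac{-2\pi'(\eta/\mathbf{v}_G)}{\mathbf{v}_G}$$
together with the continuity of $T$ and $E$ on their domain, to fix a constant $\kappa_0>0$ and $q_0>0$ so that for every $q\in(0,q_0]$,
$$T(\mathbf{v}_G,2\mathbf{f}_G,\eta,q,\kappa_0) \ \geq\ \pi(\eta/\mathbf{v}_G) - \delta/2, \qquad E(\mathbf{v}_G,2\mathbf{f}_G,\eta,q,\kappa_0) \ \leq\ C,$$
where $C := 1 + 2|\pi'(\eta/\mathbf{v}_G)|/\mathbf{v}_G$ (any fixed upper bound on the limiting value of $E$ will do). This is a routine two-step limit argument: pick $\kappa_0$ small enough that the continuous extension of the functions to the axis $q=0$ deviates from the limit at $(0,0)$ by at most $\delta/4$, then shrink $q_0$ to control the further deviation in $q$ at the fixed value $\kappa_0$.

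Next, define the positive constant $b := \delta\kappa_0/(4C)$ and set $\epsilon := \sqrt{2}\cdot 2^{-bqN}$. Since $b,q,N>0$ we have $\epsilon\in(0,\sqrt{2})$, so Theorem~\ref{maintheorem} applies. Substituting $\log(\sqrt{2}/\epsilon)=bqN$ and $\kappa=\kappa_0$ into that theorem yields
$$H_{min}^{\epsilon}(\Gamma_{EGO}^s\mid EG) \ \geq\ N\,T - \frac{bqN}{q\kappa_0}\,E \ \geq\ N\bigl(\pi(\eta/\mathbf{v}_G)-\delta/2\bigr) - \frac{bN}{\kappa_0}\,C \ =\ N\bigl(\pi(\eta/\mathbf{v}_G)-3\delta/4\bigr),$$
which is at least $N(\pi(\eta/\mathbf{v}_G)-\delta)$, as required.

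There is no substantive obstacle beyond tracking these constants; all the hard analytic work is contained in Theorem~\ref{maintheorem}. The only conceptual point is the parameter balance. Taking $\epsilon$ exponentially small in $qN$ forces the second (subtractive) term to scale like $N$ rather than dominating, while fixing $\kappa=\kappa_0$ to a sufficiently small constant brings the leading term $NT$ uniformly close to $N\pi(\eta/\mathbf{v}_G)$. Choosing $b$ proportional to $\kappa_0/C$ then absorbs the residual linear error into the slack $N\delta$, completing the argument.
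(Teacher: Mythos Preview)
Your proposal is correct and follows essentially the same route as the paper: fix a small constant $\kappa_0$ using the limit conditions on $T$ and $E$, then choose $b$ proportional to $\kappa_0$ divided by the bound on $E$ so that the subtractive term in Theorem~\ref{maintheorem} becomes $(\delta/2)N$ (you use $(\delta/4)N$, which is immaterial). The paper's constants are $b=\delta\kappa_0/(2M)$ with $M$ an arbitrary upper bound for $E$, yielding exactly your computation.
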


\begin{proof}
By the limit conditions for $T$ and $F$, we 
can find $q_0, \kappa_0 > 0$ sufficiently small and 
$M > 0$ sufficiently large so that
for any $q \in (0, q_0]$ and $\kappa \in (0, \kappa_0]$,
\begin{eqnarray}
T ( \mathbf{v}_G , 2 \mathbf{f}_G , \eta, q, \kappa )
& \geq & \pi ( \eta / \mathbf{v}_G ) - \delta/2 \\
F ( \mathbf{v}_G , 2 \mathbf{f}_G , \eta, q, \kappa )
& \leq & M.
\end{eqnarray}
Let $b = \delta \kappa_0/(2 M)$, and let
$\epsilon = \sqrt{2} \cdot 2^{-bq N}$.
Then, provided that $q \leq q_0$,
the output of Protocol R satisfies
\begin{eqnarray}
\nonumber
H_{min}^\epsilon ( \Gamma_{EGO}^s \mid EG )
& \geq &  N \cdot T ( \mathbf{v}_G, 2 \mathbf{f}_G , \eta, q, \kappa_0) 
- \left( \frac{\log ( \sqrt{2} / \epsilon)}{q \kappa_0 } \right)
F (\mathbf{v}_G, 2 \mathbf{f}_G, \eta, q, \kappa_0 ) \\
\label{steptowardsfinalbound}
& \geq &  N ( \pi ( \eta / \mathbf{v}_G ) - \delta/2 ) 
- \left( \frac{ bq N}{q \kappa_0 } \right) M \\
& = & N ( \pi ( \eta / \mathbf{v}_G ) - \delta / 2 )
- (\delta/2) N,
\end{eqnarray}
which simplifies to the desired bound.
\end{proof}

We will prove some additional corollaries in order to
achieve a security result at full strength.
First
wish to show that the output register $O$ has high min-entropy
even when conditioned on the original inputs
to the device $D$.  The above corollary takes into account 
the biased coin flips $g_1, \ldots, g_N$ used in the protocol,
but it does not take into account the inputs that are given to $D$
during game rounds.

For each $k \in \{ 1, \ldots, N \}$, let $I_k$ denote a classical register
consisting of $n$ bits which records the input used at the $k$th round.
Let $I$ be the collection of the all the registers $I_1, \ldots, I_N$.

\begin{corollary}
\label{noisyratecor}
Let $G$ be a strong self-test, and let $\eta > 0$ and $\delta > 0$
be real numbers.  Then, there exist positive reals
$b$, $K$, and $q_0$ such that the following holds.  If Protocol R
is executed with parameters $N, \eta, q, G, D$, where
$q \leq q_0$, then
\begin{eqnarray}
H_{min}^\epsilon ( \Gamma^s_{EGIO} \mid EGI )
\geq N \cdot ( \pi ( \eta / \mathbf{v}_G ) - \delta ),
\end{eqnarray}
where $\epsilon = K \cdot 2^{-bqN}$.
\end{corollary}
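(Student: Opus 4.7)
The plan is to reduce the statement to Corollary~\ref{noisyratecorprelim} by paying an additive cost for conditioning on the extra classical register $I$ via a chain rule for smooth min-entropy. Because $I_k$ equals the deterministic string $0^n$ on every generation round and only carries $n$ random bits on each of the $\approx qN$ game rounds, this cost is of order $qnN$; choosing $q_0$ small enough then absorbs the cost into the slack of the rate bound.

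First, I would apply Corollary~\ref{noisyratecorprelim} with the error parameter $\delta/2$ in place of $\delta$, producing positive constants $b_0, q_0', K_0$ such that, whenever $q \leq q_0'$,
\[
H_{min}^{\epsilon_0}\bigl(\Gamma_{EGO}^s \mid EG\bigr) \geq N \bigl(\pi(\eta/\mathbf{v}_G) - \delta/2\bigr), \qquad \epsilon_0 = K_0 \cdot 2^{-b_0 q N}.
\]
Next I would fix a constant $\tau \in (0, \delta/(8n)]$ and let $\mathcal{T}$ denote the event $N'(G) := |\{k : g_k = 1\}| \leq (q + \tau) N$, which by Azuma--Hoeffding has probability at least $1 - \exp(-2 \tau^2 N)$. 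On $\mathcal{T}$ the classical register $I$ has support of size at most $2^{(q+\tau) N n}$, so smoothing over $\mathcal{T}$ and using data processing (conditioning on the quantum register $E$ cannot increase the max-entropy of the classical $I$) yields
\[
H_{max}^{\epsilon_1}(I \mid EG) \leq (q+\tau) N n, \qquad \epsilon_1 = \exp(-2 \tau^2 N).
\]

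I would then invoke a standard chain rule for smooth entropies,
\[
H_{min}^{\epsilon_0 + 2\epsilon_1 + \epsilon_2}\bigl(\Gamma_{EGIO}^s \mid EGI\bigr) \geq H_{min}^{\epsilon_0}\bigl(\Gamma_{EGIO}^s \mid EG\bigr) - H_{max}^{\epsilon_1}(I \mid EG) - \log(2/\epsilon_2^2),
\]
together with the inequality $H_{min}^{\epsilon_0}\bigl(\Gamma_{EGIO}^s \mid EG\bigr) \geq H_{min}^{\epsilon_0}\bigl(\Gamma_{EGO}^s \mid EG\bigr)$, which holds because jointly guessing $O$ and the classical $I$ is no easier than guessing $O$ alone. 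Choosing $\epsilon_2 = 2^{-b_0 q N}$ and combining the displays gives
\[
H_{min}^{\epsilon}\bigl(\Gamma_{EGIO}^s \mid EGI\bigr) \geq N\bigl(\pi(\eta/\mathbf{v}_G) - \delta/2\bigr) - (q + \tau) N n - O(\log(1/\epsilon_2)),
\]
with $\epsilon = \epsilon_0 + 2\epsilon_1 + \epsilon_2$. Taking $q_0 = \min\{q_0', \tau\}$ ensures $(q+\tau) n \leq 2\tau n \leq \delta/4$, and for sufficiently large $N$ the $O(\log)$ correction is absorbed into another $(\delta/4)\cdot N$, yielding the required bound $N(\pi(\eta/\mathbf{v}_G) - \delta)$.

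The main obstacle is coordinating the three smoothing parameters so that the aggregate error retains the form $K \cdot 2^{-bqN}$. Because $\tau$ is a fixed positive constant, $\exp(-2 \tau^2 N)$ dominates $2^{-b_0 q N}$ for $q$ small; picking $b = \min\{b_0, 2 \tau^2/q_0\}$ and $K = K_0 + 3$ gives a uniform aggregate error $\epsilon \leq K \cdot 2^{-bqN}$ as required.
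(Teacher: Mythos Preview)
Your approach is correct and rests on the same core observation as the paper: since $I_k = 0^n$ on every generation round, the register $I$ carries at most about $qnN$ fresh bits, and this cost can be absorbed by shrinking $q_0$. The execution differs, however. The paper avoids any smooth-entropy chain rule: it conditions on the event $d = \{\sum_i g_i \leq N\delta/(2n)\}$, notes that under $d$ the register $I$ is a deterministic function of $G$ together with a \emph{fixed-length} register $\overline{I}$ of at most $N\delta/2$ bits (namely the inputs on the first $\lfloor N\delta/(2n)\rfloor$ game rounds, padded with zeros), and then uses only the elementary fact that conditioning on a $k$-bit classical register lowers $H_{\min}^{\epsilon'}$ by at most $k$. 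Your chain-rule route via $H_{\max}^{\epsilon_1}(I\mid EG)$ is more modular but imports a result the paper does not otherwise need; the paper's version is self-contained.

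One technical point in your write-up needs adjustment: the correction term you call ``$O(\log)$'' is $\log(2/\epsilon_2^2) = 1 + 2b_0 q N$, which is linear in $qN$, not logarithmic. It is still absorbable into $(\delta/4)N$, but only after additionally requiring $q_0 \leq \delta/(8b_0)$ (and handling the finitely many small $N$ by enlarging $K$), not merely ``for sufficiently large $N$''. With that fix your argument goes through.
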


\begin{proof}
Let $\delta' = \delta/2$.  By Corollary~\ref{noisyratecor}, we can
find $b'$ and $q_0$ such that whenever Protocol
$R$ is executed with $q \leq q_0$,
\begin{eqnarray}
H_{min}^{\epsilon'} ( \Gamma_{EGO}^s \mid EG )
& \geq & N \cdot ( \pi ( \eta / \mathbf{v}_G ) - \delta/2 ),
\end{eqnarray}
where $\epsilon' = \sqrt{2} \cdot 2^{-b'qN}$.  
By decreasing $q_0$ if necessary, we will assume
that $q_0 < \delta/(2n)$.

For each $k \in \{ 1, 2, \ldots, \lfloor N \delta / (2n) \rfloor \}$,
let $\overline{I}_k$ denote the input string
that was given to the device $D$
on the $k$th \textit{game} round.  If there were fewer 
than $k$ game rounds, then simply let $\overline{I}_k$
be the sequence $00 \ldots 0$.  Let $\overline{I}$
denote the collection of the registers $\overline{I}_1, \ldots, \overline{I}_{\lfloor N \delta/(2n)
\rfloor }$.

Let $d$ denote the event that
\begin{eqnarray}
\sum G_i \leq N \delta / (2n).
\end{eqnarray}
(That is, $d$ denotes the event that the number
of game rounds is not more than $N \delta/2$.)  By
the Azuma-Hoeffding inequality,
\begin{eqnarray}
\label{boundford}
P ( d ) \leq e^{-N [ \delta/(2n) - q_0]^2/2}.
\end{eqnarray}
Let $\epsilon$ be the sum
of $\epsilon'$ and the quantity on the right of (\ref{boundford}), and
let $sd$ denote the intersection of the event $d$ and
the success event $s$.  Observe the following
sequence of inequalities, where we first use the fact that
the operator $\Gamma^{sd}_{EGIO}$ can be reconstructed
from the operator $\Gamma^{sd}_{EG \overline{I} O}$, and then use the fact
that the register $\overline{I}$ consists of $\leq (N \delta/2)$ bits:
\begin{eqnarray}
H_{min}^\epsilon ( \Gamma^s_{EGIO} \mid EGI ) 
& \geq & H_{min}^{\epsilon'} ( \Gamma^{sd}_{EGIO} \mid EGI )  \\
& = & H_{min}^{\epsilon'} ( \Gamma^{sd}_{EG\overline{I} O} \mid EG \overline{I} ), \\
& \geq &  H_{min}^{\epsilon'} ( \Gamma^{sd}_{EGO} \mid EG ) - N \delta / 2 \\
& \geq & H_{min}^{\epsilon'} ( \Gamma^s_{EGO} \mid EG ) - N \delta / 2 \\
& \geq & N \cdot ( \pi ( \eta / \mathbf{v}_G ) - \delta ).
\end{eqnarray}

We wish to show that $\epsilon$ is upper bounded by a decaying exponential function of
$qN$ (i.e., a function of the form $J \cdot 2^{- c q N}$,
where $J$ and $c$ are positive constants depending only on $\delta$, $\eta$,
and $G$).    We already know that $\epsilon'$ has such an upper bound.
The expression on the right side of (\ref{boundford}) also has such
a bound --- indeed, it has a bound of the form $J \cdot 2^{ - c N}$, which is
stronger.  Therefore $\epsilon$ (which is the sum of the aforementioned
quantities) is also bounded by a decaying exponential function.  This completes the proof.
\end{proof}

Finally, we wish to state a result using the language
of extractable bits from subsection~\ref{statementressubsec}.
Note that if $\rho_{XZ}$ is a subnormalized classical
quantum state of a system $(X, Z)$ that is such that
\begin{eqnarray}
H_{min}^\epsilon ( \rho_{XZ} \mid Z ) & \geq & C,
\end{eqnarray}
Then either $\Tr ( \rho ) \leq 2 \epsilon$, in which 
case $\rho$ is within trace distance $2 \epsilon$
of the zero state (which has an infinite number of
extractable bits) or $\Tr ( \rho ) > 2 \epsilon$, in which
case $\rho$ is within $\epsilon$ of a nonzero
state $\rho'$ satisfying $H_{min} ( \rho' \mid Z )
\geq C$.  In the latter case, since $\Tr ( \rho' ) \geq \Tr ( \rho ) - \epsilon > \epsilon$, we must have
\begin{eqnarray}
H_{min}^\epsilon ( \rho'/\Tr ( \rho' ) \mid Z )
& \geq & C - \log ( 1/\epsilon ).
\end{eqnarray}
Thus $\rho_{XZ}$ is within trace-distance $2\epsilon$
of a state that has $C - \log ( 1/\epsilon  )$ extractable bits.

The next collary follows easily.

\begin{corollary}
\label{rateextractablecor}
Let $G$ be a strong self-test, and let $\eta, \delta > 0$ be real numbers.  Then, there
exist positive reals $b, K$ and $q_0$ such that
the following holds.  If Protocol $R$ is executed with
parameters
$N, \eta, q, G, D$ with $q \leq q_0$, then
it produces 
\begin{eqnarray}
N \cdot ( \pi ( \eta / \mathbf{v}_G ) - \delta )
\end{eqnarray}
extractable bits with soundness error $K
\cdot 2^{-bqN}$.
\end{corollary}

\begin{remark}
\label{trustcoeffrem}
Corollary~\ref{noisyratecor} implies that
if $\pi ( \eta / \mathbf{v}_G ) > 0$, then (provided
$q$ is sufficiently small) a positive linear rate of output
entropy is achieved by Protocol R.  Using Theorem~\ref{uncertaintythm}, this means that a 
positive linear rate is achieved if $\eta < 0.11 \cdot \mathbf{v}_G$.
\end{remark}

Recall that $\pi ( 0 ) = 1$.  The next corollary follows
easily from Corollary~\ref{noisyratecor}.

\begin{corollary}
\label{highratecor}
Let $G$ be a strong self-test, and let $\delta > 0$ be a real number.  Then, there
exist positive reals $b, K, \eta$ and $q_0$ such that
the following holds.  If Protocol $R$ is executed with
parameters
$N, \eta, q, G, D$ with $q \leq q_0$, then
it produces  $N \cdot ( 1 - \delta )$
extractable bits with soundness error $K
\cdot 2^{-bqN}$.
\end{corollary}

\subsection{Example: The GHZ game}

\label{ghzsubsec}

Let $H$ denote the $3$-player binary XOR game whose
polynomial $P_H$ is given by
\begin{eqnarray}
P_H ( \zeta_1 , \zeta_2 , \zeta_3 ) &  = & 
\frac{1}{4}  \left( 1 - \zeta_1 \zeta_2
- \zeta_2 \zeta_3 - \zeta_1 \zeta_3 \right).
\end{eqnarray}
This is the Greenberger-Horne-Zeilinger (GHZ) game.

\begin{proposition}
\label{GHZprop}
The trust coefficient for the GHZ game $H$ is at least $0.14$.
\end{proposition}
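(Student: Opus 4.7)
By Definition~\ref{trustcoeffdef}, it suffices to exhibit a single Hermitian operator $N$ on $(\mathbb{C}^2)^{\otimes 3}$ for which $\|M - cN\| \leq \mathfrak{q}_H - c = 1 - c$ holds with $c = 0.14$, uniformly over all $(\zeta_1, \zeta_2, \zeta_3) \in H^3$ (here $\mathfrak{q}_H = 1$, so $1 - c = 0.86$). The requirements that $N^2 = \mathbb{I}$ and that $N$ anticommute with $\sigma_x \otimes \mathbb{I} \otimes \mathbb{I}$ force $N$ to be a reverse-diagonal Hermitian matrix on $\mathbb{C}^8$ whose anti-diagonal entries, read top to bottom, have the form $\alpha_0, \alpha_1, -\overline{\alpha_1}, -\overline{\alpha_0}, -\alpha_0, -\alpha_1, \overline{\alpha_1}, \overline{\alpha_0}$ for unit complex numbers $\alpha_0, \alpha_1$; this is a direct anticommutation check on the standard basis of $\mathbb{C}^8$. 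Because $M$ is also reverse-diagonal Hermitian, so is $M - cN$, and hence $\|M - cN\|$ equals the maximum of the four moduli
\[
|a_{000} - c\alpha_0|,\qquad |a_{001} - c\alpha_1|,\qquad |a_{010} + c\overline{\alpha_1}|,\qquad |a_{011} + c\overline{\alpha_0}|,
\]
where $a_{\mathbf{b}} = P_H(\zeta_1^{(-1)^{b_1}}, \zeta_2^{(-1)^{b_2}}, \zeta_3^{(-1)^{b_3}})$ and $P_H(\zeta_1, \zeta_2, \zeta_3) = \tfrac{1}{4}(1 - \zeta_1\zeta_2 - \zeta_2\zeta_3 - \zeta_1\zeta_3)$.

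The plan is to take the symmetric choice $\alpha_0 = \alpha_1 = 1$ and verify the four inequalities over $H^3$. This choice is natural because the optimal GHZ quantum strategy lies at $\zeta_1 = \zeta_2 = \zeta_3 = i$, where $a_{000} = 1$ while $a_{001} = a_{010} = a_{011} = 0$, so setting $c\alpha_0 = c$ aligns with the peak of $a_{000}$ and saturates $|a_{000} - c| = 1 - c$ exactly there. With $\alpha_0 = \alpha_1 = 1$, squaring turns each constraint into $|a_{\mathbf{b}}|^2 \leq 1 - 2c(1 \pm \Re a_{\mathbf{b}})$. As a sanity check, the nearby point $(\zeta_1, \zeta_2, \zeta_3) = (1, i, i)$ yields $a_{000} = (1-i)/2$ and $a_{011} = (1+i)/2$, and the binding constraint (from $a_{011}$) gives $c \leq 1/6 \approx 0.167$, leaving a margin of roughly $0.027$ above $c = 0.14$.

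To verify uniformly, I would parametrize $\zeta_j = e^{i\phi_j}$ with $\phi_j \in [0,\pi]$, expand each $a_{\mathbf{b}}$ as a trigonometric polynomial in $\phi_1, \phi_2, \phi_3$, and search for the maximum of each of the four moduli over the compact cube $[0,\pi]^3$. Using the $(\phi_2, \phi_3)$ swap symmetry of $P_H$ and the involution $\phi_j \mapsto \pi - \phi_j$ that permutes the four entries $a_{\mathbf{b}}$, the search can be confined to a fundamental region. Interior critical points come from $\nabla = 0$ and are solved by standard calculus; the boundary faces $\phi_j \in \{0,\pi\}$ are checked separately. I expect the tight extrema to be exactly $\zeta_j = i$ and the ``partially optimal'' configurations such as $(1,i,i)$ and its permutations, both of which have already been seen to be consistent with $c = 0.14$.

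The main obstacle is this three-variable verification: the gap between the claimed bound $c = 0.14$ and the apparent binding value near $1/6$ is narrow, so the case analysis must rule out any hidden critical point at which some modulus exceeds $0.86$. Should the symmetric choice $\alpha_0 = \alpha_1 = 1$ fall short in some region, a small rotation $\alpha_j \mapsto e^{i\theta_j}$ can redistribute slack among the four inequalities in a controlled way. A practical backup is to combine the analytic critical-point analysis with a Lipschitz-controlled numerical grid on $[0,\pi]^3$; since the modulus $|a_{\mathbf{b}}|$ and its rotations have explicit Lipschitz constants in $\phi_j$, the $0.027$ margin is comfortably larger than the worst-case grid error at reasonable resolution, which furnishes an independent confirmation of $\mathbf{v}_H \geq 0.14$.
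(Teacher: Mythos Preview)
Your setup is identical to the paper's: you pick the same $N$ (the paper writes it explicitly as the reverse-diagonal matrix with anti-diagonal entries $1,1,-1,-1,-1,-1,1,1$, which is your $\alpha_0=\alpha_1=1$), and you correctly reduce the problem to bounding the four moduli $|a_{000}-c|$, $|a_{001}-c|$, $|a_{010}+c|$, $|a_{011}+c|$ by $1-c$ over $H^3$.

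Where you diverge is in the verification, and here you are working much harder than necessary. The paper dispatches all four bounds in a few lines with no calculus or numerics. For the corner entry it uses the crude triangle inequality
\[
\left|\tfrac14(1-\zeta_1\zeta_2-\zeta_2\zeta_3-\zeta_1\zeta_3)-0.14\right|
=\left|0.11-\tfrac14(\zeta_1\zeta_2+\zeta_2\zeta_3+\zeta_1\zeta_3)\right|
\le 0.11+0.75=0.86.
\]
For the non-corner entries the paper invokes a short geometric lemma (Lemma~\ref{threevarlemma}): if $a,b,c$ are unit complex numbers with $\Im a\ge 0$ and $\Im b,\Im c\le 0$, then $|1-ab-bc-ca|\le 2\sqrt{2}$, proved by noting that $(-1+bc)$ and $a(b+c)$ are at an obtuse angle so their sum has squared modulus at most $|{-1}+bc|^2+|b+c|^2\le 4+4$. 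Since each non-corner entry $a_{\mathbf b}$ has exactly one or two of the $\zeta_j$ conjugated, the lemma (or its conjugate) gives $|a_{\mathbf b}|\le \sqrt{2}/2$, whence $|a_{\mathbf b}\pm 0.14|\le \sqrt{2}/2+0.14<0.86$.

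So your three-variable critical-point search and numerical grid are unnecessary; the missing ingredient is simply that whenever the signs of $\Im\zeta_j$ are mixed, $|P_H|$ drops to at most $\sqrt{2}/2$, and this is what kills the non-corner entries uniformly. Your sanity check at $(1,i,i)$ is consistent with this (indeed $|a_{011}|=\sqrt{2}/2$ there), but you did not recognize it as the global maximum for the non-corner case.
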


For the proof of this result we will need the following
lemma (which the current authors also used 
in \cite{MillerS:self-testing:2013}):
\begin{lemma}
\label{threevarlemma}
Let $a, b, c$ be unit-length complex numbers such that
$\Im ( a ) \geq 0$ and $\Im ( b ) , \Im (c ) \leq 0$.  Then,
\begin{eqnarray}
| 1 - ab - bc - ca | \leq \frac{\sqrt{2}}{2}.
\end{eqnarray}
\end{lemma}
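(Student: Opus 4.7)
The plan is to reduce the claim to a real-variable trigonometric optimization on a compact box, using a convenient algebraic rewriting. First, since $|a|=|b|=|c|=1$, taking conjugates gives
\begin{equation*}
\overline{1 - ab - bc - ca} = 1 - \frac{c+a+b}{abc} = \frac{abc - a - b - c}{abc},
\end{equation*}
so $|1 - ab - bc - ca| = |a+b+c-abc|$. The second form has the useful factorization $a + b + c - abc = (a+b) + c(1-ab)$, which pairs the variable $c$ with the portion of the expression that depends on $a$ and $b$.

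Next I would exploit an orthogonality identity for unit-modulus $a,b$: writing $a = e^{ix}$, $b=e^{iy}$ one computes
\begin{equation*}
a+b = 2\cos\!\tfrac{x-y}{2}\,e^{i(x+y)/2}, \qquad 1 - ab = -2i\,\sin\!\tfrac{x+y}{2}\,e^{i(x+y)/2},
\end{equation*}
so the vectors $a+b$ and $1-ab$ are perpendicular in $\mathbb{R}^2\cong\mathbb{C}$. Hence
\begin{equation*}
|(a+b)+c(1-ab)|^2 = |a+b|^2 + |1-ab|^2 + 2\,\Re\!\bigl[\overline{(a+b)}\,c\,(1-ab)\bigr],
\end{equation*}
where the first two terms form a bounded trigonometric sum in $(x,y)$ and the cross term factors as (a nonnegative magnitude depending on $x,y$) times $\sin(\gamma + \phi(x,y))$ for $c = e^{i\gamma}$.

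I would then parameterize $a = e^{i\alpha}$, $b = e^{-i\beta}$, $c = e^{-i\gamma}$ with $\alpha,\beta,\gamma \in [0,\pi]$, so that the sign hypotheses on the imaginary parts become the box constraint $(\alpha,\beta,\gamma) \in [0,\pi]^3$. For fixed $(\alpha,\beta)$ I would first maximize over $\gamma$: the constraint $\Im c \le 0$ forbids the otherwise-optimal rotation that would align $c(1-ab)$ with $(a+b)$, and instead restricts the cross term by a one-sided sine bound. The remaining two-variable maximization in $(\alpha,\beta)$ is then an elementary calculus problem, reducing to checking a handful of stationary configurations together with the faces $\alpha,\beta\in\{0,\pi\}$.

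The main obstacle will be precisely the interaction between the constraint $\Im c \le 0$ and the joint optimum in $(\alpha,\beta)$: without any constraint on $c$ the triangle inequality is essentially tight and does \emph{not} deliver the claimed constant, so one has to show that every interior critical point of the one-sided optimum lies strictly below the bound, and that the worst case must occur on the boundary of $[0,\pi]^3$. I expect this to come down to a first-order Lagrange argument showing that, under the sign restriction on $c$, the gradient with respect to $\gamma$ cannot vanish simultaneously with the $(\alpha,\beta)$-gradients except at isolated corner configurations that can be evaluated by hand.
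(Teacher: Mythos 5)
First, the constant in the lemma as stated, $\sqrt{2}/2$, is a slip: taking $a=1$, $b=c=-i$ gives $|1-ab-bc-ca|=|2+2i|=2\sqrt{2}$, so the correct (and sharp) bound is $2\sqrt{2}$. This is what the paper's own argument actually establishes ($|\cdot|^2\le 4+4=8$), and it is what the downstream application uses --- the $\sqrt{2}/2$ there appears only after multiplying through by the $\tfrac14$ prefactor in the GHZ polynomial $P_H$. Your plan aims at $\sqrt{2}/2$, and would necessarily fail once you reached the final optimization.

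Second, the rewriting via conjugation and the orthogonality of $a+b$ and $1-ab$ are both correct, but you have chosen the less convenient grouping. In $a+b+c-abc=(a+b)+c(1-ab)$ you pair $a$ with $b$, which carry \emph{opposite} constraints on their imaginary parts, and the cross term works out (with $a=e^{i\alpha}$, $b=e^{-i\beta}$) to
\[
2\,\Re\!\bigl[\overline{(a+b)}\,c\,(1-ab)\bigr]
 = 8\cos\tfrac{\alpha+\beta}{2}\,\sin\tfrac{\alpha-\beta}{2}\,\Im(c),
\]
which has no definite sign; indeed it is strictly positive at the extremal point $a=1$, $b=c=-i$, so it cannot be discarded. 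That is exactly why your plan collapses into a genuine three-variable optimization. The paper instead factors out the one variable whose sign constraint differs from the others, writing $-1+ab+bc+ca=(-1+bc)+a(b+c)$. Because $b$ and $c$ share $\Im\le0$, the vector $b+c$ is perpendicular to $-1+bc$ with a \emph{fixed} orientation (it is a nonnegative multiple of $i(-1+bc)$); multiplying by $a$ with $\Im a\ge0$ then rotates $a(b+c)$ through an angle in $[0,\pi]$, so the angle between $a(b+c)$ and $-1+bc$ is always right or obtuse. The cross term is thus automatically nonpositive, and Pythagoras gives $|(-1+bc)+a(b+c)|^2\le|{-1+bc}|^2+|b+c|^2\le 4+4=8$ with no optimization at all.
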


\begin{proof}
We have
\begin{eqnarray}
- 1 + ab + bc + ca & = & (-1 + bc ) + a (b + c ).
\end{eqnarray}
The complex number $(b+ c)$ lies at an angle of
$\pi/2$ (in the counterclockwise direction) from $(-1 + bc )$.
Since $a$ has nonnegative imaginary part, the angle formed
by $a(b+c)$ and $(-1 + bc)$ must be an obtuse or a right angle.
Therefore,
\begin{eqnarray}
\left| (-1 + bc ) + a ( b + c ) \right|^2 & \leq &
\left| -1 + bc \right|^2 + \left| a ( b + c ) \right|^2 \\
 & \leq&
4 + 4 \\
& = & 8.
\end{eqnarray}
The desired result follows.
\end{proof}

\begin{proof}[of Proposition~\ref{GHZprop}]
We proceed from Definition~\ref{trustcoeffdef}.  Let $N$
be the reverse-diagonal matrix
\begin{eqnarray}
N & = & \left[ \begin{array}{cccccccc}
&&&&&&& 1 \\
&&&&&&  1 \\
&&&&& -1 \\
&&&& -1 \\
&&& -1 \\
&& -1 \\
& 1 \\
1
\end{array} \right].
\end{eqnarray}
Clearly, $N$ anticommutes with $\sigma_x \otimes \mathbb{I}
\otimes \ldots \otimes \mathbb{I}$.

Let $\zeta_1, \zeta_2, \zeta_3$ be unit-length complex numbers
with nonnegative imaginary part, and let $M$ be the operator
given by (\ref{Mfromzetas1})--(\ref{Mfromzetas2}).  We wish
to show that the operator norm of $M - (0.14) N$ is bounded
by $\mathfrak{q}_H - 0.14 = 0.86$.  

Note that
\begin{eqnarray}
\left| \frac{1}{4}  \left( 1 - \zeta_1 \zeta_2
- \zeta_2 \zeta_3 - \zeta_1 \zeta_3 \right) - 0.14 \right| & = &
\left| 0.11 - \frac{1}{4} \left(  \zeta_1 \zeta_2
+ \zeta_2 \zeta_3 + \zeta_1 \zeta_3 \right) \right| 
\\ & \leq & 0.11 + 0.75 \\
& = & 0.86.
\end{eqnarray}
Also, by applying Lemma~\ref{threevarlemma},
\begin{eqnarray}
\left| \frac{1}{4}  \left( 1 - \zeta_1 \overline{\zeta_2}
- \overline{\zeta_2 } \overline{ \zeta_3}  - 
\zeta_1  \overline{\zeta_3 } \right) + 0.14 \right| & \leq &
\left| \frac{1}{4}  \left( 1 - \zeta_1 \overline{\zeta_2}
- \overline{\zeta_2 } \overline{ \zeta_3}  - 
\zeta_1  \overline{\zeta_3 } \right) \right| + 0.14 \\
& \leq & \frac{\sqrt{2}}{2} + 0.14 \\
& \leq & 0.86.
\end{eqnarray}
Applying similar arguments shows that every reverse-diagonal
entry of $(M - 0.14 \cdot N)$ has absolute value bounded by $0.86$.  This
completes the proof.
\end{proof}

\begin{remark}
\label{GHZrem}
By the above result and Remark~\ref{trustcoeffrem}, we have the following.
If $\eta$ is a positive real
smaller than $0.0154$ ($ = 0.11 \cdot 0.14$) and if $q > 0$
is sufficiently small, then executing Protocol R with the GHZ game
yields a positive linear rate of entropy.
\end{remark}

\subsection{Completeness}
\label{completenesssubsec}

Let $D$ be
an $n$-component binary quantum device.  For any $j \geq 1$, we will
use the expressions $I_j$ and $Y_j$ to
denote the input strings and output strings (each in $\{ 0, 1 \}^n$)
for $D$ from the $j$th iteration.

\begin{definition}
\label{noisemodeldef}
Let $G$ be a strong self-test.
For each input string
$i = (i^1, \ldots, i^n ) \in \{ 0, 1 \}^n$, the unique optimal
strategy for $G$ (see section~\ref{gamessec}) determines
a distribution on output strings $y \in \{ 0, 1 \}^n$
which we denote by $\{ p_i^y \mid
y \in \{ 0, 1 \}^n \}$.  We will say that $D$ has \textbf{noise level $\beta$}
(for the game $G$)
if, for any  $k \geq 1$, and  $i_1, \ldots, i_k, y_1, \ldots, y_{k-1} \in \{ 0, 1 \}^n$ such that
\begin{eqnarray}
\mathbf{P} ( (Y_1, \ldots, Y_{k-1} ) = (y_1, \ldots, y_{k-1}) \mid (I_1 , \ldots
, I_{k-1} ) = (i_1 , \ldots, i_{k-1}) ) & > & 0,
\end{eqnarray}
the conditional distribution
\begin{eqnarray}
\{ \mathbf{P} ( (Y_1, \ldots, Y_{k-1} ) = (y_1, \ldots, y_k)
\wedge Y_k = y \mid (I_1 , \ldots
, I_k ) = (i_1 , \ldots, i_k) )\}_y
\end{eqnarray}
is within statistical distance $(2 \beta)$ from $\{ p_{i_k}^y\}_y$.
\end{definition}
Note that an easy argument shows that a device with noise
level $\beta$ must achieve an expected score of at least $\mathbf{w}_G
- \beta$.

We now discuss completeness.
We will make use of a refined Azuma-Hoeffding inequality~\cite{AH}.
\begin{lemma}\label{lm:AZ} Suppose that $S_1, S_2, ..., S_N$ is a Martingale
with 
\[ |S_{i+1}-S_i|\le 1,\]
and 
\[\mathrm{Var}\left[S_{i+1}-S_i \ |\ S_1,..., S_i\right]\le w,\]
for all $i$, $1\le i\le N-1$. Then for any $\epsilon\in(0,1)$,
\begin{equation}\label{eqn:AZ}
\Prob\left[ S_N \ge \epsilon{w} N \right] \le \exp\left(-\epsilon^2\frac{{w}}{2}N\left(1-\frac{1-{w}}{3}\epsilon\right)\right).
\end{equation}
In particular if $\epsilon\le1$, we have
\begin{equation}\label{eqn:AZ_use}
\Prob\left[ S_N \ge \epsilon{w} N \right] \le \exp\left(-\epsilon^2\frac{{w}}{3}N\right).
\end{equation}

\end{lemma}

\begin{proposition}
\label{completenessprop}
Suppose that the device in Protocol $R$ has noise level
$\eta' < \eta$.  Then the probability of aborting is at most
$\exp ( - (\eta - \eta')^2 q N / 3 )$.
\end{proposition}

\begin{proof}
Let $I_1, \ldots, I_N$ and $Y_1, \ldots, Y_N$ be random
variables containing the inputs and outputs for Protocol $R$.
Let $Z_i$ be equal to $1$ if the game is won on the
$i$th round and $0$ otherwise.
Let
\begin{eqnarray}
z_i = \mathbf{E} [ Z_i \mid I_1, \ldots, I_{i-1}, Y_1, \ldots, Y_{i-1}].
\end{eqnarray}
By definition, Protocol $R$~(Fig.~\ref{protocolrtable}) aborts when
\begin{equation}\label{eqn:abort_event}
\sum_i g_i(1-Z_i) \ge (1-\mathbf{w}_G+\eta) qN.
\end{equation}
By assumption,
\begin{equation}
\sum_i  \left(  \mathbf{w}_G - z_i \right) \le \eta' N.
 \end{equation}
Let 
\begin{eqnarray}
R_i & = & \sum_{k=1}^i g_k(1-Z_k)  -q \sum_{k=1}^i (1-z_k).
\end{eqnarray}
Then $R_1, R_2, \ldots$ is a Martingale with
\begin{equation}
\mathrm{Var}\left[R_i-R_{i-1} \ |\ R_1, ..., R_{i-1}\right] = q (1-z_i) [1-q(1-z_i)]  \le q,
\end{equation}
thus (\ref{eqn:abort_event}) implies that
\begin{equation}
\sum_i g_i(1-Z_i)  -q \sum_i (1-z_i) \ge \eta qN - q\sum_i (\mathbf{w}_G-z_i) \ge (\eta-\eta') qN.
\end{equation}
Thus by Corollary~\ref{co:w}, the probability of aborting is $\le \exp(-(\eta-\eta')^2qN/3)$.
\end{proof}

\def\tr{{\rm tr}}
\section{Unbounded Expansion}\label{sec:unbounded}
In this section, we prove a general result, the Composition Lemma (Lemma~\ref{lm:composition} below),
that implies Corollary~\ref{co:unbounded} straightforwardly.
This general result implies that known untrusted-device randomness
expansion protocols,
including ours, can be composed sequentially with additive errors, 
even if only two  devices are used. The proof will be short, but it is important to define the 
error parameters appropriately. 
We us a high-level framework for rigorously reasoning about these protocols,
following~\cite{CSW14}.
We only sketch the necessary elements and refer interested readers to~\cite{CSW14}
for a more comprehensive description.

Since we are cross-feeding inputs and outputs between
devices, we will
use the same syntax for both input and output states.
A {\bf protocol state space} $\H$ is a three-part Hilbert space
\begin{equation}\label{eqn:protocol_space}
\H= C \otimes  D\otimes E,
\end{equation}
where $C$, $D$, $E$ are referred to as the {\bf classical}, {\bf device},
and {\bf adversary} subsystems, respectively.\footnote{If
$D$ is a quantum device, then by a small
abuse of notation, let us also use the letter $D$ to denote the quantum
system inside $D$.}
We also represent a protocol space by
the triple $(C, D, E)$.  

We call a {\em subnormalized}  classical-quantum-quantum state $\rho$ over $\H$
a {\bf protocol state}.
Those states are the accepting (or non-aborting) portion of the normalized states
in our protocols.  Denote by $\hat\rho=\rho/\Tr(\rho)$ the corresponding
normalized state. Correspondingly, we allow quantum operations to be trace-non-increasing
with the understanding that the missing trace (from the unit) corresponds to rejecting (or aborting).

We call a protocol state $\rho_{CDE}$ {\bf device-uniform}, {\bf adversary-uniform}, or {\bf global-uniform}
if in $\hat\rho$, $C$ is uniform with respect to $D$, or $E$, or $DE$, respectively. 
For any $\epsilon\in[0,2]$, $\rho$ is said to be {\bf $\epsilon$-device-uniform}
if there exists a subnormalized device-uniform state $\tilde\rho_D$ within $\epsilon$ trace-distance to
$\rho$.
Similarly define $\epsilon$-adversary-uniform and $\epsilon$-global-uniform.

A {\bf strong untrusted-device (UD) extractor} $\Pi$ 
is a 
procedure which takes as input a classical register $X$, a device $D$,
and a quantum system $E$, then performs $X$-controlled operations
on $D$ and $E$, produces a classical output register $Y$,
and then ``aborts'' or ``succeeds.''   For our discussion in this paper,
we allow the steps in the procedure to include both
classical interaction with the device, and arbitrary
device-adversary operations (i.e., quantum operations
on the composite system $DE$).
The protocol $\Pi$
maps protocol states over the input protocol space $(X, D, E)$
 to those over the output protocol space
$(Y, D, XE)$ (the success states).  An {\bf implementation}
of the extractor $\Pi$ is a specification of the initial state
of $(X, D, E)$, and the measurements performed by the device,
and the operations used in any device-adversary interactions.

If $\Pi$ is a strong UD extractor which involves no device-adversary
interactions, then an {\bf ideal implementation} for $\Pi$ is one in
which the device is
such that it has the same conditional input-output distribution on each use.
If an ideal implementation has been specified, we then use the term \textbf{noise level} in
the same sense as in Definition~\ref{noisemodeldef}: an implementation
for $\Pi$ has noise level $\eta$ if
at each use, the output distribution of the device $D$ on any transcript
and any input
is within statistical distance $\eta$ from that of the ideal implementation.
The extractor $\Pi$ has {\bf completeness error $\epsilon_c$} tolerating
noise level $\eta$ if for any 
implementation having noise level $\eta$, the success probability
of $\Pi$ is at least $1 - \epsilon_c$.  

Let $\Pi$ is a strong UD extractor which has no
device-adversary interactions.  We call $\epsilon_s$ a {\bf soundness error} of $\Pi$ on a set $\mathcal{S}$ of protocol states
if for any $\rho\in \mathcal{S}$ and any compatible implementation, $\Pi(\rho)$ is $\epsilon_s$-adversary uniform.  
We say that on $\mathcal{S}$, $\Pi$ has a {\bf adjustment completeness error } $\hat\epsilon_c$ tolerating a noise level $\eta$,
if there exists an ideal implementation, 
such that for all {\em normalized} states $\rho\in\mathcal{S}$ and all implementations with noise level $\eta$, 
the protocol's final state is within $\hat\epsilon_c$ trace distance to a {\em normalized} adversary-uniform state.
This change of completeness error is not substantial: $\hat\epsilon_c\le 2(\epsilon_c+\epsilon_s)$,
and $\epsilon_s\le \hat\epsilon_c$.

The Equivalence Lemma of Chung, Shi and Wu~\cite{CSW14} states the following.
\begin{theorem}[The Equivalence Lemma~\cite{CSW14}]\label{lm:EL}
Let $\Pi$ be a strong UD extractor that has no device-adversary interactions. Then on
the set of global-uniform inputs and the set of device-uniform inputs,
$\Pi$ has the same soundness error, adjustment completeness error, and noise tolerating level.
\end{theorem}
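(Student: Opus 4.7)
The plan is to realize any device-uniform protocol state as the image, under an adversary-side classically-controlled channel, of a global-uniform protocol state, and then to commute this channel past $\Pi$ using the no-device-adversary-interaction hypothesis.

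Suppose $\rho_{CDE}$ is device-uniform, so that $\rho_{CD}=(\mathbb{I}_C/|C|)\otimes \rho_D$. Writing
\[ \rho_{CDE} = \sum_c \tfrac{1}{|C|}\, |c\rangle\langle c|_C \otimes \rho_D \otimes \sigma_E^{(c)} \]
for subnormalized conditional states $\sigma_E^{(c)}$ on $E$, I would take $\tau_{E'}$ to be a purification of the marginal $\bar\sigma_E = \sum_c \sigma_E^{(c)}/|C|$ on a sufficiently large ancilla $E'$, and use Schr\"odinger--HJW to produce trace-nonincreasing maps $\mathcal{F}_c\colon E'\to E$ with $\mathcal{F}_c(\tau_{E'})=\sigma_E^{(c)}$. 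Set
\[ \rho'_{CDE'} := (\mathbb{I}_C/|C|)\otimes \rho_D\otimes \tau_{E'}, \qquad \mathcal{F}_{CE'\to CE} := \sum_c |c\rangle\langle c|_C\otimes \mathcal{F}_c. \]
Then $\rho'$ is global-uniform, $\mathcal{F}$ is a $C$-preserving channel acting only on the adversary subsystem, and $\mathcal{F}(\rho'_{CDE'}) = \rho_{CDE}$.

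Because $\Pi$ has no device-adversary interactions, it factors as $\Pi = \Pi_0\otimes\mathbb{I}_E$, where $\Pi_0$ is a $C$-controlled operation on $D$ that additionally copies $C$ into the output register $X$. Consequently $\Pi$ and $\mathcal{F}$ commute past each other in the concrete sense
\[ \Pi(\rho_{CDE}) \;=\; \bigl(\mathbb{I}_{YD}\otimes \mathcal{F}_{XE'\to XE}\bigr)\bigl(\Pi(\rho'_{CDE'})\bigr), \]
the control register migrating from the input $C$ to its copy $X$. Monotonicity of trace distance under the CPTP map $\mathbb{I}\otimes\mathcal{F}$, combined with the observation that an adversary-uniform state $(\mathbb{I}_Y/|Y|)\otimes \omega_{XE'}$ is sent to the adversary-uniform state $(\mathbb{I}_Y/|Y|)\otimes \mathcal{F}(\omega_{XE'})$, transfers any $\epsilon_s$-closeness of $\Pi(\rho'_{CDE'})$ to adversary-uniformity verbatim to $\Pi(\rho_{CDE})$. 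The reverse direction is immediate since every global-uniform state is itself device-uniform. The same commutation argument, applied simultaneously to an ideal implementation and to any $\eta$-deviating implementation, yields equality of the adjustment completeness error and of the tolerated noise level; the deviation is measured on the $D$-interaction side and is unaffected by the purely adversary-side $\mathcal{F}$.

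The one nontrivial step is the first, namely producing a single ancilla state $\tau_{E'}$ together with a family $\{\mathcal{F}_c\}$ that simultaneously prepares every conditional adversary state $\sigma_E^{(c)}$. This is essentially a conditional form of Schr\"odinger--HJW; care is needed to choose $E'$ large enough to purify all the $\sigma_E^{(c)}$ jointly and to track subnormalization carefully so that the accepting/aborting bookkeeping in the definitions of soundness and adjustment completeness is respected. Once the representation is in hand, the rest of the argument is formal: commutation with $\Pi$, contractivity of the trace distance, and closure of adversary-uniform states under adversary-side channels.
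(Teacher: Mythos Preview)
The paper does not prove this theorem; it is quoted verbatim from \cite{CSW14} and used as a black box in the proof of the Composition Lemma. There is thus no in-paper proof to compare against.

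That said, your overall strategy---represent any device-uniform state as the image of a global-uniform state under a $C$-controlled adversary-side channel, commute this channel past $\Pi$ using the no-device-adversary-interaction hypothesis, then invoke contractivity of trace distance---is indeed the right shape of argument. However, the decomposition you write at the outset is incorrect. Device-uniformity asserts only that the marginal $\hat\rho_{CD} = (\mathbb{I}_C/|C|)\otimes\hat\rho_D$ factors; it does \emph{not} say that each conditional state on $DE$ is product. In general
\[
\rho_{CDE} \;=\; \sum_c \tfrac{1}{|C|}\, |c\rangle\langle c|_C \otimes \rho_{DE}^{(c)},
\]
where the $\rho_{DE}^{(c)}$ may be arbitrarily entangled across $D$ and $E$, subject only to $\Tr_E\,\rho_{DE}^{(c)} = \rho_D$ for every $c$. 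A channel $\mathcal{F}_c$ acting on $E'$ alone cannot create such $D$--$E$ correlations starting from a product $\rho_D\otimes\tau_{E'}$, so your HJW step on the $E$-marginal does not yield the required pre-image.

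The repair is to purify the other marginal. Choose $E'$ large enough to carry a purification $|\psi\rangle_{DE'}$ of $\rho_D$. Since each $\rho_{DE}^{(c)}$ is an extension of $\rho_D$, the standard Uhlmann/Stinespring argument gives channels $\mathcal{N}_c\colon E'\to E$ with $(\mathbb{I}_D\otimes\mathcal{N}_c)\bigl(|\psi\rangle\langle\psi|\bigr) = \rho_{DE}^{(c)}$. Then $\rho'_{CDE'} := (\mathbb{I}_C/|C|)\otimes |\psi\rangle\langle\psi|_{DE'}$ is global-uniform, and the $C$-controlled adversary-side map $\mathcal{F} = \sum_c |c\rangle\langle c|_C\otimes\mathbb{I}_D\otimes\mathcal{N}_c$ sends $\rho'$ to $\rho$. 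From this point your commutation and monotonicity argument goes through as written.
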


We now formally define the composition of protocols using two devices.
(The earliest mention of this approach that we know of was in \cite{pam:2010}.) \begin{definition} Let $D_0$ and $D_1$ be two untrusted quantum devices and $T\ge1$ be an integer. 
A {\bf cross-feeding protocol} $\Sigma$ using $D_0$ and $D_1$ consists of 
a sequence of strong UD extractors $\Sigma_i$, $i=0,1,..., T$,
such that $\Sigma_i$ uses $D_{i\mod 2}$ and
the output of $\Sigma_i$ is used as the input to $\Sigma_{i+1}$.
The input to $\Sigma$ is the input to $\Sigma_0$, and the output is that of $\Sigma_{T}$.

Furthermore, $\Pi$ is said to use {\bf restorable} devices if, for each $i$, $1\le i\le T$,
between $\Sigma_{i-1}$ and $\Sigma_{i+1}$, there is a device-adversary variable operation
$A_i$ on $X_iD_{i+1\mod 2}E$,
controlled by $X_i$.
\end{definition}

\begin{figure*}
\begin{center}
\begin{tikzpicture}[shorten >=1pt,node distance=2cm,thick,scale=1, every node/.style={scale=1}]
\node [draw,circle]  (global_state) {};
\node [draw,rectangle] (Sigma0) [above right of = global_state] {$\Sigma_0$};
\node [draw,rectangle] (A0) [ below right of = global_state] {$A_0$};
\node [draw,rectangle] (Sigma1) [right  of = Sigma0] {$\Sigma_1$};
\node [draw,rectangle] (A1) [right  of = A0] {$A_1$};
\node [draw,rectangle] (Sigma2) [right  of = Sigma1] {$\Sigma_2$};
\node [draw,rectangle] (A2) [right  of = A1] {$A_2$};
\node [] (Sigma3) [right  of = Sigma2] {$\cdots$};
\node [] (A3) [right  of = A2] {$\cdots$};
\node [draw,rectangle] (Sigma4) [right  of = Sigma3] {$\Sigma_{T-1}$};
\node [draw,rectangle] (A4) [right  of = A3] {$A_{T-1}$};
\node [draw,rectangle] (SigmaT) [right  of = Sigma4] {$\Sigma_T$};
\node [draw,rectangle] (AT) [right  of = A4] {$A_T$};
\node  (SigmaE) [right  of = SigmaT] {};
\node (AE) [right  of = AT] {};

\path[->]
(global_state) edge[out=90, in=180] node [left] {$XD_0$}(Sigma0)
(Sigma0) edge node [above] {$X_1$} (Sigma1)
(Sigma1) edge node [above] {$X_2$} (Sigma2)
(Sigma2) edge node [above] {$X_3$} (Sigma3)
(Sigma3) edge node [above] {$X_{T-1}$} (Sigma4)
(Sigma4) edge node [above] {$X_T$} (SigmaT)
(SigmaT) edge node [above] {$Y$} (SigmaE)
(global_state) edge[out=270, in=180] node [left] {$XD_1E$}(A0)
(A0) edge node [below] {$E^1$} (A1)
(A1) edge node [below] {$E^2$} (A2)
(A2) edge node [below] {$E^3$} (A3)
(A3) edge node [below] {$E^{T-1}$} (A4)
(A4) edge node [below] {$E^{T}$} (AT)
(AT) edge node [below] {$E^{T+1}$} (AE)
(Sigma0) edge node [right, pos=.10] {$X_1D_0^1$} (A1)
(A0) edge node [right, pos=.10] {$D_1^1$} (Sigma1)
(Sigma1) edge node [right, pos=.10] {$X_2D_1^2$} (A2)
(A1) edge node [right, pos=.10] {$D_0^2$} (Sigma2)
(Sigma2) edge node [right, pos=.10] {$X_3D_2^3$} (A3)
(A2) edge node [right, pos=.10] {$D_1^3$} (Sigma3)
(Sigma3) edge node [right, pos=.10] {} (A4)
(A3) edge node [right, pos=.10] {} (Sigma4)
(Sigma4) edge node [right, pos=.10] {$X_TD_{1-b}^T$} (AT)
(A4) edge node [right, pos=.10] {$D_{b}^T$} (SigmaT)
(AT) edge [out=90, in=180] node [above, pos=.75] {$D_{1-b}^{T+1}$} ([yshift=.75cm]AE.north)
(SigmaT) edge [out=270, in=180] node [above, pos=.75] {$D_{b}^{T+1}$} ([yshift=-.75cm]SigmaE.south)

;
\end{tikzpicture}
\end{center}
\caption{The tensor network representation of a cross-feeding protocol.
Superscripts indicate that a (device or adversary) subsystem may change
after an operation. Each $E^t$ includes a copy of $X_0,...,X_{t-1}$ 
to be consistent with the definition of strong UD extractors.
The subscript $b=T\mod 2$.}\label{fig:cross_feeding}
\end{figure*}

A cross-feeding protocol is illustrated in Fig.~(\ref{fig:cross_feeding}).
Allowing the device-adversary $A_i$ operations reduces the technical challenges for practical implementations,
since in an honest implementation, $A_i$ can be used to replenish the consumed entanglement.
While our wording of ``two'' devices was inspired by a possible implementation of using two physical devices,
the inactive device can certainly be replaced by a different physical device, since such replacement is one possible $A_i$ operation.
The essence of our two device protocol lies in how communication is restricted: besides forbidding communication
between an active device and its external world --- which is already required for a single-device protocol --- the additional
constraint  is that after an active device finishes its work, no information is allowed to travel from the device
to the inactive device before the latter becomes active. A single device would not satisfy this latter requirement.

\begin{lemma}[Composition Lemma] \label{lm:composition}Let $\Sigma$ be a cross-feeding protocol defined above.
Assume that for each $i$, $0\le i\le T$,
$\Sigma_i$ has a soundness error $\epsilon_{s,i}$, 
and an adjustment completeness error $\hat\epsilon_{c,i}$ tolerating a noise level $\eta$ (set to be the same for all $\Sigma_i$),
with respect to device-uniform inputs.
Let $\epsilon_s:=\sum_{i=0}^T\epsilon_{s,i}$ and $\hat\epsilon_c:=\sum_{i=0}^{T}\hat \epsilon_{c,i}$.
Then $\Sigma$ on states uniform to $D_0$
has a soundness error $\epsilon_s$,
and an adjustment completeness error $\hat\epsilon_c$ tolerating an $\eta$ level of noise.

In particular, the above statement holds if each $\Sigma_i$ has no device-adversary interaction
and the parameters $\epsilon_s, \hat\epsilon_c, \eta$ are valid on global-uniform inputs.
\end{lemma}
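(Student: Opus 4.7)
My plan is to induct on the round index $i$, maintaining the invariant that immediately before $\Sigma_i$ runs, the current state lies within trace distance $\delta_i := \sum_{j<i}\epsilon_{s,j}$ of a subnormalized state that is \emph{exactly} device-uniform with respect to $D_{i \bmod 2}$---the device that $\Sigma_i$ is about to use. The base case $i=0$ is immediate, since the initial input is uniform to $D_0$ by hypothesis. For the inductive step I apply $\Sigma_i$: monotonicity of the trace norm under the CP (trace-non-increasing) map $\Sigma_i$, combined with the triangle inequality and the assumed device-uniform soundness error $\epsilon_{s,i}$, places the output within $(\delta_i + \epsilon_{s,i})$ of a subnormalized \emph{adversary}-uniform state. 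Adversary-uniformity for $\Sigma_i$ is exactly uniformity with respect to every subsystem other than the active device $D_{i\bmod 2}$, and in particular it implies $(\delta_i+\epsilon_{s,i})$-uniformity of $X_{i+1}$ with respect to $D_{(i+1)\bmod 2}$, the device that $\Sigma_{i+1}$ will use.

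The essential structural observation is that the intermediate operation $A_{i+1}$ acts only on $D_{(i+2)\bmod 2}=D_{i\bmod 2}$ together with $E$, controlled by (but not modifying) the classical register $X_{i+1}$, and therefore does \emph{not} touch $D_{(i+1)\bmod 2}$. Hence the marginal on $(X_{i+1}, D_{(i+1)\bmod 2})$ is preserved by $A_{i+1}$, and approximate device-uniformity for $\Sigma_{i+1}$ survives the intermediate restoration step at no additional cost. This closes the induction with $\delta_{i+1}\le \delta_i+\epsilon_{s,i}$, and telescoping through $\Sigma_T$ yields $\epsilon_s = \sum_{i=0}^T \epsilon_{s,i}$, as required.

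For the completeness half I would run a parallel induction on the distance $\hat\delta_i$ from the actual intermediate state to a \emph{normalized} adversary-uniform state, having first fixed an ideal implementation of each $\Sigma_i$ and each $A_i$ and composed them into an ideal implementation of $\Sigma$. The key point is that the approximating normalized adversary-uniform state $\tau_i$ is, by the same argument as above, an exactly normalized device-uniform state for $\Sigma_{i+1}$, so feeding $\tau_i$ into any $\eta$-deviating $\tilde\Sigma_{i+1}$ lands within $\hat\epsilon_{c,i+1}$ of a normalized adversary-uniform state; monotonicity of the trace norm then carries the previous error $\hat\delta_i$ through without increase, yielding $\hat\delta_{i+1}\le \hat\delta_i + \hat\epsilon_{c,i+1}$ and in total $\hat\epsilon_c = \sum_{i=0}^T \hat\epsilon_{c,i}$. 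The noise level $\eta$ is inherited verbatim since every sub-protocol tolerates the same $\eta$. The ``in particular'' consequence at the end of the lemma then follows at once by applying the Equivalence Lemma (Theorem~\ref{lm:EL}) to each $\Sigma_i$ individually to upgrade its global-uniform guarantees into device-uniform ones, after which the main statement above applies.

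The main obstacle I anticipate is not in any single estimate but in making the role of the $A_i$ watertight: each $A_i$ is by definition a device--adversary interaction, so it is \emph{not} directly covered by the Equivalence Lemma and must be handled by hand. One needs to verify that $A_{i+1}$ leaves the relevant marginal $(X_{i+1}, D_{(i+1)\bmod 2})$ untouched despite being controlled by $X_{i+1}$ and despite potentially correlating $D_{i\bmod 2}$ with $E$. The controlled/tensor structure shown in Figure~\ref{fig:cross_feeding} makes this transparent schematically, but translating the picture into a clean subnormalized trace-distance calculation---and checking that the bookkeeping still works when the $A_i$ themselves are subject to $\eta$-level implementation noise---is where the care will be required.
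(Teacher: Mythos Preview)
Your proposal is correct and follows essentially the same approach as the paper: an induction on $i$ that alternates between adversary-uniformity of the output of $\Sigma_i$ and device-uniformity of the input to $\Sigma_{i+1}$, using exactly the two observations the paper isolates as Fact~\ref{fact:switch} (switching roles of the two devices, robust under the $A$-operations on the inactive device and $E$) and Fact~\ref{fact:add} (additive accumulation of soundness error), with the final clause handled by the Equivalence Lemma applied to each $\Sigma_i$. Your flagged worry about noise in the $A_i$ is a non-issue---the $A_i$ are adversarial device--environment operations absorbed into the implementation, so they enter only through Fact~\ref{fact:switch} and require no separate noise analysis.
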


The soundness proof uses the following two facts, both of which follow directly from the corresponding definitions.
\begin{fact}\label{fact:switch}Let  $b\in\{0,1\}$ and $\rho=\rho_{YD_b (D_{1-b}E)}$ be adversary-uniform.
Then $\rho$ as $\rho_{YD_{1-b}(D_bE)}$ is device-uniform for $D_{1-b}$.
This remains true for $(I\otimes A_{D_bE})\rho$ for any operation $A_{D_bE}$ on $D_bE$.
\end{fact}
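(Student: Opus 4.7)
The plan is to prove each of the two sentences of Fact~\ref{fact:switch} by directly unpacking the definitions of adversary-uniform and device-uniform; let $U_Y$ denote the maximally mixed state $(1/|Y|)\mathbb{I}_Y$.

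First, for the repartitioning claim, I will unfold the adversary-uniform hypothesis: viewing $\rho$ over the protocol space $(C,D,E') = (Y, D_b, D_{1-b}E)$, the definition of adversary-uniform reads $\hat\rho_{Y, D_{1-b}E} = U_Y \otimes \hat\rho_{D_{1-b}E}$. Taking the partial trace over the $E$ factor on both sides immediately yields
\begin{equation*}
\hat\rho_{Y, D_{1-b}} = U_Y \otimes \hat\rho_{D_{1-b}}.
\end{equation*}
When the same operator $\rho$ is re-viewed on the protocol space $(Y, D_{1-b}, D_b E)$, with $D_{1-b}$ shifted from the adversary side to the device side, this identity is by definition the device-uniform condition for $D_{1-b}$.

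Second, to handle the operation $A_{D_bE}$, set $\sigma = (I \otimes A_{D_bE})\rho$. Since $A_{D_bE}$ acts trivially on $Y$ and on $D_{1-b}$, partial trace over the output registers of $A$ commutes with $A$ in the sense that $\sigma_{YD_{1-b}} = \rho_{YD_{1-b}}$ (here we view $A$ either as a trace-preserving map, or dilate it into the adversary register $E$ and absorb the Stinespring ancilla into the new adversary subsystem $D_bE$ before tracing). After normalization, $\hat\sigma_{YD_{1-b}} = \hat\rho_{YD_{1-b}}$, which by the first step factors as $U_Y \otimes \hat\sigma_{D_{1-b}}$. Thus $\sigma$ is again device-uniform for $D_{1-b}$.

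The main obstacle is essentially nil here: the fact is bookkeeping about which subsystems are labeled ``device'' versus ``adversary'', and both parts reduce to elementary properties of partial trace once the definitions are unpacked. Its downstream role in the Composition Lemma will be to re-interpret the adversary-uniform output of $\Sigma_i$ (guaranteed by soundness), possibly after a device-adversary operation $A_i$, as a device-uniform input for $\Sigma_{i+1}$, so that the Equivalence Lemma can be applied at each successive stage of the cross-feeding protocol.
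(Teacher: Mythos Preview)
Your proof is correct and is precisely what the paper intends: it states that Fact~\ref{fact:switch} ``follow[s] directly from the corresponding definitions'' and gives no further argument, so your unpacking of adversary-uniform, partial-tracing down to $\hat\rho_{Y D_{1-b}}$, and invariance of that marginal under operations on $D_bE$ is exactly the expected verification. One small remark: the invariance step relies on $A_{D_bE}$ being trace-preserving (or, as you note, dilated and absorbed into the adversary register); in the paper's setting the $A_i$ are physical restore operations, so this reading is the intended one.
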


\begin{fact}\label{fact:add} If $\Pi$ is a strong UD extractor with an
$\epsilon$ soundness error, and  $\rho$ is $\delta$-device-uniform,
$\Pi(\rho)$ is $(\epsilon+\delta)$-adversary uniform.
\end{fact}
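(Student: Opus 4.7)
The plan is to chain together three ingredients: the definition of $\delta$-device-uniform, the assumed soundness of $\Pi$ on exactly device-uniform inputs, and the triangle inequality for the trace norm, with contractivity of the trace norm under $\Pi$ as a connecting step.

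First, I would unpack the hypothesis. Since $\rho$ is $\delta$-device-uniform, the definition yields a subnormalized device-uniform protocol state $\tilde\rho$ with $\|\rho-\tilde\rho\|_1\le \delta$. Next, I would apply the soundness hypothesis to this auxiliary state: $\tilde\rho$ lies in the set of device-uniform inputs on which $\Pi$ is assumed to have soundness error $\epsilon$, so (for any compatible implementation) $\Pi(\tilde\rho)$ is $\epsilon$-adversary-uniform. That in turn supplies a subnormalized adversary-uniform state $\sigma$ with $\|\Pi(\tilde\rho)-\sigma\|_1\le \epsilon$.

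The proof then closes with one line of triangle inequality,
\begin{equation}
\|\Pi(\rho)-\sigma\|_1 \;\le\; \|\Pi(\rho)-\Pi(\tilde\rho)\|_1 + \|\Pi(\tilde\rho)-\sigma\|_1 \;\le\; \delta+\epsilon,
\end{equation}
exhibiting a subnormalized adversary-uniform state within trace distance $\epsilon+\delta$ of $\Pi(\rho)$, which is exactly the claim that $\Pi(\rho)$ is $(\epsilon+\delta)$-adversary-uniform. The first summand uses that $\Pi$ is a completely positive, trace-non-increasing map on the ambient Hilbert space, hence contractive in trace norm.

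There is no real obstacle; the one point worth spelling out (so that the reader need not stop) is the contractivity bound $\|\Pi(\rho)-\Pi(\tilde\rho)\|_1\le \|\rho-\tilde\rho\|_1$ for trace-non-increasing CP maps. I would justify it in a single sentence by dilating $\Pi$ to a trace-preserving channel via an explicit accept/abort flag register and invoking the standard monotonicity of trace distance under CPTP maps followed by partial trace. The value of isolating this observation as a named fact is purely bookkeeping: it lets the soundness bound in the Composition Lemma be obtained by summing a geometric series of $(\epsilon_{s,i},\delta_i)$ pairs without re-deriving the same triangle-inequality argument at each stage.
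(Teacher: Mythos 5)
Your proposal is correct and is exactly the argument the paper has in mind; the paper states that this fact ``follows directly from the corresponding definitions,'' and what you have written out is precisely that chain — unpack $\delta$-device-uniformity to get $\tilde\rho$, apply the soundness hypothesis to $\tilde\rho$, and close with contractivity of the trace norm under the trace-non-increasing CP map $\Pi$ plus the triangle inequality. Your one-sentence justification of contractivity (dilate $\Pi$ to a CPTP map via an abort flag and invoke monotonicity) is the standard and correct way to handle the subnormalized/trace-non-increasing setting the paper works in.
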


\begin{proof}
The proof for the device-uniform case follows from 
a straightforward  inductive proof on the following two statements.
Denote by $\epsilon_s^i:=\sum_{j=0}^{i}\epsilon_{s,j}$, and
$\hat\epsilon_c^i:=\sum_{j=0}^{i}\hat\epsilon_{c,j}$.

\begin{itemize}
\item(Soundness) On any implementation and any initial
input uniform to $D_0$, for each $i$, $0\le i\le T$, the output of
$\Sigma_i$ is $\epsilon_s^i$-adversary-uniform.
\item(Completeness) Fix an ideal implementation for each $\Sigma_i$ to achieve the adjustment completeness error.
For any $\eta$-deviated implementation, any normalized initial input uniform to $D_0$,
any $i$, $0\le i\le T$, the output of $\Sigma_i$ is $\hat\epsilon_c^i$-close
to a normalized adversary-uniform state.
\end{itemize}

More specifically, for the soundness argument, the base case holds by applying Fact~\ref{fact:add}
to $\Sigma_0$ and the assumption that the input is uniform to $D_0$. For the inductive step,
by the inductive hypothesis  (assuming it holds for $i$, $0\le i\le T-1$) and Fact~\ref{fact:switch},
the input to $\Sigma_{i+1}$ is $\epsilon_s^i$-device uniform. By Fact~\ref{fact:add}, the output 
is thus $\epsilon_s^{i+1}$-adversary uniform. The proof for the completeness follows from the definition of
completeness and triangle inequality.

The global-uniform case follows by applying the Equivalence Lemma~\ref{lm:EL}.
\end{proof}

Corollary~\ref{co:unbounded} follows by using our robust protocol in Corollary~\ref{co:oneshotRand}. 

\commentout{
\begin{remark}[Simplifying the Coudron-Yuen proof without using the Equivalence Lemma] \label{rm:CY}
Coudron and Yuen~\cite{CY} proved that the sequential composition of the Reichardt-Unger-Vazirani (RUV for short) protocol~\cite{ruv:2013}
and the Vazirani-Vidick (VV for short) protocol~\cite{Vazirani:dice} over $4$ (multi-part) devices 
has a soundness and an adjustment completeness error $\exp(-k^c)$, where $k$ is the initial input length and $c$ is some positive constant
(with $0$-noise level.) One can prove this result through (the uniform-to-device case of) the Composition Lemma as follows.

Note that VV is a strong UD extractor that on a $k$-bit input, outputs $\exp(k^d)$ bits for some positive constant $d$,
and has $\exp(-k^c)$ errors (tolerating no noise). Also, RUV is a strong UD extractor that on $n$-bit input {\em uniform to the device}, output
some $n^{c'}$ bits for a positive $c'<1$, and has 
errors $n^{-\alpha}$ for some $\alpha>0$.
Thus the composition of RUV and VV forms a
strong UD extractor of approximately $n^{-\alpha}$ errors on $n$-bit uniform-to-device inputs 
and output $\exp(n^{\Omega(1)})$ bits.
We can apply the Composition Lemma on  two sets of devices, each of which implementing
a RUV-VV composition.
We arrive at their conclusion by seeing the their composition 
as running VV first on the initial input (thus getting the dominating errors of $\exp(-k^c)$), then iterating the cross-feeding RUV-VV composition. Note that the above argument only relies on our definitions and in particular does not
use the Equivalence Lemma as VV is only assumed to work under global-uniform input (where being ``global'' means excluding the
device for the preceding RUV).
\end{remark}
}

\section{Untrusted-device Quantum Key Distribution}\label{sec:qkd}
In this section, we shall first formally define what we mean by a key distribution protocol using
untrusted quantum devices. We then present Protocol $R_{\textrm{kd}}$, a natural adaptation of Protocol $R$ for 
untrusted-device quantum key distribution,
then we prove its correctness (Corollary~\ref{co:qkd}.) 

\subsection{Definitions}\label{subsec:qkdDefs}
  A {\bf min-entropy untrusted-device key distribution (ME-UD-KD) protocol} $\Pi_\textrm{kd}$ 
 is a communication protocol in the following form 
between two parties Alice and Bob who have access to distinct components
of an untrusted quantum device.
Before the protocol starts, they share a string that is uniformly random to the device.
They communicate through a public, but authenticated, channel. At each step,
both the message they send and the new input to their device components are
a deterministic function of the initial randomness, the messages received, and the previous
output of their device component. The protocol terminates with a public bit $S$, indicating
if the protocol succeeds or aborts, and Alice and Bob each have a private string: $A$ and $\tilde A$, respectively.

The protocol  is said to have a yield $M$ with a {\bf soundness error } $\epsilon_s$ 
if both the following conditions hold.
\begin{enumerate}
\item[(a)]\label{cond:qkd:min_ent}  the joint state $(S, A, E)$ is $\epsilon_s$-close to a mixture of
an aborting state and one where $A$ has $M$ extractable bits, and
\item[(b)] the joint state $(S, A, \tilde A)$ is $\epsilon_s$-close to a mixture of an aborting state and one where $A=\tilde A$.
\end{enumerate}
\YtalkingPoint{Above: changed the notation to be consistent with later writing; also clarified (b) to allow abort.}

The protocol is said to have a completeness error $\epsilon_c$ with respect to
a non-empty class of untrusted devices $\mathcal{U}_\textrm{honest}$, if for any device in this class,
the protocol aborts with probability $\le \epsilon_c$.

  If in the above definition, Condition (a) has ``$M$ extractable bits"
replaced by ``$M$ uniformly random bits'', then we call
the protocol simply an {\bf untrusted-device key distribution protocol}
with those parameters.

\subsection{The Protocol $R_{\textrm{kd}}$}
Protocol $R_{\textrm{kd}}$ is an adaptation of Protocol $R$ to the distributed setting and is described in Fig.~\ref{protocolrqkd}.
\begin{figure}
\begin{center}
\setlength{\fboxsep}{10pt}
\setlength{\fboxrule}{1pt}
\fbox{\parbox{\linewidth}{

\begin{wrapfigure}{r}{.5\textwidth}
\scalebox{0.8}{
\begin{tikzpicture}[shorten >=1pt,node distance=1.5cm,thick,scale=1, every node/.style={scale=1}]

\node [rectangle, minimum width=.5cm, minimum height = 1.5cm, draw,fill=gray] (brick) {};
\node [rectangle, minimum size=1.5cm,draw] (D1) [left of = brick] {$D_1$};
\node [rectangle, minimum size=1.5cm,draw] (D2) [right of = brick] {$D_2$};
\node [rectangle, minimum width=.5cm, minimum height = 1.5cm, draw,fill=gray] (brick2) [right of = D2] {};
\node [rectangle, minimum size=1.5cm,draw] (Dn) [right of = brick2] {$D_n$};
\node (Alice) [above of = D1,yshift=.5cm] {Alice};
\node (Bob) [above of = brick2,yshift=.5cm] {Bob};

\draw [rounded corners] ([xshift=-.25cm,yshift=.25cm] D1.north west) rectangle ([xshift=.25cm,yshift=-.25cm]D1.south east);
\draw [rounded corners] ([xshift=-.25cm,yshift=.25cm] D2.north west) rectangle ([xshift=.25cm,yshift=-.25cm]Dn.south east);

\draw [<->, thick] (Alice) -- (Bob);
\draw [<->, thick] (Alice) -- ([yshift=.25cm] D1.north);
\draw [<->, thick] (Bob) -- ([yshift=.25cm] brick2.north);


\end{tikzpicture}
}
\captionof*{figure}{A diagram of Protocol $R_{\textrm{kd}}$.}
\label{fig:Rqkd}
\end{wrapfigure}

\textit{Arguments:}
\begin{enumerate}[topsep=3pt]
\item[$G:$] An $n$-player nonlocal game that is a {\bf strong self-test} (Definition~\ref{def:strong}). Assume without loss of generality that $0^n$ is an input on which the winning probability is no less than the average in the optimal quantum strategy.
\item[$D:$] An untrusted device (with $n$ components) that can play $G$ repeatedly and cannot receive any additional information. 
Alice interacts with the first component while Bob interacts the rest of the device. No communication is allowed
among the components during Step 1-4 of the protocol. All random bits chosen by Alice and Bob together are assumed
to be perfectly random to $D$.
\item[$N:$] a positive integer (the {\bf output length}.)
\item[$\lambda:$]   A real $\in (0, {\mathbf w}_G-1/2)$. ($1/2-\lambda$ is the {\bf key error fraction}.)
\item[$\eta:$] A real $\in(0,\frac{1}{2})$. (The {\bf error tolerance}.)
\item[$q:$] A real $\in(0,1)$.  (The {\bf test probability}.)

\end{enumerate} 

\medskip

\textit{Protocol:}

\begin{enumerate}[topsep=3pt]
\item Repeat the following procedure for $N$ times. Alice and Bob will each produce a raw key,
stored as an $N$-bit binary string $A$ and $B$, respectively.
\begin{enumerate}
\item Alice and Bob choose a bit $g \in \{ 0, 1 \}$ 
according to a biased $(1 - q, q )$ distribution.


\item If $g = 1$ (``game round''), then Alice and Bob choose an input string at random from $\{ 0, 1 \}^n$ according
to the probability distribution specified by $G$. 
They give their part(s) of $D$ the corresponding input bit, exchange their output bits
and record a ``P'' (pass)  or an ``F'' (fail) 
according to the rules of the game $G$, and store this bit (``P'' as $1$ and ``F'' as $0$)
as their raw key bit for this round. 
\YtalkingPoint{Sorry that I revert this as it'd be more convenient for proving Cor. 1.9; otherwise I'll have to apply IR on a variable length.}

\item If $g = 0$ (``generation round''), then the input string
$00 \ldots 0$ is given to the device.
Alice sets the raw key bit in $A$ for this round to be her output bit.
Bob sets his raw key bit in $B$ to be the unique bit that when XOR'ed with the output bit(s) of his device component(s)
would constitute a win for the game. That is, their bits are the same if and only if they win the game.

\end{enumerate}


\item  If the total number of failures is more than $(1-{\mathbf w}_G+\eta) q N$,
the protocol \textbf{aborts}.  

\item If not yet aborted, they run an Efficient Information Reconciliation (such as Protocol~EIR in Fig.~\ref{fig:eir})
on $A$ and $B$, the parameters $\lambda$ 
and $\epsilon=\exp(-qN)$.  Alice's final output is $A$ (unchanged),
and Bob's final output $\tilde A$ is his output from the information reconciliation protocol.
\talkingPoint{I'm not
sure exactly what this means -- could it be made clearer? -Carl}
\YtalkingPoint{Change how the raw keys are stored for game rounds to be the same as the test rounds, and simplify the language
for the protocol. Also now EIR doesn't abort.}
\end{enumerate}
}}
\caption{Protocol $R_{\textrm{kd}}$}
\label{protocolrqkd}
\end{center}
\end{figure}
There are two main steps in the proof for Corollary~\ref{co:qkd}. The first is to show that 
for an appropriate range of the parameters, 
Protocol $R$ has a soundness and completeness error of $\exp(-\Omega(qN))$ with the ideal
state being that $A$ and $B$ differ in at most a $(1/2-\lambda)$ fraction, for a constant $\lambda$.
The second step is to construct the Efficient
Information Reconciliation Protocol that works on the ideal state
and for Bob to correct the differences with some small failure probability. 
We present those two steps in two separate subsections, which are followed by the proof for the Corollary.

\subsection{Error Rate}
The completeness error is straightforward, so our focus will be on the soundness error.

Our result applies to a broader class of games than the strong self-tests.
\begin{definition} \label{def:fselftest} Let $f:[0,1]\to [0,1]$ be a strictly increasing concave function
with $f(0)=0$.  
A game $G$ is said to be {\bf $f$-self-testing in probability} if there exists an input $x_0$ such that the following holds:
If for any $\theta\in(0,1)$ and any quantum strategy that wins with probability $(1-\theta) {\mathbf w}_G$, the game wins
on $x_0$ with probability $\ge (1-f(\theta)){\mathbf w}_G$.
\end{definition}

\commentout{  The following theorem uses the notion of \textit{second-order
robust self-test} (which in this paper we call \textit{strong self-test})
from the paper \cite{MillerS:self-testing:2013}.}
\begin{theorem}\label{thm:MS_selftest}
Let $G$ be a strong self-test.
Then there exists a constant $C>0$ such that
$G$ is $C \sqrt{\theta}$-self-testing in probability.
\end{theorem}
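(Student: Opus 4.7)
The plan is to translate the winning-probability hypothesis into a score-deficit bound, decompose the strategy into a convex mixture of qubit substrategies via canonical form, and then apply the strong self-test property componentwise. Since the overall score of any strategy equals $2w_{\mathrm{strat}}-1$, a strategy winning with probability $(1-\theta)\mathbf{w}_G$ has score $\mathfrak{q}_G - 2\theta\mathbf{w}_G$; write $\epsilon := 2\theta\mathbf{w}_G$ for this deficit. By Proposition~\ref{canonicalsimprop} I may assume the strategy is in canonical form on $(\mathbb{C}^2 \otimes W_1) \otimes \cdots \otimes (\mathbb{C}^2 \otimes W_n)$, with all measurements block-diagonal in the $(W_1 \otimes \cdots \otimes W_n)$-basis. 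Expanding the state as $\psi = \sum_{\mathbf{k}} \sqrt{w_{\mathbf{k}}}\, \tilde\psi_{\mathbf{k}} \otimes |k_1, \ldots, k_n\rangle$, with each $\tilde\psi_{\mathbf{k}} \in (\mathbb{C}^2)^{\otimes n}$ normalized and $\sum_\mathbf{k} w_\mathbf{k}=1$, block-diagonality forces both the overall score and each per-input winning probability to split linearly: writing $\mathrm{score}(\mathbf{k})$ and $w_\mathbf{i}(\mathbf{k})$ for the corresponding quantities of the $\mathbf{k}$-th qubit substrategy, $\mathrm{score} = \sum_\mathbf{k} w_\mathbf{k}\,\mathrm{score}(\mathbf{k})$ and $w_\mathbf{i} = \sum_\mathbf{k} w_\mathbf{k}\, w_\mathbf{i}(\mathbf{k})$. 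Setting $\epsilon_\mathbf{k} := \mathfrak{q}_G - \mathrm{score}(\mathbf{k}) \geq 0$ then yields $\sum_\mathbf{k} w_\mathbf{k}\,\epsilon_\mathbf{k} = \epsilon$.

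Next, apply the strong self-test hypothesis to each $\mathbf{k}$-th qubit substrategy: it lies within distance $K\sqrt{\epsilon_\mathbf{k}}$ of an optimal qubit strategy (jointly in state and measurements). Since self-tests have unique optima up to local unitaries — which preserve per-input measurement statistics — the optimal winning probability $w_{x_0}^{*}$ on any fixed input $x_0$ is a well-defined quantity, independent of which optimal representative is chosen. Lipschitz continuity of the map $(\psi, \{M_j^{(i)}\}) \mapsto w_{x_0}$, with a constant $C'$ depending only on $G$ because each $\|M_j^{(i)}\| \leq 1$, then gives $w_{x_0}(\mathbf{k}) \geq w_{x_0}^{*} - C'K\sqrt{\epsilon_\mathbf{k}}$. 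Averaging over $\mathbf{k}$ and applying Cauchy--Schwarz in the form $\sum_\mathbf{k} w_\mathbf{k} \sqrt{\epsilon_\mathbf{k}} \leq \sqrt{\sum_\mathbf{k} w_\mathbf{k} \epsilon_\mathbf{k}} = \sqrt{\epsilon}$, I obtain
\begin{equation*}
w_{x_0} \;\geq\; w_{x_0}^{*} - C'K\sqrt{2\theta\,\mathbf{w}_G}.
\end{equation*}
Finally, since $\sum_\mathbf{i} p_\mathbf{i}\, w_\mathbf{i}^{*} = \mathbf{w}_G$, pigeonhole produces some $x_0$ with $p_{x_0} > 0$ and $w_{x_0}^{*} \geq \mathbf{w}_G$, for which $w_{x_0} \geq \mathbf{w}_G(1 - C\sqrt{\theta})$ with $C := C'K\sqrt{2/\mathbf{w}_G}$, as required.

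The main obstacle is the canonical-form decomposition: I need to verify cleanly that the block structure produces a genuine convex decomposition into qubit substrategies along which both the overall score and each per-input winning probability decompose linearly, and that the strong self-test property --- originally stated about a single qubit strategy --- can be applied uniformly to each block with a common constant $K$. Once this bookkeeping is in place, the remaining ingredients (Lipschitz continuity with operator-norm-bounded measurements, Cauchy--Schwarz, and the pigeonhole choice of $x_0$) are routine.
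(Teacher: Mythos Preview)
Your proof is correct and follows the same underlying idea as the paper's, namely that strong self-testing forces the full output distribution---and hence each per-input winning probability---to be $O(\sqrt{\theta})$-close to optimal. The paper's proof is a two-sentence sketch that invokes this consequence of strong self-testing as a black box (implicitly deferring to \cite{MillerS:self-testing:2013}), whereas you explicitly carry out the reduction: canonical form to obtain a convex decomposition into qubit substrategies, the qubit-level strong self-test bound blockwise, Lipschitz continuity for the per-input map, and Cauchy--Schwarz to aggregate the $\sqrt{\epsilon_{\mathbf{k}}}$ terms. This is exactly the standard way to flesh out the paper's sketch, and all the bookkeeping you flagged as the ``main obstacle'' does go through as you describe.
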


\begin{proof}
  Since $G$ is strongly self-testing, there is a unique quantum
strategy which achieves the optimal winning probability $\mathbf{w}_G$
(see section~\ref{gamessec}).  Let $x_0$ be an input string 
(which
occurs with nonzero probability in $G$)
such that, if the optimal strategy is applied
on input $x_0$, the winning 
probability is at least $\mathbf{w}_G$.

  If a given quantum strategy for $G$
achieves a score of $(1 - \theta ) \mathbf{w}_G$,
then by the strong self-testing property its output
distribution on input $x_0$ is $C_1  \sqrt{\theta} $-close
to that of the optimal strategy, for some constant
$C_1$.  The result follows.
\end{proof}

  Consequently all strong self-tests are $O(\sqrt{\theta})$-self-testing in probability.

We now fix a game $G$ that is $f$-self-testing in probability for some function $f$ on input $00\dots0$. \YtalkingPoint{need clarify that 00...0 is w.l.o.g.} 
Let $w_i$ , $1\le i\le N$, be the random variable denoting the chance of winning the $i$th round game under the full input 
distribution, right after the $(i-1)$th round is played. Similarly define $w_i^0$ by replacing the full input distribution with
the input $00\ldots0$. These random variables may be correlated as the behavior of the $i$th game may depend on
the history of the previous $i-1$ games. Let $W_i$ be a random variable
which is equal to $1$ if the game is won on the $i$th round,
and $0$ otherwise. Note that the expected value of $W_i$
is equal to $w_i$
if the $i$th round is a game round, and is equal to $w_i^0$ if the
$i$th round is a generation round. Another useful fact about $W_i$'s is that when $g_i=0$,
$W_i=1$ if and only if the $i$'th bits of $A$ and $B$ are equal. This follows from the construction of $B$.
Consequently,
\begin{equation}\label{eqn:ABW}
\sum_i W_i = N- |A+B|.
\end{equation} 

Intuitively,  if the devices are doing well on the game rounds, they should do well on the randomness generating rounds as well
because of self-testing. The following theorem is one way to express this intuition.
\begin{lemma}\label{lm:qkderr} 
Let $G$ be a strong self-test game that is $f$-self-testing in probability for some $f$ on input $00\ldots0$.
Consider Protocol R~(Fig.~\ref{protocolrtable}) using $G$ and an arbitrary $q\in(0,1)$.
For any $\lambda\in(0, {\mathbf w}_G-1/2)$ and all sufficiently small constant $\eta>0$,
there exist constants $\alpha,\beta>0$ such that for the events
\begin{align}
&P:=\sum_i g_i(1-W_i) \le (1-{\mathbf w}_G+\eta) qN,\label{eqn:EventPassing}\\
&M:=\sum_i (1-g_i)W_i \le (1/2+\lambda)(1-q)N, \label{eqn:EventBad}\quad\textrm{and,}\\
&E:=P\wedge M\label{eqn:EventE},
\end{align}
we have
\begin{align}
\Prob [E]
\le \exp\left(- \alpha qN \right) + \exp\left(-\beta N\right).\label{eqn:qkdeventBound}
\end{align}
\end{lemma}

To prove the above lemma, we first derive two concentration results. Consider
\begin{equation}
T_i : = \sum_{j=1}^i \left( g_i (1-W_i) - q (1-w_i) \right).
\end{equation}
Since $E[g_i(1-W_i) - q (1-w_i) \ |\ T_1, ..., T_{i-1}] = 0$,
and 
\begin{equation}
\mathrm{Var}\left[T_i-T_{i-1} \ |\ T_1, ..., T_{i-1}\right] = q (1-w_i) [1-q(1-w_i)]  \le q,
\end{equation}
applying Lemma~\ref{lm:AZ}, we have
\begin{corollary}\label{co:w} 
For any $\epsilon\in(0,1)$,
\begin{equation}
\Prob\left[ \sum_i g_i (1-W_i)  - q\sum_i (1-w_i) \le -\epsilon q N\right] \le 
\exp\left(-\epsilon^2\frac{q}{3}N\right).
\end{equation}
\end{corollary}

Consider now
\begin{equation}
S_i : = \sum_{j=1}^i\left( (1-g_i)W_i -  (1-q) w^0_i \right).
\end{equation}
Then $S_i$ is a Martingale and
\begin{equation}
\mathrm{Var}\left[S_i-S_{i-1} \ |\ S_1, ..., S_{i-1}\right] = (1-q) w_i^0 [1-(1-q)w_i^0]  \le 1-q.
\end{equation}

Thus the following Corollary follows from the standard Azuma-Hoeffding bound.
\begin{corollary}  \label{co:w0}
For any $\epsilon>0$,
\begin{equation}
\Prob\left[ \sum_i \left((1-g_0)W_i  -  (1-q) w^0_i  \right )\le - \epsilon (1-q)N\right] \le 
\exp\left(-\frac{\epsilon^2}{3}(1-q)N\right).
\end{equation}
\end{corollary}

\begin{proof}[of Lemma~\ref{lm:qkderr}]
Fix an arbitrary $\lambda\in(0, {\mathbf w}_G-1/2)$. Let $\eta_0=\eta_0(\lambda), \epsilon_1,\epsilon_2\in(0,1)$
be determined later. Fix an arbitrary $\eta\in(0, \eta_0)$. 
Define the following two events
\begin{eqnarray}
E_1&:=& \sum_i g_i(1-W_i) > q\sum_i (1-w_i) - \epsilon_1 qN,\label{eqn:event1}\\
E_2&:=& \sum_i (1-g_i)W_i  > (1-q)\sum_i w_i^0 -\epsilon_2(1-q) N.\label{eqn:event2}
\end{eqnarray}
Apply Corollaries~(\ref{co:w}) and (\ref{co:w0}) with $\epsilon=\epsilon_1$ and $\epsilon=\epsilon_2$, respectively, we have
\begin{eqnarray}
\Prob[\bar E_1] &\le& \exp\left(- \frac{\epsilon_1^2}{3}qN \right), \label{eqn:qkdevent1Bound}\\
\Prob[\bar E_2] & \le& \exp\left(- \frac{\epsilon_2^2}{3}(1-q)N\right).\label{eqn:qkdevent2Bound}
\end{eqnarray}
Then
\begin{align}
 \Prob[E] & \le \Prob[\bar E_1] + \Prob[\bar E_2] + \Prob[E\wedge E_1\wedge E_2]\\
 & \le \exp\left(- \frac{\epsilon_1^2}{3}qN \right) + \exp\left(- \frac{\epsilon_2^2}{3}(1-q)N\right) + \Prob[E\wedge E_1\wedge E_2],\label{eqn:addEvents}
\end{align}
where the bounds from (\ref{eqn:qkdevent1Bound}, \ref{eqn:qkdevent2Bound}) are used.

To bound $ \Prob[E\wedge E_1\wedge E_2]$, denote by 
\begin{equation}
 \hat w := \frac{1}{N}\sum_iw_i/{\mathbf w}_G,\quad\textrm{and,}\quad \hat w^0:=\frac{1}{N}\sum_iw^0_i/{\mathbf w}_G.
 \end{equation}
 Event $P$ and $E_1$ imply
 \begin{equation}\label{eqn:wUpperbound}
 1-\hat w < (\eta+\epsilon_1)/{\mathbf w}_G = \eta_0/{\mathbf w}_G.
 \end{equation}
 By the assumption that $G$ is $f$-self-testing in probability and the concavity of $f$,
 the above implies
 \begin{equation}\label{eqn:w0UB}
 1-\hat w^0 < f\left(\eta_0/{\mathbf w}_G\right).
 \end{equation}
 
 Meanwhile, Event $M$ and $E_2$ imply
 \begin{equation} \label{eqn:averagedUpperbound}
 1-\hat w^0 > 1- \frac{1/2+\lambda+\epsilon_2}{\wg}.
 \end{equation}
 The last two inequalities imply
 \begin{align}\label{eqn:false}
f\left((\eta+\epsilon_1)/\wg\right) > 1- (1/2+\lambda+\epsilon_2)/\wg.
\end{align}
Since $f(\theta)\to0$ when $\theta\to0^+$, the LHS of the above inequality $\to0$ when $\eta+\epsilon_1\to0$. Note that for any fixed $\lambda<\wg-1/2$,  RHS $>0$ when $\epsilon_2\to0^+$. Following this intuition, we define
\begin{align}
\eta_0:=\max \left\{ t\in [0, \wg]: f(t/\wg) \le 1- (1/2+\lambda)/\wg\right\}.
\end{align}

Now if one sets $\epsilon_1=\eta_0-\eta-\epsilon'$ and let  $\epsilon', \epsilon_2\to0^+$, 
\begin{align*}
\lim_{\epsilon'\to0^+} f\left((\eta+\epsilon_1)/\wg\right) = f(\eta_0/\wg) \le 1-(1/2+\lambda)/\wg = \lim_{\epsilon_2\to0^+}1- (1/2+\lambda+\epsilon_2)/\wg.
\end{align*}
Thus for some sufficiently small $\epsilon'_0>0$ and $\epsilon_2>0$, Eqn.~\ref{eqn:false} becomes false,
which means that Event $E\wedge E_1\wedge E_2$ does not occur. Setting 
\begin{align}
\alpha:=(\eta_0-\eta-\epsilon'_0)^2/3\quad\textrm{and}\quad
\beta:=\epsilon_2,
\end{align}
and by Eqn.~(\ref{eqn:addEvents}),
\begin{equation}
\Prob[E] \le \exp\left(- \alpha qN \right)+ \exp\left(-\beta N\right).
\end{equation}
Thus the theorem holds.
\end{proof}

\subsection{Efficient Information Reconciliation} 
We now arrive at the problem of resolving differences between Alice and Bob's keys.
This problem, called {\em information reconciliation} (IR), has been studied since the early days
of quantum cryptography (the earliest works include~\cite{Robert:thesis,bennett1988privacy}).
There are several variations of the problem, for examples, depending on how the differences are quantified
and if computationally efficient solutions are sought. 
The content of this subsection is a synthesis of known results; as such we do not claim
any credit of originality. We choose to include it here because of our goals may be different
from other sources. Also, efficient constructions of 
a component (list-decodable
codes) known to be useful for IR long ago only became known more recently. The IR protocol presented here
follows a well-known framework (e.g., as described in~\cite{Smith:optimal}), but
will use the latest tools, some known after~\cite{Smith:optimal}. Thus, to the best of our knowledge,
no other sources have put these known facts together.
 
We summarize our goals for IR. First, we would like to succeed {\em whenever}
the differences (referred to as errors) are bounded away from the above by $1/2$-fraction.
We hope that the solution is efficient, not just in term of computational complexity,
but also, most critically, the bits communicated, as well as the number of shared random bits used.
This is because any bit communicated in this stage
will be subtracted from the min-entropy guarantee, and that our goal is to achieve
secure quantum key distribution with a short seed. We note that in the literature, the issue
of computational efficiency and the amount of share randomness were often not considered, or 
were considered under a different set of assumptions (e.g., \cite{R05,TomamichelL}).

We define a quantity to describe the limit
of surviving fraction of min-entropy. 

\begin{definition}[Efficient Information Reconciliation] 
Let $\lambda\in(0,1/2)$, $\epsilon\in(0,1)$, $N$, $R$ and $M$ be integers, and $T$ be
a function on $N$, $\lambda$ and $\epsilon$. 
An {\bf information reconciliation protocol} with those parameters is a 
communication protocol between two parties Alice and Bob with the following property.
On any $N$-bit strings $A$ and $B$, known to Alice and Bob, respectively, 
they start the protocol with a shared $R$-bit string, communicate $M$ bits,
and finally Bob outputs an $N$-bit string $\tilde A$. If $A=\tilde A$, the protocol
{\em succeeds}; otherwise it fails. For all $A$ and $B$ of Hamming distance $|A\oplus B|\le (1/2-\lambda) N$, the probability of failure is
\begin{equation}
\Prob[A\ne \tilde A] \le\epsilon.
\end{equation}
\YtalkingPoint{changed the def so that the protocol will run on all input $A$ and $B$ but the performance guarantee applies to
when $A$ and $B$ are close.}
The computation complexity of the protocol is
$\le T$.

The protocol is said to be {\bf efficient} if for a constant $\lambda$,
$R=O(\log (N/\epsilon))$, $M\le(1-c)N + O(\log(1/\epsilon))$ for some constant $c=c(\lambda)$,
and $T=\textrm{poly}(N, \log(1/\epsilon))$.
\label{def:eic}
\end{definition}

The key ingredient in the protocol is to use binary linear error-correcting codes.
When the relative error is $<1/4$, one can use a uniquely decodable code, as shown
by~\cite{BBCS}. Otherwise, there is no binary code with  a constant rate, by the Plotkin bound. Thus
we will have to resort to list-decodable binary linear codes. A folklore approach for
pinning down the actual error from the decoded list is to use hashing. Here we use
approximate universal hashing. Explicit constructions of all these three tools
are known and are summarized below.

\begin{theorem}[Corollary of Theorem 5 in~\cite{GuruswamiIRCodes}]\label{thm:GuruswamiIndy}
For any $\lambda\in(0, 1/4)$, there exists
a family of binary linear codes with a relative error $1/4-\lambda$, a rate $\Omega(\lambda^3)$,
and linear time complexity for encoding and decoding.
\end{theorem}

\begin{theorem}[\cite{GR08Journal} (Theorem 5.3 and Remark 5.2)]\label{thm:GR}
For any $\lambda\in(0,1/2)$, and for an infinite number of integers $N>0$, there exists
a binary linear code of block length $N$, relative error $1/2-\lambda$, rate $\Omega(\lambda^3)$,
that can be list-decoded into a list of size $N^{\tilde O(\log 1/\lambda^3)}$ with $O(N^{O(1/\lambda^4)}))$
encoding and decoding time.  
\end{theorem}

\begin{definition}[Approximate Universal Hash Functions]
A set $H$ of functions $h: U\to V$
is a $\epsilon$-Universal Hash Function (-UHF) family if for all $u, u'\in U$, $u\ne u'$,
\begin{align}
\Prob_{h\in H}[h(u)=h(u')]\le \epsilon.
\end{align}
\end{definition}

It is well known that good approximate UHF exists. A standard construction is the following (see, e.g., \cite{BonehS:book}).
Let $\mathbb{F}_p$ be a finite field of size $q$, $U=\mathbb{F}_q^\ell$, $H=V=\mathbb{F}_p$, where each $k\in H$ is identified
with the function $h_k$
\begin{align}
h_k: (a_{\ell-1}, a_{\ell-2},\cdots, a_0) \mapsto k^{\ell} + a_{\ell-1}k^{\ell-1}+\cdots+a_1k+a_0.
\end{align}
Clearly if $(a'_{\ell-1},a'_{\ell-2},\cdots, a'_0)\ne (a_{\ell-1}, a_{\ell-2}, \cdots, a_0)$,
\begin{align}
\Prob_{k\in H} [ h_k(a'_{\ell-1},a'_{\ell-2},\cdots, a'_0)=h_k(a_{\ell-1}, a_{\ell-2}, \cdots, a_0)] \le \ell/p.
\end{align}
Thus $H$ is an $\ell/p$-UHF with $|U|=p^{\ell}$, $|V|=|H|=p$.

We will use an approximate UHF of the following parameters.
\begin{proposition}\label{prop:UHF}
For all sufficiently large integer $N$ and any $\epsilon \ge \frac{1}{4} 2^{-N}$, there exists an explicit $\epsilon$-UHF 
from $\{0, 1\}^N\to\{0, 1\}^n$ of size $2^n$, where $n=\left\lceil\log\left(\frac{N}{\epsilon}/\log\frac{N}{\epsilon}\right) \right\rceil +2$. 
\end{proposition}
\begin{proof}
In the construction described above, use the finite field of size $2^n$ and set $\ell = \lfloor \epsilon 2^n\rfloor$. We need only
to check that $n,\ell\ge1$ and $n\ell \ge N$, which is indeed the case.
\end{proof}

We are ready to present our protocol for  Efficient Information Reconciliation and prove its correctness.
\begin{figure}
\begin{center}
\setlength{\fboxsep}{20pt}
\setlength{\fboxrule}{1pt}
\fbox{\parbox{5.75in}{
\textit{Arguments:}
\begin{enumerate}[topsep=3pt]
\item[$\lambda:$] A real constant $\in (0,1/2]$.
\item[$X, Y:$] Binary strings of length $N$ such that $|X\oplus Y| \le (1/2-\lambda)N$.
\item[$\epsilon:$] A failure probability. Can be $0$ if $\lambda\in(1/4, 1/2]$.
\item[$A:$] The check matrix of an explicitly constructible (i.e. encoding and decoding in polynomial time) binary linear error-correcting code $C$ of length $N$, relative error $1/2-\lambda$, and
a linear rate $R=R(\lambda)$. The code $C$ is uniquely decodable if $\epsilon=0$.
Such code exists (e.g.,~\cite{AlonCodes}, \cite{GuruswamiIRCodes}) with $R(\lambda)=\Omega((\lambda-1/4)^3)$).
If $C$ is list-decodable code, the list size $L=L(N,\lambda)=N^{O(1)}$. Such a code exists (e.g, with $R(\lambda)=\Omega(\lambda^3)$ as in~\cite{GuruswamiR06}).
\item[$\mathcal{H}:$] If $C$ is list-decodable, let $\epsilon':=\epsilon/L$.  $\mathcal{H}$ is an explicit $\epsilon'$-UHF
from $\{0,1\}^N$ to $\{0,1\}^k$ of size $2^k$, where $k=\left\lceil \log\left(\frac{N}{\epsilon'}/\log\frac{N}{\epsilon'}\right)\right\rceil+2 = \log(1/\epsilon) + O(\log N)$. Such $\mathcal{H}$ exists according to Proposition~\ref{prop:UHF}.
\end{enumerate} 

\medskip

\textit{Protocol:}

\begin{enumerate}[topsep=3pt]
\item Alice sends Bob $AX \in\{0,1\}^{(1-R)N}$.
\item If $C$ is uniquely decodable, Bob computes the error syndrome $AY+AX=A(X+Y)$,
runs the decoding algorithm to obtain the unique $D$ with $|D|\le (1/2-\lambda)N$ and $AD=A(X+Y)$.
The protocol terminates with Bob outputting $Y+D$.
\item Otherwise ($C$ is list-decodable with list size $L$), Bob list-decodes from $A(X+Y)$ to obtain
a list $\{\Delta_1, \Delta_2,\cdots, \Delta_L\}$, where by the property of $C$, $X+Y=\Delta_i$, for some $i$, $1\le i\le L$.
\item Alice and Bob draw a random $h\in\mathcal{H}$, and Alice sends Bob $h(X)$. Bob checks
if there exists a unique $\Delta_i$ such that $h(Y+\Delta_i)=h(X)$. If yes,
Bob outputs $Y+\Delta_i$; otherwise he outputs $Y$.
\YtalkingPoint{change the protocol so that it won't abort. This simplifies the soundness analysis.}
\end{enumerate}
}}
\caption{Protocol EIR: an Efficient Information Reconciliation protocol}
\label{fig:eir}
\end{center}
\end{figure}

One may note that in the final step of Protocol EIR,  Bob could alternatively abort when there is no unique
$\Delta_i$ such that $h(Y+\Delta_i)=h(X)$. For technical convenience, our definition of Efficient Information Reconciliation does not allow abort. But it can be easily modified to allow aborting, and resulting performance parameters will be similar.
\YtalkingPoint{Added the above re. aborting in EIR.}

\begin{proposition}\label{prop:eir} The Protocol EIR in Fig.~\ref{fig:eir}
is an Efficient Information Reconciliation protocol~(Definition~\ref{def:eic}).
If $\lambda>1/4$ and a uniquely decodable code $C$ is used, no randomness is needed and the protocol
succeeds with certainty.
\end{proposition}

\begin{proof}
The length of Alice's message, the correctness of Bob's output, and the computational complexities follow from the properties of the error-correcting code~(Theorems~\ref{thm:GuruswamiIndy} and \ref{thm:GR}).
For the case of $\lambda\le 1/4$, the length of the shared randomness follows from the property of $\mathcal{H}$.
To analyze the failure probability, first observe that under the assumption that $|X+Y|\le (1/2-\lambda)N$,
$X+Y=D_i$ for some $i$.
Thus the chance of failure is precisely the existence of $i'\ne i$
such that $h(Y+D_i)=h(Y+D_{i'})$. This probability is no more than
$L\epsilon' =\epsilon$, as desired.
\end{proof}

We remark that for an Efficient Information Reconciliation protocol, 
there may be a tradeoff between the communication cost and the randomness used. 
For example, when the error rate $1/2-\lambda <1/4$, using a uniquely decodable code from Theorem~\ref{thm:GuruswamiIndy}
avoids the use of randomness but $c(\lambda)=O((\lambda-1/4)^3)$. If one uses the list-decodable code from Theorem~\ref{thm:GR},
the rate may be higher at the cost of some randomness.

\subsection{The Security of Protocol $R_{\textrm{kd}}$} 
We are now ready to prove our main result for untrusted-device QKD.
\begin{proofof}{Corollary~\ref{co:qkd}} 
We set $r_G$ to be the supremum of reals $R$ such that for some $\lambda<{\mathbf w}_G-1/2$, there exists an infinite family
of explicit list-decodable\footnote{We require that the size of the list is polynomial in the block length.}
binary linear codes of rate $R$ and relative error $1/2-\lambda$. 
By using the list-decodable code from Theorem~\ref{thm:GR}, $r_G=\Omega(({\mathbf w}_G-1/2)^3)>0$. 

We will show that any $r<r_G$ can be achieved. 
The proof for completeness is a standard application of concentration inequalities thus we leave the proof for the interested reader.
We shall focus on proving the soundness.

Let $\delta=r_G-r$.
Let $\lambda\in(0,{\mathbf w}_G-1/2)$ be such that there exists an Efficient Information Reconciliation protocol $P_{\textrm{EIR}}$ with $c(\lambda) \ge r_G- \delta/3$. Such $\lambda$ and $P_{\textrm{EIR}}$ exist by the definition
of $r_G$ and Proposition~\ref{prop:eir}.

Applying Theorem~\ref{mainthmprelim} with the $\delta$ parameter there set to be
$\delta/3$, we get the constants $K, b, q_0$ and $\eta_0$.
Let $\eta\le\eta_0$ and $q\le q_0$ so that Theorem~\ref{mainthmprelim} applies. Further
assume that $\eta$ is small enough so that Lemma~\ref{lm:qkderr}  also applies.

To prove Condition (a) of the soundness definition (subsection~\ref{subsec:qkdDefs}), note that 
by  Theorems~\ref{mainthmprelim}, the $SAE$-state (where $S$ is the aborting decision bit, and $E$ is the adversary's system) 
before information reconciliation has $(1- \delta/3)N$ extractable
bits with soundness error $\epsilon'_{s}:=K\exp(-bqN)$.
By definition, $P_{\mathrm{EIR}}$ communicates $\le(1-c(\lambda))N + O(\log 1/\epsilon)$ bits. Thus
the yield in $A$ after information reconciliation is  at least
\begin{flalign}
&\left[(1-\delta/3)-(1-c(\lambda))\right]\cdot N - O(\log1/\epsilon)\\
=& \left[c(\lambda)-\delta/3\right]\cdot N - O(qN)\\
\ge& (r_G-2\delta/3 -O(q))N.
\end{flalign}
If necessary, we lower the upper-bound for $q$ so that in the above, $O(q)\le \delta/3$. Thus
the yield in $A$ is at least $(r_G-\delta) N=rN$.
Since $P_{\textrm{EIR}}$ does not abort or change $AE$, the final state when restricted to $SAE$ remains unchanged,
thus is $\epsilon'_{s}$-close to a mixture of an aborting state and a state where $A$ has $rN$ extractable bits.

To satisfy soundness condition (b) (see subsection~\ref{subsec:qkdDefs}), we now bound the probability of the event 
$E_{\neq}$ that the protocol does not abort and Alice and Bob's keys ($A$ and $\tilde A$) disagree.
That is, with $P$ being the passing event (\ref{eqn:EventPassing}),
\begin{align}
E_{\neq}:=\left(P\wedge (A\ne \tilde A)\right).
\end{align}

Recall that $A$ and $B$ are the raw keys before $P_{\textrm{EIR}}$.
Denote by $\Delta$ the event that $|A+B| < (1/2-\lambda) N$.
Let $C_G$ be the event that the number of game rounds is $\le (1/2+\lambda) q N$. By the Chernoff bound,
with $\gamma:=\frac{(1/2-\lambda)^2}{2}$,
\begin{align}\label{eqn:C_G}
\Prob[C_G] \le \exp(-\gamma qN).
\end{align}

Let events $P$, $M$, $E$ be defined as in Lemma~\ref{lm:qkderr}, which we now apply
with the above $\lambda$.
Note that $(\bar M\wedge \bar C_G)$ implies $\Delta$,
because by construction, the raw key bits for game rounds always agree.

We now upper-bound $\Prob[E_{\neq}]$.
\begin{align}\label{eqn:sound_err_agreement}
\Prob[E_{\neq}] &\le \Prob[E] + \Prob[C_G] + \Prob[E_{\neq}\wedge \bar E\wedge \bar C_G]\\
&= \Prob[E] + \Prob[C_G] + \Prob[P \wedge (A\neq \tilde A) \wedge \bar C_G \wedge \bar M]\\
&\le \Prob[E] + \Prob[C_G] +\Prob[(A\neq\tilde A) \wedge \Delta]\\
&\le \Prob[E]+ \Prob[C_G] +\Prob[(A\neq\tilde A) |\Delta].
\end{align}

Applying Lemma~\ref{lm:qkderr}, equation (\ref{eqn:C_G}), and definition of Efficient Information Reconciliation,
the above is upper-bounded by
\begin{align}
\exp(-\alpha qN) + \exp(-\beta N) + \exp(-\gamma qN) + \exp(-qN).
\end{align}
Thus setting $b':=\min\{b, \alpha, \beta, \gamma, 1\}$ and $K'=\max\{K, 5\}$, we have 
that the soundness error is
\begin{align}
K'\exp(-b'qN)=\exp(-\Omega(qN)+O(1)),
\end{align}
thus proving the soundness result.

The number of random bits used in the expansion protocol is $O(Nh(q))$, and the number used
in $P_{\textrm{EIR}}$  is $O(\log N/\epsilon)$, where
$\epsilon=\exp(-qN)$. This gives a total
of $O(Nh(q) + \log N + qN)$ random bits, which is $O(Nh(q)+\log N)$ (or simply $O(Nh(q))$ when $qN=\Omega(1)$).

We leave the claims on the instantiation to the reader.
\end{proofof}

\section{Further Directions}

A natural goal at this point is to
improve the certified rate of Protocol R.
This is important for the practical realization of our protocols.
By the discussion in section~\ref{REUDsec}, this reduces to two simple questions.
First, what techniques are there for
computing the trust coefficient $\mathbf{v}_G$
of a binary XOR game?  Second,
is it possible to reprove
Theorem~\ref{uncertaintythm}
in such a way that the limiting function $\pi(x )$
becomes larger?
A related question is to improve the key rate of Protocol $R_{\textrm{kd}}$.
The ``hybrid'' technique of Vazirani and Vidick~\cite{Vazirani:fully} for mixing the CHSH game
with a trivial game with unit quantum winning strategy may extend to general
binary XOR games.

It would also be interesting to explore whether Theorem~\ref{mainthmprelim}
could be extended to nonlocal games outside the class of strong self-tests.
Such an extension will not only facilitate the realization of those protocols, but also
will further identify the essential feature of
quantum information enabling those protocols. As the characterization of
strong self-tests is critical for our proof, developing a theory of
robust self-testing beyond binary XOR games may be useful for 
our question. It is also conceivable that there exist fairly broad conditions
under which a classical security proof, which is typically much easier to establish,
automatically imply quantum security. We consider identifying such a wholesale security lifting
principle as a major open problem.

A different direction to extend our result is to prove security
based on physical principles more general than quantum mechanics, such as
non-signaling principle, or information causality~\cite{IC}.

Our protocols require some initial perfect randomness to start with. The Chung-Shi-Wu protocol~\cite{CSW14}
relaxes this requirement to an arbitrary min-entropy source and tolerates a universal constant level of
noise. However, those were achieved at a great cost on the number of non-communicating devices.
Another major open problem is whether our protocol can be modified to handle non-uniform input.

Randomness expansion can be thought of as a ``seeded'' extractions of randomness from untrusted quantum devices,
in the sense of Chung, Shi, and Wu~\cite{CSW14}.
Our one-shot and unbounded expansion results demonstrate a tradeoff between the seed length
and the output length different from that in classical extractors.
Recall that 
a classical extractor with output length $N$ and error parameter $\epsilon$ requires $\Omega(\log N/\epsilon)$
seed length, while our unbounded expansion protocol can have a fixed seed length (which determines the error parameter). 
What is the maximum amount of randomness one can extract from a device of a given amount of
entanglement (i.e. is the exponential rate optimal for one device)?
What can one say about the tradeoff between expansion rate and some proper quantity describing
the communication restrictions? Answers to those questions will reveal fundamental features
of untrusted quantum devices as a source for randomness extraction, and will hopefully
lead to an intuitive understanding of where the randomness comes from.

Yet another important direction forward is to prove security in more complicated composition scenarios
than the cross-feeding protocol. As pointed out by \cite{BCK:memory},
a device reused may store previous runs' information thus potentially may cause security problem
in sequentially composed QKD protocols. While such ``memory attack'' appears not to be 
a problem for sequential compositions of our randomness expansion protocol, it may for other more complicated compositions. 
Thus it is desirable to design untrusted-device protocols and prove their security under broader classes of compositions.

\section{Acknowledgments}
We are indebted to Anne Broadbent, Kai-Min Chung, Roger Colbeck, Brett Hemenway, Adrian Kent,
Christopher Portmann, 
Thomas Vidick, Ilya Volkovich, Xiaodi Wu, and Andrew Yao for useful discussions, and to Michael Ben-Or, Qi Cheng, Venkatesan Guruswami, and Adam Smith
for pointers to the literature on error-correcting codes
and information reconciliation.

\appendix

\section{Supplementary Material}

\subsection{The Canonical Form for Two Binary Measurements}

\begin{theorem}
\label{canintermed}
Let $V$ be a finite dimensional $\mathbb{C}$-vector space
and let $X_0, X_1$ be Hermitian operators on $V$ satisfying
$\left\| X_0 \right\| , \left\| X_1 \right\| \leq 1$.  Then,
there exists a unitary embedding $U \colon W \to \mathbb{C}^{2n}$,
$n \geq 1$, and operators $Y_0, Y_1$ of the form
\begin{eqnarray}
\label{canonicalrepeat}
Y_0   =   \left[ \begin{array}{ccccccc}
0 & 1  \\
1 & 0  \\
&& 0 & 1 \\
&& 1 & 0 \\
& & & & \ddots \\
&& & & & 0 & 1 \\
&& & & & 1 & 0 \\
\end{array} \right] & \hskip0.6in  &
Y_1  =   \left[ \begin{array}{ccccccc}
0 & \zeta_1  \\
\overline{\zeta_1}  & 0 \\
&&0 & \zeta_2  \\
&&\overline{\zeta_2}  & 0 \\
 &&& & \ddots \\
& &&& & 0 & \zeta_{m_j} \\
& & &&& \overline{\zeta_{m_j}} & 0 \\
\end{array} \right]
\end{eqnarray}
with $\left\| \zeta_k \right\| = 1$, such that
$X_k = U^* Y_k U$ for $k \in \{ 0, 1 \}$.
\end{theorem}

We prove this theorem by a series of lemmas.
Consider the class of all triples $(V, X_0, X_1)$ satisfying
the condition from the first sentence of Theorem~\ref{canintermed}.
Consider the following two conditions on such triples:
\begin{enumerate}
\item[(A)] The operators $X_k$ satisfy $X_k^2 = \mathbb{I}$.
\item[(B)] The vector space $V$ is equal to $\mathbb{C}^{m}$,
and $X_0, X_1$ have a uniform diagonal block form:
\begin{eqnarray*}
X_k   =   \left[ \begin{array}{cccccccccc}
B_k^1 \\
& B_k^2 \\
&& \ddots \\
&&& B_k^r \\
&&&& b_k^1 \\
&&&&& b_k^2 \\
&&&&&& \ddots \\
&&&&&&& b_k^s
\end{array} \right] 
\end{eqnarray*}
where $2r + s = m$, $b_k^j \in \{ -1 , +1 \}$ and each $B_k^j$
is a $2 \times 2$ Hermitian matrix with eigenvalues 
$+1$ and $-1$.
\end{enumerate}

\begin{lemma}
Any triple $(V, X_0, X_1)$ satisfying the conditions of Theorem~\ref{canintermed}
has a unitary embedding into a triple satisfying condition (A).
\end{lemma}

\begin{proof}
Let $U \colon V \to V \oplus V$ be given by $U ( v ) = v \oplus 0$, and let
$\{ X'_k \mid k = 0, 1 \}$ be the operators on $V \oplus V$ defined by
\begin{eqnarray}
X'_k & = & \left[ \begin{array}{c|c}
X_k & \sqrt{ \mathbb{I} - X_k^2 } \\
\hline
\sqrt{ \mathbb{I } - X_k^2 } & - X_k \end{array} \right].
\end{eqnarray}
It is easily checked that $(X'_k)^2 = \mathbb{I}$.
\end{proof}

\begin{lemma}
Any triple $(V, X_0, X_1)$ satisfying condition (A) has a unitary embedding
into a triple satisfying  condition (B).
\end{lemma}

\begin{proof}
We can choose an orthonormal basis $\{ v_1, \ldots, v_{\dim V} \}$ for $V$ such that $X_0$ has the form
\begin{eqnarray}
X_0 & = & \left[ \begin{array}{c|c}
\mathbb{I}_n & 0  \\
\hline
0 & - \mathbb{I}_m \end{array} \right].
\end{eqnarray}
where $\mathbb{I}_r$ denotes the $r \times r$ identity matrix.
By an appropriate unitary transformation of $V$ that respects
this block structure, we obtain another orthonormal basis
$\{ v'_1 , \ldots, v'_{\dim v} \}$ such that $X_0$ and $X_1$
have the form
\begin{eqnarray}
X_0 = \left[ \begin{array}{c|c}
\mathbb{I}_n & 0  \\
\hline
0 & - \mathbb{I}_m \end{array} \right] \hskip0.3in
\textnormal{ and } \hskip0.3in
X_1 = \left[ \begin{array}{c|c}
A & D  \\
\hline
D^* & C \end{array} \right], 
\end{eqnarray}
where $A$ and $C$ are diagonal matrices.  The
condition $X_1^2 = \mathbb{I}$ implies that
$A^2 + D D^* = \mathbb{I}$ and $D^* D + C^2
= \mathbb{I}$.  Since both $D D^*$ and $D^* D$
are diagonal, $D$ is diagonal.  Reordering
the bases yields the desired form.
\end{proof}

\begin{lemma}
Any triple $(V, X_0, X_1)$ satisfying condition (B) 
has a unitary embedding into a triple of the form
(\ref{canonicalrepeat}).
\end{lemma}

\begin{proof}
It suffices to prove the lemma for the case where 
$X_0, X_1$ are both scalars, and the case where $X_0, X_1$
are each $2 \times 2$ Hermitian matrices with eigenvalues $+1$ and $-1$.
The first case is easy and is left to the reader.  For the second
case, we can find an orthonormal basis $\{ v_1, v_2 \}$ for $\mathbb{C}^2$
under which $X_0 = \left[ \begin{array}{cc} 0 & 1 \\ 1 & 0 \end{array} \right]$,
and then find a basis of the form $\{ (\cos \theta) v_1 + i (\sin \theta) v_2, 
(\cos \theta) v_2 + i (\sin \theta) v_1 \}$ with
$z \in \mathbb{C}, |z | = 1$ under which $X_1$ is an antidiagonal matrix.
\end{proof}

This completes the proof of Theorem~\ref{canintermed}.

\subsection{Smooth Min-entropy and Renyi Divergence}

This subsection provides supporting proofs for section~\ref{renyisection}.

\begin{proposition}
\label{divergenceprop}
Let $\alpha \in (1, 2]$.
Let $\rho$ be a density operator on 
a finite-dimensional Hilbert space $V$, and let
$\sigma$ be a positive semidefinite operator on $V$ such that
$\textnormal{Supp } \sigma \supseteq
\textnormal{Supp } \rho$.  Then, there exists a positive semidefinite
operator $\rho'$ such that $\rho' \leq \sigma$ and
\begin{eqnarray}
\log \left\| \rho- \rho' \right\|_1 \leq \frac{\alpha - 1 }{2} \cdot D_\alpha
( \rho \| \sigma ) + \frac{1}{2}
\end{eqnarray}
\end{proposition}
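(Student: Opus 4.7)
The plan is to build $\rho'$ as a spectral truncation of $\rho$ relative to $\sigma$ and bound the error using an operator Cauchy--Schwarz estimate combined with the Araki--Lieb--Thirring (ALT) inequality, mirroring the $\alpha = 2$ arguments of \cite{TomamichelCR:2009, DupuisFS:2013}. First, I set $T := \sigma^{-1/2} \rho \sigma^{-1/2}$, where $\sigma^{-1/2}$ denotes the Moore--Penrose pseudoinverse on $\Supp(\sigma)$; this is positive semidefinite. Let $P$ be the spectral projector of $T$ onto eigenvalues at most $1$, and $Q := I - P$. I define
\[ \rho' := \sigma^{1/2}\,(PTP)\,\sigma^{1/2}. \]
Because $P$ commutes with $T$ and $PTP \leq P \leq I$, we immediately obtain $\rho' \leq \sigma$. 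Moreover $\rho - \rho' = \sigma^{1/2}\,(QTQ)\,\sigma^{1/2}$ is positive semidefinite, so $\| \rho - \rho' \|_1 = \Tr[QT\sigma]$.

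The main technical step is to show $(\Tr[QT\sigma])^2 \leq 2\,\Tr[S^\alpha]$, where $S := \sigma^{(1-\alpha)/(2\alpha)} \rho \sigma^{(1-\alpha)/(2\alpha)}$. The key identity is $S = \sigma^{1/(2\alpha)}\, T\, \sigma^{1/(2\alpha)}$, so that ALT in the form $\Tr[(B^{1/2} A B^{1/2})^\alpha] \leq \Tr[B^{\alpha/2} A^\alpha B^{\alpha/2}]$, applied with $A = T$ and $B = \sigma^{1/\alpha}$, yields $\Tr[S^\alpha] \leq \Tr[T^\alpha \sigma]$. I would then express $\Tr[QT\sigma]$ as a squared Hilbert--Schmidt norm, $\Tr[QT\sigma] = \| T^{1/2}\, Q\, \sigma^{1/2} \|_2^2$, and apply a balanced Cauchy--Schwarz factorization calibrated to the exponent $(\alpha-1)/(2\alpha)$ appearing in $S$. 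On $\im Q$ we have $T > 1$, which lets us upgrade $T \leq T^\alpha$ on the relevant eigenspace; combined with the ALT inequality above, this yields the desired estimate, with the constant $\sqrt{2}$ arising from a sharp spectral tail inequality on $\{T > 1\}$.

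Taking logarithms of $(\Tr[QT\sigma])^2 \leq 2\,\Tr[S^\alpha]$ and recalling $\Tr[S^\alpha] = 2^{(\alpha - 1) D_\alpha(\rho \| \sigma)}$ gives
\[ \log \| \rho - \rho' \|_1 \leq \tfrac{1}{2} + \tfrac{\alpha - 1}{2}\, D_\alpha(\rho \| \sigma), \]
as claimed. The main obstacle is the non-commutativity of $\sigma$ with the spectral projectors $P, Q$ of $T$: although $P$ and $Q$ commute with $T$ by construction, they do not commute with $\sigma$, which forces the Cauchy--Schwarz factorization and the subsequent ALT application to be carefully balanced so that the intermediate exponents of $\sigma$ match the sandwich structure of $S$. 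Extra factors of the form $\Tr[\sigma^c]$ that appear along the way must be controlled either by a normalization argument (using $\alpha \leq 2$ and reducing to $\Tr(\sigma) \leq 1$ via the scaling identity $D_\alpha(\rho \| \mu \sigma) = D_\alpha(\rho \| \sigma) - \log \mu$) or by absorbing them into the final constant.
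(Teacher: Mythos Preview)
Your construction of $\rho'$ via the spectral projector of $T=\sigma^{-1/2}\rho\sigma^{-1/2}$ is different from the paper's, which instead takes $P$ to be the projector onto the positive part of $\rho-\sigma$ and invokes the gentle-measurement construction of \cite{TomamichelCR:2009} to obtain $\rho'\le\sigma$ with $\|\rho-\rho'\|_1\le\sqrt{2\delta}$ where $\delta=\Tr^+(\rho-\sigma)$. The paper then bounds $\delta$ by applying the data-processing inequality for $D_\alpha$ under the pinching $X\mapsto PXP\oplus P^\perp X P^\perp$, together with operator monotonicity of $z\mapsto z^{\alpha-1}$ (this is exactly where $\alpha\le 2$ enters), to get $D_\alpha(\rho\|\sigma)\ge\frac{1}{\alpha-1}\log\delta$.

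Your sketch has a genuine gap at the point where ALT is invoked. You correctly derive $\Tr[S^\alpha]\le\Tr[T^\alpha\sigma]$ from ALT, but this is an \emph{upper} bound on $\Tr[S^\alpha]$, whereas the target $(\Tr[QT\sigma])^2\le 2\,\Tr[S^\alpha]$ requires a \emph{lower} bound. The chain you outline, namely $\Tr[QT\sigma]\le\Tr[QT^\alpha\sigma]\le\Tr[T^\alpha\sigma]$, only places $\Tr[QT\sigma]$ and $\Tr[S^\alpha]$ below the same quantity $\Tr[T^\alpha\sigma]$; it does not compare them to each other. The ``balanced Cauchy--Schwarz factorization'' you allude to is not specified, and the non-commutativity of $\sigma$ with $Q$ (which you flag yourself) is precisely what blocks the obvious routes. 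In the commuting case one does have $\Tr[QT\sigma]\le\Tr[S^\alpha]$ by a direct eigenvalue argument, but extending this to the general case needs something with the strength of data processing for the sandwiched divergence, not ALT in the direction you use. As written, the argument does not close; the paper's route via data processing and monotonicity of $z^{\alpha-1}$ avoids this obstruction entirely.
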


\begin{proof}
Our proof is based on 
the proof of Lemma 19 in \cite{DupuisFS:2013} (which, in turn, is based on \cite{TomamichelCR:2009}).  For any Hermitian operator $H$, let
$P_H^+$ denote projection on the subspace spanned by the positive eigenvectors
of $H$, and let $\Tr^+ ( H ) = \Tr ( P_H^+ H P_H^+ )$.   Let
\begin{eqnarray}
\delta & = & \Tr^+ ( \rho - \sigma ).
\end{eqnarray}
Note that, by the construction from the proof of Lemma 15 in \cite{TomamichelCR:2009},
there must exist a subnormalized operator $\rho'$ such that $\rho' \leq \sigma$ 
and $\left\| \rho' - \rho \right\|_1 \leq \sqrt{2 \delta}$.  

Let $P = P^+_{\rho - \sigma}$, and let $P^\perp$ denote the complement
of $P$.  Note
that by applying the data processing inequality for $D_\alpha$ (see Theorem 
5 in \cite{MullerDSFT:2013}) to the quantum operation
$X \mapsto \left| 0 \right> \left< 0 \right| \otimes P X P
+ \left| 1 \right> \left< 1 \right| \otimes P^\perp X P^\perp$, we have
\begin{eqnarray}
D_\alpha ( \rho \| \sigma ) & \geq &
\frac{1}{\alpha - 1} \log \left( \Tr \left[ \left( ( P \sigma P )^{\frac{1 - \alpha}{2 \alpha}} (P \rho P ) ( P \sigma P )^{\frac{1 - \alpha}{2 \alpha}} \right)^\alpha \right. \right. \\
& & +
\left. \left.\left( ( P^\perp ( \sigma ) P^\perp )^{\frac{1 - \alpha}{2 \alpha}} (P^\perp \rho P^\perp ) ( P^\perp \sigma P^\perp )^{ \frac{1 - \alpha}{2 \alpha}} \right)^\alpha \right] \right) \\
& \geq &
\frac{1}{\alpha - 1} \log \left( \Tr \left[ \left( ( P \sigma P )^{\frac{1 - \alpha}{2 \alpha}} (P \rho P ) ( P \sigma P )^{\frac{1 - \alpha}{2 \alpha}} \right)^\alpha \right] \right) 
\end{eqnarray}
Let $\overline{\sigma} = P \sigma P$ and $\overline{\rho} = P \rho P$.   We have
the following.
\begin{eqnarray}
D_\alpha ( \rho \| \sigma ) & \geq & \frac{1}{\alpha - 1} \log \left( \Tr \left[
\left( \overline{\sigma}^{\frac{1 - \alpha}{2 \alpha}} \overline{\rho}
\overline{\sigma}^{\frac{1 - \alpha}{2 \alpha}} \right) \left( \overline{\sigma}^{\frac{1 - \alpha}{2 \alpha}} \overline{\rho}
\overline{\sigma}^{\frac{1 - \alpha}{2 \alpha}} \right)^{\alpha - 1} \right] \right)
\end{eqnarray}
Note that
$\overline{\rho} \geq \overline{\sigma}$ by construction, and 
$Z \mapsto Z^{\alpha - 1 }$ is a monotone function (see part (a) of Proposition~\ref{powmatrixprop}).  Therefore we have the following.
\begin{eqnarray}
D_\alpha ( \rho \| \sigma )
& \geq & \frac{1}{\alpha - 1} \log \left( \Tr \left[
\left( \overline{\sigma}^{\frac{1 - \alpha}{2 \alpha}} \overline{\rho}
\overline{\sigma}^{\frac{1 - \alpha}{2 \alpha}} \right) \left( \overline{\sigma}^{\frac{1 - \alpha}{2 \alpha}} \overline{\sigma}
\overline{\sigma}^{\frac{1 - \alpha}{2 \alpha}} \right)^{\alpha - 1} \right] \right) \\
& \geq & \frac{1}{\alpha - 1} \log \left( \Tr \left[
\left( \overline{\sigma}^{\frac{1 - \alpha}{2 \alpha}} \overline{\rho}
\overline{\sigma}^{\frac{1 - \alpha}{2 \alpha}} \right) \overline{\sigma}^\frac{\alpha-1}{\alpha} \right] \right) \\
& \geq & \frac{1}{\alpha - 1} \log \left( \Tr \left[
\overline{\rho} \right] \right) \\
& \geq & \frac{1}{\alpha - 1} \log \delta 
\end{eqnarray}
where in the last line we used the fact that  $\Tr ( \overline{\rho} ) \geq \Tr (\overline{\rho} - \overline{\sigma} )
= \delta$.
Let $\rho'$ be a positive semidefinite operator satisfying
$\rho' \leq \sigma$ 
and $\left\| \rho' - \rho \right\|_1 \leq \sqrt{2 \delta}$.  Then we have
\begin{eqnarray}
D_\alpha ( \rho \| \sigma )
& \geq & \frac{1}{\alpha - 1} \log \left( \left\| \rho' - \rho \right\|_1^2 /2 \right), 
\end{eqnarray}
which implies the desired result.
\end{proof}

\begin{proposition}
\label{altdivergenceprop}
Suppose that in Proposition~\ref{divergenceprop},
$V$ is the state space of a bipartite quantum system $AB$,
and $\rho, \sigma$ are classical-quantum operators.\footnote{That is,
$A$ is a classical register and
$\rho , \sigma$ have the form $\rho = \sum_i \left| a_i \right> \left<
a_i \right| \otimes \rho_i$ and $\sigma = \sum_i \left| a_i \right>
\left< a_i \right| \otimes \sigma_i$ where $\{ a_1, \ldots, a_n \}$
is a standard basis for ${A}$.}  Then,
there exists an operator $\rho'$ satisfying the conditions
of Proposition~\ref{divergenceprop} such
that $\rho'$ itself is a classical-quantum operator.
\end{proposition}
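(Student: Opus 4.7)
The plan is to obtain the classical-quantum $\rho''$ by applying a dephasing channel with respect to the classical register $A$ to the operator $\rho'$ produced by Proposition~\ref{divergenceprop}. Let $\{|a_i\rangle\}$ be the standard basis of $A$ in which $\rho$ and $\sigma$ are block-diagonal, and define the CPTP map
\[
\mathcal{D}(X) \;=\; \sum_i \bigl( |a_i\rangle\langle a_i| \otimes \mathbb{I}_B \bigr)\, X \,\bigl( |a_i\rangle\langle a_i| \otimes \mathbb{I}_B \bigr).
\]
By hypothesis $\mathcal{D}(\rho) = \rho$ and $\mathcal{D}(\sigma) = \sigma$, and for any $X$ the output $\mathcal{D}(X)$ is automatically of classical-quantum form on $AB$. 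Set $\rho'' := \mathcal{D}(\rho')$.

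I would then verify the two defining properties of Proposition~\ref{divergenceprop} for $\rho''$ in place of $\rho'$. For the inequality $\rho'' \leq \sigma$: since $\sigma - \rho' \geq 0$ and $\mathcal{D}$ is a positive map, $\mathcal{D}(\sigma - \rho') = \sigma - \rho'' \geq 0$. For the trace-distance bound, use the fact that $\mathcal{D}$ is trace-preserving and completely positive, hence a quantum channel, and that the trace norm is monotone non-increasing under quantum channels:
\[
\|\rho'' - \rho\|_1 \;=\; \|\mathcal{D}(\rho') - \mathcal{D}(\rho)\|_1 \;=\; \|\mathcal{D}(\rho' - \rho)\|_1 \;\leq\; \|\rho' - \rho\|_1.
\]
Combining with the bound from Proposition~\ref{divergenceprop} gives the same logarithmic bound for $\|\rho'' - \rho\|_1$, and $\rho''$ is classical-quantum by construction.

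There is no substantive obstacle here: the argument is a standard dephasing reduction, and the only subtleties are (i) to recall that $\rho$ and $\sigma$ are fixed points of $\mathcal{D}$ because they are already classical-quantum with respect to the same basis of $A$, and (ii) to invoke both positivity of $\mathcal{D}$ (for the operator inequality) and the data-processing inequality for trace norm (for the distance bound). Once these are stated, the proof is a one-line application of $\mathcal{D}$ to the $\rho'$ given by Proposition~\ref{divergenceprop}.
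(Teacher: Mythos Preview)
Your argument is correct and complete. The dephasing map $\mathcal{D}$ is CPTP, fixes both $\rho$ and $\sigma$, preserves positivity (hence $\rho'' \leq \sigma$ and $\rho'' \geq 0$), and contracts trace distance; nothing more is needed.

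It is worth noting that this is a genuinely different route from the paper's. The paper does not post-process $\rho'$; instead it observes that the specific construction of $\rho'$ invoked in the proof of Proposition~\ref{divergenceprop} (taken from Lemma~15 of \cite{TomamichelCR:2009}) is itself block-diagonal preserving: when $\rho$ and $\sigma$ are classical-quantum, the projection $P = P^+_{\rho - \sigma}$ respects the block structure, and the resulting $\rho'$ is automatically classical-quantum. Your approach is black-box: it works for \emph{any} $\rho'$ satisfying the conclusions of Proposition~\ref{divergenceprop}, without needing to inspect how $\rho'$ was built. This makes your argument more self-contained (no need to chase the external reference) and more robust, at the modest cost of invoking monotonicity of the trace norm rather than getting the result ``for free'' from the construction.
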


\begin{proof}
This is an easy consequence
of the construction for
$\rho'$ (from the proof of Lemma 15 in \cite{TomamichelCR:2009})
which was used in the proof of Proposition~\ref{divergenceprop}.
\end{proof}

\begin{proof}[of Proposition~\ref{maxrenyiprop}]
Let $\lambda$ be the quantity on the right side of inequality (\ref{adivbound}).
We have
\begin{eqnarray}
D_\alpha ( \rho \| 2^{-\lambda} \sigma ) & = & \frac{2 \log \epsilon - 1}{
 \alpha - 1 }.
\end{eqnarray}
By Proposition~\ref{divergenceprop}, we can find a positive semidefinite operator
$\rho' \leq 2^{-\lambda} \sigma$ such that
\begin{eqnarray}
\left\| \rho' - \rho \right\|_1
\leq \epsilon.
\end{eqnarray}
The result follows from the definition of $D_{max}^\epsilon$.
\end{proof}

\subsection{Variables and Functions Used in Section~\ref{partialtrustapp}}

\label{varfuncapp}

In this subsection we collect together the variables in functions 
that are used in the proof of security for Protocol $A'$.  We include
also the assertions about the limits of the functions.  (This is
intended just for the reader's convenience --- all these statements
are included in the body of the paper.)

\vskip0.2in

\textbf{Variables:}

\begin{center}
\begin{tabular}{| c c l | l | }
\hline
$N$ & $\in$ & $\mathbb{N}$ & number of rounds \\
\hline
$q$ & $\in$ & $( 0 , 1)$ & test probability \\
\hline
$t$ & $\in$ & $[ 0 , 1]$ & failure parameter \\
\hline
$v$ &$\in$ & $(0, 1]$
&  trust coefficient \\
\hline
$h$ & $\in$ & $[ 0, 1 - v]$ &
coin flip coefficient \\
\hline 
$\eta$ & $\in$ & $( 0, v/2)$ & error tolerance \\
\hline
$\kappa$ & $\in $ & $(0 , \infty)$ & failure penalty \\
\hline
$r$ & $\in$ & $( 0, 1/(q \kappa ) ]$ & multiplier for R{\'e}nyi coefficient \\
\hline
$\epsilon$ & $\in$ & $(0, \sqrt{2} ]$ & error parameter for smooth min-entropy \\
\hline
\end{tabular}
\end{center}

\textbf{Functions:}

\vskip0.2in

Note that the functions $\gamma ( q, \kappa, r )$ and $\mathbf{r} ( v, \eta,
q, \kappa )$ defined below are written simply as $\gamma$ and $\mathbf{r}$.

\vskip0.2in

\fbox{\parbox{5.75in}{

\begin{eqnarray*}
\gamma ( q, \kappa, r ) & := & q \kappa r
\end{eqnarray*}

\begin{eqnarray*}
\lim_{(q, \kappa) \to (0, 0)} \gamma ( q, \kappa, r ) & = & 0
\end{eqnarray*}
}}

\vskip0.2in

\fbox{\parbox{5.75in}{

\begin{eqnarray*}
\Pi ( \gamma, t ) & := & - \frac{1}{\gamma} \log \left\{ 2^{- \gamma} \left[ (1 - t)^\frac{1}{1 + 2 \gamma} +
t^{\frac{1}{1 + 2 \gamma}} \right]^{1 + 2 \gamma} \right\}
\end{eqnarray*}

\begin{eqnarray*}
\pi ( t ) & := 1 - 2 t \log \left( \frac{1}{t} \right)
- 2 (1-t) \log \left( \frac{1}{1-t } \right)
\end{eqnarray*}

\begin{eqnarray*}
\lim_{(q, \kappa, t) \to (0, 0, t_0 )} \Pi  ( \gamma, t ) & = & \pi ( t_0 )
\end{eqnarray*}

}}

\vskip0.2in

\fbox{\parbox{5.75in}{

\begin{eqnarray*}
\lambda ( v, h, q, \kappa, r, t ) & := & \left( (1 - q) 2^{- \gamma \Pi ( \gamma , t ) }
+ q \left\{ 1 - (1 - 2^{-\kappa} ) [(h/2)^{1+\gamma} + v^{1+\gamma} t ]
\right\} \right)^{1/\gamma}
\end{eqnarray*}

\begin{eqnarray*}
\Lambda ( v, h, q, \kappa, r, t ) & := & - \log ( \lambda ( v, h, q, \kappa, r, t ))
\end{eqnarray*}

\begin{eqnarray*}
\lim_{\substack{(q, \kappa, t ) \to  (0, 0, t_0)}} \Lambda ( v, h, q, \kappa, r, t )
& = & \pi ( t_0 ) + \frac{h /2 + v t_0}{r}
\end{eqnarray*}

}}

\vskip0.2in

\fbox{\parbox{5.75in}{

\begin{eqnarray*}
\Delta ( v, h, q, \kappa, r ) & := & \min_{s \in [0, 1 ]} \Lambda ( v, h, q, \kappa, r, s )
\end{eqnarray*}

\begin{eqnarray*}
\lim_{(q, \kappa ) \to (0, 0 )} \Delta ( v, h, q, \kappa, r)
& = & \min_{s \in [0,1]} \left( \pi ( s )+ \frac{ h/2 + vs}{r}  \right)
\end{eqnarray*}

}}

\vskip0.2in

\fbox{\parbox{5.75in}{

\begin{eqnarray*}
R ( v, h, \eta, q, \kappa, r ) &  := & -
\frac{h/2 + \eta}{r} + \Delta ( v, h, q, \kappa, r)
\end{eqnarray*}

\begin{eqnarray*}
\lim_{(q, \kappa) \to (0, 0)}
R ( v, h, \eta, q, \kappa, r )
& = & \min_{s \in [0, 1 ]} \left[ \pi ( s ) + \frac{vs - \eta}{r}
\right]
\end{eqnarray*}

}}

\vskip0.2in

\fbox{\parbox{5.75in}{

\begin{eqnarray*}
\mathbf{r}(v, \eta, q, \kappa) & := & \min \left\{ \frac{ v }{- \pi'(\eta / v )} , 
\frac{1}{q \kappa } \right\}
\end{eqnarray*}

\begin{eqnarray*}
\lim_{(q,\kappa) \to (0, 0)} \mathbf{r}(v, \eta, q, \kappa) & = &
\frac{ v }{- \pi'(\eta / v )}
\end{eqnarray*}

}}

\vskip0.2in

\fbox{\parbox{5.75in}{

\begin{eqnarray*}
T ( v, h, \eta,  q, \kappa) & := &
R ( v, h, \eta,  q, \kappa, \mathbf{r} )
\end{eqnarray*}

\begin{eqnarray*}
\lim_{(q, \kappa) \to ( 0, 0)}
T ( v, h, \eta, q, \kappa)
& = & \pi ( \eta / v )
\end{eqnarray*}

}}

\vskip0.2in

\fbox{\parbox{5.75in}{

\begin{eqnarray*}
F ( v, h, \eta, q, \kappa ) & := & \frac{2}{\mathbf{r}}
\end{eqnarray*}

\begin{eqnarray*}
\lim_{(q, \kappa) \to (0, 0)}
F ( v, h, \eta , q , \kappa ) & = &
\frac{- 2 \pi' ( \eta / v ) }{v}
\end{eqnarray*}
}}

\subsection{Mathematical Results}

\begin{proposition}
\label{etothecprop}
Let $U \subseteq \mathbb{R}^n$, let $\mathbf{z} \in \mathbb{R}^n$ be an element
in the closure of $U$, and let $f, g$ be continuous functions from $U$ to $\mathbb{R}$.
Suppose that
\begin{eqnarray}
\lim_{\mathbf{x} \to \mathbf{z} } f(\mathbf{x} ) & = & 0
\end{eqnarray}
and
\begin{eqnarray}
\lim_{\mathbf{x} \to \mathbf{z} } \frac{f ( \mathbf{x} ) }{g ( \mathbf{x} )  }
& = & c.
\end{eqnarray}
Then,
\begin{eqnarray}
\label{etothec}
\lim_{\mathbf{x} \to \mathbf{z} } ( 1 + f ( \mathbf{x} ) )^{1/g ( \mathbf{x} ) }
& = & e^c.
\end{eqnarray}
\end{proposition}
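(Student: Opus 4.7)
The plan is to reduce the statement to the classical one-dimensional fact $\lim_{y \to 0} \ln(1+y)/y = 1$, then exponentiate. Since $f(\mathbf{x}) \to 0$ as $\mathbf{x} \to \mathbf{z}$, for $\mathbf{x}$ sufficiently close to $\mathbf{z}$ we have $1 + f(\mathbf{x}) > 0$, so the real-valued expression $(1 + f(\mathbf{x}))^{1/g(\mathbf{x})}$ is well-defined. Moreover, the hypothesis that $f/g$ has a finite limit forces $g(\mathbf{x}) \neq 0$ on a punctured neighborhood of $\mathbf{z}$ in $U$ (otherwise the ratio would be undefined on an approaching sequence), so we may restrict attention to that neighborhood.

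First I would take the logarithm and rewrite
\begin{eqnarray}
\ln \left[ (1 + f(\mathbf{x}))^{1/g(\mathbf{x})} \right]
\;=\; \frac{\ln(1 + f(\mathbf{x}))}{g(\mathbf{x})}
\;=\; \phi(f(\mathbf{x})) \cdot \frac{f(\mathbf{x})}{g(\mathbf{x})},
\end{eqnarray}
where $\phi \colon (-1, \infty) \to \mathbb{R}$ is defined by $\phi(y) = \ln(1+y)/y$ for $y \neq 0$ and $\phi(0) = 1$. This factorization is valid at every $\mathbf{x}$ in the neighborhood (including those, if any, where $f(\mathbf{x}) = 0$, since both sides are then $0$), and it isolates the two pieces whose limits we already know.

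Next I would observe that $\phi$ is continuous at $0$ (this is the standard removable-singularity fact $\lim_{y \to 0} \ln(1+y)/y = 1$, which itself follows from L'H\^opital or a Taylor expansion). Since $f$ is continuous and $f(\mathbf{x}) \to 0$, composition gives $\phi(f(\mathbf{x})) \to 1$. Combined with the hypothesis $f(\mathbf{x})/g(\mathbf{x}) \to c$, the product rule for limits yields
\begin{eqnarray}
\lim_{\mathbf{x} \to \mathbf{z}} \frac{\ln(1 + f(\mathbf{x}))}{g(\mathbf{x})} \;=\; 1 \cdot c \;=\; c.
\end{eqnarray}
Finally, applying continuity of the exponential function gives $(1 + f(\mathbf{x}))^{1/g(\mathbf{x})} = \exp\!\big(\ln(1+f(\mathbf{x}))/g(\mathbf{x})\big) \to e^c$, which is (\ref{etothec}).

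There is no real obstacle here; the one subtlety worth flagging is that the identity $\ln(1+f)/g = \phi(f)\cdot(f/g)$ needs the value $\phi(0) = 1$ built in, so that the argument goes through uniformly whether or not $f$ vanishes on the approach to $\mathbf{z}$. Everything else is a mechanical application of continuity of $\phi$ and $\exp$ together with the product rule for limits.
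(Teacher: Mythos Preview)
Your proof is correct and takes essentially the same approach as the paper, which simply says the result follows by taking the natural logarithm of both sides of (\ref{etothec}). You have merely filled in the routine details of that computation.
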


\begin{proof}
This can be proved easily by taking the natural logarithm of both sides of
(\ref{etothec}).
\end{proof}

\begin{proposition}
\label{compactprop}
Let $U \subseteq \mathbb{R}^n$ and $V \subseteq \mathbb{R}^m$,
and assume that $V$ is \textbf{compact}.  Let $f \colon U \times V \to 
\mathbb{R}$ be a continuous function.  Let $\mathbf{z} \in \mathbb{R}^n$ be an
element in the closure of $U$, and assume that $\lim_{(\mathbf{x}, 
\mathbf{y} ) \to (\mathbf{z} , \mathbf{y}_0 )}
f ( \mathbf{x} , \mathbf{y} )$ exists for every $\mathbf{y}_0 \in V$.
Then,
\begin{eqnarray}
\label{limitequiv}
\lim_{\mathbf{x} \to \mathbf{z}} \min_{\mathbf{y} \in V } f ( \mathbf{x} ,
\mathbf{y} ) & = & \min_{\mathbf{y} \in V }  \lim_{\mathbf{x} \to \mathbf{z}} f ( \mathbf{x} ,
\mathbf{y} ).
\end{eqnarray}
\end{proposition}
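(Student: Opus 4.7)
\begin{proofof}{Proposition~\ref{compactprop} (proposal)}
The plan is to reduce the interchange-of-limit statement to the classical fact that the minimum of a jointly continuous function over a compact set depends continuously on its parameter. The hypothesis that $\lim_{(\mathbf{x},\mathbf{y}) \to (\mathbf{z},\mathbf{y}_0)} f(\mathbf{x},\mathbf{y})$ exists for every $\mathbf{y}_0 \in V$ is exactly what allows us to continuously extend $f$ across the slice $\{\mathbf{z}\}\times V$, so this is the right starting point.

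First I would define a function $F \colon (U \cup \{\mathbf{z}\}) \times V \to \mathbb{R}$ by setting $F(\mathbf{x},\mathbf{y}) = f(\mathbf{x},\mathbf{y})$ for $\mathbf{x}\in U$ and $F(\mathbf{z},\mathbf{y}_0) = \lim_{(\mathbf{x},\mathbf{y}) \to (\mathbf{z},\mathbf{y}_0)} f(\mathbf{x},\mathbf{y})$ for $\mathbf{y}_0 \in V$ (this is well defined by hypothesis). Then I would verify that $F$ is continuous on its whole domain: it is continuous on $U \times V$ because $f$ is, and continuity at any point $(\mathbf{z},\mathbf{y}_0)$ is precisely the assumed existence of the joint limit together with the fact that $f$ agrees with $F$ on $U \times V$.

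Next, define $h(\mathbf{x}) := \min_{\mathbf{y}\in V} F(\mathbf{x},\mathbf{y})$ for $\mathbf{x} \in U \cup \{\mathbf{z}\}$; the minimum is attained because $V$ is compact and $F(\mathbf{x},\cdot)$ is continuous. I would then prove $h$ is continuous at $\mathbf{z}$ by the standard two-sided squeeze. For the upper bound, pick any minimizer $\mathbf{y}^\star \in V$ with $h(\mathbf{z}) = F(\mathbf{z},\mathbf{y}^\star)$; then $h(\mathbf{x}) \le F(\mathbf{x},\mathbf{y}^\star) \to F(\mathbf{z},\mathbf{y}^\star) = h(\mathbf{z})$ as $\mathbf{x}\to\mathbf{z}$, so $\limsup_{\mathbf{x}\to\mathbf{z}} h(\mathbf{x}) \le h(\mathbf{z})$. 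For the lower bound, take any sequence $\mathbf{x}_k \to \mathbf{z}$ in $U$ and pick minimizers $\mathbf{y}_k \in V$ with $h(\mathbf{x}_k) = F(\mathbf{x}_k,\mathbf{y}_k)$; compactness of $V$ gives a convergent subsequence $\mathbf{y}_{k_j} \to \mathbf{y}'$, and joint continuity of $F$ yields $h(\mathbf{x}_{k_j}) = F(\mathbf{x}_{k_j},\mathbf{y}_{k_j}) \to F(\mathbf{z},\mathbf{y}') \ge h(\mathbf{z})$, so $\liminf_{\mathbf{x}\to\mathbf{z}} h(\mathbf{x}) \ge h(\mathbf{z})$.

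Finally, I would translate this continuity statement back into the claimed equality: the left-hand side of \eqref{limitequiv} is $\lim_{\mathbf{x}\to\mathbf{z}} h(\mathbf{x}) = h(\mathbf{z})$, and the right-hand side is $\min_{\mathbf{y}\in V} F(\mathbf{z},\mathbf{y}) = h(\mathbf{z})$, so the two sides agree. The only real obstacle is the lower bound for $h$ at $\mathbf{z}$, since one must rule out that the minimizers $\mathbf{y}_k$ escape to points where $F$ is small only in the limit; compactness of $V$ combined with the joint continuity of the extension $F$ is exactly what handles this. Everything else is bookkeeping around the extension step.
\end{proofof}
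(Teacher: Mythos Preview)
Your proposal is correct and follows essentially the same approach as the paper: both begin by continuously extending $f$ to $(U\cup\{\mathbf z\})\times V$ and then use compactness of $V$ to show that $\min_{\mathbf y} f(\mathbf x,\mathbf y)$ is continuous at $\mathbf z$. The only difference is tactical: the paper uses a finite-subcover argument on $V$ to obtain uniform closeness of $\overline f(\mathbf x,\cdot)$ to $\overline f(\mathbf z,\cdot)$ near $\mathbf z$, while you use sequential compactness and a $\limsup/\liminf$ squeeze with converging minimizers; both are standard and equally valid here.
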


\begin{proof}
By assumption, there exists a continuous extension of $f$ to $(U \cup \{ \mathbf{z}
\} ) \times V$.  Denote this extension by $\overline{f}$.  Let $h ( \mathbf{x} , \mathbf{y} ) =
\overline{f} ( \mathbf{x} , \mathbf{y} ) - \overline{f} ( \mathbf{z} ,
\mathbf{y} )$.

Let $\delta > 0$.
For any $\mathbf{y} \in V$,
since $h ( \mathbf{z} , \mathbf{y} ) = 0$ and $h$ is continuous
at $(\mathbf{z} , \mathbf{y} )$,
we can find an $\epsilon_\mathbf{y} > 0$ such that the values
of $h$ on the cylinder
\begin{eqnarray}
\left\{ ( \mathbf{x} , \mathbf{y}' ) \mid
| \mathbf{x} - \mathbf{z} | < \epsilon_\mathbf{y} ,
| \mathbf{y}' - \mathbf{y} | < \epsilon_\mathbf{y} \right\}
\end{eqnarray}
are confined to $[- \delta , \delta ]$.  Since $V$ is compact,
we can choose a finite set $S \subseteq V$ such that the
the $\epsilon_y$-cylinders for $y \in S$ cover $V$.  Letting
$\epsilon = \min_{y \in S} \epsilon_y$, we find that the values
of $h$ on the $\epsilon$-neighborhood of $V$ are
confined to $[-\delta, \delta]$.  Therefore,
the minimum of $f ( \mathbf{x} , \mathbf{y} )$
on the $\epsilon$-neighborhood of $V$ is within $\delta$
of $\min_{\mathbf{y} \in V } \overline{f} ( \mathbf{z} ,
\mathbf{y} )$.  The desired equality (\ref{limitequiv})
follows.
\end{proof}

\label{renyismapp}

\bibliographystyle{apalike}
\bibliography{quantumsec}

\end{document}